\definecolor{ForestGreen}{rgb}{0.1333,0.5451,0.1333}
\definecolor{DarkRed}{rgb}{0.8,0,0}
\definecolor{Red}{rgb}{0.9,0,0}
\newcommand{\gap}{{\sf Gap}\xspace}
\newcommand{\disj}{{\sf Disj}\xspace}
\newcommand{\eq}{{\sf Eq}\xspace}
\newcommand{\ham}{{\sf Ham}\xspace}
\newcommand{\conn}{{\sf Conn}\xspace}
\newcommand{\st}{{\sf ST}\xspace}
\newcommand{\mst}{{\sf MST}\xspace}
\newcommand{\ipmodthree}{{\sf IPmod3}\xspace}
\def\cP{\mathcal{P}}
\def\cA{\mathcal{A}}
\def\cX{\mathcal{X}}
\def\cY{\mathcal{Y}}
\newcommand{\size}[1]{\ensuremath{\left|#1\right|}}
\declaretheorem[numberwithin=section]{theorem}
\declaretheorem[sibling=theorem]{lemma}
\declaretheorem[sibling=theorem]{observation}
\declaretheorem[sibling=theorem]{example}
\declaretheorem[sibling=theorem]{corollary}
\declaretheorem[sibling=theorem]{claim}
\newtheorem{theorem}{Theorem}[section]
\newtheorem{lemma}[theorem]{Lemma}
\newtheorem{observation}[theorem]{Observation}
\newtheorem{corollary}[theorem]{Corollary}
\newtheorem{claim}[theorem]{Claim}
\def\square{\vbox{\hrule height.2pt\hbox{\vrule width.2pt height5pt \kern5pt
\vrule width.2pt} \hrule height.2pt}}
\theoremstyle{definition}\newtheorem{definition}[theorem]{Definition}
\theoremstyle{definition}\newtheorem{example}[theorem]{Example}
\newenvironment{prog}[1]{
\begin{minipage}{5.8 in}
{\sc\bf #1}
\begin{enumerate}}
{
\end{enumerate}
\end{minipage}
}
\renewcommand{\phi}{\varphi}
\newcommand{\eps}{\epsilon}
\newcommand{\poly}{\operatorname{poly}}
\newcommand{\reals}{{\mathbb R}}
\newcommand{\bias}{\operatorname{Bias}}
\newcommand{\server}{sv}
\def\danupon#1{\marginpar{$\leftarrow$\fbox{D}}\footnote{$\Rightarrow$~{\sf #1 --Danupon}}}
\def\gopal#1{\marginpar{$\leftarrow$\fbox{D}}\footnote{$\Rightarrow$~{\sf #1 --Gopal}}}
\def\danupon#1{}
\def\gopal#1{}
\newcommand{\squishlist}{
 \begin{itemize}
}
\newcommand{\squishend}{\end{itemize}
}
\newcounter{Lcount}
\newcommand{\squishlisttwo}{
\begin{list}{\arabic{Lcount}. }
{ \usecounter{Lcount} \setlength{\itemsep}{0pt}
\setlength{\parsep}{0pt} \setlength{\topsep}{0pt}
\setlength{\partopsep}{0pt} \setlength{\leftmargin}{2em}
\setlength{\labelwidth}{1.5em} \setlength{\labelsep}{0.5em} } }
\newcommand{\squishendtwo}{
\end{list} }
\author{Michael Elkin\thanks{Department of Computer Science,
Ben-Gurion University, Beer-Sheva, 84105, Israel. \hbox{E-mail}:~{\tt elkinm@cs.bgu.ac.il}.}
\and Hartmut Klauck\thanks{Division of Mathematical
Sciences, Nanyang Technological University, Singapore 637371 \& Centre for Quantum Technologies,
National University of Singapore, Singapore 117543. \hbox{E-mail}:~{\tt hklauck@gmail.com}.
Research at the Centre for Quantum Technologies is funded by the Singapore Ministry of Education
and the National Research Foundation.
}
\and Danupon Nanongkai\thanks{Faculty of Computer Science, University of Vienna, Austria. \hbox{E-mail}:~{\tt danupon@gmail.com}. Work partially done while at Division of Mathematical Sciences, Nanyang Technological University, Singapore 637371.}
\and Gopal Pandurangan\thanks{ Division of Mathematical
Sciences, Nanyang Technological University, Singapore 637371 \& Department of Computer Science, Brown University, Providence, RI 02912, USA.  \hbox{E-mail}:~{\tt gopalpandurangan@gmail.com}.
Supported in part by the following research grants: Nanyang Technological University grant M58110000, Singapore Ministry of Education (MOE) Academic Research Fund (AcRF) Tier 2 grant MOE2010-T2-2-082, 
Singapore MOE  AcRF Tier 1 grant MOE2012-T1-001-094,
and a grant from the US-Israel Binational Science
Foundation (BSF). 
}
%\vspace{-1ex}
}
\date{}
\title{Can Quantum Communication Speed Up Distributed  Computation?}
\begin{document}

\begin{titlepage}
\pagenumbering{roman}
\maketitle
\begin{abstract}
The focus of this paper is on {\em quantum distributed} computation, where we investigate whether quantum communication can help in {\em speeding up} distributed network algorithms.
Our main result is that for certain fundamental network problems such as minimum spanning tree, minimum cut, and shortest paths, quantum communication {\em does not}  help in substantially speeding up  distributed algorithms for these problems compared to the classical setting.

In order to obtain this result, we extend the technique of Das Sarma et al. [SICOMP 2012] to obtain a uniform approach to prove non-trivial lower bounds for quantum distributed algorithms for several graph optimization (both exact and approximate versions) as well as verification problems, some of which are new even in the classical setting, e.g. tight randomized lower bounds for Hamiltonian cycle and spanning tree verification, answering an open problem of Das Sarma et al., and a lower bound in terms of the weight aspect ratio, matching the upper bounds of Elkin [STOC 2004]. 
Our approach introduces the {\em Server model} and {\em Quantum  Simulation Theorem} which together provide a connection between distributed algorithms and communication complexity. The Server model is the standard two-party communication complexity model augmented with additional power; yet, most of the hardness in the two-party model is carried over to this new model. The Quantum Simulation Theorem carries this hardness further to quantum distributed computing. Our techniques, except the proof of the hardness in the Server model, require very little knowledge in quantum computing, and this can help overcoming a usual impediment in proving bounds on quantum distributed algorithms. In particular, if one can prove a lower bound for distributed algorithms for a certain problem using the technique of Das Sarma et al., it is likely that such lower bound can be extended to the quantum setting using tools provided in this paper and without the need of knowledge in quantum computing. 
\end{abstract}

\newpage
\tableofcontents
%\newpage
%\listoftheorems
\end{titlepage}

\newpage
\pagenumbering{arabic}

%\setcounter{part}{0}
%\refstepcounter{part}\label{part:intro}
%
%\begin{center}
%%\Huge {\bf Appendix} \\
%\Huge Part \ref{part:intro}: Introduction
%\end{center}

\part{Overview}

%\vspace{-0.1in}
\section{Introduction}
%\vspace{-0.02in}
%\label{sec:intro}
%\iffalse
%\paragraph{Papers that we should mention somewhere} {\bf Distributed graph approximation:} surveys by Elkin \cite{Elkin-sigact04} and Dubhashi et al. \cite{Dubhashi}, Khan et al. \cite{KhanKMPT08}, Linial \cite{linial}, Kuhn, Moscibroda, and Wattenhofer \cite{kuhn-podc04}. {\bf Quantum distributed algorithm:} \cite{BroadbentT08}, \cite{DenchevP08}, \cite{GavoilleKM09}, \cite{TaniKM05}, \cite{ChlebusKS10}, \cite{KobayashiMT09}, \cite{KobayashiMT10}, \cite{PalKK03}, \cite{Helm08},  \cite{GaertnerBKCW08}
%%
%\cite{GavoilleKM09} mentioned that there is a quantum variant of CONGEST that is widely studied in Physics known as the LOCC model \cite{NielsenChuangBook,HorodeckiHH96}.
%%
%Several works on distributed verification. Several lower bound papers.
%
%TO DO (Hartmut's suggestion): List problems that we have tight lower bounds.
%\fi

%This paper concerns the intersection  of distributed computation and quantum computation.
The  power and limitations of distributed (network) computation have been studied
extensively over the last three decades or so. In a distributed network,
each individual node can communicate only with its  neighboring nodes. 
%Despite being restricted to such \emph{local} communication, the network itself should work towards a \emph{global} task, e.g., broadcast
%a piece of data,  elect a (global) leader, construct a spanning tree (ST) or a minimum spanning tree (MST),  construct a shortest path tree (SPT) or efficient routing paths  between  node pairs.
Some distributed  problems can be solved entirely via local communication, e.g.,  maximal independent set, maximal matching, coloring, dominating set,  vertex cover, or approximations thereof. These are considered ``local" problems, as they can be shown to be solved using {\em small (i.e., polylogarithmic)}  communication  (e.g., see \cite{luby, peleg, Suomela09}). For example, a maximal independent set can be computed in  $O(\log n)$ time \cite{Luby86}.
However, many important problems   are ``global'' problems (which are the focus of this paper) from the distributed computation point of view. For example, to count the total number of nodes, to elect a leader,  to compute a  spanning tree (ST) or a minimum spanning tree (MST) or a shortest path tree (SPT), information necessarily must travel to the farthest nodes in a system. If exchanging a  message over a single edge costs one time unit, one needs $\Omega(D)$ time units to compute the result, where $D$ is the network diameter \cite{peleg}.
%Indeed, such fundamental primitives  underlie the operation of modern communication networks. 
%Some optimisation  problems can be solved entirely via  {\em local} communication, e.g.,  maximal independent set, maximal matching, coloring, dominating set,  vertex cover, or approximations thereof. These are considered ``local" problems, as they can be shown to be solved using small (i.e., polylogarithmic)  rounds of communication (e.g., see \cite{luby, peleg, KuhnMW10, Suomela09}).
%For example, a maximal independent set can be computed in time $O(\log n)$ \cite{Luby86}, but not in time $\Omega(\sqrt{\log n / \log\log n})$ \cite{kuhn-podc04} ($n$ is the  network size).
%This lower bound even holds if message sizes are unbounded.
%However, many important optimization problems  are ``global'' problems from the distributed computation point of view. 
%For global problems information necessarily must travel to the farthest nodes in a system.
%If exchanging a  message over a single edge costs one time unit, one needs $\Omega(D)$ time units to compute the result, where $D$ is the network diameter.
 If message size was unbounded, one can simply collect all the information in $O(D)$ time, and then compute the result. However, in many applications, there is {\em bandwidth restriction} on the size of the message (or the number of  bits) that can be exchanged over a communication link in one time unit.
This motivates studying global problems in the \cal{CONGEST} model \cite{peleg}, where each node can exchange  at most $B$ bits (typically $B$ is small, say  $O(\log n)$) among its neighbors in one time step. This is  one of the central  models in the study of distributed computation. The design of efficient algorithms for the \cal{CONGEST} model, as well as establishing lower bounds on the time complexity of various fundamental distributed computing problems, has been the subject of an active area of research called (locality-sensitive) {\em distributed computing} for the last three decades (e.g., \cite{peleg, Elkin-sigact04, Dubhashi,  KhanKMPT08,  Suomela09, DasSarmaHKKNPPW11}).  
In particular, it is now established that $\tilde{\Omega}(D+\sqrt{n})$ \footnote{$\tilde{\Omega}$ and $\tilde{O}$ notations hide polylogarithmic factors.} is a fundamental lower bound on the running time of many important graph optimization (both exact and approximate versions) and verification problems  such
as MST, ST, shortest paths, minimum cut, ST verification etc \cite{DasSarmaHKKNPPW11}.
%In particular, extensive research has been done in
%studying upper  and lower bounds on  distributed algorithms for various fundamental graph optimization and verification (both exact and approximate) problems (e.g.,
%see \cite{peleg,  Elkin-sigact04, Dubhashi,  KhanKMPT08,  KuhnMW10, Suomela09} and the work of \cite{DasSarmaHKKNPPW11}  and the references therein).

The main focus of this paper is studying the power of distributed network computation in the {\em quantum setting}. More precisely, we consider the \cal{CONGEST} model in the quantum setting, where nodes can use quantum
processing, communicate over quantum links  using quantum bits, and use exclusively quantum
phenomena such as {\em entanglement} (e.g., see \cite{DenchevP08, BroadbentT08, GavoilleKM09}). 
A fundamental question that we would like to investigate is whether quantumness can help in speeding up distributed computation for graph optimization problems; in particular, whether  the  above mentioned lower bound of $\tilde{\Omega}(D+\sqrt{n})$ (that applies to many important problems in the classical setting) also applies to the quantum setting.

%Longer version (a bit older)
%Problems on distributed networks can be roughly categorized into two types: {\em local} and {\em global} problems. 

% Shorter version
\begin{comment}
Problems on distributed networks can be roughly categorized into two types: {\em local} and {\em global} problems. Local  problems are those that can be solved entirely via small (i.e., $\poly\log n$ time) local communication (e.g., \cite{luby, peleg, Suomela09}).
%
%, e.g.,  maximal independent set, maximal matching, coloring, dominating set,  vertex cover, or approximations thereof. These are considered ``local" problems, as they can be shown to be solved using small (i.e., polylogarithmic)  local communication (e.g., see \cite{luby, peleg, Suomela09}).
%For example, a maximal independent set can be computed in time $O(\log n)$ \cite{Luby86}, but not in time $\Omega(\sqrt{\log n / \log\log n})$ \cite{kuhn-podc04} ($n$ is the  network size).
%This lower bound even holds if message sizes are unbounded.
Many problems (including those focused in this paper), however, are ``global'' from the distributed computation point of view.
%For example, to count the total number of nodes, to elect a leader,  to compute a ST, a MST or a SPT,
%
To solve these problems, information necessarily must travel to the farthest nodes in a system. Thus, one needs $\Omega(D)$ time units to compute the result, where $D$ is the network diameter \cite{peleg}.
\end{comment}

% Longer
Lower bounds for local problems  (where the running time is  $O(\poly\log n)$) in the quantum setting   usually  follow directly from the same arguments as in the classical setting. This is because these lower bounds are proved using the ``limited sight'' argument: The nodes do not have time to get the information of the entire network. Since entanglement {\em cannot be used to replace communication} (by, e.g., Holevo's theorem \cite{Holevo73} (also see \cite{NielsenChuangBook,Nayak99})), the same argument holds in the quantum setting with prior entanglement. This argument is captured by the notion of {\em physical locality} defined by Gavoille et al. \cite{GavoilleKM09}, where it is shown that for many {\em local}  problems, quantumness does not give any significant speedup in time compared to the classical setting.
 
 %The result in \cite{GavoilleKM09} implies that for many distributed combinatorial optimization problems, such as Maximal Independent Set, the best currently known lower time bounds cannot be broken by applying quantum processing, in any conceivable way.

\begin{comment}
%Shorter
Lower bounds for local problems  (e.g., \cite{kuhn-podc04}) in the quantum setting    usually  follow directly from the same arguments as in the classical setting. This is because these lower bounds are proved using the ``limited sight'' argument: Nodes in the network do not have enough time to get the information of the entire network. Since {\em entanglement cannot eliminate communication} (by, e.g., Holevo's theorem \cite{Holevo73}), we know that nodes is still do not have enough time to get the information of the entire network. Thus, the same argument holds in the quantum setting with prior entanglement. This argument is captured by the notion of {\em physical locality} defined by Gavoille et al. \cite{GavoilleKM09}, where it is shown that
 for many {\em local}  problems, quantumness does not give any significant time speedup from the classical one.
\end{comment}

%Longer version

The above limited sight argument, however, does not seem to be extendible to {\em global} problems where the running time is usually $\Omega(D)$, since nodes have enough time to see the whole network in this case. In this setting, the argument developed in \cite{DasSarmaHKKNPPW11} (which follows the line of work in
\cite{PelegR00,LotkerPP06,Elkin06,KorKP11}) can be used to show tight lower bounds for many problems in the classical setting. However, this argument does not always hold in the quantum setting because it essentially relies on network ``congestion'': Nodes cannot communicate fast enough (due to limited bandwidth) to get important information to solve the problem. However, we know that the quantum communication and entanglement can potentially {\em decrease the amount of communication} and thus there might be some problems that can be solved faster. One  example that illustrates this point is the following {\em distributed verification of disjointness function} defined in \cite[Section 2.3]{DasSarmaHKKNPPW11}.

{
\begin{example}\label{example:disj}
Suppose we give $b$-bit string $x$ and $y$ to node $u$ and $v$ in the network, respectively, where $b=\sqrt{n}$. We want to check whether the inner product $\langle x, y\rangle$ is zero or not. This is called the {\em Set Disjointness} problem (\disj). It is easy to show that it is impossible to solve this problem in less than $D/2$ rounds since there will be no node having the information from both $u$ and $v$ if $u$ and $v$ are of distance $D$ apart. (This is the very basic idea of the limited sight argument.)
%(We note that the real argument to prove natural problems are sometimes much more complicated than this, see, e.g., \cite{kuhn-podc04,kuhn-soda06,KuhnMW10}.)
This argument holds for both classical and quantum setting and thus we have a lower bound of $\Omega(D)$ on both settings.
\cite[Lemma 4.1]{DasSarmaHKKNPPW11} shows that this lower bound can be significantly improved to $\tilde \Omega(b)=\tilde \Omega(\sqrt{n})$ in the classical setting, even when the network has diameter $O(\log n)$. 
This follows from the communication complexity of $\Omega(b)$ of \disj \cite{BabaiFS86,KalyanasundaramS92,Bar-YossefJKS04,Razborov92} and the {\em Simulation Theorem} of \cite{DasSarmaHKKNPPW11}.
This lower bound, however, does not hold in the quantum setting since we can simulate the known $O(\sqrt{b})$-communication quantum protocol of \cite{AaronsonA05} in $O(\sqrt{b}D)=O(n^{1/4}D)$ rounds. 
\qed
\end{example}
}

Thus we have an example of a global problem that quantum communication gives an advantage over classical communication. This example also shows that the previous techniques and results from \cite{DasSarmaHKKNPPW11} does not apply to the quantum setting since \cite{DasSarmaHKKNPPW11} heavily relies on the hardness of the above distributed disjointness verification problem. A fundamental question is: {\em ``Does this phenomenon occur for natural global distributed network problems?"}
%
% In this paper, we try to answer this problem on {\em graph problems} such as computing the MST.
%
%(In fact, this example already shows that the previous lower bounds shown in \cite{DasSarmaHKKNPPW11} cannot be directly extended to the quantum setting since they rely on the lower bound of \disj.)
%
%

Our paper  answers the above question where we show that this phenomenon does not occur for many global graph problems.
Our main result is that for  fundamental global  problems such as minimum spanning tree, minimum cut, and shortest paths, quantum communication {\em does not}  help significantly in  speeding up  distributed algorithms for these problems compared to the classical setting. More precisely, we show that  $\tilde{\Omega}(D+\sqrt{n})$ is a lower bound for these problems in the quantum setting as well.
An $\tilde{O}(D+ \sqrt{n})$ time algorithm for  MST problem in the classical setting  is well-known \cite{KuttenP98}.  Recently, it has been shown that minimum cut also admits a distributed $(1+\epsilon)$-approximation algorithm in the same time in the classical setting \cite{GhaffariK13,Su14,Nanongkai14-MinCut,NanongkaiSu14-MinCut}. Also, recently it has been shown that shortest paths admits an $\tilde O(D+ \sqrt{n}D^{1/4})$-time  $(1+\epsilon)$-approximation and $\tilde O(\sqrt{n}+D)$-time $O(\log n)$-approximation  distributed classical algorithms \cite{LenzenP13,Nanongkai13-ShortestPaths}.
Thus, our quantum lower bound shows that quantum communication does not speed up distributed algorithms for MST and minimum cut, while for shortest paths the speed up, if any, is bounded by $O(D^{1/4})$ (which is small for small diameter graphs).

\danupon{I rewrote this para since the previous one was inaccurate (as it was copied from the previous abstract). I still feel that this para is too similar to the abstract though.}
In order to obtain our  quantum lower bound results, we develop a uniform approach to prove non-trivial lower bounds for quantum distributed algorithms. 
This approach leads us to several non-trivial quantum distributed lower bounds (which are the first-known quantum bounds for problems such as minimum spanning tree, shortest paths etc.), some of which are new even in the classical setting. 
Our approach introduces the {\em Server model} and {\em Quantum  Simulation Theorem} which together provide a connection between distributed algorithms and communication complexity. The Server model is simply the standard two-party communication complexity model augmented with a powerful {\em Server} who can communicate for free but receives no input (cf. Def.~\ref{def:server model}). It is more powerful than the two-party model, yet captures most of the hardness obtained by the current quantum communication complexity techniques.  
The Quantum Simulation Theorem (cf. Theorem \ref{theorem:from server to distributed}) is an extension of the Simulation Theorem of Das Sarma et al. \cite{DasSarmaHKKNPPW11} from the classical setting to the quantum one. It carries this hardness from the Server model further to quantum distributed computing. 
Most of our techniques require very little knowledge in quantum computing, and this can help overcoming a usual impediment in proving bounds on quantum distributed algorithms.  
In particular, if one can prove a lower bound for distributed algorithms in the classical setting using the technique of Das Sarma et al., then it is possible that one can also prove the same lower bound in the quantum setting in essentially the same way -- the only change needed is that the proof has to start from problems that are hard on the server model that we provide in this paper.

\section{The Setting}
\subsection{Quantum Distributed Computing Model}
\label{sec:model}

We study problems in a natural quantum version of the \cal{CONGEST(B)} model \cite{peleg} (or, in short, the $B$-model), where each node can exchange  at most $B$ bits (typically $B$ is small, say  $O(\log n)$) among its neighbors in one time step. 
%The $B$-model in the classical setting is one of the central  models in the study of distributed computation. 
%The design of efficient algorithms for the $B$-model (in the classical setting), as well as establishing lower bounds on the time complexity of various fundamental distributed computing problems, has been the subject of an active area of research called (locality-sensitive) {\em distributed computing} for the last three decades (see \cite{peleg} and references therein).  In particular, extensive research has been done in
%studying upper and lower bounds on  distributed algorithms for various fundamental graph optimization and verification (both exact and approximate) problems (e.g.,
%\cite{peleg,  Elkin-sigact04,  KhanKMPT08, khan-disc, Suomela09,DasSarmaHKKNPPW11,NanongkaiDP11}).
%and the work of \cite{DasSarmaHKKNPPW11}  and the references therein).
The main focus of the current work is to understand the time complexity  of  fundamental graph problems in the $B$-model in the {\em quantum setting}.
% where nodes can use quantum
%processing, communicate over quantum links using quantum bits, and use exclusively quantum phenomena such as {\em entanglement} (e.g., see \cite{DenchevP08, BroadbentT08, GavoilleKM09}). 
We now explain the model. We refer the readers to 
%the full version (attached) 
Appendix~\ref{sec:formal definition of network} 
for a more rigorous and formal definition of our model.

%\paragraph{$B$-Model and Time Complexity}

Consider a synchronous  network of processors  modeled by an undirected $n$-node graph, where nodes model the processors and  edges model the links between the processors. The processors  (henceforth, nodes) are assumed to have unique IDs. Each node has limited topological knowledge; in particular, it only knows the IDs of its neighbors and knows no other topological information (e.g., whether its neighbors are linked by an edge or not). The node may also accept some additional inputs as specified by the problem at hand.

The communication is synchronous, and occurs in discrete pulses, called {\em rounds}. All the nodes
wake up simultaneously at the beginning of each round. In each round each node $u$ is allowed to
send an arbitrary message of  $B$ bits through each edge $e = (u, v)$ incident
to $u$, and the message will arrive at $v$ at the end of the current round. Nodes then perform an internal computation, which finishes instantly since nodes have infinite computation power.
%
%Nodes communicate by exchanging messages via the links/channels (edges). The nodes may have to compute (cooperatively) some global function of the graph, such as a ST or a MST, via communicating with each other and running a distributed algorithm designed for the task at hand.
%
There are several measures to analyze the performance of distributed algorithms, a fundamental one being the {\em running time}, defined as the worst-case number of rounds of distributed communication.

%This measure naturally gives rise to a  complexity measure of problems, called the {\em time complexity}. We assume the $B$-model, mentioned above:  In each round at most $B$ bits can be sent through each edge in each direction, where $B$ is the bandwidth parameter of the network.

%\medskip In the standard distributed computing literature (e.g., see \cite{peleg, lynch,tel}), the usual model considered is the classical deterministic model where nodes (representing classical processors) communicate using classical bits and the algorithms used by nodes are deterministic.

%\medskip An additional resource in the classical setting that makes algorithms more powerful is {\em randomness}. In the {\em private randomness} model, each node has an access to an infinite random string. A seemingly more powerful model is the {\em shared randomness} model where nodes have access to an infinite shared random string.

In the quantum setting, a distributed network could be augmented with two additional resources: {\em quantum communication} and {\em shared entanglement} (see e.g., \cite{DenchevP08}).
Quantum communication allows nodes to communicate with each other using {\em quantum bits (qubits)}; i.e., in each round at most $B$ qubits can be sent through each edge in each direction.
Shared entanglement allows nodes to possess qubits that are entangled with qubits of other nodes\footnote{Roughly speaking, one can think of shared entanglement as a ``quantum version'' of shared randomness. For example, a well-known entangled state on two qubits is the {\em EPR} pair  \cite{EPR35,Bell64}
%: $\frac{1}{\sqrt{2}}(\ket{00}+\ket{11})$
which is a pair of qubits that, when measured, will either both be zero or both be one, with probability $1/2$ each. An EPR pair shared by two nodes can hence be used to, among other things, generate a shared random bit for the two nodes. Assuming entanglement implies shared randomness (even among all nodes), but also allows for other operations such as quantum teleportation \cite{NielsenChuangBook}, which replaces quantum communication by classical communication plus entanglement.}.
%
% Danupon: I added a footnote above to make the paper more friendly to those who are not familar with quantum computing.
%
\begin{comment}
For example, a well-known entangled state on two qubits is the {\em EPR} pair  \cite{EPR35,Bell64}
%: $\frac{1}{\sqrt{2}}(\ket{00}+\ket{11})$
which is a pair of qubits that, when measured, will either both be zero or both be one, with probability $1/2$ each. An EPR pair shared by two nodes can hence be used to, among other things, generate a shared random bit for the two nodes. Assuming entanglement implies shared randomness (even among all nodes), but also allows for other operations such as quantum teleportation \cite{NielsenChuangBook}, which replaces quantum communication by classical communication plus entanglement. %Thus, in this case, there is no need to assume that nodes share a random string since it can be obtained using shared entanglement.
%
Quantum communication and entanglement resources are known to help saving communication in some cases, as will be illustrated later (cf. Example~\ref{example:disj} in Section \ref{sec:role}).

\end{comment}
%
Quantum distributed networks can be categorized based on which resources are assumed (see, e.g., \cite{GavoilleKM09}).
In this paper, we are interested in the {\em most powerful model}, where both quantum communication and the {\em most general form of shared entanglement} are assumed: in a technical term, we allow nodes to share an arbitrary $n$-partite entangled state as long as it does not depend on the input (thus, does not reveal any input information).
% Danupon: I remove the sentence below since it just repeats the previous sentence.
%To put it differently, we allow the most general form of shared entanglement as long as it can be prepared before the input is received (thus, does not reveal any input information).
%\danupon{Not sure if we should keep this sentence: ``For instance ...''.} For instance, every pair of nodes could share some EPR-pairs, and all $n$ nodes together could share instances of the superposition $(1/\sqrt 2)(| 0^n\rangle+|1^n\rangle)$ in order to simulate a network with shared randomness, the most powerful classical network model.
Throughout the paper, we simply refer to this model as quantum distributed network (or just distributed network, if the context is clear).  All lower bounds we show in this paper hold in this model, and thus also imply lower bounds in weaker models.% as well. %For more rigorous and formal definition of our model we refer to  Part II of the Appendix.
 % I remove the sentences below as it makes no sense in our setting. The LOCC you described her is just a classical model in by our definition. I think it's dangerous so I'll take it out for now. -Danupon
 %
 %For example, our bounds also hold in  a model with local quantum processing (but no shared entanglement) and only classical communication --- also called as the {\em Local Operations and Classical Communication (LOCC)} model, see e.g. \cite{DenchevP08}.  (We note our model is essentially equivalent (up to constant factors)  to the model where  shared entanglement is allowed with only classical communication \cite{NielsenChuangBook, cleve} (called as LOCC-ENTANGLE in \cite{DenchevP08}).)

\subsection{Distributed Graph Problems}
\label{sec:problems}
%\paragraph{Graph problems in the quantum distributed network model}

We focus on solving graph problems on distributed networks.
% We assume that the readers a familiar with graph theoretic concepts such as Hamiltonian cycle, spanning tree (ST), and connectivity.
%
We are interested in two types of graph problems: optimization and verification problems. In both types of problems, we are given a distributed network $N$ modeled by a graph and some property $\cP$ such as ``Hamiltonian cycle'', ``spanning tree'' or ``connected component''.

In optimization problems, we are additionally given a (positive) weight function $w:E(N)\rightarrow \reals_+$ where every node in the network knows weights of edges incident to it.  Our goal is to find a subnetwork $M$ of $N$ of minimum weight that satisfies $\cP$ (e.g. minimum Hamiltonian cycle or MST) where every node knows which edges incident to it are in $M$ in the end of computation. Algorithms can sometimes depend on the {\em weight aspect ratio} $W$ defined as $W=\frac{\max_{e\in E(N)} w(e)}{\min_{e\in E(N)} w(e)}$.

In verification problems, we are additionally given a subnetwork $M$ of $N$ as the problem input (each node knows which edges incident to it are in $M$). We want to determine whether $M$ has some property, e.g., $M$ is a Hamiltonian cycle ($\ham(N)$), a spanning tree ($\st(N)$), or a connected component ($\conn(N)$), where every node knows the answer in the end of computation.
%$M$ is considered as the problem input. %input to the problem we are trying to solve.

We use\footnote{We mention the reason behind our complexity notations. First, we use $*$ as in $Q^*$ in order to emphasize that our lower bounds hold even when there is a shared entanglement, as usually done in the literature. Since we deal with different models in this paper, we put the model name after $*$. Thus, we have $Q^{*, N}$ for the case of distributed algorithm on a distributed network $N$, and  $Q^{*, cc}$ and $Q^{*, \server}$ for the case of the standard communication complexity and the Server model (cf. Subsection~\ref{sec:techniques}), respectively.} $Q^{*, N}_{\epsilon_0, \epsilon_1}(\ham(N))$ to refer to the quantum time complexity of Hamiltonian cycle verification problem on network $N$ where for any $0$-input $M$ (i.e. $M$ is not a Hamiltonian cycle), the algorithm has to output zero with probability at least $1-\epsilon_0$ and for any $1$-input $M$ (i.e. $M$ is a Hamiltonian cycle), the algorithm has to output one with probability at least $1-\epsilon_1$. (We call this type of algorithm {\em $(\epsilon_0, \epsilon_1)$-error}.)
When $\epsilon_0=\epsilon_1=\epsilon$, we simply write $Q^{*, N}_{\epsilon}(\ham(N))$.\danupon{To do: Try to avoid $Q^{*, N}_{\epsilon}(\ham(N))$.}
Define $Q^{*, N}_{\epsilon_0, \epsilon_1}(\st(N)))$ and $Q^{*, N}_{\epsilon_0, \epsilon_1}(\conn(N))$ similarly.

We also study the {\em gap versions} of verification problems. For any integer $\delta\geq 0$, property $\cP$ and a subnetwork $M$ of $N$, we say that $M$ is {\em $\delta$-far}\footnote{We note that the notion of $\delta$-far should not be confused with the notion of {\em $\epsilon$-far} usually used in property testing literature where we need to add and remove at least $\epsilon$ {\em fraction} of edges in order to achieve a desired property. The two notions are closely related. The notion that we chose makes it more convenient to reduce between problems on different models.} from $\cP$ if  we have to add at least $\delta$ edges {\em from $N$} and remove any number of edges in order to make $M$ satisfy $\cP$.
%
%For example, $M$ is $\delta$-far from being a Hamiltonian cycle if we have to add at least $\delta$ edges from $N$ to $M$ and remove any number of edges in order to make $M$ a Hamiltonian cycle.
%
We denote the problem of distinguishing between the case where the subnetwork $M$ satisfies $\cP$ and is $\delta$-far from satisfying $\cP$ {\em the $\delta$-$\cP$ problem} (it is promised that the input is in one of these two cases). When we do not want to specify $\delta$, we write $\gap$-$\cP$.

Other graph problems that we are interested in are those in \cite{DasSarmaHKKNPPW11} and their gap versions. We provide definitions in 
%the full version 
Appendix~\ref{sec:formal definition of network}
for completeness.

%We mention them here and provide definitions in Appendix~XXX for completeness: For verification problems, we consider Hamiltonian cycle, spanning tree,  Connected component, minimum spanning tree, spanning connected subgraph, cycle containment, $e$-cycle containment, bipartiteness, $s$-$t$ connectivity, connectivity, cut, edge on all paths, $s$-$t$ cut and least-element list.
%\vspace{-0.1in}

%\section{Our Technical Contributions and Results}
\section{Our Contributions}

\danupon{I just added this paragraph. Please have a look.} Our first contribution is lower bounds for various fundamental verification and optimization graph problems, some of which are new even in the classical setting and answers some previous open problems (e.g. \cite{DasSarmaHKKNPPW11}). We explain these lower bounds in detail in Section~\ref{sec:results}. 
The main implication of these lower bounds is that quantum communication does {\em not} help in substantially speeding up  distributed algorithms for many of these problems compared to the classical setting. Notable examples are MST, minimum cut, $s$-source distance, shortest path tree, and shortest $s$-$t$ paths. In Corollary~\ref{corollary:main optimization}, we show a lower bound of $\Omega(\sqrt{\frac{n}{B\log n}})$ for these problems which holds against any quantum distributed algorithm with any  approximation guarantee. 
Due to the seminal paper of Kutten and Peleg \cite{KuttenP98}, we know that MST can be computed exactly in $\tilde O(\sqrt{n}+D)$ time in the classical setting, and thus we cannot hope to use quantum communication to get a significant speed up for MST.
Recently, Ghaffari and Kuhn \cite{GhaffariK13} showed that minimum cut can be $(2+\epsilon)$-approximated in $\tilde O(\sqrt{n}+D)$ time in the classical setting, and Su \cite{Su14} and Nanongkai \cite{Nanongkai14-MinCut} independently improved the approximation ratio to $(1+\epsilon)$; this implies that, again, quantum communication does not help.
More recently, Nanongkai \cite{Nanongkai13-ShortestPaths} showed that $s$-source distance, shortest path tree, and shortest $s$-$t$ paths, can be $(1+o(1))$-approximated in $\tilde O(\sqrt{n}D^{1/4}+D)$ time in the classical setting; thus, the speedup that quantum communication can provide for these problems, if any, is bounded by $O(D^{1/4})$. 
Moreover, if we allow higher approximation factor, the result of Lenzen and Patt-Shamir \cite{LenzenP13} implies that we can $O(\log n)$-approximate these problems in $\tilde O(\sqrt{n}+D)$ time; this upper bound together with our lower bound leaves no room for quantum algorithms to improve the time complexity.
%
%More precse: for any $\epsilon>0$ we can $O(\epsilon^{-1}\log \epsilon^{-1})$-approximate these problems in $\tilde O(n^{1/2+\epsilon}+D)$ time in the classical setting  
%
%
% Lenzen and Patt-Shamir \cite[Theorem 4.12]{LenzenP13} showed that in time $\tau=\tilde O(n^{1/2+1/2k}+\diam)$ every node $u$ can compute a {\em label} $\lambda(u)$ of size $\sigma=O(\log (k+1)\log n)$ and a function $dist_u$ that maps label $\lambda(v)$ of any  node $v$ to a distance approximation $dist_u(v)$ such that $\dist_{G, w}(u, v) \leq dist_u(v) \leq \rho\dist_{G, w}(u, v)$ where $\rho=8k\lceil \log (k+1)\rceil -1$.\footnote{We note that Lenzen and Patt-Shamir \cite[Theorem 4.12]{LenzenP13} actually allow $k=1$. However, their algorithm relies on the result of Baswana and Sen (see Theorem 4.7 in their paper) which does not allow $k=1$.}
%
%
Besides the above lower bounds for optimization problems, we show the same lower bound of $\Omega(\sqrt{\frac{n}{B\log n}})$ for verification problems in Corollary~\ref{corollary:main verification}. Das Sarma et al. \cite{DasSarmaHKKNPPW11} showed that these problems, except least-element list verification, can be solved in $\tilde O(\sqrt{n}+D)$ time in the classical setting; thus, once again, quantum communication does not help. 

Our second contribution is the systematic way to prove lower bounds of quantum distributed algorithms. The high-level idea behind our lower bound proofs is by establishing a connection between quantum communication complexity and quantum distributed network computing. Our work is inspired by \cite{DasSarmaHKKNPPW11} (following a line of work in \cite{PelegR00,LotkerPP06,Elkin06,KorKP11}) which shows lower bounds for many graph verification and optimization problems in the classical distributed computing model. The main technique used to show the classical lower bounds in \cite{DasSarmaHKKNPPW11} is the {\em Simulation Theorem} (Theorem 3.1 in \cite{DasSarmaHKKNPPW11}) which shows how one can use lower bounds in the standard two-party classical communication complexity model \cite{KNbook} to derive lower bounds in the ``distributed'' version of communication complexity. We provide techniques of the same flavor for proving quantum lower bounds. In particular, we develop the {\em Quantum Simulation Theorem}. 
However, due to some difficulties in handling quantum computation (especially the entanglement) we need to introduce one more concept: instead of applying the Quantum Simulation Theorem to the standard two-party communication complexity model, we have to apply it to a slightly stronger model called {\em Server model}. We show that working with this stronger model does not make us lose much: several hard problems in two-party communication complexity remain hard in this model, so we can still prove hardness results  using these problems. Quantum Simulation Theorem together with the Server model give us a tool to bring the hardness in the quantum two-party setting to the distributed setting.  In Section~\ref{sec:techniques}, we give a more comprehensive overview of our techniques. 
%
%We note again that besides the proof of the lower bounds in the Server model (Section~\ref{sec:basic lower bounds server model}), all our proofs are {\em combinatorial reductions} and require {\em almost no knowledge} in quantum computing to understand. 
%
%Specifically, the Simulation Theorem relates the communication lower bound from the standard two-party communication complexity model to compute some appropriately chosen function f (e.g., Set Disjointness or Equality) to the distributed time complexity lower bound for computing the same function in a specially chosen graph G. They
%
Along the way, we also obtain new results in the standard communication complexity model, which we explain in Section~\ref{sec:additional results}. 

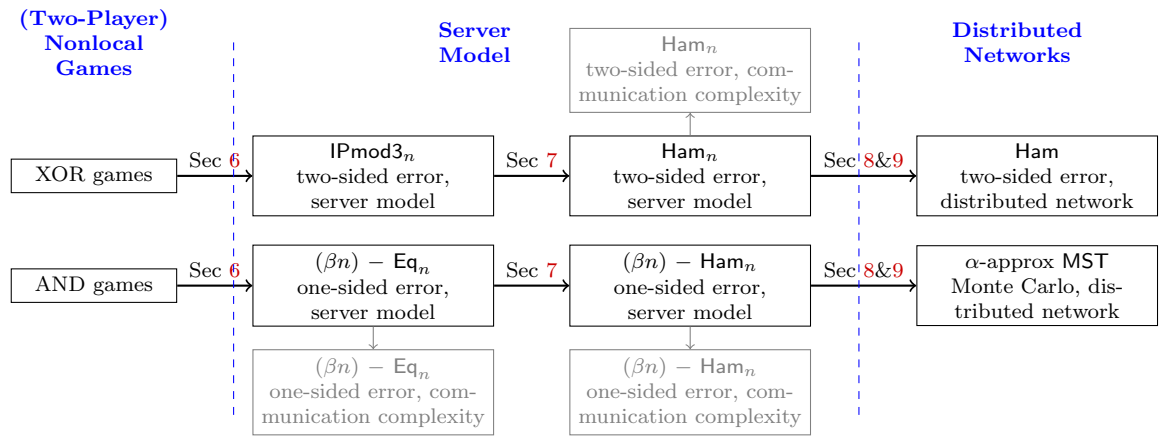
\begin{figure*}
\center
%\tiny
%\includegraphics[width=\linewidth, clip=true, trim= 0cm 3cm 0cm 3cm]{proof_structure.pdf}\\
%
\begin{tikzpicture}[auto, node distance=0.1cm]
\scriptsize
\node (xor) [rectangle, draw=black, text centered, text=black, text width=2cm] {XOR games};
\node (and) [rectangle, draw=black, text centered, text=black, text width=2cm, below=1 of xor] {AND games};
\node (ipthree_server) [rectangle, draw=black, text centered, text=black, right=1 of xor, text width=3cm] {$\ipmodthree_n$ \\ {two-sided error, server model}};
\node (disj_server) [rectangle, draw=black, text centered, text=black, right=1 of and, text width=3cm] {$(\beta n)-\eq_n$\\ {one-sided error, server model}};
\node (ham_server) [rectangle, draw=black, text centered, text=black, right=1 of ipthree_server, text width=3cm] {$\ham_n$ \\ {two-sided error, server model}};
\node (conn_server) [rectangle, draw=black, text centered, text=black, right=1 of disj_server, text width=3cm] {$(\beta n)-\ham_n$ \\ {one-sided error, server model}};
\node (ham_distributed) [rectangle, draw=black, text centered, text=black, right=1.4 of ham_server, text width=3cm] {$\ham$ \\ {two-sided error, distributed network}};
\node (conn_distributed) [rectangle, draw=black, text centered, text=black, right=1.4 of conn_server, text width=3cm] {$\alpha$-approx $\mst$ \\ {Monte Carlo, distributed network}};
\node (ham_cc) [rectangle, draw=gray, text centered, text=gray, above=0.3 of ham_server, text width=3cm] {$\ham_n$ \\ {two-sided error, communication complexity}};
\node (conn_cc) [rectangle, draw=gray, text centered, text=gray, below=0.3 of conn_server, text width=3cm] {$(\beta n)-\ham_n$ \\ {one-sided error, communication complexity}};
\node (disj_cc) [rectangle, draw=gray, text centered, text=gray, below=0.3 of disj_server, text width=3cm] {$(\beta n)-\eq_n$\\ {one-sided error, communication complexity}};
\draw[->, thick, black] (xor.east) -- node [midway] {Sec~\ref{sec:basic lower bounds server model}}  (ipthree_server.west);
\draw[->, thick, black] (ipthree_server.east) -- node [midway] {Sec~\ref{sec:server graph}}  (ham_server.west);
\draw[->, thick, black] (ham_server.east) -- node [midway] {Sec~\ref{sec:from server to distributed}\&\ref{sec:proof of main}} (ham_distributed.west);
%
%\draw[->, thick, black] (xor.east) -- ++(0.5, 0) |- node [midway] {Sec~\ref{sec:basic lower bounds server model}} (disj_server.west);
\draw[->, thick, black] (and.east) -- node [midway] {Sec~\ref{sec:basic lower bounds server model}}  (disj_server.west);
\draw[->, thick, black] (disj_server.east) -- node [midway] {Sec~\ref{sec:server graph}} (conn_server.west);
\draw[->, thick, black] (conn_server.east) --  node [midway] {Sec~\ref{sec:from server to distributed}\&\ref{sec:proof of main}} (conn_distributed.west);
\draw[->, gray] (ham_server.north) -- (ham_cc.south);
\draw[->, gray] (conn_server.south) -- (conn_cc.north);
\draw[->, gray] (disj_server.south) -- (disj_cc.north);
  \node (nonlocal) [above=1 of xor, text width=2.3cm, text centered, text=blue] {{\bf (Two-Player) Nonlocal Games}};
  \draw [-, dashed, blue] ([xshift=0.6cm] nonlocal.east) -- ++(0, -5);
  \node (server) [right=2.7 of nonlocal, text width=2cm, text centered, text=blue] {{\bf Server Model}};
  \draw [-, dashed, blue] ([xshift=4cm] server.east) -- ++(0, -5);
  \node (distributed) [right=5 of server, text width=2cm, text centered, text=blue] {{\bf Distributed Networks}};
\end{tikzpicture}
\caption{\small\it Our proof structure. Lines in gray show the implications of our results in communication complexity.}\label{fig:proof structure}
\end{figure*}

\subsection{Lower Bound Techniques for Quantum Distributed Computing}
\label{sec:techniques}
%This is the proposed new section. We can define the server model as well as the graph problems here. We can also move part of technique overview section here.

%Our main technical contribution is a simple uniform technique for proving quantum lower bounds for distributed algorithms. Our technique is outlined in Fig.~\ref{fig:proof structure}.

The high-level idea behind our lower bound proofs is establishing a connection between quantum communication complexity and quantum distributed network computing via a new communication model called the {\em Server model}, as shown in two middle columns of Fig.~\ref{fig:proof structure}.  This model is a generalization of the standard two-party communication complexity model in the sense that the Server model can simulate the two-party model; thus, lower bounds on this model imply lower bounds on the two-party network models. More importantly, we show that lower bounds on this model imply lower bounds on the quantum distributed model as well (cf. Section~\ref{sec:from server to distributed}~\&~\ref{sec:proof of main}). This is depicted by the rightmost arrows in Fig.~\ref{fig:proof structure}.
In addition, we prove quantum lower bounds in the server model, some of which also imply {\em new} lower bounds in the two-party model for problems such as Hamiltonian cycle and spanning tree, even in the classical setting. This is done by showing that certain techniques based on {\em nonlocal games} can be extended to prove lower bounds on the Server model (cf. Section~\ref{sec:basic lower bounds server model}) as depicted by leftmost arrows in Fig.~\ref{fig:proof structure}, and by reductions between problems in the Server models (cf. Section~\ref{sec:server graph}) as depicted by middle arrows in Fig.~\ref{fig:proof structure}.
%
%
%\paragraph{Server Model}
%
\begin{definition}[Server Model]\label{def:server model} There are three players in the server model: Carol, David and the server. Carol and David receive the inputs $x$ and $y$, respectively, and want to compute $f(x, y)$ for some function $f$. (Observe that the server receives no input.) Carol and David can talk to each other. Additionally, they can talk to the server. The catch here is that the server can send messages for {\em free}. Thus, the communication complexity in the server model counts only messages sent by Carol and David.
\end{definition}

We let $Q^{*, \server}_{\epsilon_0, \epsilon_1}(f)$ denote the communication complexity --- in the quantum setting with entanglement --- of computing function $f$ where for any $i$-input (an input whose correct output is $i\in\{0, 1\}$) the algorithm must output $i$ with probability at least $1-\epsilon_i$. We will write $Q^{*, \server}_{\epsilon}(f)$ instead of $Q^{*, \server}_{\epsilon, \epsilon}(f)$.
For the standard two-party communication complexity model \cite{KNbook}, we use  $Q^{*, cc}_{\epsilon_0, \epsilon_1}(f)$ to denote the communication complexity in the quantum setting with entanglement.

To the best of our knowledge, the Server model is different from other known models in communication complexity. Clearly, it is different from multi-party communication complexity since the server receives no input and can send information for {\em free}. Moreover, 
%it is easy to see that the Server model is at least as strong as the standard two-party communication complexity model since Carol and David can simulate the standard two-party protocols and simply ignore the server. In fact,
it is easy to see that the Server model, even without prior entanglement, is at least as strong as the standard quantum communication complexity model with shared entanglement, since the server can dispense any entangled state to Carol and David.
Interestingly, it turns out that the Server model is {\em equivalent} to the standard two-party model in the classical communication setting, while it is not clear if this is the case in the quantum communication setting. This is the main reason that proving lower bounds in the quantum setting is more challenging in its classical counterpart.

To explain some issues in the quantum setting, let us sketch the proof of the fact that the two models are {\em equivalent} in the classical setting. Let us first consider the deterministic setting. The proof is by the following ``simulation'' argument. Alice will simulate Carol and the server. Bob will simulate David and the server. In each round, Alice will see all messages sent from the server to Carol and thus she can keep simulating Carol. However, she does not see the message sent from David to the server which she needs to simulate the server. So, she must get this message from Bob. Similarly, Bob will get from Alice the message sent from Carol to the server. These are the only messages we need in each round in order to be able to simulate the protocol. Observe that the total number of bits sent between Alice and Bob is exactly the number of bits sent by Carol and David to the server. Thus, the complexities of both models are exactly the same in the deterministic case.
We can conclude the same thing for the public coin setting (where all parties share a random string) since Alice and Bob can use their shared coin to simulate the shared coin of Carol, David and the server.

The above argument, however, does not seem to work in the quantum setting. The main issue with a simulation along the lines of the one sketched above is that Alice and Bob cannot simulate a ``copy'' of the server each.
For instance one could try to simulate the server's state in a distributed way by maintaining the state that results by applying CNOT to every qubit of the server and a fresh qubit, and distribute these qubits to Alice and Bob. But then if the server sends a message to Carol, Bob would have to disentangle the corresponding qubits in his copy, which would require a message to Alice.

While we leave as an open question whether the two models are equivalent in the quantum setting, we prove that many lower bounds in the two-party model extend to the server model, via a technique called nonlocal games.

\paragraph{Lower Bound Techniques on the Server Model (Details in Section~\ref{sec:basic lower bounds server model})} 
We show that many hardness results  in the two-party model  (where there is no server) carry over to the Server model. This is the only part that the readers need some background in quantum computing. 
%
%We show that the Server model is weak enough that several hard problems in two-party communication complexity (where there is no server) remain hard. (In contrast, the standard two-party model turns out to be too weak to use to prove quantum distributed lower bounds --- this is why we introduce the Server model.)
%
%An important aspect of the Server model is that, while it is strong enough to capture the hardness of several distributed computing problems, it is also weak enough that several hard problems in two-party communication complexity (where there is no server) remain hard. (In contrast, the standard two-party model turns out to be too weak to use to prove quantum distributed lower bounds, as discussed in Section~\ref{sec:related}. This is why we introduce the Server model.)
%
The main difficulty in showing this is that, the Server model, even {\em without} prior entanglement, is clearly at least {\em as strong as} the standard quantum communication complexity model (where there is no server) {\em with} shared entanglement, since the server can dispense any entangled state to Carol and David. Thus, it is a challenging problem, which could be of an independent interest, whether {\em all} hard problems in the standard model remain hard in the server model.

%2) Some graph problems we consider, e.g. Hamiltonian Cycle and Spanning Tree verification, are not known to be hard, even in the classical two-party model (they are left as an open problem in \cite{DasSarmaHKKNPPW11}).

While we do not fully answer the above problem, we identify a set of lower bound techniques in the standard quantum communication complexity model that can be carried over to the Server model, and use them to show that {\em many} problems remain hard. Specifically, we show that techniques based on the (two-player) {\em nonlocal} games (see, e.g., \cite{LeeS09,LeeZ10,fooling12}) can be extended to show lower bounds on the Server model.

Nonlocal games are games where two players, Alice and Bob, receive an input $x$ and $y$ from some distribution that is known to them and want to compute $f(x, y)$. Players cannot talk to each other; instead, they output one bit, say $a$ and $b$, which are then combined to be an output. For example, in {\em XOR games} and {\em AND games}, these bits are combined as $a\oplus b$ and $a\wedge b$, respectively. The players' goal is to maximize the probability that the output is $f(x, y)$.
We relate nonlocal games to the server model by showing that the XOR- and AND-game players can use an efficient server-model protocol to guarantee a good winning chance: 
\begin{lemma}\label{lemma:xor and simulate server}\label{LEMMA:XOR AND SIMULATE SERVER}{\sc (Server Model Lower Bounds via Non\-local Games)} For any boolean function $f$ and $\epsilon_0, \epsilon_1\geq 0$, there is an (two-player nonlocal) XOR-game strategy $\cA'$ (respectively, AND-game strategy $\cA''$) such that, for any input $(x, y)$, with probability $4^{-2Q^{*, \server}_{\epsilon_0, \epsilon_1}(f)}$, $\cA'$ (respectively, $\cA''$) outputs $f(x, y)$ with probability at least $1-\epsilon_{f(x, y)}$ (i.e. it outputs $1$ with probability at least $1-\epsilon_1$ and $0$ with probability at least $1-\epsilon_0$);
%
%
%output $f(x, y)$ with probability at least $1-\epsilon_{f(x,y)}$;
%
otherwise (with probability $1-4^{-2Q^{*, \server}_{\epsilon_0, \epsilon_1}(f)}$), $\cA'$ outputs $0$ and $1$ with probability $1/2$ each (respectively, $\cA''$ outputs $0$ with probability $1$).
\end{lemma}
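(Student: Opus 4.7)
The plan is to take an optimal server-model quantum protocol $\Pi$ computing $f$ with error profile $(\epsilon_0,\epsilon_1)$ and $c := Q^{*,\server}_{\epsilon_0,\epsilon_1}(f)$ qubits of counted communication, and turn it into a two-player nonlocal-game strategy by replacing every quantum message that must cross the cut between the two nonlocal players (Alice and Bob) with quantum teleportation, using pre-shared entanglement in place of the classical Pauli corrections. Since the players are not allowed to communicate in a nonlocal game, the receiver of each teleportation simply applies the identity correction, i.e.\ guesses that the two classical correction bits are $(0,0)$; each teleportation then succeeds independently with probability exactly $1/4$, and conditioned on every teleportation succeeding the entire simulation is indistinguishable from $\Pi$ and therefore outputs $f(x,y)$ with error at most $\epsilon_{f(x,y)}$. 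The quoted bound $4^{-2c}$ corresponds to the claim that at most $2c$ qubits ever need to be teleported.

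\textbf{Placing the server.} I would put Alice on the Carol side and Bob on the David side, and attach the server to whichever side minimizes the cut traffic. With the server sitting on Alice's side, every Carol$\leftrightarrow$server exchange is internal to Alice and costs nothing, so the qubits that must be teleported are exactly those traveling on (i)~Carol$\to$David, (ii)~David$\to$Carol, (iii)~David$\to$server, and (iv)~server$\to$David edges. The first three categories are each counted by $Q^{*,\server}$ and so sum to at most $c$. Category~(iv) is delicate: server-to-David qubits are free in $Q^{*,\server}$ but still have to be teleported across the nonlocal cut. I would bound the crossing cost by combining a choice of the better server placement (symmetric argument between sides) with a normalization that makes the server reversible, so that its outbound traffic on each side is controlled by the incoming traffic it has received from that side plus fresh ancillas that can be pushed into the pre-shared entangled state. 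Together, these imply that the total number of teleportations does not exceed $2c$, which is exactly where the factor of $2$ in the exponent of $4^{-2c}$ originates.

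\textbf{Shaping the output for XOR and AND games.} Each party can locally verify whether \emph{its own} Bell measurements in the teleportations it performed all yielded the outcome $(0,0)$; if so it ``believes'' the simulation succeeded, otherwise it ``aborts.'' For the XOR strategy $\cA'$, a party that aborts outputs a fresh uniform bit (which already forces $a\oplus b$ to be uniform regardless of the other party), while in the believing branch I one-time-pad the final output of $\Pi$ between the two players so that $a\oplus b=f(x,y)$ with probability at least $1-\epsilon_{f(x,y)}$. For the AND strategy $\cA''$, a party that aborts outputs $0$ (forcing $a\wedge b=0$), while in the believing branch the players can be coordinated to produce bits whose AND equals the simulated answer by the standard construction: Alice outputs a uniform bit $r$ and Bob outputs $r$ conditioned on $\Pi$ accepting, so that the AND equals the accept indicator with the prescribed bias. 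Conditioning on whether both parties are in the ``believe'' branch -- which happens precisely when all $\le 2c$ teleportations succeed -- reproduces the mixture claimed in the lemma.

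\textbf{Main obstacle.} The genuine difficulty is the placement/normalization argument for the server in the second paragraph: a priori the server can emit unboundedly many qubits to Carol and to David for free, so without further structure the number of qubits that must be teleported across the Alice/Bob cut is not obviously $O(c)$. Handling this requires a careful canonical form for $\Pi$ in which the server acts as a clean reversible interface whose outbound traffic on each side is tied to its inbound traffic from that side, combined with choosing the better of the two possible server-to-player attachments. Once this structural step is in place, the remaining analysis is a direct teleportation-plus-guessing template, and the success/failure dichotomy analysis of $\cA'$ and $\cA''$ is a routine calculation yielding exactly the stated mixture.
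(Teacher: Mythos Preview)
Your proposal has a genuine gap at exactly the point you flag as the ``main obstacle,'' and the fix you sketch does not work.

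The problem is the server-to-David traffic. In the server model this channel carries an \emph{unbounded} number of qubits for free, and there is no canonical form that ties the server's outbound traffic on one side to its inbound traffic from that side. The server begins with its own workspace (which after purification can be arbitrarily large), and even after receiving only a few qubits it may legitimately ship megabyte-sized entangled states to David every round. Reversibility of the server's map does not constrain the \emph{dimension} of what it sends. Attaching the server to Alice and teleporting server$\to$David messages would therefore cost $4^{-M}$ for an uncontrolled $M$, not $4^{-2c}$, and choosing the ``better'' side does not help: a protocol can have the server send huge messages to \emph{both} Carol and David.

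The paper's argument avoids this entirely by \emph{not} attaching the server to either player. Instead it keeps the server as a third (``fake'') party and makes its incoming channel carry no information. Concretely: route Carol$\leftrightarrow$David through the server (free), then use teleportation with server-supplied EPR pairs (also free) to replace Carol$\to$server and David$\to$server by $2c$ \emph{classical} bits each. Now Alice, Bob, and the fake server share random strings $a',b'$ of total length $4c$; Alice and Bob each abort unless their own teleportation outcomes match their half of the guess, and the fake server simply pretends it received $a',b'$. The probability that the guess is correct is $2^{-4c}=4^{-2c}$. Because the fake server's entire transcript now depends only on the pre-agreed $(a',b')$ and never on $x$ or $y$, all of its (arbitrarily large) messages to Alice and Bob can be generated \emph{before} the inputs arrive and folded into the shared entangled state. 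That is the step that dissolves the unbounded-server-output problem; your placement-based argument has no analogue of it.

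Two smaller points. First, no one-time pad is needed for the XOR branch: on non-abort Alice outputs Carol's answer and Bob outputs $0$. Second, your AND-game construction (``Alice outputs a uniform $r$, Bob outputs $r$ conditioned on $\Pi$ accepting'') does not type-check, since Bob does not see the protocol's output; the paper simply has Alice output Carol's answer and Bob output $1$ on non-abort, and the aborting party output $0$.
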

Roughly speaking, the above lemma compares two cases: in the ``good'' case $\cA'$  outputs the correct vaue of $f(x, y)$ with high probability (the probability controlled by $\epsilon_0$ and $\epsilon_1$) and in the ``bad case'' $\cA'$ simply outputs a random bit. It shows that if $Q^{*, \server}_{\epsilon_0, \epsilon_1}(f)$  is small, then the ``good'' case will happen with a non-negligible probability. In other words, the lemma says that if $Q^{*, \server}_{\epsilon_0, \epsilon_1}(f)$ is small, then the probability that the nonlocal game players win the game will be high. 

This lemma gives us an access to several lower bound techniques via nonlocal games. For example, following the $\gamma_2$-norm techniques in \cite{LinialS09,Sherstov11,LeeZ10} and the recent method of \cite{fooling12}, we show one- and two-sided error lower bounds for many problems on the server model (in particular, we can obtain lower bounds in general forms as in \cite{Razborov03,Sherstov11,LeeZ10}). These lower bounds match the two-party model lower bounds.

\paragraph{Graph Problems and Reductions between Server-Model Problems (Details in Section~\ref{sec:server graph})} To bring the hardness in the Server model to the distributed setting, we have to prepare hardness for the right problems in the Server model so that it is easy to translate to the distributed setting. In particular, the problems that we need are the following graph problems.

\begin{definition}[Server-Model Graph Problems]\label{def:server model graph problems}
Let $G$ be a graph of $n$ nodes\footnote{To avoid confusion, throughout the paper we use $G$ to denote the input graph in the Server model and $N$ and $M$ to denote the distributed network and its subnetwork, respectively, unless specified otherwise. For any graph $H$, we use $V(H)$ and $E(H)$ to denote the set of nodes and edges in $H$, respectively.}. We partition edges of $G$ to $E_C(G)$ and $E_D(G)$, which are given to David and Carol, respectively. The two players have to determine whether $G$ has some property, e.g., $G$ is a Hamiltonian cycle ($\ham_n$)\footnote{$\ham_n$ is used for the Hamiltonian cycle verification problem in the Server models, where $n$ denotes the size of input graphs,  and $\ham(N)$ is used for the Hamiltonian cycle verification problem on a distributed network $N$ (defined in Section~\ref{sec:problems}).}, a spanning tree ($\st_n$), or is connected ($\conn_n$). For the purpose of this paper in proving lower bounds for distributed algorithms, we restrict the problem and assume that in the case of the Hamiltonian cycle problem $E_G(C)$ and $E_D(C)$ are both perfect matchings.
%we restrict that {\em $E_C(G)$ and $E_D(G)$ are  perfect matchings} in the case of the Hamiltonian cycle problem.
\end{definition}
%
%We let $Q^{*, \server}_{\epsilon_0, \epsilon_1}(\cP_n)$ denote the communication complexity --- in the quantum setting with entanglement --- of graph property $\cP$ on $n$-node input graph where for any $i$-input (an input whose correct output is $i\in\{0, 1\}$) the algorithm must output $i$ with probability at least $1-\epsilon_i$. We simply write $Q^{*, \server}_{\epsilon}(\cP_n)$ instead of $Q^{*, \server}_{\epsilon, \epsilon}(\cP_n)$.\danupon{Try to avoid $Q^{*, \server}_{\epsilon}(\cP_n)$} For the standard two-party communication complexity model \cite{KNbook}, we use  $Q^{*, cc}_{\epsilon_0, \epsilon_1}(\cP_n)$ to denote the communication complexity in the quantum setting with entanglement.
%
%For the {\em server model}, we use $Q^{*, \server}_{\epsilon}(\cP_n)$ to denote the communication complexity.
%
We also consider the gap version in the case of communication complexity. The notion of $\delta$-far is slightly different from the distributed setting (cf. Section~\ref{sec:problems}) in that we can add {\em any} edges to $G$ instead of adding {\em only} edges in $N$ to $M$.
The {\em main challenge} in showing hardness results  for these graph problems is that some of them, e.g. Hamiltonian cycle and spanning tree verification, are not known to be hard, even in the classical two-party model (they are left as open problems in \cite{DasSarmaHKKNPPW11}). To get through this, we derive several new reductions (using novel gadgets) to obtain this:

\begin{theorem}{\sc (Server-Model Lower Bounds for $\ham_n$)}\label{theorem:graph lower bound server model}\label{theorem:main cc}\label{THEOREM:MAIN CC}
%For any $n$ and some constants $\epsilon, \beta>0$, $Q^{*, \server}_{\epsilon,\epsilon}(\ham_n)$ and $Q^{*, \server}_{0, \epsilon}((\beta n)\mbox{-}\ham_n)$ are $\Omega(n)$.
There  exist some constants $\eps,\beta > 0$ such that for any $n$, 
$Q^{*, \server}_{\epsilon,\epsilon}(\ham_n)$ and $Q^{*, \server}_{0, \epsilon}((\beta n)\mbox{-}\ham_n)$ are $\Omega(n)$.
%\begin{align}
%Q^{*, \server}_\epsilon(\ham_n) &=\Omega(n)~~~\mbox{and}~~~\label{eq:serverham}\\
%Q^{*, \server}_{0, \epsilon}((\beta n)\mbox{-}\ham_n) &=\Omega(n)\label{eq:serverham}\,.
%\end{align}
\end{theorem}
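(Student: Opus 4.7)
My plan follows the two-column chain in Figure~\ref{fig:proof structure}: I handle the two-sided claim by going through $\ipmodthree_n$, and the one-sided gap claim by going through $(\beta n)\mbox{-}\eq_n$. The first step is to establish $\Omega(n)$ Server-model lower bounds for these two base problems using Lemma~\ref{lemma:xor and simulate server}. Supposing $Q^{*, \server}_{\epsilon,\epsilon}(\ipmodthree_n) = c$ and plugging it into the XOR-game half of the lemma yields a nonlocal XOR-game strategy for $\ipmodthree_n$ whose bias is $(1-2\epsilon)\cdot 4^{-2c}$; classical $\gamma_2^*$-norm or discrepancy estimates bound the XOR-game bias for $\ipmodthree_n$ by $2^{-\Omega(n)}$, forcing $c = \Omega(n)$. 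For $(\beta n)\mbox{-}\eq_n$ I feed a purported server-model protocol into the AND-game half of the lemma; a fooling-set / corruption argument (following Razborov, Lee--Zhang, and the $\gamma_2^*$-norm fooling-distribution method cited in Section~\ref{sec:techniques}) bounds the success of any AND-game strategy by $2^{-\Omega(n)}$ for a suitable constant $\beta > 0$, again forcing $c = \Omega(n)$.

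Next, I reduce each base problem to its graph counterpart with $O(1)$ vertex blow-up and no extra communication: Carol builds $E_C(G)$ locally from her input and David builds $E_D(G)$ locally from his, so any server-model protocol for the graph problem immediately yields one for the base problem at the same cost. Because $E_C(G)$ and $E_D(G)$ are perfect matchings, $G$ is already a $2$-regular graph, i.e.\ a disjoint union of even cycles, and so is Hamiltonian iff it consists of a single cycle. For the first claim I design a per-coordinate gadget on $O(1)$ vertices whose effect on the cycle structure depends only on the ``local interaction'' $x_i \cdot y_i$, arranged so that the number of cycles of $G$ modulo $3$ encodes $\langle x, y\rangle \bmod 3$; thus $G$ is a single Hamiltonian cycle exactly when $\langle x, y\rangle \equiv 0 \pmod 3$. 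For the second claim I use ``matching'' gadgets that thread all coordinates into one cycle when $x_i = y_i$ for every $i$, but spawn a fresh small cycle at every disagreement coordinate; if $x$ and $y$ disagree in at least $\beta n$ places, $G$ then contains $\Omega(\beta n)$ vertex-disjoint small cycles, and merging these into a single Hamiltonian cycle requires $\Omega(\beta n)$ new edges, so $G$ is $\Omega(\beta n)$-far from Hamiltonian in the sense of Section~\ref{sec:problems}.

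Composing the two steps yields the two $\Omega(n)$ bounds in the statement, with explicit constants $\epsilon, \beta$ inherited from the base lower bounds. I expect the main obstacle to lie in step~2: the gadget must simultaneously (i) respect the rigid promise that $E_C(G)$ and $E_D(G)$ are both perfect matchings on the same vertex set, (ii) use only $O(1)$ vertices per input bit so the $\Omega(n)$ bound is not diluted, and (iii) in the gap case, guarantee that Hamming disagreements produce \emph{vertex-disjoint} local cycles, so that Hamming distance translates faithfully into $\delta$-farness from a Hamiltonian cycle. Once such gadgets are in place, no further quantum or information-theoretic machinery is needed --- the nonlocal-game bounds from step~1 plug directly into the reductions.
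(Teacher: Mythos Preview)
Your plan matches the paper's proof: server-model lower bounds for $\ipmodthree_n$ and $(\beta n)\mbox{-}\eq_n$ via Lemma~\ref{lemma:xor and simulate server} (combined respectively with the $\gamma_2$/approximate-degree machinery of \cite{LinialS09,LeeZ10,Sherstov11} and the fooling-set bound of~\cite{fooling12} over a good error-correcting code), followed by per-coordinate gadget reductions in which Carol's and David's edges each form a perfect matching.

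Two small corrections on the gadgets, since you flag step~2 as the obstacle. For $\ipmodthree$, the paper's gadget maintains three parallel ``tracks'': gadget $G_i$ connects $v_{i-1}^j$ to $v_i^{(j+x_iy_i)\bmod 3}$, i.e.\ it implements a cyclic shift by $x_iy_i\in\{0,1\}$ on the three tracks. After identifying $v_0^j$ with $v_n^j$, the graph is a single Hamiltonian cycle iff the total shift $\sum_i x_iy_i\bmod 3$ is \emph{nonzero} (one long cycle vs.\ three parallel cycles). So the mechanism is a track-permutation, not ``number of cycles mod~$3$ encodes $\langle x,y\rangle\bmod 3$,'' and your stated direction is inverted --- harmlessly, since negating the output does not affect the lower bound. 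For the gap reduction from $(\beta n)\mbox{-}\eq_n$, the paper's two-track gadget does not spawn a small local cycle at each disagreement; rather, the $\delta$ disagreement positions $i_1<\cdots<i_\delta$ split the chain into exactly $\delta$ cycles, each running from gadget $G_{i_j}$ to gadget $G_{i_{j+1}}$. The count (hence $\delta$-farness) is as you want, but the cycles are not ``small'' or vertex-local.
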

We prove Theorem~\ref{theorem:graph lower bound server model} using elementary (but intricate) gadget-based reductions. Thus, no knowledge in quantum computing is required to understand this proof. 
Theorem~\ref{theorem:graph lower bound server model} also leads to lower bounds that are new even in the classical two-party model. We discuss this in Section~\ref{sec:additional results}.

\paragraph{Quantum Simulation Theorem: From Server Model to Distributed Algorithms (Details in Section~\ref{sec:from server to distributed})} 

%We show that many hardness results  in the two-party model carries over to the Server model. In general, our technique can be used to prove many bounds in the server model for many problems, such as {\em Inner Product} (with some modulo); in particular, we can obtain lower bounds in general forms as in \cite{Razborov03,Sherstov11,LeeZ10}. Particular lower bounds that are of an interest in this paper are those for {\em distributed graph problems}. We do this via {\em Server-model graph problems}, defined as follows.\danupon{We should also mention that we can prove lower bounds of many other standard problems.}
%

To show the role of the Server model in proving distributed algorithm lower bounds, we prove a {\em quantum version} of the {\em Simulation Theorem} of \cite{DasSarmaHKKNPPW11} (cf. Section \ref{sec:from server to distributed})
which shows that the hardness of graph problems of our interest in the Server model implies the hardness of these problems in the  quantum distributed setting (the theorem below holds for several graph problems but we state it only for the Hamiltonian Cycle verification problem since it is sufficient for our purpose):

\begin{theorem}[Quantum Simulation Theorem]\label{theorem:from server to distributed}\label{THEOREM:FROM SERVER TO DISTRIBUTED}
For any $B$, $L$, $\Gamma\geq \log L$, $\beta\geq 0$ and $\epsilon_0, \epsilon_1>0$, there exists a $B$-model quantum network $N$ of diameter $\Theta(\log L)$ and $\Theta(\Gamma L)$ nodes such that if $Q_{\epsilon_0, \epsilon_1}^{*, N}((\beta\Gamma)\mbox{-}\ham(N))\leq \frac{L}{2}-2$ then $Q_{\epsilon_0, \epsilon_1}^{*, \server}((\beta\Gamma)\mbox{-}\ham_\Gamma)= O((B\log L)Q_{\epsilon_0, \epsilon_1}^{*,N}((\beta\Gamma)\mbox{-}\ham(N)))$. \end{theorem}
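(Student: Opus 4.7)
The plan is to adapt the classical Simulation Theorem of Das Sarma et al. to the quantum setting, using the server as a third simulator to handle the ``middle'' of the network which in the classical argument is redundantly computed by both parties --- an option forbidden in the quantum setting by no-cloning. Starting from a Das Sarma et al.-style highway construction, I will build $N$ by taking $2\Gamma$ terminal nodes $u_1,\dots,u_\Gamma,v_1,\dots,v_\Gamma$, connecting each pair $(u_i,v_i)$ by an internally disjoint path of $\Theta(L)$ nodes, and overlaying a balanced binary highway tree of depth $\Theta(\log L)$ whose leaves are the terminals. This gives diameter $\Theta(\log L)$ and $\Theta(\Gamma L)$ nodes as required. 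Given a Server-model input $(E_C(G),E_D(G))$ of $(\beta\Gamma)\mbox{-}\ham_\Gamma$, I will let the subnetwork $M$ consist of all path edges together with an edge $\{u_i,u_j\}$ for every Carol edge $\{i,j\}\in E_C(G)$ and an edge $\{v_i,v_j\}$ for every David edge $\{i,j\}\in E_D(G)$, with no tree edges in $M$. A case analysis --- essentially that a Hamiltonian cycle of $N$ must traverse each of the $\Gamma$ paths exactly once --- will show that $M$ is a $(\beta\Gamma)\mbox{-}\ham(N)$ instance iff $G$ is a $(\beta\Gamma)\mbox{-}\ham_\Gamma$ instance, so the two problems reduce to each other.

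Next I will partition $V(N)=V_C\sqcup V_D\sqcup V_S$, placing Carol's terminals $\{u_i\}$ together with the top half of each connecting path and the subtree of the highway hanging off the $u$-side into $V_C$, the symmetric objects on the $v$-side into $V_D$, and the remaining ``middle'' --- the middle node of each path together with the top $O(\log L)$ levels of the highway tree that connect the two half-subtrees --- into $V_S$. Carol holds the quantum registers of the nodes in $V_C$, David holds $V_D$, and the server holds $V_S$; the input-independent initial entangled state used by a given $T$-round quantum algorithm $\cA$ is prepared by the server before round 1 and distributed to Carol and David for free. In each simulated round every party applies the local unitaries of its own nodes and then the $B$-qubit quantum messages are forwarded along the incident edges of $N$: messages internal to a single partition are free, server-initiated messages are free by definition of the Server model, and only messages Carol or David must send to the server (or directly to each other) are charged.

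The key analytic step is the light-cone bound: for $t\le T\le L/2-2$, any node at graph-distance exceeding $t$ from both terminal rows is still in its input-independent initial state, and hence so is every message it emits. In particular the middle-path nodes in $V_S$ and the middle-path edges in the $V_C$-$V_S$ cut always carry input-independent messages during the simulation, and can be simulated by the server from public information without any transmission from Carol or David. The only edges ever carrying input-dependent qubits between partitions are the $O(\log L)$ tree edges near the top of the highway where the $V_C$-$V_S$ and $V_D$-$V_S$ cuts lie. Each contributes $B$ charged qubits per round, giving a per-round cost of $O(B\log L)$ and a total of $O(TB\log L)=O((B\log L)\cdot Q_{\epsilon_0,\epsilon_1}^{*,N})$, as claimed. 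Correctness is then immediate: modulo relabelling of which party physically stores each register, after round $t$ the joint (Carol, David, server) state is unitarily equivalent to the global state of $\cA$ after round $t$, so the final local measurement of the output bit --- performed at, say, an arbitrary Carol-held terminal --- returns the correct $\ham(N)$ answer, and hence by the reduction the correct $\ham_\Gamma$ answer, with error probabilities $\epsilon_0,\epsilon_1$.

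The main obstacle is the absence of a direct quantum analogue of the classical simulator trick in which Alice and Bob each maintain a copy of the ``shared'' middle of the network: once middle nodes become entangled with input-dependent registers of either side, no-cloning forbids maintaining two copies, and distributing the entangled middle state between two parties would cost quantum communication proportional to the state's dimension, precisely what the theorem is trying to avoid. The Server model bypasses this obstruction by supplying a third physical register-holder whose outbound channel is free, so that the entangled middle state lives in a single location and can still be ``read'' by both Carol and David via cost-free server-to-party quantum messages. The precondition $T\le L/2-2$ is what makes the charged-edge accounting reduce cleanly to $O(\log L)$ per round: without it, input-dependent quantum signals could cross a middle-of-path edge within one round and would force Carol (or David) to transmit $\Omega(\Gamma)$ qubits per round to the server across that cut, destroying the bound.
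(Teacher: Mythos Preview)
There is a genuine gap. Your light-cone claim --- that the middle-path nodes and the path edges in the $V_C$--$V_S$ cut carry input-independent messages throughout all $T\le L/2-2$ rounds --- is incompatible with the diameter requirement $\Theta(\log L)$. For a node to stay input-independent for $T$ rounds it must sit at graph distance greater than $T$ from every terminal; but in a network of diameter $O(\log L)$ \emph{every} node is within $O(\log L)$ of the terminal rows, so once $t$ exceeds $O(\log L)$ every state is entangled with the input. Concretely: if your highway tree touches only the $2\Gamma$ terminals (as ``whose leaves are the terminals'' suggests), then two middle-path nodes on distinct paths are $\Theta(L)$ apart and the diameter is $\Theta(L)$, contradicting the theorem statement; if instead the highways attach to interior path positions (as they must for diameter $O(\log L)$), input-dependence reaches the middle in $O(\log L)$ rounds and the light-cone step collapses. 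Either way, a \emph{static} partition leaves $\Gamma$ path edges in the $V_C$--$V_S$ cut carrying input-dependent $B$-qubit messages that Carol must ship to the server every round, giving cost $\Omega(\Gamma BT)$ rather than $O((B\log L)T)$.

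The paper's proof sidesteps this not with a light-cone argument but with a \emph{time-varying} partition: at time $t$ Carol owns columns $1,\dots,t{+}1$, David owns columns $L{-}t,\dots,L$, and the server owns the shrinking middle. The crucial point is that the server owned column $t{+}1$ at time $t{-}1$, so the server itself computed that column's round-$t$ outgoing message to column $t{+}2$ and retains it; it then hands the workspace of column $t{+}1$ (together with the incoming message from column $t{+}2$) to Carol for free. Hence no Carol-to-server traffic is ever needed on path edges, and the only charged qubits per round are on the $O(\log L)$ highway edges straddling the moving boundary. Your diagnosis of the obstacle (no-cloning forbids two parties from jointly simulating the middle) and of the Server model as the remedy is exactly right; what is missing is that the server must own \emph{different} nodes at different times, so that the $\Gamma$-wide path cut is always absorbed by free server-to-player transfers.
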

%
% Also note that Das Sarma et al. consider the distributed algorithm for functions while we consider graph problems directly.
%
%
%for $(\beta\Gamma)$-$\conn_\Gamma$ for any $\beta>0$.
%
%\paragraph{Remark} The above theorem can be extended to a large class of graph problems with some certain properties. We state it for only necessary problems to keep the proof simple.
In words, the above theorem states that if there is an $(\epsilon_0, \epsilon_1)$-error quantum distributed algorithm that solves the Hamiltonian cycle verification problem on $N$ in at most $(L/2)-2$ time, i.e.
$Q_{\epsilon_0, \epsilon_1}^{*, N}(\ham(N))\leq (L/2)-2\,,$
then the $(\epsilon_0, \epsilon_1)$-error communication complexity in the Server model of the Hamiltonian cycle problem on $\Gamma$-node graphs is  %$c B\log LR_{\epsilon_0, \epsilon_1}^{N}$
%$\min( \Gamma\lceil\log \Gamma/B\rceil, L\lceil \log \Gamma/B\rceil)$, i.e.
$Q_{\epsilon_0, \epsilon_1}^{*, \server}(\ham_\Gamma)= O((B\log L)Q_{\epsilon_0, \epsilon_1}^{*, N}(\ham(N)))\,.$
%, \lceil\Gamma\log \Gamma/B\rceil)\,.$$
The same statement also holds for its gap version ($(\beta\Gamma)\mbox{-}\ham(N)$).
We note that the above theorem can be extended to a large class of graph problems. %We state it for only \ham for simplicity.
\danupon{One generalization of this which is not yet done is that we can state the theorem for all graphs with some certain properties. This will unnecessarily make the paper more complicated but will be useful as a reference in the future.}
The proof of the above theorem does not need any knowledge in quantum computing to follow. In fact, it can be viewed as a simple modification of the Simulation Theorem in the classical setting \cite{DasSarmaHKKNPPW11}. The main difference, and the most difficult part to get our Quantum Simulation Theorem to work, is to realize that we must start from the Server model instead of the two-party model.

\subsection{Quantum Distributed Lower Bounds}
\label{sec:results} \label{sec:quantum distributed lower bound results}

%They match the previous classical deterministic algorithms and even subsume the previously known lower bounds on the classical model.
%Our bounds imply that quantumness (and thus randomness) does not help speeding up distributed algorithms for fundamental graph problems.
%
%Using our techniques outlined in Subsection \ref{sec:techniques},\danupon{Should we use \S instead of Section?} 

We present specific lower bounds for various fundamental verification and optimization graph problems. Some of these bounds are new even in the classical setting. To the best of our knowledge, our bounds are the first non-trivial lower bounds for fundamental global problems.

\paragraph{1. Verification problems} We prove a {\em tight} two-sided error quantum lower bound of $\tilde\Omega(\sqrt{n})$ time, where $n$ is the number of nodes in the distributed network and $\tilde \Theta(x)$ hides $\poly\log x$, for the {\em Hamiltonian cycle} and {\em spanning tree verification problems}. Our lower bound holds even in a network of small ($O(\log n)$) diameter.
\begin{theorem}[Verification Lower Bounds]\label{theorem:main verification}\label{THEOREM:MAIN VERIFICATION}
For any $B$ and large $n$, there exists $\epsilon>0$ and a $B$-model $n$-node network $N$ of diameter $\Theta(\log n)$ such that any $(\epsilon, \epsilon)$-error quantum algorithm with prior entanglement for Hamiltonian cycle and spanning tree verification on $N$ requires $\Omega(\sqrt{\frac{n}{B\log n}})$ time. That is, $Q^{*, N}_{\epsilon, \epsilon}(\ham(N))$ and $Q^{*, N}_{\epsilon, \epsilon}(\st(N))$ are $\Omega(\sqrt{\frac{n}{B\log n}})$.
\end{theorem}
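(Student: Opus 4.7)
The plan is to derive Theorem~\ref{theorem:main verification} by composing the Server-model lower bound for Hamiltonian cycle (Theorem~\ref{theorem:main cc}) with the Quantum Simulation Theorem (Theorem~\ref{theorem:from server to distributed}) at carefully balanced parameters, and then to transfer the bound to spanning tree verification.

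First I would instantiate the Quantum Simulation Theorem with $L = c\sqrt{n/(B\log n)}$ for a small constant $c>0$ and $\Gamma = \Theta(n/L) = \Theta(\sqrt{nB\log n})$. This produces a $B$-model quantum network $N$ on $\Theta(n)$ nodes with diameter $\Theta(\log L) = \Theta(\log n)$, and the admissibility condition $\Gamma \geq \log L$ holds for all sufficiently large $n$ (since $\sqrt{nB\log n}$ dominates $\log n$). Now assume for contradiction that there exists an $(\epsilon,\epsilon)$-error quantum distributed algorithm with prior entanglement solving $\ham(N)$ in time $T = o(\sqrt{n/(B\log n)})$. Choosing $c$ small enough ensures $T \leq L/2 - 2$, so the hypothesis of Theorem~\ref{theorem:from server to distributed} is met and we obtain
\[
Q^{*,\server}_{\epsilon,\epsilon}(\ham_\Gamma) \;=\; O(B\log L)\cdot T \;=\; o\!\bigl(B\log n \cdot \sqrt{n/(B\log n)}\bigr) \;=\; o\!\bigl(\sqrt{nB\log n}\bigr) \;=\; o(\Gamma),
\]
which contradicts Theorem~\ref{theorem:main cc}. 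This establishes the bound for Hamiltonian cycle verification on $N$.

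For spanning tree verification, the excerpt remarks that the Quantum Simulation Theorem extends to a large class of graph properties, so it suffices to produce a matching Server-model lower bound $Q^{*,\server}_{\epsilon,\epsilon}(\st_\Gamma) = \Omega(\Gamma)$ and then repeat the argument above. I would obtain this either by retracing the chain depicted in Fig.~\ref{fig:proof structure} with $\st$ in place of $\ham$ as the target (replacing the final reduction into $\ham_\Gamma$ by one into $\st_\Gamma$), or more directly by a Server-model reduction from $\ham_\Gamma$: given an input $(E_C,E_D)$ of $\ham_\Gamma$ (two perfect matchings by Definition~\ref{def:server model graph problems}), have the two players attach a fixed, public gadget (known to both without any communication) whose effect is to turn a Hamiltonian cycle into a spanning tree while preventing the formation of a spanning tree whenever the original graph is not Hamiltonian. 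The communication cost is preserved up to constants, so $\Omega(\Gamma)$ transfers.

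The main technical obstacle I anticipate is engineering the $\ham_\Gamma \to \st_\Gamma$ reduction in the Server model so that (i) the partition-into-two-matchings structure demanded by Definition~\ref{def:server model graph problems} and used in Theorem~\ref{theorem:main cc} is preserved, and (ii) the completeness/soundness gap is left intact after gadget insertion. The parameter balancing between $L$ and $\Gamma$ and the contradiction argument itself are routine once the two building blocks are in hand, and—consistent with the modularity advertised in the introduction—no quantum-specific reasoning is needed beyond invoking the two quoted theorems.
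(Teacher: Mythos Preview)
Your argument for $\ham(N)$ matches the paper exactly: balance $L=\Theta(\sqrt{n/(B\log n)})$ against $\Gamma=\Theta(\sqrt{nB\log n})$, assume a too-fast distributed algorithm, apply Theorem~\ref{theorem:from server to distributed}, and contradict Theorem~\ref{theorem:main cc}.

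For $\st(N)$, however, the paper takes a different and much simpler route. Rather than establishing a Server-model lower bound for $\st_\Gamma$ and re-invoking the Simulation Theorem, the paper reduces \emph{inside the distributed model}: given any $T_\cA$-round spanning-tree verifier $\cA$ on $N$, one verifies $\ham(N)$ on input $M$ by first checking in $O(D)$ rounds that every node has degree~$2$ in $M$ (so $M$ is a disjoint union of cycles), then deleting one edge of $M$ and running $\cA$ on the result. The residual graph is a spanning tree iff $M$ was a single cycle, so $Q^{*,N}_{\epsilon,\epsilon}(\ham(N))\le T_\cA+O(D)$ and the $\ham$ lower bound transfers immediately.

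Your route has a real obstacle that is not the one you flagged. Even granting a clean Server-model reduction $\ham_\Gamma\to\st_\Gamma$, you would still need a Simulation Theorem for $\st$. But Theorem~\ref{theorem:from server to distributed} is stated (and proved) only for $\ham$ and its gap version, and the construction behind it does \emph{not} preserve the property of being a tree: the subnetwork $M$ built from any input $G$ always contains all path and highway edges, hence many cycles, so $M$ is never a tree regardless of whether $G$ is. Extending the Simulation Theorem to $\st$ would therefore require a different embedding, not just the remark that it ``extends to a large class of graph problems.'' The paper sidesteps this entirely by reducing at the network level, where the degree-check plus edge-deletion trick is essentially free.
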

Our bound implies a new bound on the classical setting which answers the open problem in \cite{DasSarmaHKKNPPW11}, and is the {\em first randomized lower bound} for both graph problems, subsuming the deterministic lower bounds for Hamiltonian cycle verification \cite{DasSarmaHKKNPPW11}, spanning tree verification \cite{DasSarmaHKKNPPW11} and minimum spanning tree verification \cite{KorKP11}. It is also shown in \cite{DasSarmaHKKNPPW11} that \ham can be reduced to several problems via deterministic classical-communication reductions. Since these reductions can be simulated by quantum protocols, we can use these reductions straightforwardly to show that all lower bounds in \cite{DasSarmaHKKNPPW11} hold even in the quantum setting.
\begin{corollary}\label{corollary:main verification}
The statement in Theorem~\ref{theorem:main verification} holds for the following verification problems: Connected component, spanning connected subgraph, cycle containment, $e$-cycle containment, bipartiteness, $s$-$t$ connectivity, connectivity, cut, edge on all paths, $s$-$t$ cut and least-element list. (See \cite{DasSarmaHKKNPPW11} and Appendix~\ref{sec:formal definition of network} for definitions.)
\end{corollary}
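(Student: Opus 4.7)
The plan is to import the deterministic distributed reductions from Das Sarma et al.~\cite{DasSarmaHKKNPPW11} for each problem $P$ in the list, and combine them with Theorem~\ref{theorem:main verification}. Concretely, for each $P$, \cite{DasSarmaHKKNPPW11} builds, from a hard $\ham(N)$ instance, a network $N_P$ with $|V(N_P)|=\Theta(|V(N)|)$ and $\mathrm{diam}(N_P)=\Theta(\log|V(N)|)$ together with a subnetwork input whose correct $P$-answer determines the correct $\ham$-answer on $(N,M)$. The reductions are structural: the extra gadget nodes attached to $N$ carry fixed input bits known a priori to their simulating originals, so a $T$-round algorithm for $P$ on $N_P$ can be converted into an $O(T)$-round algorithm for $\ham$ on $N$ with no additional rounds of communication. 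Thus my first step is to recall and re-use, one problem at a time, the reductions already given for connected component, spanning connected subgraph, cycle containment, $e$-cycle containment, bipartiteness, $s$-$t$ connectivity, connectivity, cut, edge on all paths, $s$-$t$ cut, and least-element list.

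The next step is to observe that these reductions lift to the quantum setting without modification. Each reduction only prescribes (i) how to enlarge the graph by locally attached gadgets, and (ii) how a distributed solver for $P$ on $N_P$ should be interpreted to yield a $\ham$ answer on $N$. No step measures any shared entangled state, forces bits to have definite values, or otherwise exploits a classical-only property. Hence, given any $(\epsilon,\epsilon)$-error quantum algorithm $\cA$ for $P$ with prior entanglement running in $T$ rounds on $N_P$, one obtains an $(\epsilon,\epsilon)$-error quantum algorithm for $\ham(N)$ with the same entanglement and running in $O(T)$ rounds by letting each original node simulate its attached gadget vertices and pipe the $P$-output back through $\cA$'s quantum transcript. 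Plugging in Theorem~\ref{theorem:main verification} then yields
\[
Q^{*,N_P}_{\epsilon,\epsilon}(P(N_P)) \;=\; \Omega\!\left(\sqrt{\tfrac{|V(N)|}{B\log|V(N)|}}\right) \;=\; \Omega\!\left(\sqrt{\tfrac{n}{B\log n}}\right),
\]
where $n=|V(N_P)|$, and $\mathrm{diam}(N_P)=\Theta(\log n)$ as required.

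The main obstacle — and really the only nontrivial bookkeeping — is to check for each listed problem that the reduction preserves the two-sided error promise in Theorem~\ref{theorem:main verification}: \emph{yes}-instances of $\ham$ must map to \emph{yes}-instances of $P$ and \emph{no}-instances to \emph{no}-instances, under the same error budget. For the connectivity-type problems (connected component, spanning connected subgraph, $s$-$t$ connectivity, connectivity, cut, $s$-$t$ cut, edge on all paths, bipartiteness, cycle containment, $e$-cycle containment) the gadgets of \cite{DasSarmaHKKNPPW11} are trees or paths whose $P$-status is unambiguously controlled by whether $M$ is a Hamiltonian cycle, so the check is routine. The least-element list reduction is more involved because it relies on a specific ID assignment, but it too is oblivious to the communication model and thus survives replacing classical channels by quantum ones with entanglement. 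Once these verifications are performed, the corollary follows.
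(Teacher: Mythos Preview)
Your proposal is correct and takes essentially the same approach as the paper: import the deterministic reductions from \cite{DasSarmaHKKNPPW11} starting from $\ham(N)$, observe that they are purely structural (local gadgets, no model-specific steps) and hence carry over verbatim to the quantum-with-entanglement setting, and conclude via Theorem~\ref{theorem:main verification}. The paper's own argument is the one-sentence remark preceding the corollary (``Since these reductions can be simulated by quantum protocols, we can use these reductions straightforwardly\ldots''); you have simply spelled out the bookkeeping more explicitly.
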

 Fig.~\ref{fig:main_verification} compares our results with previous results for verification problems.

\begin{figure*}
%\footnotesize
\center
\small
\begin{tabular}{m{1cm}|m{4cm}|m{5cm}|m{5cm}}
  \cline{2-4}
  &\bf Problems & \bf Previous results &\bf Our results\\
  %\cline{2-4}
  \cline{2-4}
  \hline
  \multirow{3}{1cm}{\begin{sideways}\makecell[c]{\parbox{2cm}{\center \tiny \bf $B$-model distributed network}}\end{sideways}}&\ham, \st, {\sf MST} verification & $\Omega(\sqrt{n/B\log n})$ deterministic, classical communication \cite{DasSarmaHKKNPPW11,KorKP11} & \multirow{2}{5cm}{$\Omega(\sqrt{n/B\log n})$ two-sided error, quantum communication with entanglement} \\
  \cline{2-3}
  &\conn and other verification problems from \cite{DasSarmaHKKNPPW11} & $\Omega(\sqrt{n/B\log n})$ two-sided error, classical communication \cite{DasSarmaHKKNPPW11} & \\
  \cline{2-4}
  &$\alpha$-approx {\sf MST} and other optimization problems from \cite{DasSarmaHKKNPPW11} & $\Omega(\sqrt{n/B\log n})$ Monte Carlo, classical communication for $W=\Omega(\alpha n)$ \cite{DasSarmaHKKNPPW11} & $\Omega(\min(\sqrt{n}, W/\alpha)/\sqrt{B\log n})$ Monte Carlo, quantum communication with entanglement\\
  \hline
  \hline
  \multirow{2}{1cm}{\begin{sideways}\makecell[c]{\parbox{1.6cm}{\center \tiny \bf Com\-mu\-ni\-cation Complexity}}\end{sideways}}&\ham, \st, and other verification problems & $\Omega(n)$ one-sided error,  classical communication \cite{RazS95} & $\Omega(n)$ two-sided error,  quantum communication with entanglement\\
   %\multirow{2}{5cm}{\center $\Omega(n)$ two-sided error,  quantum communication with entanglement}\\
  %& \disj, \conn & $\Omega(n)$ one-sided error,  quantum communication \cite{RazS95} &  $\Omega(n)$ two-sided error,  quantum communication with entanglement\\
  \cline{2-4}
  & \gap-\ham, \gap-\st, \gap-\conn, and other gap problems for $\Omega(n)$ gap& unknown & $\Omega(n)$ one-sided error,  quantum communication with entanglement\\ %$\Omega(n)$ two-sided error,  quantum communication with entanglement, even for $\Omega(n)$ gap\\
  \hline
\end{tabular}
%\caption{Types of algorithms that $\tilde \Omega(\sqrt{n})$ lower bound holds for various verification problems. See Theorem~\ref{theorem:main verification} and Corollary~\ref{corollary:main verification} for the full statements.}
\caption{\small\it Previous and our new lower bounds. We note that $n$ is the number of nodes in the network in the case of distributed network and the number of nodes in the input graph in the case of communication complexity.}\label{fig:main_verification}
\end{figure*}

%\begin{corollary}\label{corollary:main verification}
%For any $B$ and large $n$, there exists $\epsilon>0$ and a $B$-model $n$-node network $N$ of diameter $\Theta(\log n)$ such that any $(\epsilon, \epsilon)$-error quantum algorithm with prior entanglement for the following verification problems on $N$ requires $\Omega(\sqrt{\frac{n}{B\log n}})$ time: Connected component, minimum spanning tree, spanning connected subgraph, cycle containment, $e$-cycle containment, bipartiteness, $s$-$t$ connectivity, connectivity, cut, edge on all paths, $s$-$t$ cut and least-element list.
%\end{corollary}

\paragraph{2. Optimization problems} We show a  {\em tight} $\tilde \Omega(\min(W/\alpha, \sqrt{n}))$-time lower bound for any $\alpha$-approximation quantum randomized (Monte Carlo and Las Vegas) distributed algorithm for the MST problem.
\begin{theorem}[Optimization Lower Bounds]\label{theorem:main optimization}\label{THEOREM:MAIN OPTIMIZATION}
For any $n$, $B$, $W$ and $\alpha<W$ there exists $\epsilon>0$ and a $B$-model $\Theta(n)$-node network $N$ of diameter $\Theta(\log n)$ and weight aspect ratio $W$ such that any $\epsilon$-error $\alpha$-approximation quantum algorithm with prior entanglement for computing the minimum spanning tree problem on $N$ requires $\Omega(\frac{1}{\sqrt{B\log n}}\min(W/\alpha$, $\sqrt{n}))$ time.
%For any $n$, $B$, $W$ and $\alpha<W$ there exists $\epsilon>0$ and a $B$-model $\Theta(n)$-node network $N$ of diameter $\Theta(\log n)$ such that any $\epsilon$-error $\alpha$-approximation quantum algorithm with prior entanglement for computing the minimum spanning tree problem on $N$ with weight aspect ratio $W$ 
%%function $w:E(N)\rightarrow \reals_+$ such that $\frac{\max_{e\in E(N)} w(e)}{\min_{e\in E(N)} w(e)}\leq W$ 
%requires $\Omega(\frac{1}{\sqrt{B\log n}}\min(W/\alpha, \sqrt{n}))$ time.
\end{theorem}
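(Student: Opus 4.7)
The strategy is to extend the blueprint used for Theorem~\ref{theorem:main verification}: reduce from a Server-model gap verification problem via the Quantum Simulation Theorem, but route the reduction through a weighted encoding so that an approximate MST algorithm answers the gap problem. The weight aspect ratio $W$ will enter as an extra constraint that caps the depth parameter $L$ of the simulation-theorem network.

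\emph{Source problem and weighted network.} First I would obtain a Server-model lower bound for the $(\beta\Gamma)$-gap version of a spanning-subgraph/connectivity problem, by combining Theorem~\ref{theorem:main cc} with the inter-problem reductions developed in Section~\ref{sec:server graph}; call the resulting $\Omega(\Gamma)$ bound on $Q^{*,\server}_{0,\epsilon}$ our source bound. Next I would take the network $N$ from the Quantum Simulation Theorem with parameters $\Gamma$ and $L$, so $N$ has $\Theta(\Gamma L)$ nodes and diameter $\Theta(\log L)$, and encode the candidate subgraph $M$ exactly as in the unweighted case. I would then assign weight $1$ to edges of $M$ and weight $W$ to edges of $N\setminus M$, padding to $\Theta(n)$ nodes via light auxiliary structure.

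\emph{MST reduction and transfer.} In the YES case, $M$ spans $N$, so the MST has weight $n-1$; in the NO case, $M$ is $\beta\Gamma$-far from spanning $N$, so any spanning tree must include at least $\beta\Gamma$ edges of weight $W$, forcing MST weight at least $(n-1)+\beta\Gamma(W-1)$. An $\alpha$-approximate MST algorithm therefore distinguishes the two cases whenever $\beta\Gamma W = \Omega(\alpha n) = \Omega(\alpha \Gamma L)$, i.e.\ whenever $L = O(W/\alpha)$. Plugging this distinguisher into the Quantum Simulation Theorem converts a $T$-round $\alpha$-approximate MST algorithm on $N$ into an $O(T\cdot B\log L)$-qubit Server-model protocol for the source problem, and the $\Omega(\Gamma)$ Server-model lower bound forces $T = \Omega(\Gamma/(B\log L))$ whenever $T \le L/2-2$.

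\emph{Parameter choice and main obstacle.} Choosing $\Gamma \asymp L\cdot B\log L$ balances the two bounds so that $T = \Omega(L)$, subject to $\Gamma L \le n$ (giving $L \le O(\sqrt{n/(B\log n)})$) and $L \le O(W/\alpha)$. Setting $L = \Theta(\min\{\sqrt{n/(B\log n)},\, W/\alpha\})$ then yields the claimed $\Omega(\min\{W/\alpha,\sqrt{n}\}/\sqrt{B\log n})$ bound after routine simplification. The most delicate point in the argument is ensuring that the weighted encoding preserves the Server-to-distributed simulation without creating MST-level shortcuts: the padding must contribute negligible MST weight relative to $\beta\Gamma W$ yet not shortcut the $\Theta(\log L)$-diameter ``highway'' responsible for the communication hardness, and the $\delta$-far definition used in the Server model (arbitrary additions) must translate to farness from being a \emph{spanning} subgraph of $N$ so that the $\beta\Gamma$ heavy MST edges can be charged back to the gap. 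Handling this correctly is exactly why the source problem must be passed through the Section~\ref{sec:server graph} reductions to a connectivity-style variant before invoking the Quantum Simulation Theorem; with that in place the remaining calculation is a mechanical extension of the proof of Theorem~\ref{theorem:main verification}.
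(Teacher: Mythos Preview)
Your proposal is correct and follows essentially the same route as the paper: reduce $\alpha$-approximate MST on the simulation-theorem network $N$ to the distributed gap problem $(\beta\Gamma)\mbox{-}\ham(N)$ via the weight-$1$/weight-$W$ encoding, then invoke Theorem~\ref{theorem:from server to distributed} together with the Server-model bound $Q^{*,\server}_{0,\epsilon}((\beta\Gamma)\mbox{-}\ham_\Gamma)=\Omega(\Gamma)$ from Theorem~\ref{theorem:main cc}, and balance $L$ against both $\sqrt{n/(B\log n)}$ and $W/\alpha$. The only simplification relative to your write-up is that no padding or ``shortcut'' worry is needed: the paper sets $\Gamma L=\Theta(n)$ directly (so the network already has $\Theta(n)$ nodes), uses $(\beta\Gamma)\mbox{-}\ham$ itself as the source problem (the degree-$2$ check turns Hamiltonian-cycle verification into connectivity at the distributed level, so there is no separate pass through a connectivity variant in the Server model), and the $\delta$-far translation is immediate from Observation~\ref{observation:G is ham if and only if M is ham}.
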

This result generalizes the bounds in \cite{DasSarmaHKKNPPW11} to the quantum setting. Moreover, this lower bound implies the same bound in the classical model, which improves \cite{DasSarmaHKKNPPW11} (see Fig.~\ref{fig:current distributed bounds}) and matches the deterministic upper bound of $O(\min(W/\alpha, \sqrt{n}))$ resulting from a combination of Elkin's $\alpha$-approximation $O(W/\alpha)$-time deterministic algorithm \cite{Elkin06} and Peleg and Rubinovich's $O(\sqrt{n})$-time exact deterministic algorithm \cite{GarayKP93,KuttenP98} in the classical communication model. Thus this bound is tight up to a $\Theta(\sqrt{B \log n})$ factor. It is the first bound that is {\em tight for all values of the aspect ratio $W$}. Fig.~\ref{fig:current distributed bounds} compares our lower bounds with previous bounds.
\begin{figure}[t]
  % Requires \usepackage{graphicx}
  \centering
  \includegraphics[width=\linewidth, clip=true, trim= 0cm 0.7cm 0cm 4cm]{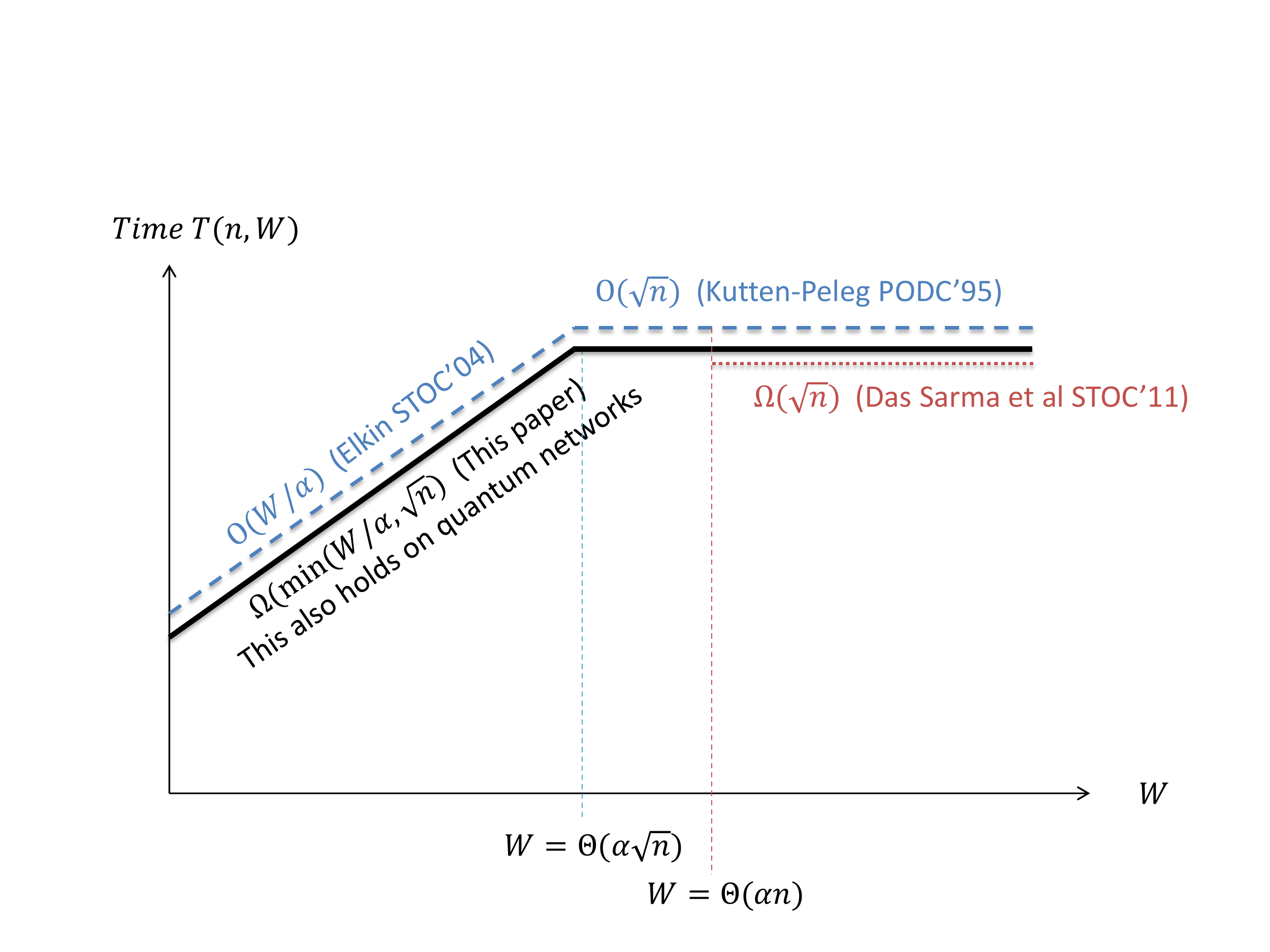}
  \caption{\small\it Previous and our new bounds (cf. Theorem~\ref{theorem:main optimization} and Corollary \ref{corollary:main optimization}) for approximating the MST problem in distributed networks when $N$ and $\alpha$ are fixed. The dashed line (in blue) represents the deterministic upper bounds (Algorithms). The dotted line (in red) is the previous lower bound for randomized algorithms. The solid line (in black) represents the bounds shown in this paper. Note that the previous lower bounds hold only in the classical setting while the new lower bounds hold in the quantum setting even when entanglement is allowed.}
  \label{fig:current distributed bounds}
\end{figure}
By using the same reduction as in \cite{DasSarmaHKKNPPW11}, our bound also implies that all lower bounds in \cite{DasSarmaHKKNPPW11} hold even in the quantum setting.
\begin{corollary}
\label{corollary:main optimization}
The statement in Theorem~\ref{theorem:main optimization} also holds for the following problems: minimum spanning tree, shallow-light tree, $s$-source distance, shortest path tree,  minimum routing cost spanning tree, minimum cut, minimum $s$-$t$ cut, shortest $s$-$t$ path and generalized Steiner forest. (See \cite{DasSarmaHKKNPPW11} and Appendix~\ref{sec:formal definition of network} for definitions.)
\end{corollary}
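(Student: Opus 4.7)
The plan is to derive each of these lower bounds as a black-box consequence of Theorem~\ref{theorem:main optimization} together with the distributed reductions from MST that are already worked out in~\cite{DasSarmaHKKNPPW11}. For every problem $X$ in the list (shallow-light tree, $s$-source distance, shortest path tree, minimum routing cost spanning tree, minimum cut, minimum $s$-$t$ cut, shortest $s$-$t$ path, generalized Steiner forest), \cite{DasSarmaHKKNPPW11} constructs a distributed reduction that, given an instance $(N,w)$ of $\alpha$-approximate MST on which the $\tilde{\Omega}(\min(W/\alpha,\sqrt{n}))$ bound holds, produces an instance $(N',w')$ of $X$ on essentially the same network (the modifications are local: adjusting a few edge weights, attaching $O(1)$ auxiliary nodes, or designating a source/terminal set), and shows that any $\alpha$-approximation for $X$ on $(N',w')$ can be post-processed in $O(1)$ rounds into an $\alpha$-approximation for MST on $(N,w)$.

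The first step is therefore to observe that each such reduction is model-agnostic: the construction of $N'$ is a local, deterministic transformation carried out once at the start, and the post-processing uses only neighbor-to-neighbor messages of size $O(\log n)$. Neither step depends on any classical-specific feature of the \textsf{CONGEST(B)} model. Consequently, if we are handed any $(\epsilon,\epsilon)$-error quantum-with-entanglement algorithm $\cA_X$ for $X$ on $N'$ running in $T$ rounds, we may invoke $\cA_X$ as a subroutine (simply carrying along whatever prior entangled state it requires as part of the overall prior entangled state of the MST solver) and thereby obtain an $(\epsilon,\epsilon)$-error quantum-with-entanglement MST algorithm on $N$ running in $T + O(1)$ rounds. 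Plugging the bound of Theorem~\ref{theorem:main optimization} for MST into the contrapositive gives $T = \Omega(\tfrac{1}{\sqrt{B\log n}}\min(W/\alpha,\sqrt{n}))$ for $\cA_X$.

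The only thing left to check is that the reduction-target instance $(N',w')$ still lies in the regime in which Theorem~\ref{theorem:main optimization} is stated, i.e., $\Theta(n)$ nodes, diameter $\Theta(\log n)$, and weight aspect ratio $\Theta(W)$. This is exactly what is verified in~\cite{DasSarmaHKKNPPW11} for each of the listed problems (the gadgets there were designed to be tight), so no new combinatorial work is needed and we may import those calculations verbatim. I expect no serious obstacle: the mild point to keep in mind is simply that the prior entanglement of $\cA_X$, although potentially intricate, is an input-independent resource and so is freely usable inside the reduction — this is the reason the argument does not degrade when we pass from the classical to the quantum setting, and is precisely where the strength of the ``with entanglement'' version of Theorem~\ref{theorem:main optimization} is being used.
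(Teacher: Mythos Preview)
Your approach matches the paper's: invoke the classical deterministic reductions from \cite{DasSarmaHKKNPPW11} and note that, being local $O(D)$-round procedures using only $O(\log n)$-bit messages, they carry over to the quantum-with-entanglement setting unchanged.

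One correction, though: the reductions in \cite{DasSarmaHKKNPPW11} do not go from $\alpha$-approximate MST to each problem $X$ as you describe. Rather, for each optimization problem $X$ in the list (MST included), \cite{DasSarmaHKKNPPW11} gives a reduction from a \emph{verification} problem---essentially (gap) connected spanning subgraph / connectivity verification---directly to $\alpha$-approximate $X$ on the same hard network $N$. The paper's own proof of Theorem~\ref{theorem:main optimization} follows exactly this pattern: it reduces $(\beta\Gamma)$-$\ham(N)$ (via a degree check followed by a gap-connectivity check using the weight-$1$/weight-$W$ trick) to $\alpha$-approximate MST. The corollary is obtained by replacing ``MST'' with ``$X$'' in that last step, citing the corresponding verification-to-$X$ reduction from \cite{DasSarmaHKKNPPW11}. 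So your argument goes through once you start the chain at the hard verification problem established inside the proof of Theorem~\ref{theorem:main optimization}, rather than at MST itself; the remainder of what you wrote---locality of the reductions, model-independence, the observation that prior entanglement is an input-independent resource and hence freely composable---is correct and is precisely why the transfer to the quantum setting is immediate.
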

%
%We refer the readers to \cite{DasSarmaHKKNPPW11} and Appendix~\ref{sec:formal definition of network} for definitions of the above optimization problems.

\subsection{Additional Results: Lower Bounds on Communication Complexity}\label{sec:additional results}
In proving the results in previous subsections, we prove several bounds on the Server model. Since the Server model is stronger than the standard communication complexity model (as discussed in Subsection~\ref{sec:techniques}), we obtain lower bounds in the communication complexity model as well. Some of these lower bounds are new even in the classical setting. In particular, our bounds in Theorem~\ref{theorem:graph lower bound server model} lead to the following corollary. (Note that we use  $Q^{*, cc}_{\epsilon_0, \epsilon_1}(\cP_n)$ to denote the communication complexity of verifying property $\cP$ of $n$-node graphs on the standard quantum communication complexity model with entanglement.)
\begin{corollary}\label{corollary:main cc}
For any $n$ and some constants $\epsilon, \beta>0$, $Q^{*, cc}_{\epsilon,\epsilon}({\sf P}_n)=\Omega(n),$  and $Q^{*, cc}_{0, \epsilon}((\beta n)-{\sf P}_n)\geq Q^{*, \server}_{0, \epsilon}((\beta n)-{\sf P}_n) =\Omega(n)$, where ${\sf P}_n$ can be any of the following verification problems: Hamiltonian cycle, spanning tree, connectivity, $s$-$t$ connectivity, and bipartiteness.
\end{corollary}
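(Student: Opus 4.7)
The plan is to derive both bounds from the Server-model estimates of Theorem~\ref{theorem:main cc} by combining a model comparison with deterministic, zero-communication gadget reductions between verification problems.

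First, I would record the basic inequality $Q^{*, cc}_{\epsilon_0, \epsilon_1}(f) \geq Q^{*, \server}_{\epsilon_0, \epsilon_1}(f)$, valid for every Boolean function $f$ on bipartite inputs and every error profile. This is immediate from the definition of the Server model: any $(\epsilon_0, \epsilon_1)$-error two-party CC protocol is automatically a Server-model protocol in which the server is simply never invoked, so the qubit cost and error guarantees carry over verbatim and prior entanglement is available in both settings. Specialising to $f = \ham_n$ and $f = (\beta n)\mbox{-}\ham_n$ and combining with Theorem~\ref{theorem:main cc} settles the corollary when ${\sf P}_n = \ham_n$, and in particular verifies the chain of inequalities displayed in the statement.

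Second, to cover spanning tree, connectivity, $s$-$t$ connectivity, and bipartiteness, I would lift the deterministic gadget reductions from $\ham_n$ that underlie the classical distributed lower bounds of \cite{DasSarmaHKKNPPW11} and that are reused for the Server-model graph problems in Section~\ref{sec:server graph}. Concretely, given an edge-partitioned instance $(E_C(G), E_D(G))$ of $\ham_n$ in which both parts are perfect matchings, Carol and David each locally augment their edge sets with a fixed gadget, producing an edge-partitioned instance $G'$ on $O(n)$ vertices of the target property; no bits are exchanged during this construction. The gadget is chosen so that (i)~$G'$ satisfies the target property if and only if $G$ is a Hamiltonian cycle, and (ii)~if $G$ is $\Omega(n)$-far from Hamiltonian, then $G'$ is $\Omega(n)$-far from satisfying the target property. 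For spanning tree and $s$-$t$ connectivity the gadgets are essentially syntactic (delete a designated cycle edge, or attach two pendants to label the terminals); for (edge) connectivity one exploits that a Hamiltonian cycle is precisely a $2$-regular connected subgraph; for bipartiteness one uses an edge-subdivision trick together with an odd-length auxiliary path to encode parity of cycle lengths. Because the reduction costs no communication and preserves the $\Omega(n)$ gap, any $(\epsilon_0, \epsilon_1)$-error CC protocol for the target problem (respectively its $(\beta n)$-gap version) solves $\ham_n$ (respectively $(\beta n)\mbox{-}\ham_n$) with the same qubit cost and error profile, whence Step~1 and Theorem~\ref{theorem:main cc} yield the advertised $\Omega(n)$ lower bound.

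The main obstacle I anticipate is verifying gap preservation under the CC notion of $\delta$-$\cP$, which (in contrast with the distributed version) allows the adversary to add \emph{arbitrary} edges to reach the property, not just those of a fixed host graph. For spanning tree and $s$-$t$ connectivity this check is routine; for connectivity and especially for bipartiteness one has to argue that no $o(n)$ added edges can simultaneously repair the gadget and produce the desired property, which hinges on designing gadgets with many ``independent'' obstructions (e.g.\ many disjoint odd components to be merged, or many vertex-disjoint odd cycles to destroy). Once this is carried out for each of the four remaining properties, the corollary follows from the Server-to-CC inequality and Theorem~\ref{theorem:main cc}.
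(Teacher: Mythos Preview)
Your high-level plan (the Server-to-CC inequality plus Theorem~\ref{theorem:main cc} for $\ham_n$, then zero-communication reductions to the remaining properties) matches the paper, and your reductions for spanning tree and connectivity are correct and gap-preserving. The genuine gap is in $s$-$t$ connectivity and bipartiteness: the black-box reductions from $\ham_n$ you sketch do not work, and the obstacle is not the gap-preservation issue you anticipated but the basic (non-gap) reduction itself. Attaching pendants $s,t$ to two fixed vertices $u,v$ of a $2$-regular graph tests only whether $u$ and $v$ share a cycle; when $G$ is a non-Hamiltonian union of cycles they may well do so, so this does not distinguish Hamiltonian from non-Hamiltonian inputs. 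Likewise, subdividing every edge makes all cycles even, and adding one odd path between two fixed vertices again only detects whether those vertices lie in a common cycle, not whether there is one cycle or many.

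The paper does not route these two properties through $\ham_n$ at all: as it remarks immediately after the corollary, the two-sided bounds for bipartiteness and $s$-$t$ connectivity come from the direct reductions from Inner Product given in~\cite{BabaiFS86}. The $\gamma_2$-norm argument of Section~\ref{sec:basic lower bounds server model} yields an $\Omega(n)$ Server-model bound for Inner Product in exactly the same way as for $\ipmodthree$, and the~\cite{BabaiFS86} gadgets then transfer it. If you insist on a single source problem, you must abandon the black-box view and exploit the explicit gadget structure of Section~\ref{sec:server graph}---for instance, setting $s=v_0^0$ and $t=v_0^1$ works for $s$-$t$ connectivity because these two vertices lie in distinct cycles precisely on the $0$-inputs of the $\ipmodthree$ reduction---but that is no longer a reduction from an arbitrary $2$-regular $\ham_n$ instance.
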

%
%we prove some new lower bounds in the two-party quantum communication complexity model. In fact our bounds hold for a  more general model called the {\em server model}, which will be introduced in Section~\ref{sec:technique}.
%
%
%%
%%{\em Quantum version} of the {\em Simulation Theorem} which non-trivially extends the same result in the classical communication setting in \cite{DasSarmaHKKNPPW11}. This result shows a connection between quantum communication complexity and quantum distributed algorithm lower bounds and crucially uses the assumption that there are shared entangled qubits.
%%
%Our second ingredient is several new bounds of XOR games which, besides implying the claimed lower bounds of quantum distributed algorithms, also imply new lower bound bounds in the quantum communication complexity models when there are shared entangled bits, as follows.
%
%
%\begin{theorem}\label{theorem:main cc}
%For any $n$ and some $\epsilon>0$, $Q^{*, cc}_{\epsilon, \epsilon}(\ham_n) =\Omega(n)\,.$
%\end{theorem}
%
To the best of our knowledge, the lower bounds for Hamiltonian cycle and spanning tree verification problems are the first two-sided error lower bounds for these problems, even in the classical two-party setting (only nondeterministic, thus one-sided error, lower bounds are previously known \cite{RazS95}). The bounds for Bipartiteness and $s$-$t$ connectivity follow from a reduction from Inner Product given in \cite{BabaiFS86}, and a lower bound for Connectivity was recently shown in \cite{graphcomm12}.
%
%TODO--- When we mention a problem such as connectivity can we say it some distinguished font such as bold or sf font throughout for
%uniformity?
%
%
We note that we prove the gap versions via a reduction from recent lower bounds in \cite{fooling12} and observe new lower bounds for the gap versions of Set Disjointness and Equality.
%
%\begin{theorem}\label{theorem:main gap}
%For any $n$, some constants $\epsilon, \beta>0$, $Q^{*, cc}_{0, \epsilon}((\beta n)-\ham_n) =\Omega(n)$.
%%\begin{align*}
%%Q^{*, cc}_{0, \epsilon}((\beta n)-\ham_n) =\Omega(n)~~\mbox{and}~~Q^{*, cc}_{0, \epsilon}((\beta n)-\disj_n) =\Omega(n)\,.
%%\end{align*}
%\end{theorem}
%
%\begin{theorem}\label{theorem:main gap}
%For any $n$, some constants $\epsilon, \beta>0$, $Q^{*, cc}_{0, \epsilon}((\beta n)-{\sf P}_n)\geq Q^{*, \server}_{0, \epsilon}((\beta n)-{\sf P}_n) =\Omega(n)$, where ${\sf P}_n$ can be any of the following verification problems: Hamiltoinian cycle, spanning tree, connectivity, $s$-$t$ connectivity and bipartiteness.
%\end{theorem}
%%

%\begin{corollary}\label{corollary:main gap}
%For any $n$, some constants $\epsilon, \beta>0$, $Q^{*, cc}_{0, \epsilon}((\beta n)-{\sf P}_n) =\Omega(n)$, where ${\sf P}_n$ can be any of the following verification problems: spanning tree, connectivity, $s$-$t$ connectivity and bipartiteness.
%\end{corollary}

\section{Other Related Work}
%\subsection{Quantumness in Local and Global Problems}
\label{sec:role}\label{sec:related}

While our work focuses on  solving graph problems in quantum distributed networks, there are several prior works focusing on other distributed computing problems (including communication complexity in the two-party or multiparty communication model) using quantum effects.
We note that fundamental distributed computing problems such as leader election and byzantine agreement have been shown to solved better using quantum
phenomena (see e.g., \cite{DenchevP08,TaniKM05,ben-or}).
%It has been shown that leader election  in anonymous networks can be solved using quantum communication  even without pre-shared entanglement \cite{TaniKM05, DenchevP08}; in the classical setting this is impossible.
%In \cite{ben-or}, it has been shown that byzantine agreement against a dynamic adversary can be solved more efficiently using quantum communication.
%It is also known that if there is sufficient {\em pre-shared} entanglement between
%the nodes in the distributed network, then problems related to leader election and distributed consensus can be solved very efficiently.
Entanglement has  been used to reduce the amount of communication of a specific function of input data  distributed among 3 parties \cite{cleve} (see also the work of \cite{buhrman1, wolf1,shma} on multiparty quantum communication complexity).
%{\bf Gopal--}: The above line on multiparty was not there in our FOCS submission. Do we need it?

There are several results showing that quantum communication complexity in the two-player model can be more efficient than classical randomized communication complexity (e.g. \cite{buhrman2, raz}). These results also easily extend to the so-called number-in-hand multiparty model (in which players have separate inputs).
As of now no separation between quantum and randomized communication complexity is known
in the number-on-the-forehead multiparty model, in which players' inputs overlap.
Other papers concerning quantum distributed computing include \cite{rohrig,ChlebusKS10,KobayashiMT09,KobayashiMT10,PalKK03,GaertnerBKCW08}.

%\section{Conclusion}
%\label{sec:conc}
\section{Conclusion and Open Problems}
In this paper, we derive several lower bounds for important network problems in a quantum distributed network. We show that quantumness does not really help in obtaining faster distributed algorithms for fundamental problems such as minimum spanning tree, minimum cut, and shortest paths. Our approach gives a uniform way to prove lower bounds for various problems. Our technique closely follows the Simulation Theorem introduced by Das Sarma et al. \cite{DasSarmaHKKNPPW11}, which shows how to use the two-party communication complexity to prove lower bounds for distributed algorithms. The main difference of our approach is the use of the Server model. We show that many problems that are hard in the quantum two-party communication setting (e.g. \ipmodthree) are also hard in the Server model, and show new reductions from these problems to graph verification problems of our interest. Some of these reductions give tighter lower bounds even in the classical setting. 

Since the technique of Das Sarma et al. can be used to show lower bounds of many problems that are not covered in this paper (e.g. \cite{FrischknechtHW12,HolzerW12,NanongkaiDP11,LenzenP13,DasSarmaMP13,Ghaffari14,Censor-HillelGK13}), it is interesting to see if these lower bounds remain valid in the quantum setting. Since most of these problems rely on a reduction from the set disjointness problem, the main challenge is to obtain new reductions that start from problems that are proved hard on the Server model such as \ipmodthree. One problem that seems to be harder than others is the random walk problem \cite{NanongkaiDP11,DasSarmaNPT13} since the previous lower bound in the classical setting requires a {\em bounded-round} communication complexity \cite{NanongkaiDP11}. Proving lower bounds for the random walk problem thus requires proving a bounded-round communication complexity in the Server model as the first step. This requires different techniques since the nonlocal games used in this paper destroy the round structure of protocols. 

\danupon{Moreover, it is not clear if the known proofs work in the presence of entanglement.}

%The approach in this paper has two key differences: 
%\begin{enumerate}
%\item We need to reduce from problems that are hard in 
%%cannot reduce from the set disjointness problem since it is not 
%\end{enumerate}

It is also interesting to better understand the role of the Server model: 
%Some interesting open problems are: 
%
Can we derive a quantum two-party version of the Simulation Theorem, thus eliminating the need of the Server model?
%In other words, can we relate distributed algorithm lower bounds to the two-party quantum communication complexity model instead of the server model? This will be very helpful as it can simplify the proofs by  using  existing bounds in the two-party model.  
%
Is the Server model {\em strictly} stronger than the two-party quantum communication complexity model?  
Also, it will be interesting to explore {\em upper} bounds in the quantum setting:  Do quantum distributed algorithms help in solving other fundamental graph problems ?

%
%(4) Do classical lower bounds of other graph problems, such as computing diameter \cite{FrischknechtHW12} and random walk \cite{NanongkaiDP11}, hold in the quantum setting? 

%We note  

%The last question, which perhaps show the limit of our current techniques, is 

\newpage
\part{Proofs}

\section{Server Model Lower Bounds via Nonlocal Games (Lemma~\ref{lemma:xor and simulate server})} \label{sec:basic lower bounds server model}
In this section, we prove Lemma~\ref{lemma:xor and simulate server} which shows how to use nonlocal games to prove server model lower bounds. Then, we use it to show server-model lower bounds for two problems called {\em Inner Product mod 3} (denoted by $\ipmodthree_n$) and  {\em Gap Equality} with parameter $\delta$ (denoted by $\delta$-$\eq_n$). These lower bounds will be used in the next section.

Our proof makes use of the relationship between the server model and {\em nonlocal games}. In such games, Alice and Bob receive input $x$ and $y$ from some distribution $\pi$ that is known to the players. As usual they want to compute a boolean function $f(x, y)$ such as Equality or Inner Product mod 3. However, they cannot communicate to each other. Instead, each of them can send one bit, say $a$ and $b$, to a referee. The referee then combines $a$ and $b$ using some function $g$ to get an output of the game $g(a,b)$. The goal of the players is to come up with a strategy (which could depend on distribution $\pi$ and function $g$) that maximizes the probability that $g(a, b)=f(x, y)$. We call this the {\em winning probability}. One can define different nonlocal games based on what function $g$ the referee will use. Two games of our interest are {\em XOR}- and {\em AND}-games where $g$ is {\em XOR} and {\em AND} functions, respectively.

Our proof follows the framework of proving two-party quantum communication complexity lower bounds via nonlocal games (see, e.g., \cite{LeeS09,LeeZ10,fooling12}). The key modification is the following lemma which shows that the XOR- and AND-game players can make use of an efficient server-model protocol to guarantee a good winning probability.

{
\renewcommand{\thetheorem}{\ref{lemma:xor and simulate server}}
\begin{lemma}[Restated]%[Server Model Lower Bounds via Non\-local Games]
%For any boolean function $f$ and $\epsilon_0, \epsilon_1\geq 0$, there is an (two-player nonlocal) XOR-game strategy $\cA'$ (respectively, AND-game strategy $\cA''$) such that, for any input $(x, y)$, with probability $4^{-2Q^{*, \server}_{\epsilon_0, \epsilon_1}(f)}$, $\cA'$ (respectively, $\cA''$) output $1$ with probability at least $1-\epsilon_1$ and $0$ with probability at least $1-\epsilon_0$;
For any boolean function $f$ and $\epsilon_0, \epsilon_1\geq 0$, there is an (two-player nonlocal) XOR-game strategy $\cA'$ (respectively, AND-game strategy $\cA''$) such that, for any input $(x, y)$, with probability $4^{-2Q^{*, \server}_{\epsilon_0, \epsilon_1}(f)}$, $\cA'$ (respectively, $\cA''$) outputs $f(x, y)$ with probability at least $1-\epsilon_{f(x, y)}$ (i.e. it outputs $1$ with probability at least $1-\epsilon_1$ and $0$ with probability at least $1-\epsilon_0$);
%
%
%output $f(x, y)$ with probability at least $1-\epsilon_{f(x,y)}$;
%
otherwise (with probability $1-4^{-2Q^{*, \server}_{\epsilon_0, \epsilon_1}(f)}$), $\cA'$ outputs $0$ and $1$ with probability $1/2$ each (respectively, $\cA''$ outputs $0$ with probability $1$).
%
\iffalse
For any boolean function $f$ and $\epsilon_0, \epsilon_1\geq 0$, there is an (two-player nonlocal) XOR-game strategy $\cA'$ and AND-game strategy $\cA''$ such that, for any input $(x, y)$, with probability $4^{-2Q^{*, \server}_{\epsilon_0, \epsilon_1}(f)}$, $\cA'$ and $\cA''$ output $1$ with probability at least $1-\epsilon_1$ and $0$ with probability at least $1-\epsilon_0$;
%
%
%output $f(x, y)$ with probability at least $1-\epsilon_{f(x,y)}$;
%
otherwise (with probability $1-4^{-2Q^{*, \server}_{\epsilon_0, \epsilon_1}(f)}$), $\cA'$ outputs $0$ and $1$ with probability $1/2$ each, and $\cA''$ outputs $0$ with probability $1$.
\fi
\end{lemma}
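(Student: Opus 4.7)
My plan is to build the XOR-game strategy $\cA'$ and the AND-game strategy $\cA''$ by simulating an optimal Server-model protocol $\cA$ for $f$ of cost $Q := Q^{*,\server}_{\epsilon_0,\epsilon_1}(f)$ under the no-communication constraint of a nonlocal game, using the players' pre-shared entanglement in place of each teleported qubit and uniformly random (shared-random) guesses in place of each classical teleportation-correction bit. The nonlocal players Alice and Bob will take the roles of Carol and David respectively; they jointly simulate the server by holding a purification of its input-independent state distributed across the Alice--Bob cut.

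The first step is to put $\cA$ into a teleportation-based normal form: replace every qubit that a party would send in $\cA$ by a Bell measurement against one half of a pre-shared EPR pair, producing two classical bits that the receiver must see in order to apply the correct Pauli correction. The delicate issue here is the server's free qubit-sending ability, which a priori allows unboundedly many cross-cut teleportations. I would handle this by taking the Stinespring dilation of each server operation and absorbing each of the server's free replies into ancillas supplied locally on the receiving side, or ``coherently deferring'' such a reply until the next counted qubit arrives and charging it to that qubit. The result is a normal form in which the number of qubits that must cross the Alice--Bob cut is $O(Q)$, so at most $4Q$ classical correction bits are involved. In the nonlocal simulation those $4Q$ bits are replaced by uniform random guesses drawn from the shared randomness built into the entanglement; the probability that all guesses coincide with what an honest execution would have produced is exactly $4^{-2Q}$, and conditioned on this ``good event'' every Pauli correction is applied correctly, so the simulated joint state and all subsequent measurement outcomes coincide exactly with those of $\cA$.

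At the end of the simulation Alice and Bob each measure the output register that Carol and David would have measured, obtaining bits $a$ and $b$. For the XOR-game $\cA'$ they simply report $a$ and $b$: on the good event, $a \oplus b = f(x,y)$ with probability $\geq 1 - \epsilon_{f(x,y)}$ by correctness of $\cA$, while on the bad event the unknown Pauli corrections leave $a$ and $b$ independent and uniform, so $a \oplus b$ is uniformly distributed, as required. For the AND-game $\cA''$ I AND each player's output with a shared ``validity'' bit derived from an additional block of pre-shared entanglement so that the validity bit equals $1$ on the good event and $0$ otherwise; this forces $a = b = 0$ (hence $a \wedge b = 0$) deterministically on the bad event while preserving the good-event behavior of the simulation.

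The main obstacle will be the Step-1 normal form: rigorously bounding the number of cross-cut qubit teleportations by $O(Q)$ despite the server's unbounded free quantum sends. The argument I envisage is an induction over the rounds of $\cA$ in which, at each step, one uses Stinespring to show that any free server reply can be either (a) prepared locally on the recipient's side from an ancilla plus pre-distributed entanglement, or (b) coherently deferred and charged to the next counted message---both moves preserving $\cA$'s input--output behavior without increasing Carol's or David's communication cost. Once this normal form is in place, the random-guess simulation and the XOR/AND output packaging are routine.
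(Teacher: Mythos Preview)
Your high-level approach --- teleportation plus shared-random guessing of the classical correction bits --- is correct and matches the paper. But the execution has real gaps.

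\textbf{The server.} You propose to simulate the server distributedly across the Alice--Bob cut and then worry (rightly) that the server's free replies may force unboundedly many cross-cut teleportations; you suggest Stinespring dilations and ``coherent deferral'' to bound these. This is the hard way, and it is not at all clear your induction would succeed. The paper's observation is much simpler: once Carol's and David's qubits to the server are replaced by teleportation bits, and those bits are in turn replaced by pre-agreed shared-random guesses, the (fake) server receives \emph{nothing at all} from the players during the protocol. A party that receives nothing has input-independent behavior, so every message it would ever send can be generated before the input arrives and handed to Alice and Bob as part of the prior entangled state. No distributed simulation, no Stinespring, no deferral --- the server simply vanishes into the initial entanglement.

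\textbf{The outputs and the bad event.} On the good event you have Alice and Bob report Carol's and David's respective output bits and claim $a \oplus b = f(x,y)$; but in the server model both Carol and David output $f(x,y)$, so $a \oplus b = 0$ regardless of $f(x,y)$. The paper instead has Alice output Carol's answer and Bob output the constant $0$. On the bad event your claims also fail. You assert that ``the unknown Pauli corrections leave $a$ and $b$ independent and uniform'', but a wrong $Z$ correction on a qubit later measured in the computational basis changes nothing, so erroneous corrections need not randomize the output. For the AND game you posit ``a shared validity bit \ldots\ that equals $1$ on the good event and $0$ otherwise'', but the good event depends on \emph{both} players' local measurement outcomes, and no shared bit can encode that conjunction without communication. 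The paper handles both cases by a \emph{local} abort: each player compares her own teleportation outcomes to the corresponding shared-random guesses (purely local information), and on any mismatch aborts --- outputting a fresh uniform bit in the XOR game, or $0$ in the AND game. On non-abort (probability exactly $4^{-2Q}$) Alice outputs Carol's answer and Bob outputs $0$ (XOR) or $1$ (AND), which gives the claimed distributions directly.
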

\addtocounter{theorem}{-1}
}

\begin{proof}
\danupon{I didn't try say that we can assume that the communication is $00\ldots 0$. We can try this later on if it helps.}\danupon{TO DO: Change proof in Appendix accordingly.}
We prove the lemma in a similar way to the proof of Theorem 5.3 in \cite{LeeS09} (attributed to Buhrman).
Consider any boolean function $f$. Let $\cA$ be any $(\epsilon_0, \epsilon_1)$-error server-model protocol for computing $f$ with communication complexity $T$. We will construct (two-player) nonlocal XOR-games and AND-games strategies, denoted by $\cA'$ and $\cA''$, respectively, that {\em simulate} $\cA$. First we simulate $\cA$ with an additional assumption that there is a ``fake server'' that sends messages to players (Alice and Bob) in the nonlocal games, but the two players in the games do not send any message to the fake server. Later we will eliminate this fake server. We will refer to parties in the server model as Carol, David, and the {\em real} server, while we call the nonlocal game players Alice, Bob, and the {\em fake} server.

%Note the fact that quantum communication can be simulated by first distributing entangled pairs of qubits and next sending classical bits, for a cost of two bits per qubit; this mechanism is often called {\em teleportation} (see, e.g., \cite{NielsenChuangBook}).
%
Using teleportation (where we can replace a qubit by two classical bits when there is an entanglement; see, e.g., \cite{NielsenChuangBook}), it can be assumed that Carol and David send $2T$ {\em classical} bits to the real server instead of sending $T$ qubits (the server can set up the necessary entanglement for free). Assume that, on an input $(x, y)$, Carol and David send bits $c_t$ and $d_t$ in the $t^{th}$ round, respectively. (We note one detail here that in reality $c_t$ and $d_t$, for all $t$, are random variables. We will ignore this fact here to illustrate the main idea. More details are in Appendix~\ref{sec:ipmodthree full}.\danupon{Don't forget to point to appendix in the full version.})

Now, Alice, Bob and the fake server generate shared random strings $a_1\ldots a_t$ and $b_1\ldots b_t$ (this can be done since their states are entangled). These strings serve as a ``guessed'' communication sequence of $\cA$. Alice, Bob and the fake protocol simulate Carol, David and the real protocol, respectively. However, in each round $t$, instead of sending bit $c_t$ that Carol sends to the real server, Alice simply looks at $a_t$ and continues playing if her guessed communication is the same as the real communication, i.e. $c_t=a_t$. Otherwise, she ``aborts'': In the XOR-game protocol $\cA'$ she outputs $0$ and $1$ with probability 1/2 each, and in the AND-game protocol $\cA''$ she outputs $0$. Bob does the same thing with $d_t$ and $b_t$.

 %except that if he does not abort then he will output $0$ in $\cA'$ and $1$ in $\cA''$.

The fake server simply assumes it receives $a_t$ and $b_t$ and continues sending messages to Alice and Bob. Observe that the probability of never aborting is $4^{-T}$ (i.e., when the random strings $a_1\ldots a_T$ and $b_1\ldots b_T$ are the same as the communication sequences $c_1\ldots c_T$ and $d_1\ldots d_T$, respectively). If no one aborts, Alice will output Carol's output while Bob will output $0$ in the XOR-game protocol $\cA'$ and $1$ in the AND-game protocol $\cA''$. If no one aborts, Alice, Bob and the fake server perfectly simulate $\cA$ and thus output $f(x, y)$ with probability at least $1-\epsilon_{f(x, y)}$ in both protocols\footnote{That is, if $f(x, y)=0$, they output $0$ with probability at least $1-\epsilon_0$ and, if $f(x, y)=1$, they output $1$ with probability at least $1-\epsilon_1$}. Otherwise (with probability at most $1-4^{-T}$) one or both players will abort and the output will be randomly $0$ and $1$ in $\cA'$ and $0$ in $\cA''$. This is exactly what we claim in the theorem except that there is a fake server.

%Now, consider the XOR-game strategy simulating an $\epsilon$-error server-model protocol $\cA$ with time complexity $T=Q^{*, \server}_\epsilon(f)$. With probability at most $1-4^{-T}$ one or both players will abort. In this case, the output of the game will be $0$ and $1$ with probability $1/2$ each. Otherwise, they have perfectly simulated $\cA$ and thus will output $f(x, y)$ with probability at least $1-\epsilon$. This is the first claim of the theorem (except that there is a fake server). Similarly, consider an AND-game strategy simulating an $(0, \epsilon)$-error server-model protocol $\cA$ with time complexity $T=Q^{*, \server}_{0,\epsilon}(f)$.  With probability at most $1-4^{-T}$ one or both players will abort. In this case, the output of the game will be $0$. Otherwise, they have perfectly simulated $\cA$ and thus will output $f(x, y)$ with probability at least $1-\epsilon$. This is the first claim of the theorem (again, except that there is a fake server).

Now we eliminate the fake server. Notice that the fake server never receives anything from Alice and Bob. Hence we can assume that the fake server sends all his messages to Alice and Bob before the game starts (before the input is given), and those messages can be viewed as prior entanglement. We thus get standard XOR- and AND-game strategies without a fake server.
%
%
%Hence we can postpone all of Alice and Bob's measurements to after the fake servers' communications: the measurement operators commute with the fake server's operations. Furthermore, the fake server can just send all his messages at once, one message to Alice, one to Bob. These two messages can be taken as the entangled state shared by Alice and Bob. We get a standard two-player game without the fake server.
\end{proof}

Now we define and prove lower bounds for $\ipmodthree_n$ and $\delta$-$\eq_n$. In both problems Carol and David are given $n$-bit strings $x$ and $y$, respectively. In  $\ipmodthree_n$, they have to output $1$ if $(\sum_{i=1}^n x_iy_i) \mod 3 = 0$ and $0$ otherwise. In $\delta$-$\eq_n$, the players are {\em promised} that either $x=y$ or the hamming distance $\Delta(x, y)>\delta$ where $\Delta(x, y)=|\{i\mid x_i\neq y_i\}|$. They have to output $1$ if and only if $x=y$. This theorem will be used in the next section.
%
%When $x=y$, they have to output $1$; otherwise they have to output $0$. %We will use the following theorem in next section.
%
%Due to the space limit, some proofs are deferred to the full version.
%
%Details of this section such as formal definitions and proofs can be found in Section~\ref{sec:ipmodthree full}.
%
%
\begin{theorem}\label{theorem:basic lower bounds server model}
For some $\beta, \epsilon>0$ and any large $n$, $Q^{*, \server}_{\epsilon,\epsilon}(\ipmodthree_n)$ and $Q^{*, \server}_{0, \epsilon}((\beta n)\mbox{-}\eq_n)$ are $\Omega(n)$.
%\begin{align}
%Q^{*, \server}_{\epsilon, \epsilon}(\ipmodthree_n) &=\Omega(n) \label{eq:ipmodthree server}\\% ~~~\mbox{and}~~~\\
%Q^{*, \server}_{0, \epsilon}((\beta n)-\eq_n) &=\Omega(n) \label{eq:disj server}\,.
%\end{align}
\end{theorem}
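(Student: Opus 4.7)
The strategy is to apply Lemma~\ref{lemma:xor and simulate server} to convert any efficient Server-model protocol into a quantum two-player nonlocal strategy whose value is at least $\Omega(4^{-2T})$, and then pair this with known exponentially small upper bounds on the corresponding quantum nonlocal game value of each function. For $\ipmodthree_n$, set $T := Q^{*, \server}_{\epsilon, \epsilon}(\ipmodthree_n)$ and apply the XOR branch of Lemma~\ref{lemma:xor and simulate server}. The resulting entangled strategy $\cA'$ outputs $\ipmodthree_n(x,y)$ correctly with probability at least $\tfrac{1}{2} + 4^{-2T}(\tfrac{1}{2} - \epsilon)$ on every input, so the quantum XOR bias of $\ipmodthree_n$ (against the uniform distribution, or any distribution) is at least $\Omega(4^{-2T})$. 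It then remains to prove an upper bound of $2^{-\Omega(n)}$ on this quantum XOR bias, which I would obtain via the duality between XOR-game value and the $\gamma_2^*$ norm from \cite{LeeS09, LeeZ10}, combined with a standard character-sum/discrepancy estimate for the mod-$3$ bilinear form $\langle x, y\rangle \bmod 3$. Equating the two gives $4^{-2T} \leq 2^{-\Omega(n)}$, i.e.\ $T = \Omega(n)$.

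For the one-sided bound on $(\beta n)$-$\eq_n$, set $T' := Q^{*, \server}_{0, \epsilon}((\beta n)\text{-}\eq_n)$ and apply the AND branch of Lemma~\ref{lemma:xor and simulate server}. Because $\epsilon_0 = 0$, on every $(\beta n)$-far pair the ``good'' branch outputs $0$ with probability $1$ and the ``abort'' branch outputs $0$ deterministically, so the resulting strategy $\cA''$ outputs $1$ with probability \emph{exactly} $0$; on every equal input it outputs $1$ with probability at least $4^{-2T'}(1 - \epsilon)$. Hence the one-sided quantum AND-value of $(\beta n)$-$\eq_n$ is at least $\Omega(4^{-2T'})$. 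I would then upper bound this value by $2^{-\Omega(n)}$ using the fooling-set / approximate-nonnegative-rank machinery of \cite{fooling12}, instantiated on the diagonal $\{(x, x) : x \in C\}$ of a binary error-correcting code $C \subseteq \{0,1\}^n$ of constant rate and minimum distance strictly greater than $\beta n$ (all off-diagonal pairs of $C \times C$ are then valid $(\beta n)$-far $0$-inputs, so they must all be annihilated by $\cA''$). Comparing the two inequalities gives $T' = \Omega(n)$.

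I expect the main technical hurdle to be lining up the two exponential nonlocal-game bounds with the right dependence on entanglement and on the gap parameter $\beta$, since quantum XOR and AND values can behave very differently from their classical counterparts (Tsirelson-type phenomena). For the XOR part on $\ipmodthree_n$, the delicate point is that the naive mod-$2$ reduction changes the bias, so one must work directly with the mod-$3$ structure; for the AND part on $(\beta n)$-$\eq_n$, I need to verify that the quantum fooling-set argument of \cite{fooling12} actually delivers a $2^{-\Omega(n)}$ upper bound on the \emph{one-sided} AND value, rather than only on a two-sided nonlocal game value. A secondary piece of bookkeeping is to check that inside Lemma~\ref{lemma:xor and simulate server} the ``abort'' event is independent of the input (so that the conditional probabilities compose as claimed above) and, in the $\epsilon_0 = 0$ case, that the abort branch of the AND simulation truly outputs $0$ with probability one. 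Once these citations are in place and the bookkeeping is verified, the rest is direct arithmetic.
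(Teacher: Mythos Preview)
Your treatment of $(\beta n)$-$\eq_n$ is essentially the paper's proof: apply the AND branch of Lemma~\ref{lemma:xor and simulate server}, observe that the perfect $0$-side makes the resulting AND strategy never fire on far pairs, and then invoke the one-sided fooling-set bound of \cite{fooling12} on the diagonal of a code $C\subseteq\{0,1\}^n$ with rate $\Omega(1)$ and minimum distance $>\beta n$. Your bookkeeping concerns (input-independence of the abort event, and that abort outputs $0$ deterministically in the AND game) are handled exactly as you surmise in the proof of Lemma~\ref{lemma:xor and simulate server}.

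The $\ipmodthree_n$ part, however, has a real gap. Your plan is to exhibit a distribution $\pi$ under which the entangled XOR bias $\bias_\pi(\ipmodthree_n)=\gamma_2^*(A_{\ipmodthree_n}\circ\pi)$ is $2^{-\Omega(n)}$ via a ``standard character-sum/discrepancy estimate for the mod-$3$ bilinear form.'' That does not work. Writing $[\langle x,y\rangle\equiv 0\pmod 3]=\tfrac13(1+\omega^{\langle x,y\rangle}+\bar\omega^{\langle x,y\rangle})$ with $\omega=e^{2\pi i/3}$, the sign matrix decomposes as $A_{\ipmodthree_n}=\tfrac13 J-\tfrac23(W+W^*)$, where $W_{x,y}=\omega^{\langle x,y\rangle}$. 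The rank-one term $\tfrac13 J$ means the constant strategy already achieves bias $\approx \tfrac13$ under the uniform distribution (the function is $1$ on only about a third of inputs), so the uniform XOR game has \emph{constant} quantum bias, not $2^{-\Omega(n)}$. A direct character-sum bound on $W$ controls only the $W$ part and cannot kill the $J$ contribution; no ``standard'' balanced distribution on $\{0,1\}^n\times\{0,1\}^n$ removes it either.

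What the paper does instead is the pattern-matrix route of \cite{LeeZ10,Sherstov11}: restrict to a promise subset of inputs on which $\ipmodthree_n$ factors as a block composition $f\circ g^{n/4}$, where $g$ is a \emph{strongly balanced} $4\times 4$ gadget (so the offending rank-one part disappears) and $f(z_1,\dots,z_{n/4})=[\,\sum z_i\equiv 0\pmod 3\,]$. Lemma~\ref{lem:LeeZ} then gives $Q^{*,\server}_\epsilon(f\circ g^{n/4})\gtrsim \deg_{4\epsilon}(f)\cdot\log(\sqrt{|X||Y|}/\|A_g\|)$, and Paturi's bound yields $\deg_{4\epsilon}(f)=\Omega(n)$. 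The dual polynomial of $f$ is what supplies the hard witness $M$ in the $\gamma_2$ dual; it is not a character of the bilinear form. So your high-level skeleton (Lemma~\ref{lemma:xor and simulate server} $\Rightarrow$ $\gamma_2$ duality $\Rightarrow$ exponential witness) is right, but the witness you propose is the wrong one; you need the block-composition/approximate-degree argument, not a discrepancy estimate on $\langle x,y\rangle\bmod 3$.
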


Now we give a high-level idea of the proof of Theorem~\ref{theorem:basic lower bounds server model} (see Appendix~\ref{sec:ipmodthree full} for detail).

To show that $Q^{*, \server}_{\epsilon, \epsilon}(\ipmodthree_n)=\Omega(n)$, we use an XOR-game strategy $\cA'$ and $\epsilon_0=\epsilon_1=\epsilon$ from Lemma~\ref{lemma:xor and simulate server}. Using this we can extend the theorem of Linial and Shraibman \cite{LinialS09} from the two-party model to the server model and show that $Q^{*, \server}_{\epsilon, \epsilon}(f)$ is lower bounded by an {\em approximate $\gamma_2$ norm}: $Q^{*, \server}_{\epsilon, \epsilon}(f)=\Omega(\log \gamma_2^{2\epsilon}(A_f))$ for some matrix $A_f$ defined by $f$.
%
%%%%%%%%%%%%%%%%%%%%
%Detail: Can be omitted if no space left
%%%%%%%%%%%%%%%%%%%%
%
%
%\begin{lemma}%\label{lem:xor to server one}
%For any boolean function  $f: \cX\times \cY \rightarrow \{0, 1\}$, let $A_f$ be a $|\cX|$-by-$|\cY|$ matrix such that $A_f[x, y]=(-1)^{f(x, y)}$. Recall that for any matrix $A$, $\|A\|_1=\sum_{i,j}|A_{i,j}|$. For boolean function $f$ and $0\leq \epsilon< 1/4$
%$$4^{2Q^{*, \server}_\epsilon(f)}\geq \max_M \frac{\langle A_f, M\rangle-2\epsilon\|M\|_1}{\gamma_2^*(M)}= \gamma_2^{2\epsilon}(A_f)\,.$$
%\end{lemma}
%
%%%%%%%%%%%%%%%%%%%%%%
%%%%%%%%%%%%%%%%%%%%%%
%
%\begin{lemma}\label{lem:xor to server one}
%For any sets $\cX$ and $\cY$, function $f: \cX\times \cY \rightarrow \{-1, 1\}$ and $0\leq \epsilon< 1/4$,
%$$4^{2Q^{*, \server}_\epsilon(f)}\geq \max_M \frac{\langle A_f, M\rangle-2\epsilon||M||_1}{\gamma_2^*(M)}\geq (1-4\epsilon)\gamma_2^{{\frac{1}{1-4\epsilon}}}(A_f)\,.$$
%\end{lemma}
%
Using $f=\ipmodthree_n$, one can then extend the proof of Lee and Zhang \cite[Theorem 8]{LeeZ10} to lower bound $\log \gamma_2^{2\epsilon}(A_f)$ by an {\em approximate degree} $\deg_{2\epsilon}(f')$ of some function $f'$. Finally, one can follow the proof of Sherstov \cite{Sherstov11} and Razborov \cite{Razborov03} to prove that $\deg_{2\epsilon}(f')=\Omega(n).$ Combining these three steps, we have 
%$Q^{*, \server}_{\epsilon, \epsilon}(\ipmodthree_n)=\Omega(n)$.
%
$$Q^{*, \server}_{\epsilon, \epsilon}(\ipmodthree_n)=\Omega(\log \gamma_2^{2\epsilon}(A_{\ipmodthree_n}))=\Omega(\deg_{2\epsilon}(f'))=\Omega(n).$$
We note that this technique actually extends all lower bounds we are aware of on the two-party model (e.g. those in \cite{Razborov03,Sherstov11,LeeZ10}) to the server model.

%Finally, one can follow the proof of Sherstov \cite{Sherstov11} and Razborov \cite{Razborov03} to prove that $Q^{*, \server}_{\epsilon}(\ipmodthree_n)=\Omega(n)$. We note that this technique actually extends all lower bounds we are aware of on the two-party model to the server model (e.g. those in \cite{Razborov03,Sherstov11,LeeZ10}).

%%%%%%%%%%%%%%%%%%%%%%%%%%
% This is a short version of the thing below
%%%%%%%%%%%%%%%%%%%%%%%%%
%
%To prove that $Q^{*, \server}_{0, \epsilon}((\beta n)\mbox{-}\eq_n)=\Omega(n)$ for some $\beta>0$, we use an AND-game strategy $\cA''$ with $\epsilon_0=0$ and $\epsilon_1=\epsilon$ from Lemma~\ref{lemma:xor and simulate server}. We adapt a recent result by Klauck and de Wolf \cite{fooling12}, which shows that $Q^{*, cc}_{0, 1/2}(f)$ is at least a logarithmic of a {\em 1-fooling set} of $f$ (set $F$ such that for all $(x,y)\in F$, $f(x,y)=1$, and for distinct $(x,y),(x',y')\in F$, $f(x,y')=0$ or $f(x',y)=0$). In fact this bound applies to AND-games and thus, by Lemma~\ref{lemma:xor and simulate server}, to the server model.  We then use an error-correcting code to construct a 1-fooling set of size $2^{\Omega(n)}$ (due to Gilbert-Varshamov bound).

To prove that $Q^{*, \server}_{0, \epsilon}((\beta n)\mbox{-}\eq_n)=\Omega(n)$ for some $\beta, \epsilon>0$, we use an AND-game strategy $\cA''$ with $\epsilon_0=0$ and $\epsilon_1=\epsilon=1/2$ from Lemma~\ref{lemma:xor and simulate server}. We adapt a recent result by Klauck and de Wolf \cite{fooling12}, which shows that $Q^{*, cc}_{0, 1/2}(f) \geq(\log \mbox{\rm fool}^1(f))/4-1/2$. Here $\mbox{\rm fool}^1(f)$ refers to the size of the {\em 1-fooling set} of $f$, which is defined to be a set $F=\{(x,y)\}$ of input pairs with the following properties.
%(1) if $(x,y)\in F$, then $f(x,y)=1$, and (2) if $(x,y),(x',y')\in F$, then $f(x,y')=0$ or $f(x',y)=0$
%\begin{itemize}
\squishlist
\item If $(x,y)\in F$ then $f(x,y)=1$
\item If $(x,y),(x',y')\in F$ then $f(x,y')=0$ or $f(x',y)=0$
%\end{itemize}
\squishend

We observe that the lower bound in \cite{fooling12} actually applies to AND-games as follows. Suppose Alice and Bob receive inputs $(x,y)$, then perform local measurements on a shared entangled state, and output bits $a,b$. Then the probability that $a\wedge b=1$ for a uniformly random $x,y\in F$ is at most $1/\mbox{\rm fool}^1(f)$, if the probability that $a\wedge b=1$ for $(x,y)$ with $f(x,y)=0$ is always 0.

%We can now continue as in the previous section and simulate a protocol in the server model by an AND-game. Again Carol and David's messages can be replaced by teleportation and classical messages of twice the length of their original messages. Now we proceed to remove the messages sent by Carol and David altogether: if any of the measurement results in a teleportation was not 0, then Carol/David always output 0. They output 1 if and only if the protocol output is 1 and all teleportations had measurement result 0. Finally, the server may send all its messages simultaneously to Carol and David, and those messages can be viewed as prior entanglement. If the original protocol never accepts $x,y$ with $f(x,y)=0$, then the resulting AND-game will never output $a=b=1$ on such $x,y$. Furthermore, if $Q^{*, \server}_{0, 1/2}(f)\leq c$, then the probability that $a=b=1$ on $x,y$ with $f(x,y)=1$ is at least $4^{-2c-1}$.\danupon{To Hartmut: I changed the number from $2^{-2c-1}$ to $4^{-2c-1}$. The reason is that the model here is a bit different from the standard model: Carol and David will send $4c$ classical bits to the server in total (since in each round they can send $1$ bit each.} Hence we get that $Q^{*, \server}_{0, 1/2}(f)\geq(\log \mbox{\rm fool}^1(f)-1)/2$.

Lemma~\ref{lemma:xor and simulate server} for the case of AND-games implies that there is an AND-game strategy $\cA''$ such that if $f(x, y)=0$ then $\cA''$ always output $0$ and if $f(x,y)=1$ then $\cA''$ outputs $1$ with probability at least $(1-\epsilon)4^{-2Q^{*,\server}_{0, \epsilon}(f)}$. This implies that $(1-\epsilon)4^{-2Q^{*,\server}_{0, \epsilon}(f)}\leq 1/\mbox{\rm fool}^1(f)$. In other words, if $\mbox{\rm fool}^1(f)=2^{\Omega(n)}$ then $Q^{*,\server}_{0, 1/2}(f)=\Omega(n)$.

All that remains is to define a good fooling set for $(\beta n)$-$\eq_n$. Fix any $1/4>\beta> 0$.
The idea is to use a good error-correcting code to construct the fooling set.  Recall that $\Delta(x, y)$ denote the Hamming distance between $x$ and $y$. Let $C$ be a set of $n$-bit strings such that the Hamming distance between any distinct $x, y\in C$ is at least $2\beta n$. Due to the Gilbert-Varshamov bound such codes $C$ exist with $|C|\geq 2^{(1-H(2\beta))n}=2^{\Omega(n)}$, where $H$ denotes the binary entropy function. Hence we have $Q^{*, \server}_{0, 1/2}((\beta n)\mbox{-}\eq_n) = \Omega(n)$.\danupon{I removed the exact number since I'm still not very sure about the constant.}

\section{Server-model Lower Bounds for {\lowercase{\large $\mathbf{\ham_n}$}} (Theorem~\ref{theorem:main cc})} \label{sec:server graph}
In this section, we prove Theorem~\ref{theorem:main cc}, which leads to new lower bounds for several graph problems as discussed in Section~\ref{sec:additional results}. The proof uses gadget-based reductions between problems on the Server model. %The main theorem that we will use in the next section is stated below.

{
\renewcommand{\thetheorem}{\ref{theorem:main cc}}
\begin{theorem}[Restated]
For any $n$ and some constants $\epsilon, \beta>0$, %$Q^{*, \server}_{\epsilon,\epsilon}(\ham_n)$ and $Q^{*, \server}_{0, \epsilon}((\beta n)\mbox{-}\ham_n)$ are $\Omega(n)$.
\begin{align}
Q^{*, \server}_{\epsilon,\epsilon}(\ham_n) &=\Omega(n)~~~\mbox{and}~~~\label{eq:serverham}\\
Q^{*, \server}_{0, \epsilon}((\beta n)\mbox{-}\ham_n) &=\Omega(n)\label{eq:serverham2}\,.
\end{align}
\end{theorem}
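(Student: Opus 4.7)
The strategy is to reduce from the two hard problems of Theorem~\ref{theorem:basic lower bounds server model}: from $\ipmodthree_n$ to $\ham_n$ to obtain the first bound, and from $(\beta n)\mbox{-}\eq_n$ to $(\beta n)\mbox{-}\ham_n$ to obtain the second. Each reduction will be \emph{local}: on input $x$ Carol outputs a perfect matching $M_C(x)$ on some fixed vertex set of size $\Theta(n)$; on input $y$ David outputs a perfect matching $M_D(y)$ on the same vertex set; neither player needs to communicate in order to do so. Since Carol and David produce the required $\ham_n$ instance on their own, any server-model protocol solving $\ham_n$ (respectively, $(\beta n)\mbox{-}\ham_n$) yields a server-model protocol of the same cost for $\ipmodthree_n$ (respectively, $(\beta n)\mbox{-}\eq_n$), so the $\Omega(n)$ bounds of Theorem~\ref{theorem:basic lower bounds server model} transfer.

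The key structural observation is that $M_C\cup M_D$ is always $2$-regular, hence a vertex-disjoint union of even cycles; thus ``the union is a Hamiltonian cycle'' is equivalent to ``the union is a single cycle''. Moreover, turning a spanning disjoint union of $k$ cycles into a single Hamiltonian cycle requires $\Omega(k)$ edge additions (one may remove one edge from each cycle to obtain $k$ paths and then add $k$ edges that splice them into a single cycle, and a standard connectivity/$2$-regularity argument shows one cannot do better by more than a constant factor), so a union of $k$ cycles is $\Omega(k)$-far from Hamiltonian in the sense of Section~\ref{sec:problems}. Hence for the gap bound it suffices to build a gadget that produces $\Omega(\beta n)$ extra disjoint cycles whenever $\Delta(x,y)\ge \beta n$.

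I would build both reductions by chaining $n$ local gadgets, one per bit index $i$. Each gadget lives on a constant number of fresh vertices together with a constant number of ``port'' vertices shared with its neighbors in the chain. Carol's edges inside each gadget form a perfect matching chosen from a small menu depending only on $x_i$, and David's edges similarly depending only on $y_i$. For the $\eq$ reduction the gadget is designed so that when $x_i=y_i$ the global cycle passes cleanly through, but when $x_i\ne y_i$ the gadget pinches off a small local cycle and splits the global cycle at its ports; a straightforward induction on $i$ then shows that $k$ disagreements produce $\Omega(k)$ disjoint cycles, while when $x=y$ the union is exactly a Hamiltonian cycle. For the $\ipmodthree$ reduction each gadget must instead encode $\Z_3$-addition: when $x_iy_i=0$ the cycle passes through unchanged, whereas when $x_i=y_i=1$ it ``twists'' the cycle by one unit in a cyclic three-state realized by the cycle structure; a fixed closing gadget at the end of the chain produces a single Hamiltonian cycle if and only if the accumulated twist is $\equiv 0 \pmod 3$, matching $\ipmodthree_n(x,y)$.

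The main obstacle is the matching constraint of Definition~\ref{def:server model graph problems}, which requires each of $E_C$ and $E_D$ individually to be a perfect matching. This rules out the simplest reductions (where, for example, Carol would contribute an edge only when $x_i=1$) and forces the gadget to offer each player a small menu of perfect-matching options whose combinations must produce the intended local cycle structure in all relevant cases of $(x_i,y_i)$. Verifying this is a finite but intricate case analysis; once it is done, the global cycle count follows by induction and the $\Omega(n)$ source-problem bounds of Theorem~\ref{theorem:basic lower bounds server model} propagate to the stated server-model bounds for $\ham_n$ and $(\beta n)\mbox{-}\ham_n$.
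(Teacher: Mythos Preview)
Your proposal is correct and follows essentially the same approach as the paper: chain local constant-size gadgets that implement a $\Z_3$ twist for the $\ipmodthree$ reduction (the paper uses three parallel ``threads'' whose endpoints are identified) and a two-port gadget for the $\eq$ reduction, with each player's edges forming a perfect matching. One minor slip: with the natural three-thread construction the union is a Hamiltonian cycle iff the accumulated twist is \emph{not} $\equiv 0\pmod 3$ (twist $0$ gives three separate cycles), but since the target bound is symmetric $(\epsilon,\epsilon)$-error this polarity flip is harmless.
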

\addtocounter{theorem}{-1}
}

\begin{figure}
\center
  % Requires \usepackage{graphicx}
   \includegraphics[clip=true, trim=1cm 5cm 5cm 6cm, width=0.95\linewidth]{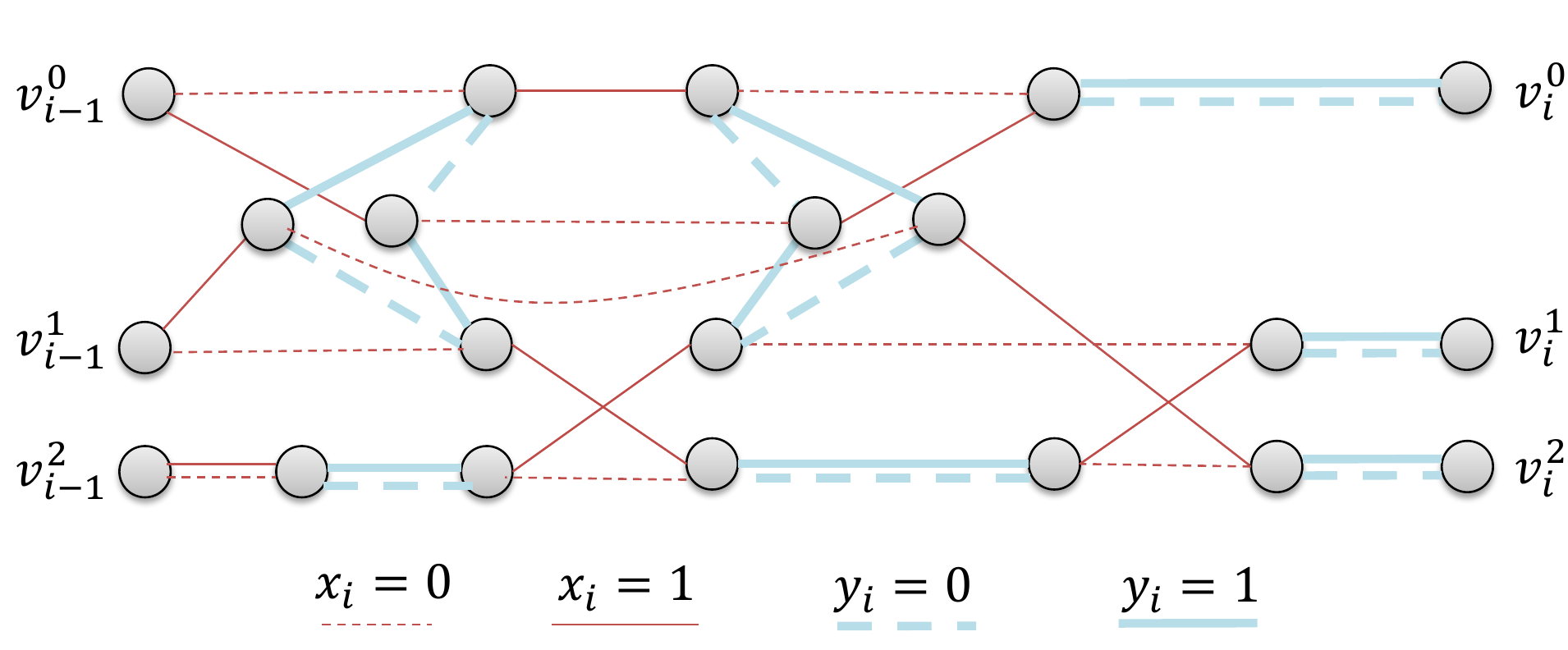}
   \caption{The construction of gadget $G_i$. If $x_i=0$ then Alice adds dashed thin edges (in red); otherwise she adds solid thin edges (in red). If $y_i=0$ then Bob adds dashed thick edges (in blue); otherwise he adds solid thick edges (in blue).% (See Fig.~\ref{fig:Hamiltonian Detail} for pictures of $G_i$ for all possible values of $x_i$ and $y_i$.)
  }
  \label{fig:Hamiltonian Gadget}
\end{figure}

%\begin{figure*}
%\centering
%  \subfigure[]{
%  \includegraphics[clip=true, trim=1cm 5cm 5cm 6cm, width=0.4\linewidth]{ham_gadget_main.pdf}
%  \label{fig:Hamiltonian Gadget}
%  }
%\subfigure[]{
%  \includegraphics[clip=true, trim=0cm 5.3cm 16.5cm 6.5cm, width=0.2\linewidth]{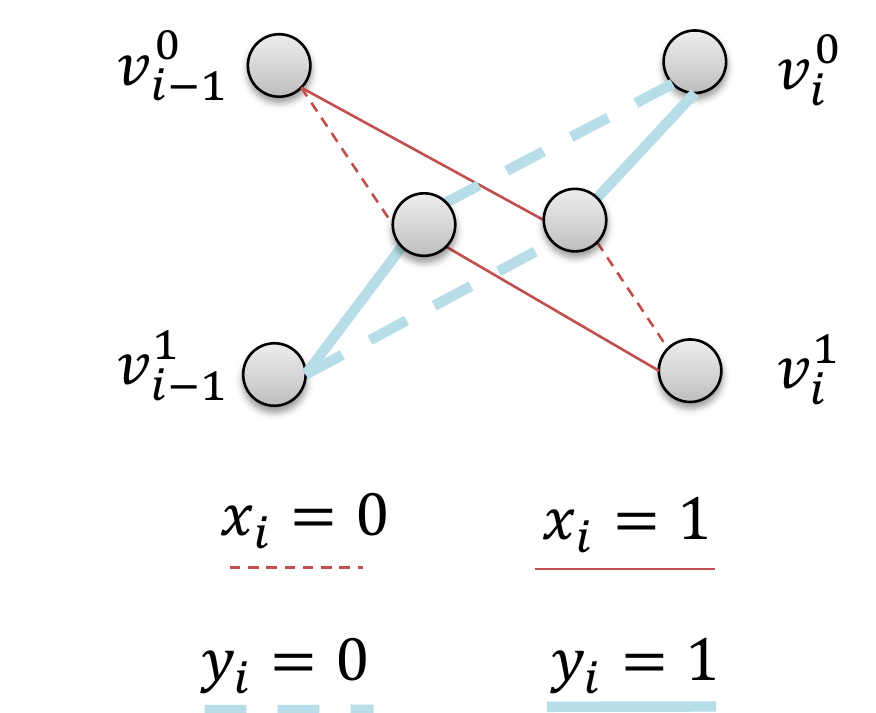}
%  \label{fig:gap ham gadget}
%  }
%%  \caption{\small \it Gadget $G_i$ for a reduction from $(\beta n)$-$\eq_n$ to $(\beta n)$-$\ham_n$.%The construction of gadget $G_i$ to reduce from
%%   %If $x_i=0$ then Alice adds dashed thin edges (in red); otherwise she adds solid thin edges (in red).  If $y_i=0$ then Bob adds dashed thick edges (in blue); otherwise he adds solid thick edges (in blue). (See Fig.~\ref{fig:Hamiltonian Detail Appendix} in Appendix for pictures of $G_i$ for all possible values of $x_i$ and $y_i$.)
%%   }
%\caption{\small \it \subref{fig:Hamiltonian Gadget} The construction of gadget $G_i$. If $x_i=0$ then Alice adds dashed thin edges (in red); otherwise she adds solid thin edges (in red). If $y_i=0$ then Bob adds dashed thick edges (in blue); otherwise he adds solid thick edges (in blue). \subref{fig:gap ham gadget} Gadget $G_i$ for a reduction from $(\beta n)$-$\eq_n$ to $(\beta n)$-$\ham_n$.}\label{fig:gadgets}
%\end{figure*}

We first sketch the lower bound proof of $Q^{*, \server}_{\epsilon, \epsilon}(\ham_{n})$ and show later how to extend to the gap version. More detail can be found in Section~\ref{sec:server graph Appendix}.
We will show that for any $0\leq \epsilon\leq 1$ and some constant $c$, $Q^{*, \server}_{\epsilon,\epsilon}(\ipmodthree_n)=O(Q^{*, \server}_{\epsilon,\epsilon}(\ham_{cn}))$. The theorem then immediately follows from the fact that $Q^{*, \server}_{\epsilon,\epsilon}(\ipmodthree_n)=\Omega(n)$ (cf. Theorem~\ref{theorem:basic lower bounds server model}).

Let $x=x_1\ldots x_n$ and $y=y_1\ldots y_n$ be the input of $\ipmodthree_n$. We construct a graph $G$ which is an input of $\ham_{cn}$ as follows. The graph $G$ consists of $n$ gadgets, denoted by $G_1, \ldots, G_n$. For any $1\leq i\leq n-1$, gadgets $G_i$ and $G_{i+1}$ share exactly three nodes denoted by $v_i^0, v_i^1, v_i^2$. Each gadget $G_i$ is constructed based on the values of $x_i$ and $y_i$ as outlined in Fig.~\ref{fig:Hamiltonian Gadget}. The following observation can be checked by drawing $G_i$ for all cases of $x_i$ and $y_i$  (as in Fig.~\ref{fig:Hamiltonian Detail}).

\begin{figure}
\subfigure[$x_iy_i=00$]{
\includegraphics[clip=true, trim=1cm 5cm 5cm 6cm, width=0.45\linewidth]{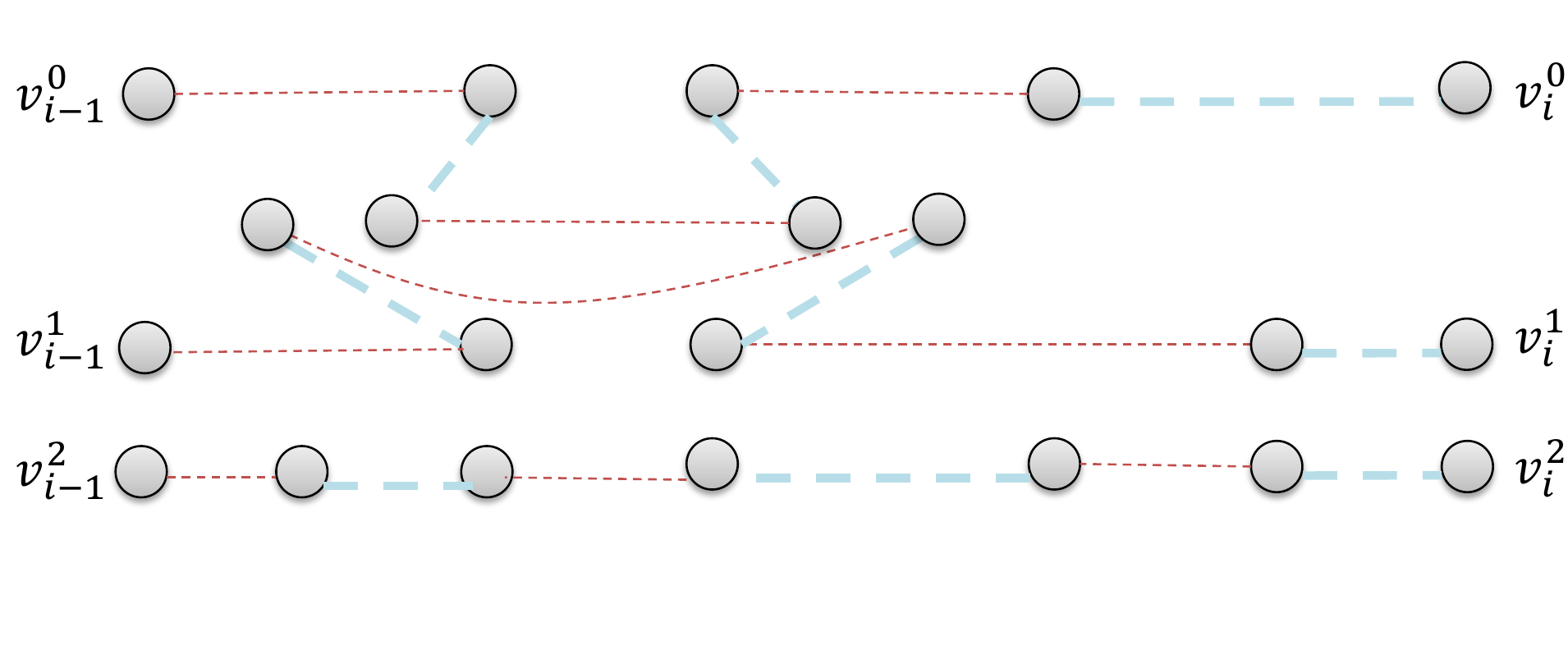}
\label{fig:ham00}
}
\subfigure[$x_iy_i=01$]{
\includegraphics[clip=true, trim=1cm 5cm 5cm 6cm, width=0.45\linewidth]{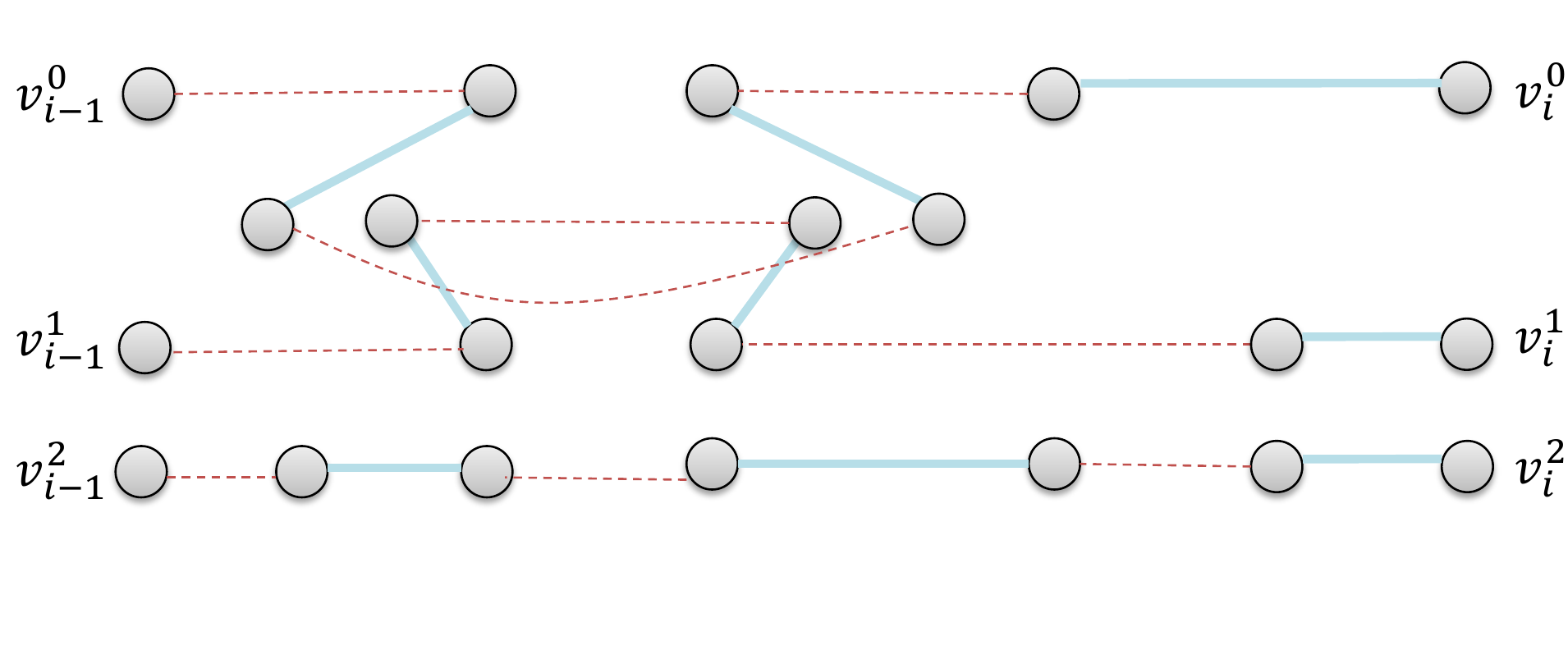}
\label{fig:ham01}
}\\
\subfigure[$x_iy_i=10$]{
\includegraphics[clip=true, trim=1cm 5cm 5cm 6cm, width=0.45\linewidth]{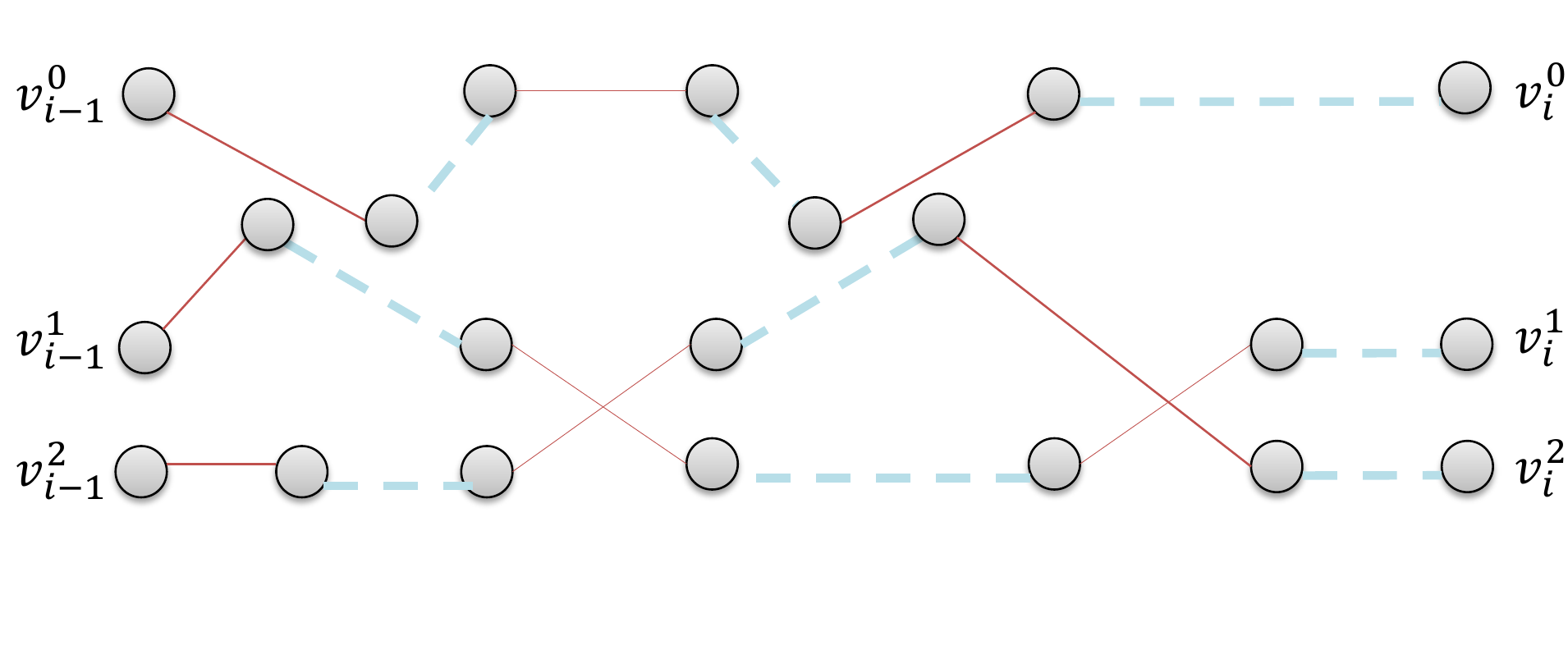}
\label{fig:ham10}
}
\subfigure[$x_iy_i=11$]{
\includegraphics[clip=true, trim=1cm 5cm 5cm 6cm, width=0.45\linewidth]{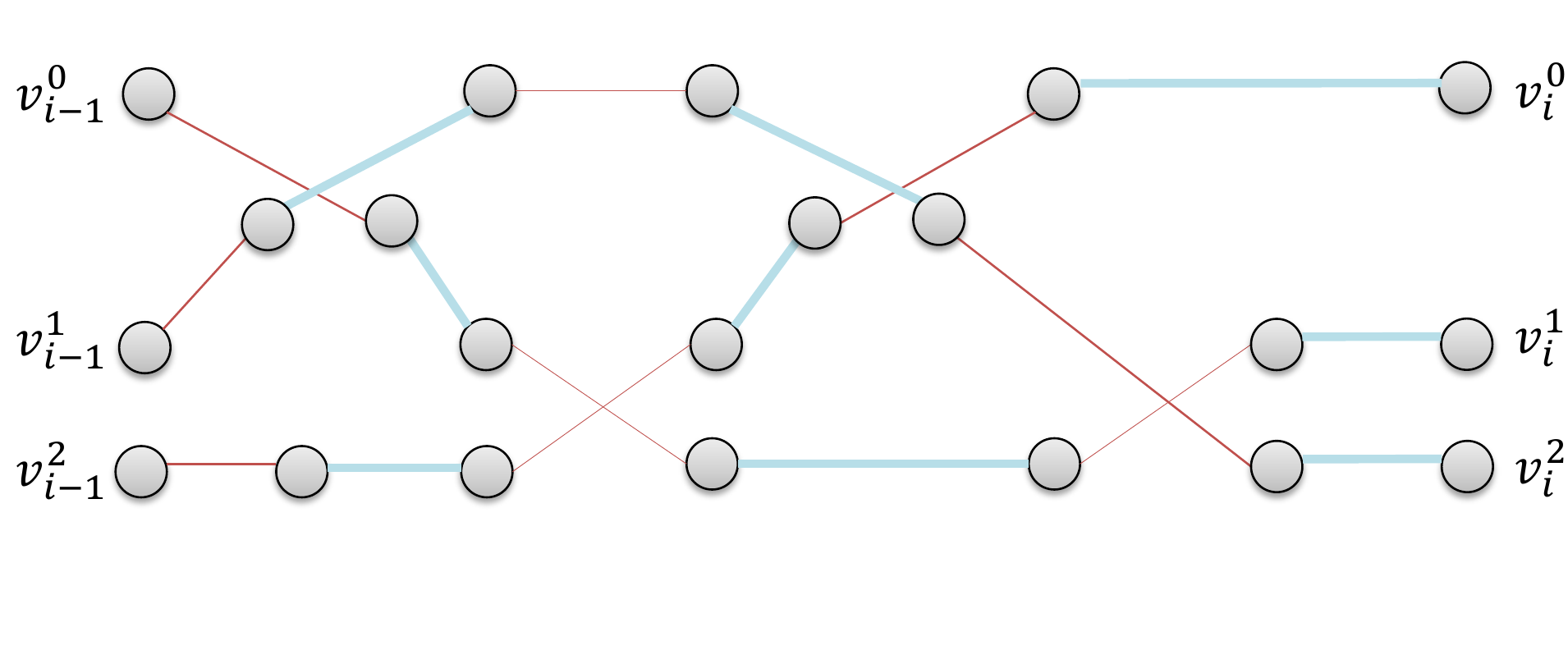}
\label{fig:ham11}
}
  \caption{Gadget $G_i$ for different values of $x_i$ and $y_i$. The main observation is that if $x_i\cdot y_i=0$ then $G_i$ consists of paths from $v_{i-1}^j$ to $v_i^j$ for all $0\leq j\leq 2$. Otherwise, it consists of paths from $v_{i-1}^j$ to $v_i^{(j+1) \mod 3}$.}\label{fig:Hamiltonian Detail}
\end{figure}

\begin{observation}\label{observation:ham}
For any value of $(x_i, y_i)$, $G_i$ consists of three paths where $v_{i-1}^j$ is connected by a path to $v_i^{(j+x_i\cdot y_i)\mod 3}$, for any $0\leq j\leq 2$. Moreover, Alice's (respectively Bob's) edges, i.e. thin (red) lines (respectively thick (blue) lines)  in Fig.~\ref{fig:Hamiltonian Gadget}, form a matching that covers all nodes except $v_i^j$ (respectively $v_{i-1}^j$) for all $0\leq j\leq 2$.
\end{observation}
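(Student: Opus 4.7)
The plan is a direct case analysis over the four possible values of $(x_i,y_i)\in\{0,1\}^2$, guided by the explicit drawings in Fig.~\ref{fig:Hamiltonian Detail}. In each case I would enumerate the edges Alice contributes (the thin red edges specified by $x_i$) and the edges Bob contributes (the thick blue edges specified by $y_i$), and then trace the three paths in $G_i$ starting from $v_{i-1}^0, v_{i-1}^1, v_{i-1}^2$ to determine at which nodes among $v_i^0, v_i^1, v_i^2$ they terminate.

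First I would handle the three ``easy'' cases $(x_i,y_i)\in\{(0,0),(0,1),(1,0)\}$, in which $x_i\cdot y_i=0$. For each, I would follow each of the three paths and check that $v_{i-1}^j$ is joined to $v_i^j$, so the permutation on indices is the identity, as required by $(j+x_i\cdot y_i)\bmod 3=j$. The remaining case $(x_i,y_i)=(1,1)$ has $x_i\cdot y_i=1$; here I would trace the paths and verify that they cyclically shift indices by one, matching $(j+1)\bmod 3$. Because the gadget has only a constant number of vertices and edges in each case, each of these checks is a short, finite inspection.

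For the matching claim in the second part of the observation, I would again use the four cases. In each, I would list Alice's edges (respectively Bob's edges) and verify two things: that no vertex of $G_i$ is incident to two of Alice's edges (matching property), and that the set of uncovered vertices is exactly $\{v_i^0, v_i^1, v_i^2\}$ for Alice (respectively $\{v_{i-1}^0, v_{i-1}^1, v_{i-1}^2\}$ for Bob). This is what is pictured in Fig.~\ref{fig:Hamiltonian Gadget}: Alice's gadget leaves only the right boundary triple of shared vertices untouched, and symmetrically for Bob.

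The only real obstacle is notational bookkeeping, namely fixing a consistent labeling of the internal vertices of $G_i$ so that the four subcases can be written down without reproducing the whole picture. Since the claim is essentially a ``draw-and-check'' statement on a gadget of bounded size, no further argument beyond exhaustive case inspection is needed, and the proof can be written in a few lines by referring the reader to the four panels of Fig.~\ref{fig:Hamiltonian Detail}.
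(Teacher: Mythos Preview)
Your proposal is correct and matches the paper's approach exactly: the paper simply states that the observation ``can be checked by drawing $G_i$ for all cases of $x_i$ and $y_i$ (as in Fig.~\ref{fig:Hamiltonian Detail})'' and, in the appendix, again defers to the four panels of that figure without further argument. Your plan of a four-case inspection of the constant-size gadget is precisely what the authors intend.
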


\begin{figure}
\center
  \includegraphics[clip=true, trim=1cm 6cm 2cm 3cm, width=0.9\linewidth]{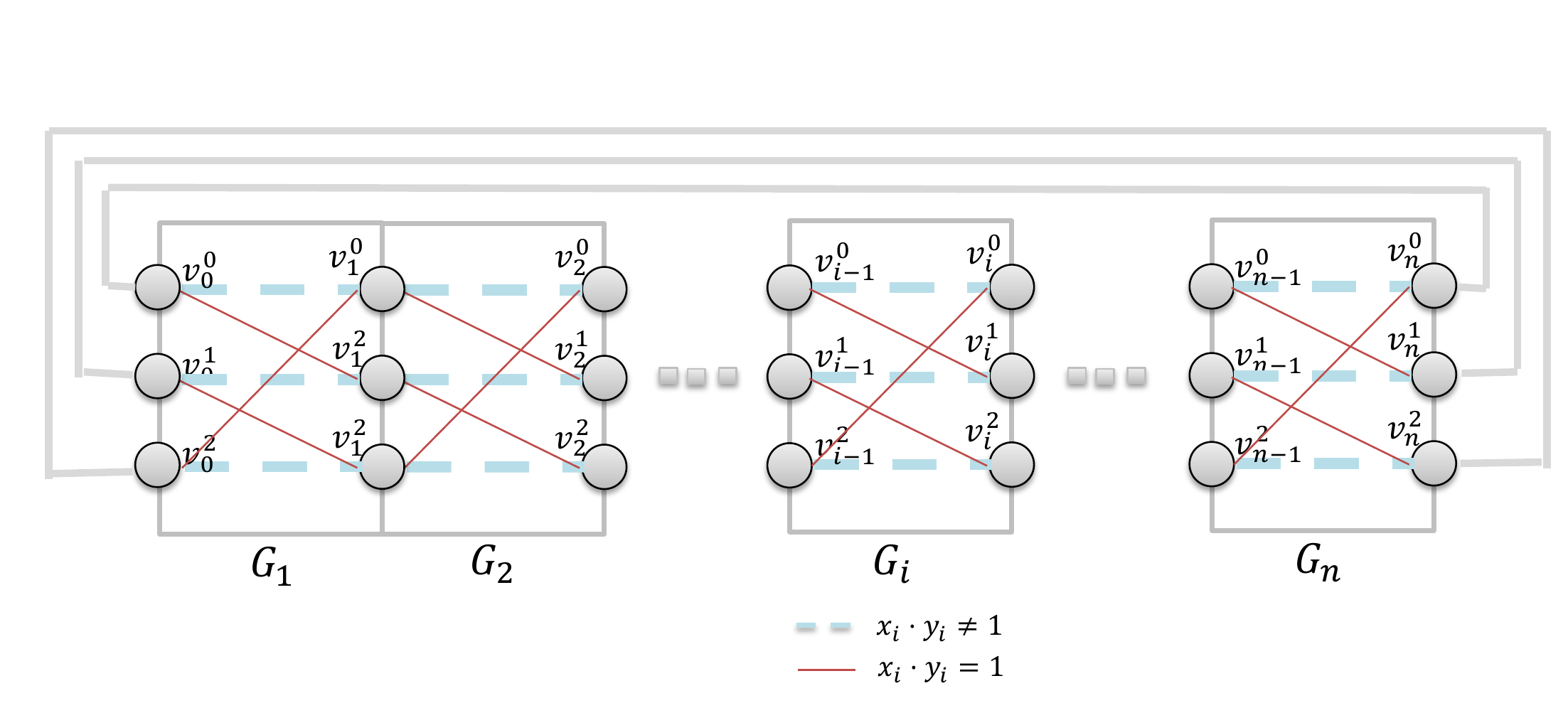}
  \caption{The graph $G$ consists of gadgets $G_1, \ldots G_n$. The solid thick edges (in gray) linking between $v_0^j$ and $v_n^j$, for $0\leq j\leq 2$ represent the fact that $v_0^j=v_n^j$. Lines that appear in each gadget $G_i$ depicts what we observe in Observation~\ref{observation:ham}: solid thin lines (in red) represent paths that will appear in $G_i$ if $x_i\cdot y_i=0$, and dashed thick lines (in blue) represent paths that will appear in $G_i$ if $x_i\cdot y_i=1$.}\label{fig:Hamiltonian Graph}
\end{figure}

Thus, when we put all gadgets together, graph $G$ will consist of three paths connecting between nodes in $\{v_0^j\}_{0\leq j\leq 2}$ on one side and nodes in $\{v_n^j\}_{0\leq j\leq 2}$ on the other. How these paths look like depends on the structure of each gadget $G_i$ which depends on the value of $x_i\cdot y_i$.  The following lemma follows trivially from Observation~\ref{observation:ham}.

\begin{lemma}\label{lem:ham}
$G$ consists of three paths $P^0$, $P^1$ and $P^2$ where for any $0\leq j\leq 2$, $P^j$ has $v_0^j$ as one end vertex and $v_n^{(j+ \sum_{1\leq i\leq n} x_i\cdot y_i)\mod 3}$ as the other.
\end{lemma}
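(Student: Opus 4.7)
The plan is to argue by induction on the number of gadgets traversed, using Observation~\ref{observation:ham} as the base building block. Since consecutive gadgets $G_i$ and $G_{i+1}$ share exactly the three interface nodes $v_i^0,v_i^1,v_i^2$ (and no edges, because each gadget's edges live strictly between the $(i-1)$st and $i$th interface), the subgraph $G$ decomposes as a ``chain'' of gadgets glued along these interfaces, and paths in successive gadgets can be concatenated at common interface vertices without introducing extra adjacencies.

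First I would fix $j\in\{0,1,2\}$ and define $s_i = \bigl(j + \sum_{k=1}^{i} x_k\cdot y_k\bigr)\bmod 3$ for $0\le i\le n$, so $s_0=j$ and $s_n = (j+\sum_k x_ky_k)\bmod 3$. By Observation~\ref{observation:ham}, gadget $G_i$ contains a path $Q_i^j$ from $v_{i-1}^{s_{i-1}}$ to $v_i^{s_{i-1}+x_iy_i \bmod 3} = v_i^{s_i}$, and moreover these three paths $Q_i^0,Q_i^1,Q_i^2$ within $G_i$ are internally vertex-disjoint and together cover all vertices of $G_i$. Concatenating $Q_1^j,Q_2^j,\dots,Q_n^j$ at the shared interface vertices yields a walk $P^j$ from $v_0^j$ to $v_n^{s_n}$ that uses each edge of $G$ at most once; I would then check that this walk is actually a simple path, by observing that internal vertices of different gadgets are distinct and that within each $G_i$ the three sub-paths $Q_i^0,Q_i^1,Q_i^2$ are internally disjoint.

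Repeating this for $j=0,1,2$ gives three paths $P^0,P^1,P^2$. I would then verify that these three paths are pairwise vertex-disjoint except possibly at the endpoints (if two of them share an endpoint, which cannot happen here because the map $j\mapsto (j+\sum_i x_iy_i)\bmod 3$ is a bijection on $\{0,1,2\}$), and that together they exhaust $V(G)$ and $E(G)$: every interface vertex $v_i^r$ is an endpoint of exactly one $Q_i^{j}$ and of exactly one $Q_{i+1}^{j'}$, and every non-interface vertex of $G_i$ lies on exactly one of $Q_i^0,Q_i^1,Q_i^2$ by the covering part of Observation~\ref{observation:ham}.

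The main obstacle is bookkeeping rather than conceptual: one must be careful to (i) confirm that the three intra-gadget paths promised by Observation~\ref{observation:ham} really do partition the edge set of $G_i$ (not merely connect the right interface pairs), so that concatenating them across $i$ produces a disjoint partition of $E(G)$ into exactly three simple paths, and (ii) verify that the shifts compose additively modulo $3$, i.e.\ that a path entering $G_{i+1}$ at $v_i^{s_i}$ exits at $v_{i+1}^{s_i+x_{i+1}y_{i+1}\bmod 3}$. Both points are immediate from the case analysis in Fig.~\ref{fig:Hamiltonian Detail}, so once the induction is set up correctly the lemma follows.
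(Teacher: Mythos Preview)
Your proposal is correct and follows essentially the same approach as the paper: an induction across the chain of gadgets that tracks the shift of the endpoint index using Observation~\ref{observation:ham}, so that after $k$ gadgets the path starting at $v_0^j$ ends at $v_k^{(j+\sum_{i\le k} x_iy_i)\bmod 3}$. The paper's proof is terser (it only spells out the endpoint computation and leaves the partition and simplicity claims to the ``consists of three paths'' clause of the observation), whereas you make the disjointness and covering bookkeeping explicit, but the underlying argument is the same.
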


%\begin{figure}
%\center
%\includegraphics[clip=true, trim=0cm 7cm 0cm 4.3cm, width=0.8\linewidth]{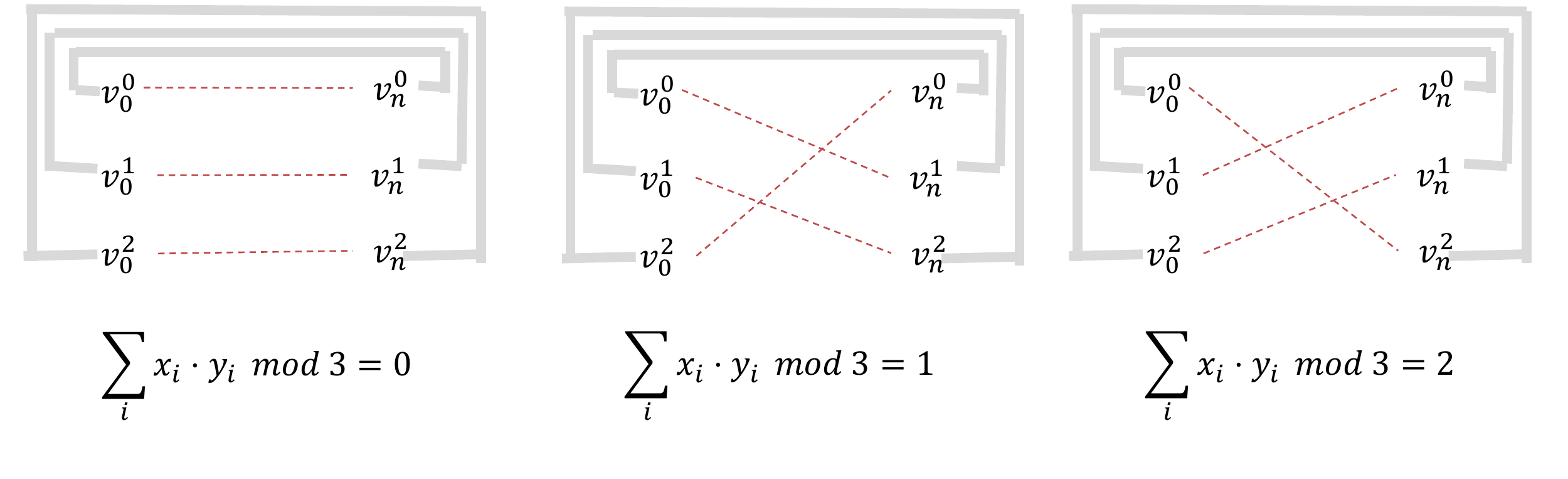}
%\caption{The resulted graph $G$ in three situations depending on the value of $\sum_{1\leq i\leq n} x_i\cdot y_i \mod 3$. Dashed lines (in red) represent paths connecting $v_0^0, \ldots, v_0^2$ and $v_n^0, \ldots, v_n^2$. Thick lines (in gray) show the fact that we identify nodes on two sides, i.e. $v_0^j=v_n^j$ for all $0\leq j\leq 2$. Our main observation is that $G$ is a Hamiltonian cycle if and only if  $\sum_{1\leq i\leq n} x_i\cdot y_i \mod 3\neq 0$ (cf. Lemma~\ref{lem:ham if and only if}).}\label{fig:ham three graph results}
%\end{figure}
%

%
Now, we complete the description of $G$ by letting $v_0^j=v_n^j$ for all $0\leq j\leq 2$.
It then follows that $G$ is a Hamiltonian cycle if and only if $\sum_{1\leq i\leq n} x_i\cdot y_i \mod 3\neq 0$ (see Fig.~\ref{fig:Hamiltonian Graph}; also see Lemma~\ref{lem:ham if and only if appendix} and Fig.~\ref{fig:ham three graph results Appendix} in Section~\ref{sec:server graph Appendix}).
%
%The following lemma simply follows from Lemma~\ref{lem:ham} (see Fig.~\ref{fig:ham three graph results Appendix} in Appendix).
%
%\begin{lemma}\label{lem:ham if and only if} Each player's edges form a perfect matching in $G$. Moreover, $G$ is a Hamiltonian cycle if and only if $\sum_{1\leq i\leq n} x_i\cdot y_i \mod 3\neq 0$.
%\end{lemma}
%
Thus we can check that $\sum_{1\leq i\leq n} x_i\cdot y_i \mod 3$ is zero or not by checking whether $G$ is a Hamiltonian cycle or not. Theorem~\ref{theorem:graph lower bound server model} now follows from Theorem~\ref{theorem:basic lower bounds server model}.

%%%%%%%%%%%%%%%%%%%%%%%%%%%%%%%%%%%%%%%%%%%%%%%%%%%%%%%%%%%%%%%%%%%%%%%%%%%%%%%%%%%%%%%%%%%%%%%%%%%%
%% Use this figure if you want to save space
%%%%%%%%%%%%%%%%%%%%%%%%%%%%%%%%%%%%%%%%%%%%%%%%%%%%%%%%%%%%%%%%%%%%%%%%%%%%%%%%%%%%%%%%%%%%%%%%%%%%%
%\begin{wrapfigure}{r}{0.25\textwidth}
%\vspace{-1cm}
%  \begin{center}
%    \includegraphics[clip=true, trim=0cm 5.6cm 16.5cm 6.5cm, width=0.3\textwidth]{gapham_gadget.pdf}
%  \end{center}
%  \caption{Gadget $G_i$.%The construction of gadget $G_i$ to reduce from
%   %If $x_i=0$ then Alice adds dashed thin edges (in red); otherwise she adds solid thin edges (in red).  If $y_i=0$ then Bob adds dashed thick edges (in blue); otherwise he adds solid thick edges (in blue). (See Fig.~\ref{fig:Hamiltonian Detail Appendix} in Appendix for pictures of $G_i$ for all possible values of $x_i$ and $y_i$.)
%   }\label{fig:gap ham gadget}\vspace{-0.5cm}
%\end{wrapfigure}
%%
%%%%%%%%%%%%%%%%%%%%%%%%%%%%%%%%%%%%%%%
% Use this if you have more space
%%%%%%%%%%%%%%%%%%%%%%%%%%%%%%%%%%%%%%%
\begin{figure}
  \begin{center}
    \includegraphics[clip=true, trim=0cm 5.3cm 16.5cm 6.5cm, width=0.4\linewidth]{}
  \end{center}
  \caption{Gadget $G_i$ to reduce from $(\beta n)$-$\eq_n$ to $(\beta n)$-$\ham_n$.%The construction of gadget $G_i$ to reduce from
   %If $x_i=0$ then Alice adds dashed thin edges (in red); otherwise she adds solid thin edges (in red).  If $y_i=0$ then Bob adds dashed thick edges (in blue); otherwise he adds solid thick edges (in blue). (See Fig.~\ref{fig:Hamiltonian Detail Appendix} in Appendix for pictures of $G_i$ for all possible values of $x_i$ and $y_i$.)
   }\label{fig:gap ham gadget}
\end{figure}
To show a lower bound of $Q^{*, \server}_{0, \epsilon}((\beta n)\mbox{-}\ham_n)$, we reduce from $(\beta n)$-$\eq_n$ in a similar way using gadget $G_i$ shown in Fig.~\ref{fig:gap ham gadget}. For any $1\leq i \leq n-1$, gadget $G_i$ and $G_{i+1}$ share $v_i^0$ and $v_i^1$, and we let $v_0^0=v_0^1$ and $v_n^0=v_n^1$. It is straightforward to show that if $x=y$, then $G$ is a Hamiltonian cycle, and if $x_{i_j}\neq y_{i_j}$ for some $i_1<i_2<\ldots <i_\delta$, then $G$  consists of $\delta$ cycles where each cycle starts at gadget $G_{i_j}$ and ends at gadget $G_{i_{j+1}}$. Note that our reduction gives a simplification of the rather complicated reduction in \cite[Section 6]{DasSarmaHKKNPPW11}.\danupon{TO DO: details in Appendix.}

\section{The Quantum Simulation Theorem (Theorem~\ref{theorem:from server to distributed})}\label{sec:from server to distributed}

In this section, we show that in the quantum setting, a server-model lower bound implies a $B$-model lower bound, as in Theorem~\ref{theorem:from server to distributed}.
{
\renewcommand{\thetheorem}{\ref{theorem:from server to distributed}}
\begin{theorem}[Restated]
For any $B$, $L$, $\Gamma\geq \log L$, $\beta\geq 0$ and $\epsilon_0, \epsilon_1>0$, there exists a $B$-model quantum network $N$ of diameter $\Theta(\log L)$ and $\Theta(\Gamma L)$ nodes such that if $Q_{\epsilon_0, \epsilon_1}^{*, N}((\beta\Gamma)\mbox{-}\ham(N))\leq \frac{L}{2}-2$ then $Q_{\epsilon_0, \epsilon_1}^{*, \server}((\beta\Gamma)\mbox{-}\ham_\Gamma)= O((B\log L)Q_{\epsilon_0, \epsilon_1}^{*,N}((\beta\Gamma)\mbox{-}\ham(N)))$. \end{theorem}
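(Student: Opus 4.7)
The plan is to mirror the structure of the classical Simulation Theorem of Das Sarma et al., constructing a specific network $N$ that admits a ``small cut'' into two halves, and then showing that an efficient distributed quantum algorithm on $N$ can be simulated in the Server model with Carol and David simulating the two halves and the Server simulating the nodes straddling the cut.

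First I would specify $N$: build it from $\Gamma$ horizontal ``gadget'' paths of length $L$ plus a binary-tree backbone of depth $O(\log L)$ so that the overall diameter is $O(\log L)$ while the total number of nodes is $\Theta(\Gamma L)$. Partition the vertex set into three groups: Carol's nodes $V_C$ (the left halves of the paths and the left half of the backbone), David's nodes $V_D$ (the symmetric right halves), and a small ``separator'' set $V_S$ of size $O(\log L)$ sitting at the middle of each path and along the root of the tree; thus the cut between $V_C \cup V_S$ and $V_D \cup V_S$ contains only $O(\log L)$ edges. An instance of $(\beta\Gamma)\mbox{-}\ham_\Gamma$, consisting of Carol's and David's matchings on a $\Gamma$-vertex graph $G$, is encoded by identifying the endpoints of the horizontal paths with the vertices of $G$ and choosing the subnetwork $M$ of $N$ so that its ``Carol-side'' edges reflect Carol's matching and its ``David-side'' edges reflect David's matching, while the interior edges of the paths and the backbone are fixed. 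A routine gadget check will show that $M$ is a Hamiltonian cycle of $N$ (respectively $\delta$-far from one) iff $G$ is (respectively $\delta$-far), so the decision transfers perfectly.

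Next I would carry out the simulation. Carol holds the quantum state of every node in $V_C$, David of every node in $V_D$, and the Server of every node in $V_S$; any pre-shared entangled state required by the distributed quantum algorithm is input-independent, so the Server prepares it and dispenses the appropriate qubits to Carol and David for free before round $1$. In each subsequent round, every player applies the required local unitary or measurement on its own qubits and then forwards the cross-cut messages: a $B$-qubit message on an edge between $V_C$ and $V_D$ is sent from Carol to David (either by direct quantum transmission or by teleportation against the pre-shared entanglement), a message between $V_C$ (or $V_D$) and $V_S$ is sent by Carol (or David) to the Server, and any outgoing message from $V_S$ is sent by the Server for free. Since only $O(\log L)$ edges cross the cut, the per-round cost charged against Carol and David is $O(B\log L)$; over $T \le L/2-2$ rounds this yields the claimed bound $O((B\log L)\cdot Q_{\epsilon_0,\epsilon_1}^{*,N}((\beta\Gamma)\mbox{-}\ham(N)))$, with the final accept/reject bit distributed to Carol and David for free by the Server.

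The main obstacle is the quantum-specific part of the simulation. In the classical version, Alice and Bob could each maintain an independent copy of the separator nodes and reconcile them through shared randomness, but the no-cloning theorem prohibits duplicating a quantum state across two players, and a superposition living on a separator cannot be ``split'' between Carol and David without extra communication that the classical proof does not pay for. Introducing the Server as a third party that owns the unique copy of the separator state and broadcasts for free is precisely what overcomes this difficulty, and is the reason the theorem must be stated for the Server model rather than the plain two-party model. A secondary technical point is to verify that the constraint $T \le L/2-2$ keeps every simulated node from ever reading information outside the region the embedding honestly encodes; this parallels the argument of Das Sarma et al.\ and introduces no additional loss in the quantum setting.
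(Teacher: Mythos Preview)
Your proposal has a genuine gap in the simulation step. You assert that there is a static separator $V_S$ of size $O(\log L)$ so that only $O(\log L)$ edges cross between the left and right parts. But in any network built from $\Gamma$ parallel length-$L$ paths (plus a logarithmic-depth backbone), cutting the left half from the right half necessarily severs one edge on each of the $\Gamma$ paths, i.e.\ $\Gamma$ edges, not $O(\log L)$. With your static ownership, every round Carol must ship $B$ qubits across each of those $\Gamma$ path edges (either to David or to the Server), giving cost $\Omega(B\Gamma)$ per round and destroying the claimed $O(B\log L)$ bound. Placing the middle column into $V_S$ does not help: then $|V_S|=\Gamma+O(\log L)$ and the $\Gamma$ edges from Carol's region into $V_S$ still carry paid Carol-to-Server traffic each round.

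The paper's proof avoids this by using a \emph{time-varying} partition: at round $t$ Carol owns columns $1,\dots,t{+}1$, David owns columns $L{-}t,\dots,L$, and the Server owns the entire middle. After each simulated round the Server hands one more column to each of Carol and David; since Server-to-player messages are free, this transfer of $\Theta(\Gamma)$ qubits costs nothing. The only paid traffic is along the $O(\log L)$ highway edges, whose long hops can jump directly between Carol's and David's current regions. This is also the real reason for the hypothesis $T\le L/2-2$: it guarantees that Carol's and David's expanding regions never meet, so every path-edge at every time lies inside a single player's region and never has to be communicated. Your ``information outside the honestly encoded region'' explanation for that constraint is not the operative one here.
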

\addtocounter{theorem}{-1}
}
In words, the above theorem states that
%for any $B$, $L$, $\Gamma\geq \log L$ and $\epsilon_0, \epsilon_1>0$, there exists a $B$-model network $N$ of diameter $\Theta(\log L)$ and $\Theta(\Gamma L)$ nodes such that
%
if there is an $(\epsilon_0, \epsilon_1)$-error quantum distributed algorithm that solves the Hamiltonian cycle verification problem on $N$ in at most $(L/2)-2$ time, i.e.
$Q_{\epsilon_0, \epsilon_1}^{*, N}(\ham(N))\leq (L/2)-2\,,$
then the $(\epsilon_0, \epsilon_1)$-error communication complexity in the server model of the Hamiltonian cycle problem on $\Gamma$-node graphs is  %$c B\log LR_{\epsilon_0, \epsilon_1}^{N}$
%$\min( \Gamma\lceil\log \Gamma/B\rceil, L\lceil \log \Gamma/B\rceil)$, i.e.
$Q_{\epsilon_0, \epsilon_1}^{*, \server}(\ham_\Gamma)= O((B\log L)Q_{\epsilon_0, \epsilon_1}^{*, N}(\ham(N)))\,.$
%, \lceil\Gamma\log \Gamma/B\rceil)\,.$$
The same statement also holds for its gap version.
%
%for $(\beta\Gamma)$-$\conn_\Gamma$ for any $\beta>0$.
%
%\paragraph{Remark} The above theorem can be extended to a large class of graph problems with some certain properties. We state it for only necessary problems to keep the proof simple.
We note that the above theorem can be extended to a large class of graph problems with some certain properties. We state it for only \ham for simplicity.
\danupon{One generalization of this which is not yet done is that we can state the theorem for all graphs with some certain properties. This will unnecessarily make the paper more complicated but will be useful as a reference in the future.}

%
%\subsection{Proof idea of Theorem \ref{theorem:from server to distributed}}\label{sec:proof idea theorem: from server to distributed}
%\begin{proof}[Idea of Theorem \ref{theorem:from server to distributed}]
%
We give the proof idea here and provide full detail in Appendix~\ref{sec:full proof from server to distributed}. Although we recommend the readers to read this before the full proof and believe that it is enough to reconstruct the full proof, this proof idea can be skipped without loss of continuity.

%\subsection{Proof idea of Theorem \ref{theorem:from server to distributed}}\label{sec:proof idea theorem: from server to distributed}

We note again that the main idea of this theorem essentially follows the ideas developed in a line of work in \cite{PelegR00,Elkin06,LotkerPP06,KorKP11,DasSarmaHKKNPPW11}. In particular, we construct a network following ideas in \cite{PelegR00,Elkin06,LotkerPP06,KorKP11,DasSarmaHKKNPPW11}, and our argument is based on simulating the network by the three players of the server model. This idea follows one of many ideas implicit in the proof of the Simulation Theorem in \cite{DasSarmaHKKNPPW11} which shows how two players can simulate some class of networks. However, as we noted earlier, the previous proof does not work in the quantum setting, and it is still open whether the Simulation Theorem holds in the quantum setting. We instead use the server model. Another difference is that we prove the theorem for graph problems instead of problems on strings (such as Equality or Disjointness). This leads to some simplified reductions since reductions can be done easier in the communication complexity setting.

%%%%%%%%%%%%%%%%%%%%%%%%%%%%%%%%%%%%%%%%%%%%%%%%%%%%%%%%%%%%%%%%%%%%%%%%%%%%%%%%%%%%%%%%%%%%%%%%%%%
% Figure: The following figure combines two figures below. Use this if you want to save space
%%%%%%%%%%%%%%%%%%%%%%%%%%%%%%%%%%%%%%%%%%%%%%%%%%%%%%%%%%%%%%%%%%%%%%%%%%%%%%%%%%%%%%%%%%%%%%%%%%%%

%\begin{figure*}
%\centering
%\subfigure[]{
%\includegraphics[width=.4\linewidth, trim= 0cm 5cm 0cm 0cm]{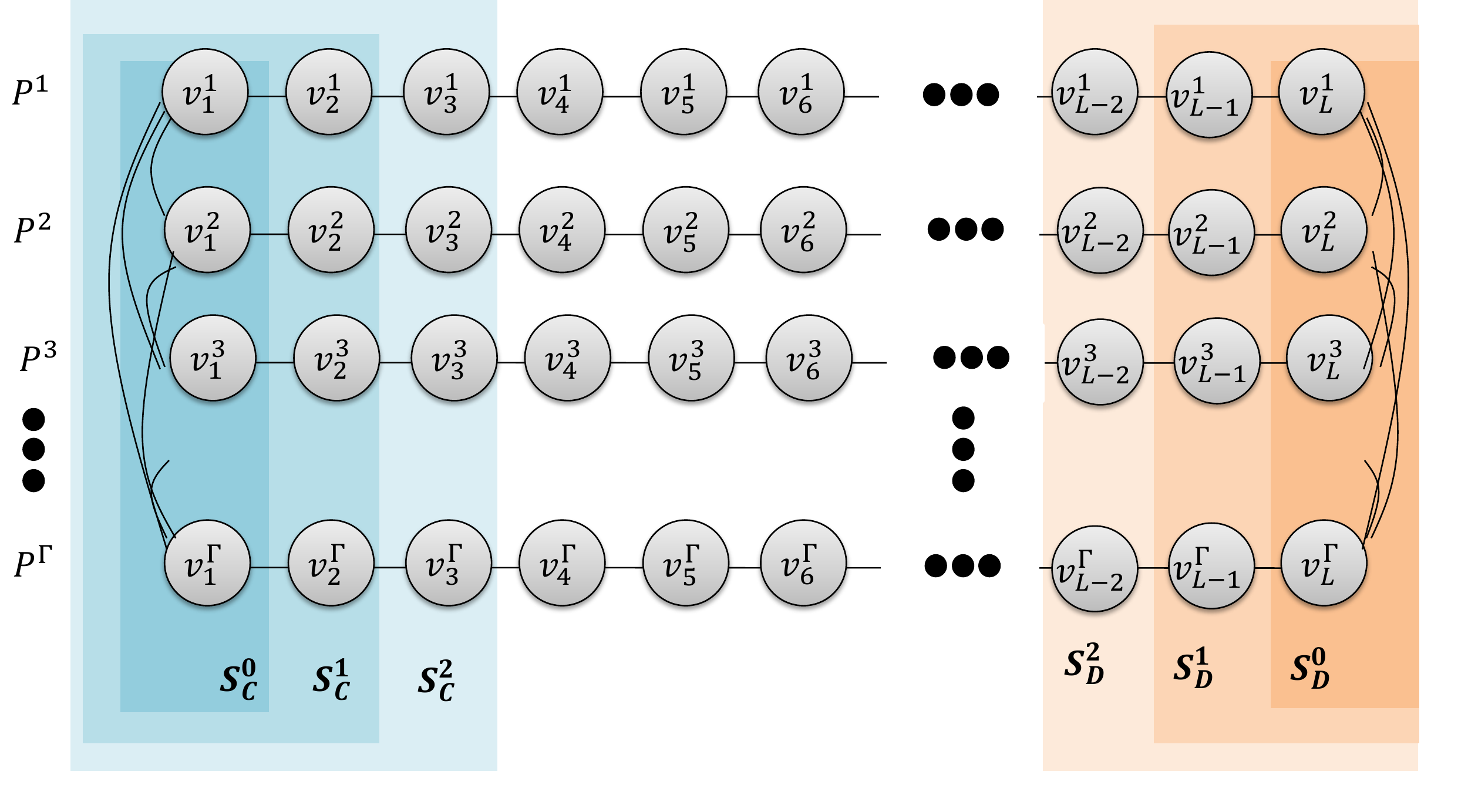}
%\label{fig:Network one}
%}
%\subfigure[]{
%\includegraphics[width=.4\linewidth, trim= 0cm 5cm 0cm 0cm]{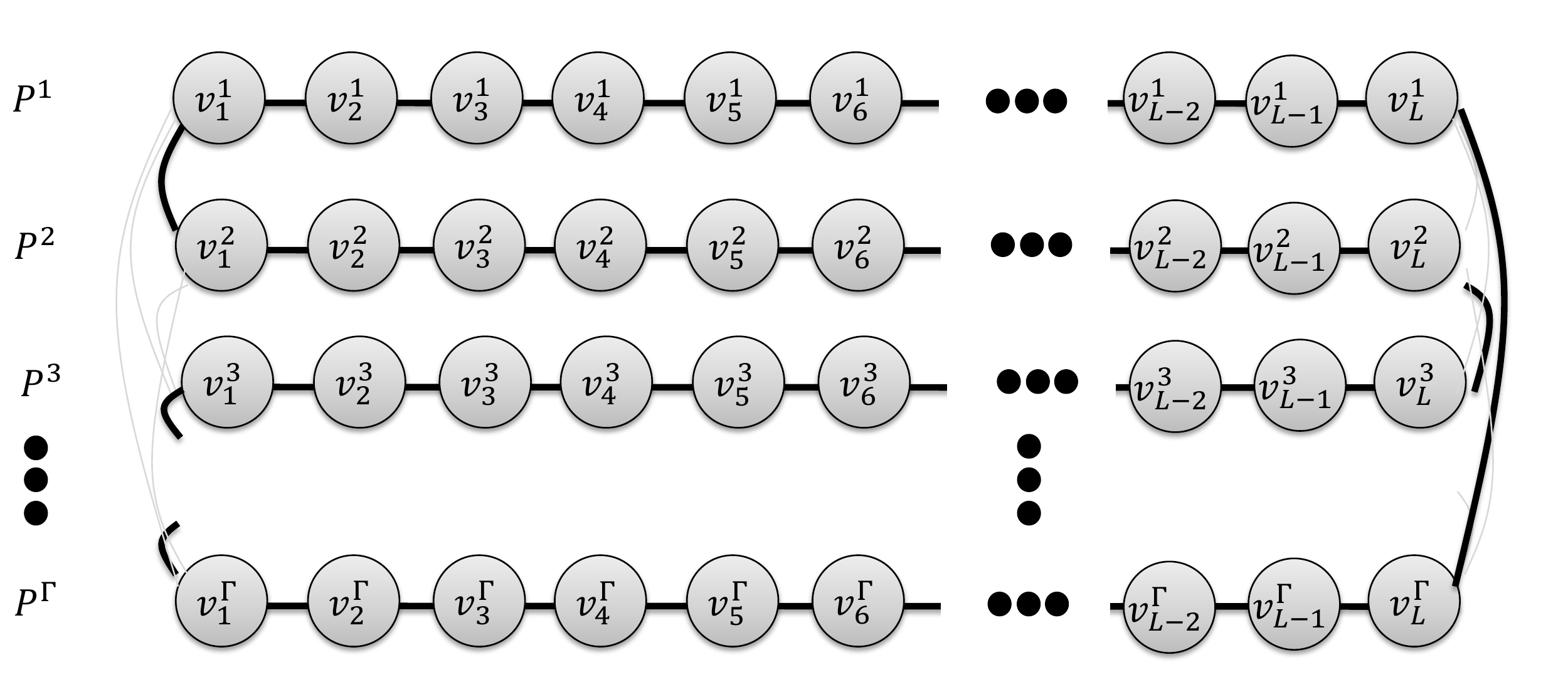}
%\label{fig:Network two}
%}
%\caption{\small\it \subref{fig:Network one} The network $N'$ used in the proof idea of Theorem~\ref{theorem:from server to distributed} with sets $S_C^t$ and $S_D^t$. \subref{fig:Network two} The subnetwork $M$ when the input perfect matchings are $E_C=\{(u_1, u_2), (u_3, u_4), \ldots, (u_{\Gamma-1}, u_{\Gamma})\}$ and $E_D=\{(u_2, u_3), (u_4, u_5), \ldots, (u_\Gamma, u_1)\}$ ($M$ consists of all bold edges).}
%\end{figure*}

%%%%%%%%%%%%%%%%%%%%%%%%%%%%%%%%%%%%%%%%%%%%%%%%%%%%%%%%%%%%%%%%%%%%%%%%%%%%%%%%%%%%%%%%%%%%%%%%%%%
% Use this if you have more space
%%%%%%%%%%%%%%%%%%%%%%%%%%%%%%%%%%%%%%%%%%%%%%%%%%%%%%%%%%%%%%%%%%%%%%%%%%%%%%%%%%%%%%%%%%%%%%%%%%%
%
\begin{figure}
\centering
\includegraphics[width=0.8\linewidth, clip=true, trim= 0cm 5cm 0cm 0cm]{}
\caption{{\small \it The network $N'$ used in the proof idea of Theorem~\ref{theorem:from server to distributed} with sets $S_C^t$ and $S_D^t$.}} \label{fig:Network one}
\end{figure}

To explain the main idea, let us focus on the non-gap version of Hamiltonian cycle verification and consider a $B$-model network $N'$ in Fig.~\ref{fig:Network one} consisting of $\Gamma$ paths, each of length $L$, where we have an edge between any pair of the leftmost (respectively, rightmost) nodes of paths. Now we will prove that if $Q_{\epsilon_0, \epsilon_1}^{*, N}(\ham(N))\leq (L/2)-2$ then $Q_{\epsilon_0, \epsilon_1}^{*, \server}(\ham_\Gamma)=0$ (i.e. no communication is needed from Carol and David to the server!). Note that this statement is stronger than the theorem statement but it is not useful since $N'$ has diameter $\Theta(L)$ which is too large. We will show how to modify $N'$ to get the desired network $N$  later.

%\paragraph{Simulating Distributed Algorithms on $N'$}

Let paths in $N'$ be $P^1, \ldots, P^\Gamma$ and nodes in path $P^i$ be $v^i_1, \ldots, v^i_L$.
Let $\cA$ be an $(\epsilon_0, \epsilon_1)$-error quantum distributed algorithm that solves the Hamiltonian cycle verification problem on network $N'$ ($\ham(N')$) in at most $(L/2)-2$ time.

We show that Carol, David and the server can solve the Hamiltonian cycle problem on a $\Gamma$-node input graph without any communication, essentially by ``simulating'' $\cA$ on some input subnetwork $M$ corresponding to the server-model input graph $G=(U, E_C\cup E_D)$ in the following sense. When receiving $E_C$ and $E_D$, the three parties will construct a subnetwork $M$ of $N'$ (without communication) in such a way that $M$ is a Hamiltonian cycle if and only if $G=(U, E_C\cup E_D)$ is. Next, they will simulate algorithm $\cA$ in such a way that, at any time $t$ and for each node $v^i_j$ in $N'$, there will be exactly one party among Carol, David and the server that knows {\em all information that $v^i_j$ should know in order to run algorithm $\cA$}, i.e., the state of $v^i_j$ as well as the messages (each consisting of $B$ quantum bits) sent to $v^i_j$ from its neighbors at time $t$. The party that knows this information will pretend to be $v^i_j$ and apply algorithm $\cA$ to get the state of $v^i_j$ at time $t+1$ as well as the messages that $v^i_j$ will send to its neighbors at time $t+1$. We say that this party {\em owns $v^i_j$} at time $t$. Details are as follows.

Initially at time $t=0$, we let Carol own all leftmost nodes, and David own all rightmost nodes while the server own the rest, i.e. Carol, David and the server own the following sets of nodes respectively (see Fig.~\ref{fig:Network one}):
%
%\begin{align}\label{eq:Simulate node set one}
%\fbox{\parbox{.6\linewidth}{
%%\centering
%\begin{align*}
%S_C^0&=\{v^i_1\mid 1\leq i\leq \Gamma\},\\
%S_D^0&=\{v_L^i \mid 1\leq i\leq \Gamma\},\\
%S_S^0&=V(N')\setminus (S_C^0\cup S_D^0)\,.
%\end{align*}
%}}
%\end{align}
\begin{align}\label{eq:Simulate node set one}
\fbox{\parbox{.6\linewidth}{
\centering
$S_C^0=\{v^i_1\mid 1\leq i\leq \Gamma\}$,\\
$S_D^0=\{v_L^i \mid 1\leq i\leq \Gamma\}$,\\
$S_S^0=V(N')\setminus (S_C^0\cup S_D^0)$\,.
}}
\end{align}
After Carol and David each receive a perfect matching, denoted by $E_C$ and $E_D$ respectively, on the node set $U=\{u_1, \ldots, u_\Gamma\}$, they construct a subnetwork $M$ of $N'$ as follows. For any $i\neq j$, Carol marks $v^i_1v^j_1$ as participating in $M$ if and only if $u_iu_j\in E_C$. Similarly, David marks $v^i_Lv^j_L$ as participating in $M$ if and only if $u_iu_j\in E_D$. The server marks all edges in all paths as participating in $M$. Fig.~\ref{fig:Network two} shows an example.
%
%Note that each node $v^i_j$ must know the participation of edges incident to it. This means that its owner must know this information. It is easy to check that this holds without any information.
%
We note the following observation which relies on the fact that $E_C$ and $E_D$ are perfect matchings.
%
%(We note that the same statement holds for \st, \conn and \gap-conn.)

%%%%%%%%%%%%%%%%%%%%%%%%%%%%%%%%%%%%%%%%%%%%%%%%%%%%%%%%%%%%%%%%%%%%%%%%%%%%%%%%%%%%%%%%%%%%%%%%%%%
%Use this if you have more space
%%%%%%%%%%%%%%%%%%%%%%%%%%%%%%%%%%%%%%%%%%%%%%%%%%%%%%%%%%%%%%%%%%%%%%%%%%%%%%%%%%%%%%%%%%%%%%%%%%%
\begin{figure}
\centering
\includegraphics[width=0.8\linewidth, clip=true, trim= 0cm 8cm 0cm 0cm]{}
\caption{\small \it The subnetwork $M$ when the input perfect matchings are $E_C=\{(u_1, u_2), (u_3, u_4), \ldots, (u_{\Gamma-1}, u_{\Gamma})\}$ and $E_D=\{(u_2, u_3), (u_4, u_5), \ldots, (u_\Gamma, u_1)\}$ ($M$ consists of all bold edges).}\label{fig:Network two}
\end{figure}

\begin{observation}\label{observation:G is ham if and only if M is ham}
The number of cycles in $G=(U, E_C\cup E_D)$ is the same as the number of cycles in $M$.\danupon{TO DO: Change Appendix Accordingly} %is a Hamiltonian cycle if and only if $M$ is.% a Hamiltonian cycle.
\end{observation}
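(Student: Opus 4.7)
My plan is to show that both $G$ and $M$ are $2$-regular graphs (hence disjoint unions of simple cycles) and then exhibit a natural bijection between their cycles. Since $E_C$ and $E_D$ are perfect matchings on $U$, every vertex $u_i \in U$ has exactly one incident edge in $E_C$ and one in $E_D$, so $G$ is $2$-regular. For $M$, each interior node $v^i_j$ with $1 < j < L$ has the two path edges from $P^i$; each leftmost node $v^i_1$ has the path edge to $v^i_2$ plus exactly one ``cross'' edge coming from $E_C$ (since $E_C$ is a perfect matching on $U$, each index $i$ appears in exactly one $E_C$-pair); symmetrically for each rightmost node $v^i_L$. Hence $M$ is also $2$-regular and decomposes uniquely into disjoint simple cycles.

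The key step will be to define the bijection. Given a cycle $C$ of $G$, its edges alternate between $E_C$ and $E_D$ (because each vertex has exactly one edge of each type), so $C$ has the form $u_{i_1} u_{i_2} u_{i_3} \cdots u_{i_{2k}} u_{i_1}$ with $u_{i_1}u_{i_2} \in E_C$, $u_{i_2}u_{i_3} \in E_D$, and so on. I will map $C$ to the closed walk in $M$ obtained by replacing each edge $u_{i_s} u_{i_{s+1}} \in E_C$ by the cross edge $v^{i_s}_1 v^{i_{s+1}}_1$, each edge $u_{i_s} u_{i_{s+1}} \in E_D$ by the cross edge $v^{i_s}_L v^{i_{s+1}}_L$, and inserting the path $P^{i_{s+1}}$ between consecutive cross edges so as to travel from the left end to the right end (or vice versa) of index $i_{s+1}$. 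One can check this walk is a simple cycle of $M$ that uses each cross edge and each path $P^{i_s}$ exactly once.

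To show the map is a bijection, I will argue it is both injective and surjective by a ``contraction'' view, which I expect to be the cleanest step: contracting each path $P^i$ in $M$ to a single super-node $\tilde v^i$ and deleting the resulting loops in interior edges gives a multigraph isomorphic to $G$ (with $\tilde v^i \leftrightarrow u_i$ and with cross edges mapping exactly to $E_C \cup E_D$). Contraction of a path whose internal vertices have degree $2$ neither merges nor splits cycles, so the number of cycles is invariant under this operation. Hence the cycle decomposition of $M$ is in bijection with that of $G$, proving the claim.

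The main (mild) obstacle will be carefully justifying the ``contraction preserves cycle count'' step, in particular checking that the two path endpoints $v^i_1$ and $v^i_L$ always land in the same cycle of $M$ (so that the path $P^i$ is traversed in one piece rather than split across cycles). This follows from $2$-regularity: the two cross edges incident to path $P^i$ together with $P^i$ itself form a single length-$(L{+}1)$ segment whose endpoints have already been ``used up'' by the two matchings, so traversing $P^i$ is forced. Once this observation is in place, the bijection and equality of cycle counts follow immediately.
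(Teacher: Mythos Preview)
Your proof is correct. The paper does not actually supply a proof of this observation: in the main text it is stated as a fact that ``relies on the fact that $E_C$ and $E_D$ are perfect matchings,'' and in the appendix the restated version is followed only by an author note ``Should write a proof?'' Your argument---observing that both $G$ and $M$ are $2$-regular (hence disjoint unions of cycles) and that contracting each path $P^i$ to a single vertex yields exactly $G$ without changing the cycle count---is precisely the natural justification the paper leaves implicit, and is perhaps slightly more than needed (the explicit bijection via alternating walks is redundant once the contraction argument is in place, but it does no harm).
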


%The same observation holds for the case of spanning tree and connected component.

%It can be checked that the observation holds for the gap version as well. This is the reason that the theorem holds for the gap version.
%
Now the three parties start a simulation. Recall that at time $t=0$ the three parties own nodes in the sets $S_C^0$, $S_D^0$ and $S_S^0$ as in Eq.\eqref{eq:Simulate node set one}. Our goal it to simulate $\cA$ for one time step and make sure that Carol, David and the server own the following sets respectively (see Fig.~\ref{fig:Network one}):
%
%
%\begin{align}\label{eq:Simulate node set two}
%\fbox{
%\parbox{.6\linewidth}{
%\centering
%\begin{align*}
%S_C^1=\{v_1^i, v_2^i \mid 1\leq i\leq \Gamma\},\\
%S_D^1=\{v_{L-1}^i, v_{L}^i \mid 1\leq i\leq \Gamma\},\\
%S_S^1=V(N')\setminus (S_C^1\cup S_D^1)\,.
%\end{align*}
%}
%}
%\end{align}
%
\begin{align}\label{eq:Simulate node set two}
\fbox{
\parbox{.6\linewidth}{
\centering
$S_C^1=\{v_1^i, v_2^i \mid 1\leq i\leq \Gamma\},$\\
$S_D^1=\{v_{L-1}^i, v_{L}^i \mid 1\leq i\leq \Gamma\},$\\
$S_S^1=V(N')\setminus (S_C^1\cup S_D^1)\,.$
}
}
\end{align}
To do this, the parties simulate $\cA$ on the nodes they own for one time step. This means that each of them will know the states and out-going messages at time $t=1$ (i.e., after $\cA$ is executed once) of nodes they own. Observe that although Carol knows the state of $v_1^i$, for any $i$, at time $t=1$, she is not able to simulate $\cA$ on $v_1^i$ for one more step since she does not know the message sent from $v_2^i$ to $v_1^i$ at time $t=1$. This information is known by the server who owns $v_2^i$ at time $t=0$. Thus, we let the server send this message to Carol. Additionally, for Carol to own node $v_2^i$ at time $t=1$, it suffices to let the server send the state of $v_2^i$ and the message sent from $v_3^i$ to $v_2^i$ at time $t=1$ (which are known by the server since it owns $v_2^i$ and $v_3^i$ at time $t=0$). The messages sent from the server to David can be constructed similarly. It can be checked that after this communication the three parties own nodes as in Eq.\eqref{eq:Simulate node set two} and thus they can simulate $\cA$ for one more step.

%Observe further that the server also cannot simulate nodes in $\{v_2^j \mid 1\leq j\leq \Gamma\}$ since it does not know the messages from $v_1^j$ to $v_2^j$ and Carol cannot send these messages to it. So, the server additionally send to Carol the states of all these nodes and all messages sent to them that it knows. Now Carol knows the states and all incoming messages of all nodes in $S_C^2$ and so he will be able to simulate $\cA$ on these nodes. The messages sent from the server to David can be constructed similarly.

Using a similar argument as the above we can guarantee that at any time $t\leq (L/2)-2$, Carol, David and the server own nodes in the following sets respectively:
\begin{align*}
\fbox{
\parbox{.7\linewidth}{
\centering
$S_C^t=\{v^i_j \mid 1\leq i\leq \Gamma,~1\leq j\leq t+1\},$\\
$S_D^t=\{v^i_j \mid 1\leq i\leq \Gamma,~L-t\leq j\leq L\},$\\
$S_S^t=V(N')\setminus (S_C^t\cup S_D^t)\,.$
}
}
\end{align*}
Thus, if algorithm $\cA$ terminates in $(L/2)-2$ steps then Carol, David and the server will know whether $M$ is a Hamiltonian cycle or not with $(\epsilon_0, \epsilon_1)$-error by reading the output of nodes they own. By Observation~\ref{observation:G is ham if and only if M is ham}, they will know whether $G=(U, E_C\cup E_D)$ is a Hamiltonian cycle or not with the same error bound.

\begin{figure}
\centering
\includegraphics[width=0.7\linewidth]{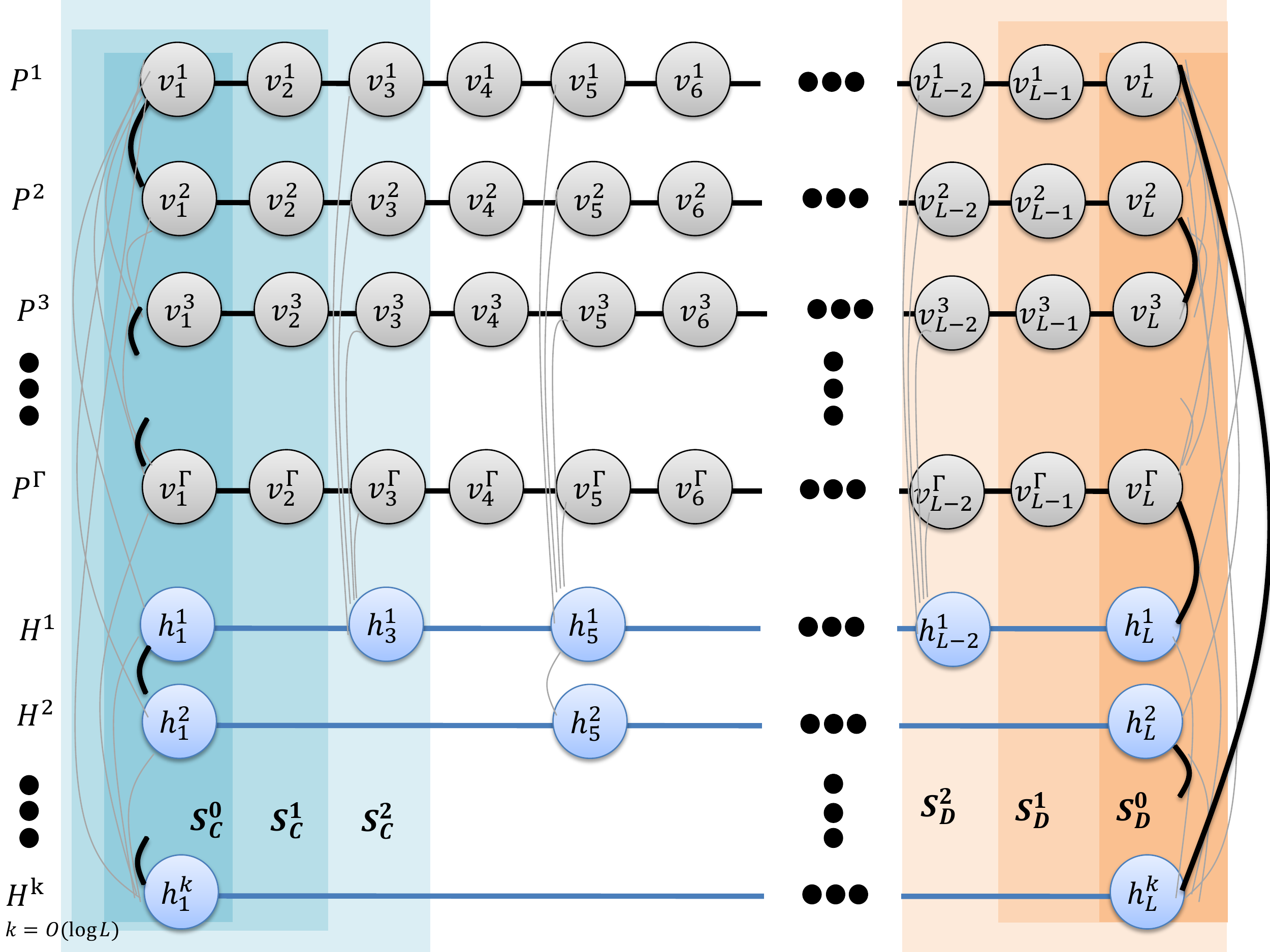}
\caption{\small \it The network $N$ consisting of network $N'$ and some ``highways'' which are paths with nodes $h^i_j$ (i.e., nodes in blue). Bold edges show an example of subnetwork $M$ when the input perfect matchings are $E_C=\{(u_1, u_2), (u_3, u_4), \ldots, (u_{\Gamma+k-1}, u_{\Gamma+k}\}$ and $E_D=\{(u_2, u_3), (u_4, u_5), \ldots, (u_{\Gamma+k}, u_1)\}$. Pale edges are those in $N$ but not in $M$.}\label{fig:Network small diameter}
\end{figure}

%\paragraph{Making Network Diameter Smaller}
Now we modify $N'$ to get network $N$ of small diameter. A simple idea to slightly reduce the diameter is to add a path having half the number of nodes of other paths and connect its nodes to every other node on the other paths (see path $H^1$ in Fig.~\ref{fig:Network small diameter}). This path helps reducing the diameter from $L$ to roughly $(L/2)-2$ since any pair of nodes can connect in roughly $(L/2)-2$ hops through this path. By adding about $O(\log L)$ such paths (with $H^i$ having half the number of nodes of $H^{i-1}$) as in Fig.~\ref{fig:Network small diameter}, we can reduce the diameter to $O(\log L)$. We call the new paths {\em highways}.

We can use almost the same argument as before to prove the theorem, by modifying sets $S_C^t$, $S_D^t$ and $S_S^t$ appropriately as in Fig.~\ref{fig:Network small diameter} and consider the input graph $G=(U, E_C\cup E_D)$ of $\Gamma+k$ nodes, where $k$ is the number of highways. The exception is that now Carol and David have to speak a little. For example, observe that if the three parties want to own the states of $S_C^1$, $S_D^1$ and $S_S^1$ at time $t=1$, Carol has to send to the server the messages sent from node $h^i_1$ to its right neighbor, for all $i$. Since this message has size at most $B$, and the simulation is done for $Q^{*, N}_{\epsilon_0, \epsilon_1}(\ham(N))$ steps, Carol will send $O((B\log n)Q^{*, N}_{\epsilon_0, \epsilon_1}(\ham(N)))$ qubits to the server. David will have to send the same amount of information and thus the complexity in the server model is as claimed.
%\end{proof}

%
%\subsection{From Quantum Communication Complexity to Distributed Algorithm Lower Bounds}
%State and prove the quantum version of the Simulation Theorem.
%
%\subsection{Stronger Lower Bounds of Graph Verification}
%
%Note: I think we also improve the bound of the corrupted mail problem.
%
%\subsection{Stronger Lower Bounds of Approximating MST and related problems}
%
%(Sketch) We use $G(\Gamma, L)$. The lower bound is $\tilde \Omega(\min(\Gamma, L))$. Now, set $L=\min(\sqrt{N}, W/(\alpha+1))$ and thus $\Gamma=N/L=\max(\sqrt{N}, N(\alpha+1)/W)$. We know that distinguishing the case where subgraph $H$ is a ST and consists of at least $\Gamma$ components need $\tilde \Omega(\min(\Gamma, L))=\min(\sqrt{N}, W/(\alpha+1))$ time (as mentioned earlier).
%
%We claim that an $\alpha$-approximation algorithm can be used to distinguish this. If $H$ is a ST, the total weight is $N$. If $H$ is not then the total weight is at least $\Gamma W$ which is at least $(\alpha+1)N$ since if $W\geq (\alpha+1) \sqrt{N}$ then $\Gamma=L=\sqrt{N}$ and thus $\Gamma W\geq \sqrt{N}((\alpha+1) \sqrt{N}) = (\alpha+1) N$; otherwise (i.e., $W\leq (\alpha+1) \sqrt{N}$), $L=W/(\alpha+1)$ and thus $\Gamma W =\Gamma ((\alpha+1) L)=(\alpha+1) N$. This means that an $\alpha$-approximation algorithm will output a tree of weight at most $\alpha N$ when $H$ is a ST and more than $\alpha N$ when $H$ is not. (We set edges in $H$ to 1 and other edges to $W$.)

%\input{server2distributed_fullproof}

%
\section{Proof of main theorems (Theorem \ref{theorem:main verification} \& \ref{theorem:main optimization})}\label{sec:proof of main}
%We first prove lower bounds for verification problems.
\subsection{Proof of Theorem \ref{theorem:main verification}}
%
%\begin{theorem}[Theorem~\ref{theorem:main verification} restated]
%For any $B$ and large $n$, there exists $\epsilon>0$ and a $B$-model $n$-node network $N$ of diameter $\Theta(\log n)$ such that any $(\epsilon, \epsilon)$-error quantum algorithm\danupon{We didn't define this.} with prior entanglement for Hamiltonian cycle, and spanning tree verification problems on $N$ requires $\Omega(\sqrt{\frac{n}{B\log n}})$ time. That is, $Q^{*, N}_{\epsilon, \epsilon}(\ham(N))$ and $Q^{*, N}_{\epsilon, \epsilon}(\st(N))$ are $\Omega(\sqrt{\frac{n}{B\log n}})\,.$
%\end{theorem}
%
%\begin{theorem}
%For any large $n$ and $B$, there exists $\epsilon>0$ and a $B$-model $\Theta(n)$-node network $N$ of diameter $\Theta(\log n)$ such that any $(\epsilon, \epsilon)$-error quantum algorithm with prior entanglement for Hamiltonian cycle, spanning tree and connected component verification problems on $N$ requires $\Omega(\sqrt{\frac{n}{B\log n}})$ time.
%\end{theorem}
%\begin{proof}
%
%Let $\epsilon$ be as in Theorem~\ref{theorem:from server to distributed}.
{
\renewcommand{\thetheorem}{\ref{theorem:main verification}}
\begin{theorem}[Restated]
For any $B$ and large $n$, there exists $\epsilon>0$ and a $B$-model $n$-node network $N$ of diameter $\Theta(\log n)$ such that any $(\epsilon, \epsilon)$-error quantum algorithm with prior entanglement for Hamiltonian cycle and spanning tree verification on $N$ requires $\Omega(\sqrt{\frac{n}{B\log n}})$ time. That is, $Q^{*, N}_{\epsilon, \epsilon}(\ham(N))$ and $Q^{*, N}_{\epsilon, \epsilon}(\st(N))$ are $\Omega(\sqrt{\frac{n}{B\log n}})$.
\end{theorem}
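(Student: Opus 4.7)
The proof is an exercise in parameter balancing: combine the Quantum Simulation Theorem (Theorem~\ref{theorem:from server to distributed}) with the Server-model lower bound $Q^{*,\server}_{\epsilon,\epsilon}(\ham_\Gamma) = \Omega(\Gamma)$ from Theorem~\ref{theorem:main cc}, and choose $\Gamma$ and $L$ so that $n = \Theta(\Gamma L)$, the diameter $\Theta(\log L)$ equals $\Theta(\log n)$, and the resulting lower bound is as large as possible.

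\textbf{Step 1 (Setup of the hard network).} Fix the constant $\epsilon > 0$ coming from Theorem~\ref{theorem:main cc}. Let $T$ denote $Q^{*,N}_{\epsilon,\epsilon}(\ham(N))$ for the network $N$ produced by Theorem~\ref{theorem:from server to distributed}. I would choose parameters $L$ and $\Gamma$ (both functions of $n$ and $B$) so that $\Gamma L = \Theta(n)$ and $\log L = \Theta(\log n)$; the specific choice will be $L = \Theta(\sqrt{n/(B\log n)})$ and $\Gamma = \Theta(\sqrt{n B \log n})$, which ensures $\Gamma \geq \log L$ as required by Theorem~\ref{theorem:from server to distributed}. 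The resulting $N$ has $\Theta(n)$ nodes and diameter $\Theta(\log n)$.

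\textbf{Step 2 (Two cases).} Now I argue a lower bound on $T$ by considering two complementary cases. If $T > L/2 - 2$, then $T = \Omega(L) = \Omega(\sqrt{n/(B\log n)})$ and we are done. Otherwise $T \le L/2 - 2$, so the hypothesis of Theorem~\ref{theorem:from server to distributed} is satisfied and we may conclude
\[
Q^{*,\server}_{\epsilon,\epsilon}(\ham_\Gamma) \;=\; O\bigl((B \log L)\, T\bigr).
\]
Combining with Theorem~\ref{theorem:main cc}, which gives $Q^{*,\server}_{\epsilon,\epsilon}(\ham_\Gamma) = \Omega(\Gamma)$, yields
\[
T \;=\; \Omega\!\left(\frac{\Gamma}{B \log L}\right) \;=\; \Omega\!\left(\frac{\sqrt{n B \log n}}{B \log n}\right) \;=\; \Omega\!\left(\sqrt{\tfrac{n}{B \log n}}\right).
\]
In either case, $Q^{*,N}_{\epsilon,\epsilon}(\ham(N)) = \Omega(\sqrt{n/(B\log n)})$, which is the claimed bound for Hamiltonian cycle verification.

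\textbf{Step 3 (Spanning tree).} For $\st(N)$, I would take one of two routes. The cleanest is to observe that the Quantum Simulation Theorem, though stated for $\ham$, is generic: its proof never uses anything about the Hamiltonicity predicate beyond the fact that the input subnetwork $M$ is determined locally from Carol's and David's inputs (see the remark after Theorem~\ref{theorem:from server to distributed}). Thus the same argument applies to $\st$, provided we have a matching Server-model lower bound $Q^{*,\server}_{\epsilon,\epsilon}(\st_\Gamma) = \Omega(\Gamma)$. That bound is obtained exactly as in Section~\ref{sec:server graph}: modify the $\ipmodthree_n$-to-$\ham_{cn}$ gadget reduction so that the resulting graph is a spanning tree iff the inner product vanishes mod $3$ (e.g., by breaking one designated edge in each gadget so that ``three disjoint paths with a cyclic shift'' becomes ``one spanning tree vs.\ a disconnected forest''). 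Alternatively, one can invoke the deterministic classical reduction from $\ham$ to $\st$ given in \cite{DasSarmaHKKNPPW11}, which is round-preserving and therefore quantum-simulatable; this is the route indicated by Corollary~\ref{corollary:main verification}.

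\textbf{Main obstacle.} The arithmetic balancing in Step~2 is routine, and Step~1 amounts to plugging in parameter choices. The only nontrivial intellectual content is already absorbed into the two black boxes we are invoking: the Server-model hardness of $\ham_\Gamma$ (Theorem~\ref{theorem:main cc}), whose proof needs the nonlocal-game machinery and the gadget reduction from $\ipmodthree$, and the Quantum Simulation Theorem itself (Theorem~\ref{theorem:from server to distributed}), whose subtlety lies in the use of the Server model in place of standard two-party communication. Given those, the proof of Theorem~\ref{theorem:main verification} is essentially a one-line corollary, modulo the parameter optimization performed above.
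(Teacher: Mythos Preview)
Your proof for $\ham(N)$ is correct and matches the paper's argument exactly: same parameter choices $L=\Theta(\sqrt{n/(B\log n)})$, $\Gamma=\Theta(\sqrt{nB\log n})$, same contradiction via the Quantum Simulation Theorem against the server-model bound $Q^{*,\server}_{\epsilon,\epsilon}(\ham_\Gamma)=\Omega(\Gamma)$.

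For $\st(N)$, the paper takes a route simpler than either of yours. Rather than modifying the gadget reduction or invoking a server-model bound for $\st_\Gamma$, it reduces $\ham(N)$ to $\st(N)$ \emph{directly in the distributed model}: given a spanning-tree verifier $\cA$ with running time $T_\cA$, verify that $M$ is a Hamiltonian cycle by (i) checking in $O(D)$ rounds that every node has degree exactly two in $M$ (reject if not), (ii) deleting one arbitrary edge of $M$, and (iii) running $\cA$ on the result. After step (i), $M$ is a disjoint union of cycles, so the edge-deleted graph is a spanning tree iff $M$ was a single Hamiltonian cycle. This yields $Q^{*,N}_{\epsilon,\epsilon}(\ham(N))\le T_\cA+O(D)$, and the already-established $\ham$ lower bound forces $T_\cA=\Omega(\sqrt{n/(B\log n)})$. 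Your Route~2 gestures at this, but the paper neither cites \cite{DasSarmaHKKNPPW11} for it nor needs any server-model statement about $\st$; the three-line distributed reduction suffices.
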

\addtocounter{theorem}{-1}
}
We note from Theorem~\ref{theorem:graph lower bound server model} that %$Q^{*,\server}_{\epsilon, \epsilon}(\ham_\Gamma)>c'\Gamma$ 
\begin{align}
Q^{*,\server}_{\epsilon, \epsilon}(\ham_\Gamma)>c'\Gamma\label{eq:lower bound in server model}
\end{align}
for some $\epsilon>0$ and $c'>0$.  Let $c$ be the constant in the big-Oh in Theorem~\ref{theorem:from server to distributed}.
Let $L=\lfloor\frac{c'}{c}\sqrt{\frac{n}{B\log n}}\rfloor$ and $\Gamma=\lceil\sqrt{Bn\log n}\rceil$. Assume that
\begin{align}
Q_{\epsilon, \epsilon}^{*,N}(\ham(N)) &\leq L/2\leq \frac{c'}{2c}\sqrt{\frac{n}{B\log n}}\,.\label{eq:assume small time}
\end{align}
By Theorem~\ref{theorem:from server to distributed}, there is a network $N$ of diameter $O(\log L)=O(\log n)$ and $\Theta(L\Gamma)=\Theta(n)$ nodes such that
%
%\begin{align*}
$Q_{\epsilon, \epsilon}^{*,\server}(\ham_\Gamma)$ $\leq (c B\log L)Q_{\epsilon, \epsilon}^{*,N}(\ham(N))$
$\leq  (c B\log L)\left(\frac{c'}{2c}\sqrt{\frac{n}{B\log n}}\right)$
$\leq c' \sqrt{Bn\log n}$
%\end{align*}
where the second equality is by Eq.~\eqref{eq:assume small time}. This contradicts Eq.\eqref{eq:lower bound in server model}, thus proving that $Q_{\epsilon, \epsilon}^{*,N}(\ham(N)) > L/2\geq \frac{c'}{4c}\sqrt{\frac{n}{B\log n}}$.

To show a lower bound of $Q_{\epsilon, \epsilon}^{*,N}(\st(N))$, let $\cA$ be an algorithm that solves spanning tree verification on $N$ in $T_\cA$ time. We can use $\cA$ to verify if a subnetwork $M$ is a Hamiltonian cycle as follows. First, we check that all nodes have degree two in $M$ (this can be done in $O(D)$ time). If not, $M$ is not a Hamiltonian cycle. If it is, then $M$ consists of cycles. Now we delete one edge $e$ in $M$ arbitrarily, and use $\cA$ to check if this subnetwork is a spanning tree. It is easy to see that this subnetwork is a spanning tree if and only if $M$ is a Hamiltonian cycle. The running time of our algorithm is $T_\cA+O(D)$. The lower bound of $Q_{\epsilon, \epsilon}^{*,N}(\ham(N))$ implies that $T_\cA=\Omega(\sqrt{\frac{n}{B\log n}})$.
%\end{proof}

%Theorem~\ref{theorem:main optimization} is proved in a similar way. It is deferred to the full version due to space limitation.

\subsection{Proof of Theorem \ref{theorem:main optimization}}

{
\renewcommand{\thetheorem}{\ref{theorem:main optimization}}
\begin{theorem}[Restated]
For any $n$, $B$, $W$ and $\alpha<W$ there exists $\epsilon>0$ and a $B$-model $\Theta(n)$-node network $N$ of diameter $\Theta(\log n)$ such that any $\epsilon$-error $\alpha$-approximation quantum algorithm with prior entanglement for computing the minimum spanning tree problem on $N$ with weight function $w:E(N)\rightarrow \reals_+$ such that $\frac{\max_{e\in E(N)} w(e)}{\min_{e\in E(N)} w(e)}\leq W$ requires $\Omega(\frac{1}{\sqrt{B\log n}}\min(W/\alpha, \sqrt{n}))$ time.
\end{theorem}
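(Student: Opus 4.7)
The strategy is to combine a weighted reduction from $(\beta\Gamma)$-$\ham(N)$ to $\alpha$-approximate \mst(N) with the Quantum Simulation Theorem (Theorem~\ref{theorem:from server to distributed}) applied to the gap Hamiltonian cycle lower bound in the server model (Theorem~\ref{theorem:main cc}). First, I would set up the reduction: given the network $N$ from the Simulation Theorem and a subnetwork $M$ that is either a Hamiltonian cycle in $N$ or $(\beta\Gamma)$-far from being one, assign weight $1$ to every edge in $M$ and weight $W$ to every edge of $N\setminus M$. Thus the weight aspect ratio is $W$ as required. The key structural fact, built into the construction underlying Theorem~\ref{theorem:from server to distributed} via the $(\beta\Gamma)$-$\ham_\Gamma$ reduction in Section~\ref{sec:server graph}, is that in the ``far'' case $M$ actually consists of $\delta=\Theta(\beta\Gamma)$ vertex-disjoint cycles, hence $M$ has $\delta$ connected components spanning $|V(N)|$ nodes.

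Under this weight assignment, in the \textsc{Yes} case any $|V(N)|-1$ edges of the Hamiltonian cycle $M$ form a spanning tree of total weight at most $|V(N)|-1$, while in the \textsc{No} case any spanning tree must contain at least $\delta-1$ edges outside $M$ to join the $\delta$ components, giving MST weight at least $(|V(N)|-\delta)+(\delta-1)W = \Omega(\delta W)$ when $\delta W \gg |V(N)|$. With $|V(N)|=\Theta(\Gamma L)$ and $\delta=\Theta(\beta\Gamma)$, an $\alpha$-approximation for MST distinguishes the two cases whenever $L \leq c\,W/\alpha$ for a suitable constant $c>0$ depending on $\beta$. Consequently, any $\epsilon$-error $\alpha$-approximation MST algorithm with running time $T$ solves $(\beta\Gamma)$-$\ham(N)$ in time $O(T)$ under this parameter regime.

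Next, I invoke the Quantum Simulation Theorem combined with Theorem~\ref{theorem:main cc}: assuming $T \leq L/2 - 2$, we obtain
\[
\Omega(\Gamma) \;=\; Q^{*,\server}_{0,\epsilon}\bigl((\beta\Gamma)\text{-}\ham_\Gamma\bigr) \;\leq\; O\bigl((B\log L)\,T\bigr),
\]
so $T = \Omega\!\bigl(\Gamma/(B\log L)\bigr)$. Combining with the alternative $T > L/2 - 2$, any algorithm satisfies $T = \Omega\!\bigl(\min\{L,\;\Gamma/(B\log L)\}\bigr)$ for every valid choice of $L,\Gamma$ subject to the constraints $|V(N)|=\Theta(\Gamma L)=\Theta(n)$, $\Gamma\geq\log L$, and $L\leq c W/\alpha$.

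Finally, I would optimize the parameters in two cases. If $W/\alpha \geq \sqrt{n}$, the approximation constraint is slack, so choose $L=\Theta(\sqrt{n/(B\log n)})$ and $\Gamma=\Theta(\sqrt{nB\log n})$, balancing the two terms and yielding $T=\Omega(\sqrt{n/(B\log n)})$ exactly as in Theorem~\ref{theorem:main verification}. If $W/\alpha<\sqrt{n}$, choose $L=\Theta\!\bigl((W/\alpha)/\sqrt{B\log n}\bigr)$ (which satisfies $L\leq c W/\alpha$) and $\Gamma=\Theta(n/L)$; a direct check shows $\Gamma/(B\log L)\geq L/2$, so the binding bound is $T=\Omega(L)=\Omega\!\bigl((W/\alpha)/\sqrt{B\log n}\bigr)$. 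Both cases together give $T=\Omega\!\bigl(\min(W/\alpha,\sqrt{n})/\sqrt{B\log n}\bigr)$, as desired. The main subtlety I expect to confront is verifying that the ``far'' instances produced by the Simulation Theorem's reduction indeed yield a disconnected $M$ with the correct number of components (so that the MST weight gap is truly multiplicative in $W$); this follows from the gadget structure of Section~\ref{sec:server graph}, where the gap-Equality reduction was explicitly designed so that disagreement of $\delta$ coordinates produces exactly $\delta$ disjoint cycles.
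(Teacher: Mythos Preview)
Your proposal is correct and follows essentially the same route as the paper: the weight-$1$/weight-$W$ reduction from $(\beta\Gamma)$-$\ham(N)$ to $\alpha$-approximate MST, plugged into the Quantum Simulation Theorem together with the server-model bound $Q^{*,\server}_{0,\epsilon}((\beta\Gamma)\text{-}\ham_\Gamma)=\Omega(\Gamma)$, followed by the same two-case parameter balancing. One small practical difference worth noting: rather than appealing to the gadget structure of Section~\ref{sec:server graph} to certify that the ``far'' instance is a union of many cycles, the paper first spends $O(D)$ rounds checking that every node has degree exactly two in $M$ (and computing the weight of the returned tree); any surviving instance is then automatically $2$-regular, so ``$(\beta\Gamma)$-far from Hamiltonian'' directly yields $\Omega(\beta\Gamma)$ components without inspecting the reduction, and the resulting gap-Ham algorithm is visibly $(0,\epsilon)$-error as required by the one-sided server-model bound.
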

\addtocounter{theorem}{-1}
}
%\begin{proof}
%Let $\epsilon$ be as in Theorem~\ref{theorem:from server to distributed}.
We note from Theorem~\ref{theorem:graph lower bound server model} that
\begin{align}
Q^{*,\server}_{0, \epsilon}((\beta\Gamma)\mbox{-}\ham_\Gamma)>c'\Gamma\label{eq:lower bound in server model beta}
\end{align}
for some constant $\beta>0$, $\epsilon>0$ and $c'>0$.  Let $c$ be the constant in the big-Oh in Theorem~\ref{theorem:from server to distributed}.
Let $L=\lfloor\frac{c'}{c\sqrt{B\log n}}\min(\frac{W}{\alpha}, \sqrt{n})\rfloor$ and $\Gamma=\lceil \sqrt{B\log n}\max(\frac{n\alpha}{W}, \sqrt{n})\rceil$. We prove the following claim the same way we prove Theorem~\ref{theorem:main verification} in the previous section.

\begin{claim}
$Q_{0, \epsilon}^{*,N}((\beta\Gamma)\mbox{-}\ham) > \frac{L}{2}\geq \frac{c'}{4c}\min(W/\alpha, \sqrt{\frac{n}{B\log n}})$
\end{claim}
\begin{proof}
Assume that
\begin{align}
Q_{0, \epsilon}^{*,N}((\beta\Gamma)\mbox{-}\ham) &\leq \frac{L}{2} \leq \frac{c'}{2c}\min(W/\alpha, \sqrt{\frac{n}{B\log n}})\,.
\label{eq:assume small time beta}
\end{align}
By Theorem~\ref{theorem:from server to distributed}, there is a network $N$ of diameter $\Theta(\log L)=O(\log n)$ and $\Theta(L\Gamma)=\Theta(n)$ nodes such that
\begin{align*}
Q_{0, \epsilon}^{*,\server}((\beta\Gamma)\mbox{-}\ham_\Gamma) &\leq (c B\log L)Q_{0, \epsilon}^{*,N}((\beta\Gamma)\mbox{-}\ham)\\
&\leq  (c B\log L)(L/2)\\%\left(\frac{c'}{c}\min(W/\alpha, \sqrt{\frac{n}{B\log n}})\right) \\
&\leq \frac{c'\sqrt{B\log n}}{2}\min(\frac{W}{\alpha} ,\sqrt{n})\\
&\leq \frac{c'\sqrt{B\log n}}{2}\max(\frac{n\alpha}{W} ,\sqrt{n})\\
&\leq c' \Gamma
\end{align*}
where the second equality is by Eq.~\eqref{eq:assume small time beta} and the fourth inequality is because if $\frac{W}{\alpha}\leq \sqrt{n}$ then $\alpha\geq W/\sqrt{n}$ and thus $n\alpha/W\geq \sqrt{n}\geq W/\alpha$. This contradicts Eq.\eqref{eq:lower bound in server model beta}.
\end{proof}
Now assume that there is an $\epsilon$-error quantum distributed algorithm $\cA$ that finds an $\alpha$-approximate MST in $T_\cA$ time. We use $\cA$ to construct an $(0, \epsilon)$-error algorithm that solves $(\beta\Gamma)$-$\ham(N)$ in $T_\cA+O(D)$ time as follows. Let $M$ be the input subnetwork. First we check if all nodes have degree exactly two in $M$. If not then $M$ is not a Hamiltonian cycle and we are done. If it is then $M$ consist of one cycle or more. It is left to check whether $M$ is connected or not. To do this, we assign weight $1$ to all edges in $H$ and weight $W$ to the rest edges. We use $\cA$ to compute an $\alpha$-approximate MST $T$. Then we compute the weight of $T$ in $O(D)=O(\log n)$ rounds. If $T$ has weight at most $\alpha (n-1)$ then we say that $H$ is connected; otherwise we say that it is $(\beta \Gamma)$-far from being connected.

To show that this algorithm is $(0, \epsilon)$-error, observe that, for any $i$, if $H$ is $i$-far from being connected then the MST has weight at least $(n-1-i)+iW$ since the MST will contain at least $i$ edges of weight $W$. If $H$ is connected then the MST has weight exactly $n-1$ which means that $T$ will have weight at most $\alpha(n-1)$ with probability at least $1-\epsilon$, and we will say that $H$ is connected with probability at least $1-\epsilon$. Otherwise, if $H$ is $(\beta\Gamma)$-far from connected then $T$ always have weight at least
$$(n-1-\beta\Gamma)+\beta\Gamma W\geq \beta\Gamma W\geq \beta (\sqrt{B\log n}\max(\frac{n\alpha}{W}, \sqrt{n})) W \geq \beta \sqrt{B\log n} \frac{n\alpha}{W} W\geq \alpha n>\alpha (n-1)$$
for large enough $n$ (note that $\beta$ is a constant), and we will always say that $H$ is $(\beta \Gamma)$-far from being connected. Thus algorithm is $(0, \epsilon)$-error as claimed.

\newpage
\part{Appendix}

\appendix

\section{Detailed Definitions}\label{sec:definitions}

\subsection{Quantum Distributed Network Models}\label{sec:formal definition of network}
%A short description will be putted here. For now, see Appendix~\ref{sec:full description of quantum network}.

\subsubsection*{Informal descriptions}
We first describe a {\em general} model which will later make it easier to define some specific models we are considering. We assume some familiarity with quantum computation (see, e.g., \cite{NielsenChuangBook,Watrous11} for excellent resources). A general distributed network $N$ is modeled by a set of $n$ processors, denoted by $u_1, \ldots, u_n$, and a set of {\em bandwidth} parameters between each pair of processors, denoted by $B_{u_iu_j}$ for any $i\neq j$, which is used to bound the size of messages sent from $u_i$ to $u_j$. Note that $B_{u_iu_j}$ could be zero or infinity. To simplify our formal definition, we let $B_{u_iu_i}=\infty$ for all $i$.

In the beginning of the computation, each processor $u_i$ receives an input string $x_{i}$, each of size $b$. The processors want to cooperatively compute a global function $f(x_{1}, \ldots, x_{n})$. They can do this by communicating in {\em rounds}. In each rounds, processor $u_i$ can send a message of $B_{u_iu_j}$ bits or qubits to processor $u_j$. (Note that $u_i$ can send different messages to $u_j$ and $u_k$ for any $j\neq k$.) We assume that each processor has unbounded computational power. Thus, between each round of communication, processors can perform any computation (even solving an {\sf NP}-complete problem!). The {\em time complexity} is the minimum number of rounds needed to compute the function $f$.
We can categorize this model further based on the type of communication (classical or quantum) and computation (deterministic or randomized).

In this paper, we are interested in quantum communication when errors are allowed and nodes share entangled qubits. In particular, for any $\epsilon>0$ and function $f$, we say that a quantum distributed algorithm $\cA$ is {\em $\epsilon$-error} if for any input $(x_1, \ldots, x_n)$, after $\cA$ is executed on this input any node $u_i$ knows the value of $f(x_1, \ldots, x_n)$ correctly with probability at least $1-\epsilon$. We let $Q^{*, N}_\epsilon(N)$ denote the time complexity (number of rounds) of computing function $f$ on network $N$ with $\epsilon$-error.

In the special case where $f$ is a boolean function, for any $\epsilon_0, \epsilon_1>0$ we say that $\cA$ computes $f$ with {\em $(\epsilon_0, \epsilon_1)$-error} if, after $\cA$ is executed on any input $(x_1, \ldots, x_n)$, any node $u_i$ knows the value of $f(x_1, \ldots, x_n)$ correctly with probability at least $1-\epsilon_0$ if $f(x_1, \ldots, x_n)=0$ and with probability at least $1-\epsilon_1$ otherwise. We let $Q^{*, N}_{\epsilon_0,\epsilon_1}(N)$ denote the time complexity of computing boolean function $f$ on network $N$ with $(\epsilon_0, \epsilon_1)$-error.

Two main models of interest are the the $B$-model (also known as $\mathcal{CONGEST}(B)$) and a new model we introduce in this paper called the {\em server model}. The $B$-model is modeled by an undirected $n$-node graph, where vertices model the processors and edges model the links between the processors. For any nodes (processors) $u_i$ and $u_j$, $B_{u_iu_j}=B_{u_ju_i}=B$ if there is an edge $u_iu_j$ in the graph and $B_{u_iu_j}=B_{u_ju_i}=0$ otherwise.
%
%
%Additionally, we introduce another model which will be used to link between the $B$-model and the two-party communication complexity model, called the {\em server model}.

In the server model, there are three processors, denoted by {\em Carol}, {\em David} and the {\em server}. In each round, Carol and David can send one bit to each other and to the server while receiving an arbitrarily large message from the server, i.e. $B_{Carol, David}=B_{David, Carol}=B_{Carol, Server}=B_{David, Server}=1$ and $B_{Server, Carol}=B_{Server, David}=\infty$.

We will also discuss the {\em two-party communication complexity model} which is simply the network of two processors called Alice and Bob with bandwidth parameters $B_{Alice,Bob}=B_{Bob,Alice}=1$. (Note that, this model is sometimes defined in such a way that only one of the processors can send a message in each round. The communication complexity in this setting might be different from ours, but only by a factor of two.)

When $N$ is the server or two-party communication complexity model, we use $Q^{*, \server}_{\epsilon}(f)$ and $Q^{*, cc}_\epsilon(f)$ instead of $Q^{*, N}_\epsilon(f)$.

%We now describe this model formally in terms of {\em circuit}. This view, while is not necessary in understanding the theorem statements, will be useful in understanding the proofs.

%%%%%%%%%%%%%%%%%%%%%%%%%%%%%%%%%%%%%
%%%%%%%%%%%%%%%%%%%%%%%%%%%%%%%%%%%%%
%%%%%%%%%%%%%%%%%%%%%%%%%%%%%%%%%%%%%
%%%%%%%%%%%%%%%%%%%%%%%%%%%%%%%%%%%%%

\subsubsection*{Formal definitions}

\paragraph{Network States} The {\em pure state} of a quantum network of $n$ nodes with parameters $\{B_{u_iu_j}\}_{1\leq i, j\leq n}$ is represented as a vector in a Hilbert space
$$\bigotimes_{1\leq i,j\leq n} H_{u_iu_j} = H_{u_1u_1}\otimes H_{u_1u_2}\otimes \ldots \otimes H_{u_1u_n}\otimes H_{u_2u_1}\otimes \ldots \otimes H_{u_2u_n}\otimes \ldots \otimes H_{u_nu_n}$$
where $\otimes$ is the tensor product. Here,
$H_{u_iu_i}$, for any $i$, is a Hilbert space of arbitrary finite dimension representing the ``workspace'' of processor $u_i$. In particular, we let $K$ be an arbitrarily large number (thus the complexity of the problem cannot depend on $K$) and $H_{u_iu_i}$ be a $2^K$-dimensional Hilbert space.
Additionally, $H_{u_iu_j}$, for any $i\neq j$, is a Hilbert space representing the $B_{u_iu_j}$-qubit communication channel from $u_i$ to $u_j$. Its dimension is $2^{B_{u_iu_j}}$ if $B_{u_iu_j}$ is finite and $2^K$ if $B_{u_iu_j}=\infty$.

%We assume that each $H_{u_i}$ contains registers that hold its input and output. Thus we have $H_{u_i}=H_{I_i}\otimes H_{O_i}\otimes H_{W_i}$ where $H_{I_i}$ and $H_{O_i}$ are registers that hold the input and output, respectively, and $H_{W_i}$ represents the workspace that is used arbitrarily.

The {\em mixed state} of a quantum network $N$ is a probabilistic distribution over its pure states
\begin{align*}
\{(p_i, \ket{\psi_i})\}~~\mbox{with}~~ p_i\geq 0 ~~\mbox{and} \sum_i p_i=1\,.
\end{align*}
We note that it is sometimes convenient to represent a mixed state by a {\em density matrix} $\rho=\sum_i p_i\ket{\psi_i}\bra{\psi_i}$.

\paragraph{Initial state} %The initial state $\rho^0$ can be defined as $\ket{\psi^0_{x_1, \ldots, x_n}}\bra{\psi^0_{x_1, \ldots, x_n}}$ where $\psi^0_{x_1, \ldots, x_n}$ are defined differently when we allow and do not allow prior entanglement.
In the model {\em without} prior entanglement, the initial (pure) state of a quantum protocol on input $(x_1, \ldots, x_n)$ is the vector
\begin{align*}
\ket{\psi^0_{x_1, \ldots, x_n}}=\bigotimes_{1\leq i,j\leq n}\ket{\psi^0_{x_1, \ldots, x_n}(i,j)}
=\ket{\psi^0_{x_1, \ldots, x_n}(1,1)}\ket{\psi^0_{x_1, \ldots, x_n}(1,2)}\ldots  \ket{\psi^0_{x_1, \ldots, x_n}(n,n)}
\end{align*}
where $\ket{\psi^0_{x_1, \ldots, x_n}(i,j)}$ for any $1\leq i, j\leq n$ is a vector in $H_{u_iu_j}$ such that $\ket{\psi^0_{x_1, \ldots, x_n}(i,i)}=\ket{x_i, 0}$ for any $i$ and $\ket{\psi^0_{x_1, \ldots, x_n}(i,j)}=\ket{0}$ for any $i\neq j$ (here, $\ket{0}$ represents an arbitrary unit vector independent of the input). Informally, this corresponds to the case where each processor $u_i$ receives an input $x_i$ and workspaces and communication channel are initially ``clear''.

With prior entanglement, the initial (pure) state is a unit vector of the form
\begin{align}
\ket{\psi^0_{x_1, \ldots, x_n}}&=\sum_w \left(\alpha_w \bigotimes_{1\leq i,j\leq n}\ket{\psi^0_{w, x_1, \ldots, x_n}(i,j)}\right)\label{eqinitial state}%\\
%&=\sum_w \alpha_w\ket{\psi^0_{w, x_1, \ldots, x_n}(1,1)}\ket{\psi^0_{w, x_1, \ldots, x_n}(1,2)}\ldots  \ket{\psi^0_{w, x_1, \ldots, x_n}(n,n)}
\end{align}
where $\ket{\psi^0_{w, x_1, \ldots, x_n}(i,j)}$ for any $1\leq i, j\leq n$ is a vector in $H_{u_iu_j}$ such that $\ket{\psi^0_{w, x_1, \ldots, x_n}(i,i)}=\ket{x_i, w}$ for any $i$ and $\ket{\psi^0_{w, x_1, \ldots, x_n}(i,j)}=\ket{0}$ for any $i\neq j$. Here, the coefficients $\alpha_w$ are arbitrary real numbers satisfying $\sum_w \alpha_w^2=1$ that is independent of the input $(x_1, \ldots, x_n)$. Informally, this corresponds to the case where processors share entangled qubits in their workspaces.

%{\bf Question:} Do we have to allow the initial state to be a mixed state? How general should we allow the initial state to be? Can it be any mixed state?

Note that we can assume the global state of the network to be always a pure state, since any mixed state can be purified by adding qubits to the processor's workspaces, and ignoring these in later computations.

%\danupon{I added this paragraph since I think the lower bound holds in this case as well (because we %allow this for the XOR game).} In this paper, we consider the most general initial state where the %network start with any mixed initial state $\{p_{0, k}, \psi^{T, k}_{x_1, \ldots, x_n}\}$ and the %algorithm starts when each processor $u_i$ receive an input $x_i$.

\paragraph{Communication Protocol} The communication protocol consists of rounds of {\em internal computation} and {\em communication}. In each internal computation of the $t^{th}$ round, each processor $u_i$ applies a {\em unitary transformation} to its incoming communication channels and its own memory, i.e. $H_{u_ju_i}$ for all $j$. That is, it applies a unitary transformation of the form
\begin{align}
C_{t, u_i}\otimes \left(\bigotimes_{1\leq j\leq n, k\neq i} I_{u_ju_k}\right)\label{eq:transformation}
\end{align}
which acts as an identity on $H_{u_ju_k}$ for all $1\leq j\leq n$ and $k\neq i$. At the end of the internal computation, we require the communication channel to be clear, i.e. if we would measure any communication channel in the computational basis then we would get $\ket{0}$ with probability one. This can easily be achieved by swapping some fresh qubits from the private workspace into the communication channel.
%
%the network state must be in the form
%%
%\begin{align*}
%\{(p'_{t,k}, \ket{\phi_{x_1, \ldots, x_n}^{t,k}})\}_k ~~~\mbox{where}~~~
%\ket{\phi^{t,k}_{x_1, \ldots, x_n}}&=
%\sum_w \left(\alpha_w^{t, k} \bigotimes_{0\leq i,j\leq n}\ket{\phi^{t,k}_{w, x_1, \ldots, x_n}(i,j)}\right)
%\end{align*}
%where $\ket{\phi^{t, k}_{w, x_1, \ldots, x_n}(i,j)}$ for any $0\leq i, j\leq n$ is a computational basis state such that $\ket{\phi^{t, k}_{w, x_1, \ldots, x_n}(i,j)}=\ket{0}$ if $i\neq j$.
%
Note that the processors can apply the transformations corresponding to an internal computation simultaneously since they act on different parts of the network's state. %In other words, the internal computation at time $t$ is a unitary transformation $C_{t, u_1}\otimes \ldots \otimes C_{t, u_n}.$

To define communication, let us divide the workspace $H_{u_iu_i}$ of processor $u_i$ further to
$$H_{u_iu_i}=H_{u_iu_i, 1}\otimes H_{u_iu_i, 2}\otimes \ldots \otimes H_{u_iu_i, n}$$
where $H_{u_iu_i, j}$ has the same dimension as $H_{u_iu_j}$. The space $H_{u_iu_i, j}$ can be thought of as a place where $u_i$ prepares the messages it wants to send to $u_j$ in each round, while $H_{u_iu_i, i}$ holds $u_i$'s remaining workspace. Now, for any $j\neq i$, $u_i$ sends a message to $u_j$ simply by swapping the qubits in $H_{u_iu_i, j}$ with those in $H_{u_iu_j}$. Note that $u_i$ does not receive any information in this process since the communication channel $H_{u_iu_j}$ is clear after the internal computation. Also note that we can perform the swapping operations between any pair $i\neq j$ simultaneously since they act on different part of the network state. This completes one round of communication.\danupon{To Hartmut: I changed the network state to a pure state.} We let
\begin{align}\label{eq:network state at time t}
\ket{\psi^{t}_{x_1, \ldots, x_n}}
\end{align}
denote the network state after $t$ rounds of communication.

%Previous: Mixed state
%We let $\{(p_{t, k}, \psi^{t, k}_{x_1, \ldots, x_n})\}_k$ denote the network state after $t$ rounds of communication.
%

At the end of a $T$-round protocol, we compute the {\em output of processor $u_i$} as follows. We view part of $H_{u_iu_i}$ as an output space of $u_i$, i.e. $H_{u_iu_i}=H_{O_i}\otimes H_{W_i}$ for some $H_{O_i}$ and $H_{W_i}$. We compute the output of $u_i$ by measuring $H_{O_i}$ in the computational basis.\danupon{QUESTION: Do we need other types of measurement? Do we need to gives details here?} That is, if we let $K'$ be the number of qubits in $H_{O_i}$ and the network state after a $T$-round protocol be
$\psi^{T}_{x_1, \ldots, x_n}$
%
%$\{(p_{T, k}, \psi^{T, k}_{x_1, \ldots, x_n})\}_k$
%
then, for any $w\in \{0, 1\}^{K'}$,
\begin{align*}
%Pr[\mbox{Processor $u_i$ outputs $w$}]= \sum_k p_{T,k} |\braket{\psi^{T,k}_{x_1, \ldots, x_n}|w}|^2.
Pr[\mbox{Processor $u_i$ outputs $w$}]= |\braket{\psi^{T}_{x_1, \ldots, x_n}|w}|^2.
\end{align*}
Fig.~\ref{fig:general model circuit} depicts a quantum circuit corresponding to a communication protocol on three processors.
\begin{figure}
  % Requires \usepackage{graphicx}
  \center
  \includegraphics[width=0.9\linewidth]{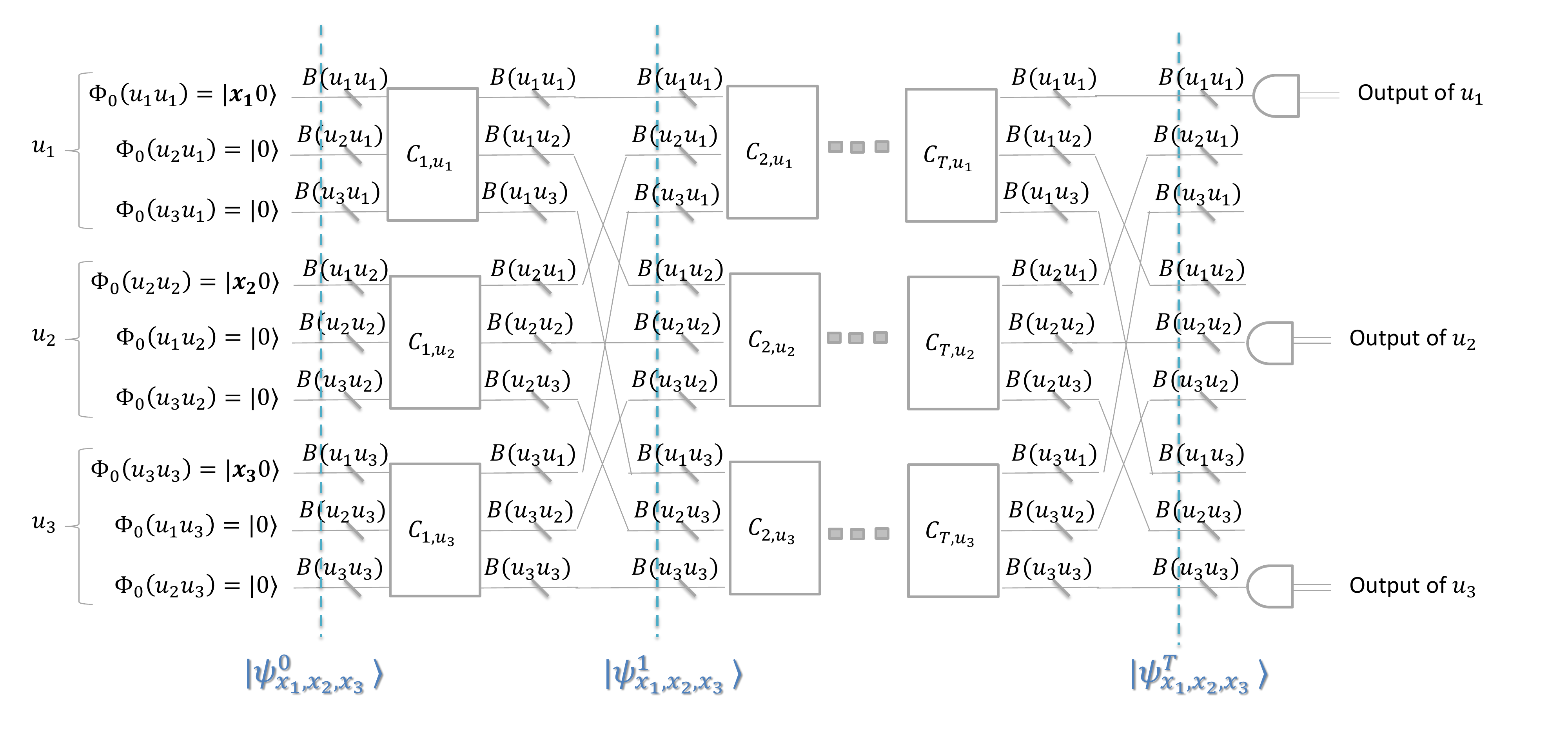}\\
  \caption{A circuit corresponding to $T$ rounds of communication on general distributed network having $3$ processors. The information flows from left to right and the line crossing each wire with a number $B_{u_iu_j}$ means that there are $B_{u_iu_j}$ qubits of information flowing through such wire. We note that the initial state in the picture is without entanglement.}\label{fig:general model circuit}
\end{figure}

\paragraph{Error and Time Complexity} For any $0\leq \epsilon \leq 1$, we say that a quantum protocol $\cA$ on network $N$ computes function $f$ with {\em $\epsilon$-error} if for any input $(x_1, \ldots, x_n)$ of $f$ and any processor $u_i$, $u_i$ outputs $f(x_1, \ldots, x_n)$ with probability at least $1-\epsilon$ after $\cA$ is executed. The {\em $\epsilon$-error time complexity} of computing function $f$ on network $N$, denoted by $Q^{*, N}_\epsilon(f)$, is the minimum $T$ such that there exists a $T$-round quantum protocol on network $N$ that computes function $f$ with {\em $\epsilon$-error}. We note that we allow the protocol to start with an entangled state. The $*$ in the notation follows the convention to contrast with the case that we do not allow prior entanglement (which is not considered in this paper).
When $N$ is the server model and two-party communication complexity model mentioned earlier, we use $Q^{*, \server}_\epsilon(f)$ and $Q^{*, cc}_\epsilon(f)$ respectively to denote the $\epsilon$-error time complexity.

If $f$ is a boolean function, we will sometimes distinguish between the error of outputting $0$ and $1$. For any $0\leq \epsilon_0, \epsilon_1\leq 1$ we say that $\cA$ computes $f$ with {\em $(\epsilon_0, \epsilon_1)$-error} if for any input $(x_1, \ldots, x_n)$ of $f$ and any processor $u_i$, if $f(x_1, \ldots, x_n)=0$ then $u_i$ outputs $0$ with probability at least $1-\epsilon_0$ and otherwise $u_i$ outputs $1$ with probability at least $1-\epsilon_1$.
The time complexity, denoted by $Q^{*, N}_{\epsilon_0, \epsilon_1}(f)$ is defined in the same way as before. We will also use $Q^{*, \server}_{\epsilon_0, \epsilon_1}(f)$ and $Q^{*, cc}_{\epsilon_0, \epsilon_1}(f)$.

%\subsection{Graph Verification Problems on the Server and XOR Game Models}
%To do: \ham, \st, \conn, \disj, \ipmodthree and \gap-XXX (especially \gap-disj)
%
%State some useful known results of these problems

\subsection{Distributed Graph Verification Problems} \label{subsec:distributed verification problem definition}\label{subsec:verification_network def}
In the distributed network $N$, we describe its subgraph $M$ as an input as follows. Each node $u_i$ in $N$ receives an $n$-bit binary string $x_{u_i}$ as an input. We let $x_{u_i, u_1}, \dots, x_{u_i, u_n}$ be the bits of $x_{u_i}$. Each bit $x_{u_i, u_j}$ indicates whether edge $u_iv_j$ participates in the subgraph $M$ or not.
The indicator variables must be consistent, i.e., for every edge $u_iu_j\in E(N)$, $x_{u_i,u_j}=x_{u_ju_i}$ (this is easy to verify with a single round of communication) and if there is no edge between $u_i$ and $u_j$ in $N$ then $x_{u_i,u_j}=x_{u_ju_i}=0$.

We define $M_{x_{u_1}, \ldots, x_{u_n}}$, or simply $M$, to be subgraph of $N$ having edges whose indicator variables are $1$; that is,
$$E(M)=\{(u_i, u_j)\in E \mid \forall i\neq j,\ x_{u_i,u_j}=x_{u_ju_i}=1\}.$$

 We list the following problems concerning the verification of properties of subnetwork $M$ on distributed network $N$ from \cite{DasSarmaHKKNPPW11}.

\begin{itemize}
\item
{\bf connected spanning subgraph verification:} We want to verify whether $M$ is connected and spans all nodes of $N$, i.e., every node in $N$ is incident to some edge in $M$.

\item
{\bf cycle containment verification:} We want to verify if $M$ contains a cycle. %(Section~\ref{sec:reduction from disj}.)

\item
{\bf $e$-cycle containment verification:} Given an edge $e$ in $M$ (known to vertices adjacent to it), we want to verify if $M$ contains a cycle containing $e$. %(Section~\ref{sec:reduction from disj}.)

\item
{\bf bipartiteness verification}: We want to verify whether $M$ is bipartite. %(Section~\ref{sec:reduction from disj}.)

\item
{\bf $s$-$t$ connectivity verification}: In addition to $N$ and $M$, we are given two vertices $s$ and $t$ ($s$ and $t$ are known by every vertex). We would like to verify whether $s$ and $t$ are in the same connected component of $M$. %(Section~\ref{sec:st_lowerbound}.)

\item
{\bf connectivity verification}: We want to verify whether $M$ is connected.
%We also consider the {\bf $k$-component verification problem} where we want to verify whether $M$ has at most $k$ connected components. (Note that $k$ is not part of the input so $2$-component and $3$-component problems are different problems.) The connectivity verification problem is the special case where $k=1$. %(Section \ref{sec:rand reduction from other problems})

%%\item{\bf $k$-component verification problem}: We want to verify whether $M$ has at most $k$ connected component. (Note that $k$ is not part of the input so $2$-component and $3$-component problems are different problems.) \danupon{Connectivity is the special case where $k=1$.}

\item
{\bf cut verification:} We want to verify whether $M$ is a cut of $N$, i.e., $N$ is not connected when we remove edges in $M$. %(Section \ref{sec:rand reduction from other problems})

\item
{\bf edge on all paths verification:} Given two nodes $u$, $v$ and an edge $e$. We want to verify whether $e$ lies on all paths between $u$ and $v$ in $M$. In other words, $e$ is a $u$-$v$ cut in $M$. %(Section \ref{sec:rand reduction from other problems})

\item
{\bf $s$-$t$ cut verification}: We want to verify whether $M$ is an $s$-$t$ cut, i.e., when we remove all edges $E(M)$ of $M$ from $N$, we want to know whether $s$ and $t$ are in the same connected component or not. %(Section \ref{sec:rand reduction from other problems})

\item
{\bf least-element list verification~\cite{cohen,KhanKMPT08}:} The input of this problem is different from other problems and is as follows. Given a distinct rank (integer) $r(v)$ to each node $v$ in the weighted graph $N$, for any nodes $u$ and $v$, we say that $v$ is the {\em least element} of $u$ if $v$ has the lowest rank among vertices of distance at most $d(u, v)$ from $u$. Here, $d(u, v)$ denotes the weighted distance between $u$ and $v$. The {\em Least-Element List} (LE-list) of a node $u$ is the set $\{\langle v, d(u, v)\rangle \mid \mbox{$v$ is the least element of u}\}$. %(Section \ref{sec:rand reduction from other problems})

In the least-element list verification problem, each vertex knows its rank as an input, and some vertex $u$ is given a set $S=\{\langle v_1, d(u, v_1)\rangle, \langle v_2, d(u, v_2)\rangle, \ldots \}$ as an input. We want to verify whether $S$ is the least-element list of $u$. %(Section \ref{sec:rand reduction from other problems})

%\end{itemize}
%
%We now define problems that will be shown in Section~\ref{sec:dete} to have deterministic randomized lower bounds (see Figure~\ref{fig:all_reductions}).
%
%
%\begin{itemize}

\item
{\bf Hamiltonian cycle verification:} We would like to verify whether $M$ is a Hamiltonian cycle of $N$, i.e., $M$ is a simple cycle of length $n$.

\item
{\bf spanning tree verification:} We would like to verify whether $M$ is a tree spanning $N$.
%We note that this problem is a special case of the minimum spanning tree verification problem~\cite{KorKP10} and thus our lower bound for the spanning tree verification problem implies a stronger lower bound for the minimum spanning tree verification problem.

\item
{\bf simple path verification:} We would like to verify that $M$ is a simple path, i.e., all nodes have degree either zero or two in $M$ except two nodes that have degree one and there is no cycle in $M$.
\end{itemize}

\subsection{Distributed Graph Optimization Problems} \label{subsec:distributed optimization problem definition}

In the graph optimization problems $\cP$ on distributed networks, such as finding MST, we are given a positive weight $\omega(e)$ on each edge $e$ of the network (each node knows the weights of all edges incident to it). Each pair of network and weight function $(N, \omega)$ comes with a nonempty set of {\em feasible solution} for problem $\cP$; e.g., for the case of finding MST, all spanning trees of $N$ are feasible solutions. The goal of $\cP$ is to find a feasible solution that minimizes or maximize the total weight. We call such solution an {\em optimal solution}. For example, the spanning tree of minimum weight is the optimal solution for the MST problem. We let $W=\max_{e\in E(N)} \omega(e)/\min_{e\in E(N)}\omega(e)$.

For any $\alpha\geq 1$, an {\em $\alpha$-approximate solution} of $\cP$ on weighted network $(N, \omega)$ is a feasible solution whose weight is not more than $\alpha$ (respectively,  $1/\alpha$) times of the weight of the optimal solution of $\cP$ if $\cP$ is a minimization (respectively, maximization) problem. We say that an algorithm $\cA$ is an $\alpha$-approximation algorithm for problem $\cP$ if it outputs an $\alpha$-approximate solution for any weighted network $(N, \omega)$.
In case we allow errors, we say that an $\alpha$-approximation $T$-time algorithm is $\epsilon$-error if it outputs an answer that is not $\alpha$-approximate with probability at most $\epsilon$ and always finishes in time $T$, regardless of the input.

Note the following optimization problems on distributed network $N$ from \cite{DasSarmaHKKNPPW11}.

\begin{itemize}
\item In the {\bf minimum spanning tree} problem~\cite{Elkin06,PelegR00}, we want to compute the weight of the minimum spanning tree (i.e., the spanning tree of minimum weight). In the end of the process all nodes should know this weight.

\item Consider a network with two cost functions associated to edges, weight and length, and a root node $r$. For any spanning tree $T$, the radius of $T$ is the maximum length (defined by the length function) between $r$ and any leaf node of $T$. Given a root node $r$ and the desired radius $\ell$, a {\bf shallow-light tree}~\cite{peleg} is the spanning tree whose radius is at most $\ell$ and the total weight is minimized (among trees of the desired radius).

\item Given a node $s$, the {\bf $s$-source distance} problem~\cite{Elkin05} is to find the distance from $s$ to every node. In the end of the process, every node knows its distance from $s$.

\item In the {\bf shortest path tree} problem~\cite{Elkin06}, we want to find the shortest path spanning tree rooted at some input node $s$, i.e., the shortest path from $s$ to any node $t$ must have the same weight as the unique path from $s$ to $t$ in the solution tree. In the end of the process, each node should know which edges incident to it are in the shortest path tree.

\item The {\bf minimum routing cost spanning tree} problem (see e.g., ~\cite{KhanKMPT08}) is defined as follows. We think of the weight of an edge as the cost of routing messages through this edge. The routing cost between any node $u$ and $v$ in a given spanning tree $T$, denoted by $c_T(u, v)$, is the distance between them in $T$. The routing cost of the tree $T$ itself is the sum over all pairs of vertices of the routing cost for the pair in the tree, i.e., $\sum_{u, v\in V(N)} c_T(u, v)$. Our goal is to find a spanning tree with minimum routing cost.

\item A set of edges $E'$ is a {\bf cut} of $N$ if $N$ is not connected when we delete $E'$. The {\bf minimum cut} problem~\cite{Elkin-sigact04} is to find a cut of minimum weight. A set of edges $E'$ is an {\em $s$-$t$ cut} if there is no path between $s$ and $t$ when we delete $E'$ from $N$.
    The {\bf minimum $s$-$t$ cut} problem is to find an $s$-$t$ cut of minimum weight.
    %In the end of the process, nodes must know the weight of the minimum cut and minimum $s$-$t$ cut.
    %

\item Given two nodes $s$ and $t$, the {\bf shortest $s$-$t$ path} problem is to find the length of the shortest path between $s$ and $t$.

\item The {\bf generalized Steiner forest} problem~\cite{KhanKMPT08} is defined as follows. We are given $k$ disjoint subsets of vertices $V_1, ..., V_k$ (each node knows which subset it is in). The goal is to find a minimum weight subgraph in which each pair of vertices belonging to the same subsets is connected. In the end of the process, each node knows which edges incident to it are in the solution.

%\item Given a distinct rank (integer) $r(v)$ to each node $v$ in the weighted graph $N$, for any nodes $u$ and $v$, we say that $v$ is the {\em least element} of $u$ if $v$ has the lowest rank among vertices of distance at most $d(u, v)$ from $u$. Here, $d(u, v)$ denotes the weighted distance between $u$ and $v$. The {\em Least-Element List} (LE-list) of a node $u$ is the set $\{<v, d(u, v)>\ |\ $ $v$ is the least element of $u$ $\}$.\danupon{We have to think where to fit this problem in!!!}
\end{itemize}

\section{Detail of Section~\ref{sec:basic lower bounds server model}}\label{sec:ipmodthree full}

\subsection{Two-player XOR Games}
\danupon{TO DO: Define AND Games as well!}
We give a brief description of XOR games. AND game can be described similarly (their formal description is not needed in this paper). For a more detailed description as well as the more general case of nonlocal games   see, e.g., \cite{LeeS09,BrietThesis11} and references therein. An XOR game is played by three parties, Alice, Bob and a referee. The game is defined by $\cX$ and $\cY$ which is the set of input to Alice and Bob, respectively, $\pi$, a joint probability distribution $\pi:\cX\times \cY\rightarrow [0, 1]$, and a boolean function $f:\cX\times \cY\rightarrow \{0, 1\}$.% (note that we use $\{-1, 1\}$ instead of $\{0, 1\}$ for notational convenience).

At the start of the game, the referee picks a pair $(x, y)\in \cX\times \cY$ according to the probability distribution $\pi$ and sends $x$ to Alice and $y$ to Bob. Alice and Bob then answer the referee with one-bit message $a$ and $b$. The players win the game if the value $a\oplus b$ is equal to $f(x, y)$. In other words, Alice and Bob want the XOR of their answers to agree with $f$, explaining the name ``XOR game.''

The goal of the players is to maximize the {\em bias} of the game, denoted by $\bias_\pi(f)$, which is the probability that Alice and Bob win minus the probability that they lose. In the classical setting, this is
\begin{align*}
\bias_\pi(f)&=\max_{\substack{a: \cX\rightarrow \{-1, 1\},\\ b: \cY\rightarrow \{-1, 1\}}} \sum_{(x,y)\in \cX\times \cY} (-1)^{f(x, y)}\pi(x, y)(-1)^{a(x)} (-1)^{b(y)}\\
%&=\max_{\substack{a \in \{-1, 1\}^{|\cX|},\\ b \in \{-1, 1\}^{|\cY|}}} a^T(A_f\circ A_\pi) b
&=\max_{\substack{a \in \{-1, 1\}^{|\cX|},\\ b \in \{-1, 1\}^{|\cY|}}} \mathbb{E}_{(x, y)\sim\pi} [ (-1)^{a(x)} (-1)^{b(y)}(-1)^{f(x, y)}]\,.
\end{align*}
%where $\circ$ is an entry-wise multiplication and $A_f$ and $A_\pi$ are such that $A_f[x, y]=f(x, y)$ and $A_\pi[x, y]=\pi(x, y)$.
%
In the quantum setting, Alice and Bob are allowed to play an {\em entangled strategy} where they may make use of an entangled state they share prior to receiving the input. That is, Alice and Bob start with some shared pure quantum state which is independent of the input and after they receive input $(x, y)$ they make some projective measurements depending on $(x, y)$ and return the result of their measurements to the referee.
%
%An XOR entangled strategy is described by a shared quantum state $\ket{\psi}\in \mathbb{C}^{d\times d}$ for some $d\geq 1$ and a choice of $\pm 1$ valued measurement observables $A_x$ for every $x\in \cX$ and similarly $B_y$ for every $y\in \cY$.\danupon{I don't really understand what observable is.}
%
Formally, an XOR entangled strategy is described by a shared (pure) quantum state $\ket{\psi}\in \mathbb{C}^{d\times d}$ for some $d\geq 1$ and a choice of projective measurements $\{A^{0}_x, A^1_x\}$ and  $\{B^{0}_y, B^1_y\}$ for all $x\in \cX$ and $y\in \cY$. When receiving input $x$ and $y$, the probability that Alice and Bob output $(a, b)\in \{0, 1\}^2$ is $\bra{\psi}A_x^a\otimes B_y^{b}\ket{\psi}$.
%\danupon{To do: understand this fully.}
%\begin{align*}
%Pr[a_x=b_y=1]=\bra{\psi}A_x^1\otimes B_y^1\ket{\psi} &&Pr[a_x=b_y=-1]=\bra{\psi}A_x^{-1}\otimes B_y^{-1}\ket{\psi}\\
%Pr[a_x=1, b_y=-1]=\bra{\psi}A_x^1\otimes B_y^{-1}\ket{\psi} &&Pr[a_x=-1, b_y=-1]=\bra{\psi}A_x^{-1}\otimes B_y^{1}\ket{\psi}
%\end{align*}
%
Thus, the maximum correlation can be shown to be (see \cite{BrietThesis11} for details)
\begin{align*}
\bias_\pi(f) = \max \mathbb{E}_{(x, y)\sim\pi} [\bra{\psi}(A^1_x-A^{0}_x)\otimes (B^1_y-B^{0}_y)\ket{\psi} (-1)^{f(x, y)}]
\end{align*}
where the maximization is over pure states $\ket{\psi}$ and projective measurements $\{A^{0}_x, A^1_x\}$ and  $\{B^{0}_y, B^1_y\}$ for all $x\in \cX$ and $y\in \cY$. In the rest of this paper, $\bias_\pi(f)$ always denotes the maximum correlation in the quantum setting.
%We let $Q^{*,XOR}_\pi(f)$ be the maximum bias of the XOR game on function $f$ and distribution $\pi$.
%
We let
$$Q^{*, XOR}(f)=\min_\pi \bias_\pi(f)\,.$$
We note that while the players could start the game with a mixed state, it can be shown that pure entangled states suffice in order to maximize the winning probability (see, e.g., \cite{BrietThesis11}).

%%%%%%%%%%%%%%%%%%%%%%%
% Main modified lemma %
%%%%%%%%%%%%%%%%%%%%%%%

\subsection{From Nonlocal Games to Server-Model Lower Bounds}

\begin{lemma}[Lemma~\ref{lemma:xor and simulate server} restated]\label{lemma:xor simulate server appendix}  For any boolean function $f$ and $\epsilon_0, \epsilon_1\geq 0$, there is an XOR-game strategy $\cA'$ and AND-game strategy $\cA''$ such that, for any input $(x, y)$,
\begin{itemize}
\item with probability $4^{-2Q^{*, \server}_{\epsilon_0, \epsilon_1}(f)}$, $\cA'$ and $\cA''$ are able to simulate a protocol in the server model and hence output $f(x, y)$ with probability at least $1-\epsilon_{f(x,y)}$;
\item otherwise $\cA'$ outputs $0$ and $1$ with probability $1/2$ each, and $\cA''$ outputs $0$ with probability $1$.
\end{itemize}
\end{lemma}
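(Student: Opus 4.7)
The plan is to start from an optimal $(\epsilon_0,\epsilon_1)$-error server-model protocol $\cA$ for $f$ with communication cost $T:=Q^{*,\server}_{\epsilon_0,\epsilon_1}(f)$ and dress it up as a strategy in an XOR game (and separately an AND game) played by Alice and Bob with no communication. First, I would use quantum teleportation to replace every qubit Carol or David sends to the server by two classical bits plus one consumed EPR pair; the EPR pairs are ``free'' because the server can dispense them as part of the prior entanglement. After this preprocessing, on input $(x,y)$ the transcript of $\cA$ contains exactly $2T$ classical bits $c_1\dots c_{2T}$ from Carol to the server and $2T$ classical bits $d_1\dots d_{2T}$ from David to the server; the server's messages in the reverse direction may remain quantum.

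Having made the uplink messages classical, the behaviour of the server is a well-defined function, for each $(\tilde a,\tilde b)\in\{0,1\}^{2T}\times\{0,1\}^{2T}$, of the hypothesis that it \emph{did} receive $\tilde a$ from Carol and $\tilde b$ from David. Running the server's circuit ahead of time under this hypothesis produces definite quantum messages $\sigma^{C}_{\tilde a,\tilde b}$ and $\sigma^{D}_{\tilde a,\tilde b}$ to be delivered to Carol and David, respectively. I would bundle everything into the input-independent entangled state
\[
\ket{\Psi}=\frac{1}{2^{2T}}\sum_{\tilde a,\tilde b} \ket{\tilde a,\sigma^{C}_{\tilde a,\tilde b}}_A\otimes\ket{\tilde b,\sigma^{D}_{\tilde a,\tilde b}}_B
\]
(together with the EPR pairs used for teleportation and any shared entanglement already required by $\cA$), hand it to Alice and Bob as prior entanglement, and have them measure their $\tilde a$ and $\tilde b$ registers in the computational basis to obtain uniformly random guesses $a_1\dots a_{2T}$ and $b_1\dots b_{2T}$. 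On input $(x,y)$, Alice now runs Carol locally, feeding her the pieces of $\sigma^{C}_{\tilde a,\tilde b}$ as the server's incoming messages; whenever Carol would send bit $c_t$ to the server, Alice checks that $c_t=a_t$ and \emph{aborts} otherwise. Bob does the symmetric thing. If neither aborts, Alice outputs Carol's final answer, while Bob outputs $0$ in the XOR strategy $\cA'$ and $1$ in the AND strategy $\cA''$, so that $a\oplus b$ and $a\wedge b$ both equal Carol's answer. On abort, in $\cA'$ Alice outputs a fresh uniform bit (making $a\oplus b$ uniform regardless of Bob's output), and in $\cA''$ Alice outputs $0$ (forcing $a\wedge b=0$).

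The analysis is then a short calculation. The ``no abort'' event is exactly $\{\tilde a=c_1\dots c_{2T}\}\cap\{\tilde b=d_1\dots d_{2T}\}$; conditioned on it, the joint state of Carol, David and the server is precisely the one produced by a faithful execution of $\cA$, so the output equals $f(x,y)$ with probability at least $1-\epsilon_{f(x,y)}$. Since $(\tilde a,\tilde b)$ is uniform in $\{0,1\}^{4T}$, this event has probability $2^{-4T}=4^{-2T}$, matching the bound in the lemma. The main obstacle to making this rigorous is the ``precomputed fake server'' step: one has to argue that because the fake server never actually receives anything from Alice or Bob in the nonlocal game, the fact that its response is a function of the pre-sampled guesses $(\tilde a,\tilde b)$ lets us fold its entire future trajectory into the shared entanglement. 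This is exactly where the asymmetry of the server model, free downlink but paid uplink, is essential; the same construction would fail if uplink messages also had to be simulated in a distributed way.
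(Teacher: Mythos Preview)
Your proposal is correct and follows essentially the same route as the paper: teleport the uplink to make it classical, pre-share uniformly random guesses $(\tilde a,\tilde b)$ for the $4T$ classical bits, have a ``fake server'' precompute all its downlink messages under the hypothesis that it received $(\tilde a,\tilde b)$, fold those precomputed messages into the prior entanglement, and abort (with the stated default outputs) whenever the actual teleportation outcomes disagree with the guesses. The paper presents this procedurally via an explicit fake-server party that is later absorbed into the entanglement, whereas you write down a single entangled state $\ket{\Psi}$ upfront; these are the same construction. One small caveat on your notation: the server's round-$t$ messages to Carol and to David can be entangled with each other and with the server's workspace, so $\sigma^C_{\tilde a,\tilde b}$ and $\sigma^D_{\tilde a,\tilde b}$ are not in general separate kets and your tensor-product expression for $\ket{\Psi}$ should be read as shorthand for the joint state produced by running the server's unitary on the guess registers in superposition.
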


%\begin{lemma}%[Lemma~\ref{lemma:xor simulate server} restated]
%\label{lemma:xor simulate server appendix} For any boolean function $f$ and $\epsilon\geq 0$, there is an XOR game strategy $\cA'$ such that, for any input $(x, y)$,
%\begin{itemize}
%\item with probability $4^{-2Q^{*, \server}_\epsilon(f)}$, $\cA'$ outputs $f(x, y)$ with probability at least $1-\epsilon$,
%\item otherwise $\cA'$ outputs $0$ and $1$ with probability $1/2$ each.
%\end{itemize}
%Similarly, there is an AND game strategy $\cA''$ such that, for any input $(x, y)$,
%\begin{itemize}
%\item with probability $4^{-2Q^{*, \server}_{0,\epsilon}(f)}$, $\cA'$ outputs $f(x, y)$ with probability at least $1-\epsilon$,
%\item otherwise $\cA'$ outputs $0$ with probability $1$.
%\end{itemize}
%\end{lemma}
%\danupon{TO DO: Adapt the proof for AND-Games}

\begin{proof} We have sketched the proof in Section~\ref{theorem:basic lower bounds server model}. We now provide more detail.

Let $c=Q^{*, \server}_{\epsilon_0,\epsilon_1}(f)$, i.e. Carol and David communicate with the server for $c$ rounds where each of them sends one qubit to the server per round while the server sends them messages of arbitrary size. While Alice and Bob cannot run a protocol $\cA$ in the server model since they cannot communicate to each other, we show that they can obtain the output of $\cA$ with probability $\frac{1}{4^{2c}}$. To be precise, for any input $(x, y)$ let $p_{x, y}$ and $q_{x, y}$ be the probability that $\cA(x, y)$ is zero and one respectively. We will show that
\begin{align}
\parbox{0.85\linewidth}{Alice and Bob can obtain the final state of $\cA$ with probability $4^{-2c}$ and in that case output the correct answer with high probability. If they do not obtain that state one of them will output a random bit for XOR games and one of them will output 0 for AND games.
}\label{eq:xor game two}
\end{align}

Hence the XOR game will accept with probability $\frac{1}{2}(1-4^{-2c})+4^{-2c}q_{x,y}=\frac{1}{2}+(q_{x,y}-\frac{1}{2})4^{-2c}$ and thus have a bias of
at least $4^{-2c}\cdot\min\{1/2-\epsilon_0, 1/2-\epsilon_1\}$.

The AND game will accept 1-inputs with probability at least
  $q'_{x,y}\geq \frac{q_{x, y}}{4^{2c}}$. Furthermore if $\cA$ never accepts a 0-input, then neither will the AND game.

Let us first prove Statement \eqref{eq:xor game two} with an additional assumption that there is a ``fake'' server that Alice and Bob can receive a message from but cannot talk to (we will eliminate this fake server later). We will call this a fake server to distinguish it from the ``real'' server in the server model.

First let us note the Carol and David need not talk to each other, but can send their messages to the server who can pass them to the other player.
Since the server can also set up entanglement between the three parties without cost, Carol, David and the server can use {\em teleportation} (see \cite{NielsenChuangBook} for details) and we can assume that in protocol $\cA$ Carol and David send $2$ classical bits per round to the server instead of one qubit.
These two bits are also uniformly distributed, regardless of the state of the qubit.

 Thus, for any input $(x, y)$, the messages sent by Carol and David in protocol $\cA$ will be $a, b\in \{0, 1\}^{2c}$ with some probability, say $p_{x, y, a, b}$. For simplicity, let us assume that each communication sequence $(a, b)$ leads to a unique output of $\cA$ on input $(x, y)$ (e.g., by requiring Carol and David to send their result to the server in the last round). Let $\cA(x, y, a, b)$ be the output of the protocol $\cA$ on input $(x, y)$ with communication sequence $(a, b)$. Then the probability that $\cA$ outputs zero and one is, respectively,
$$p_{x, y}=\sum_{(a, b):\ \cA(x, y, a, b)=0} p_{x, y, a , b} ~~~\mbox{and}~~~ q_{x, y}=\sum_{(a, b):\ \cA(x, y, a, b)=1} p_{x, y, a , b}\,.$$
The strategy of Alice and Bob who play the XOR and AND games is trying to ``guess'' this sequence.

In particular, Alice, Bob and the fake server will pretend to be Carol, David and the real server as follows. Before receiving the input, Alice, Bob and the fake server use their shared entanglement to create two shared random strings of length $2c$, denoted by $a'$ and $b'$, and start their initial entangled states with the same states of Carol, David and the server. In each round $t$ of $\cA$, Alice, Bob and the fake server will simulate Carol, David and the real server, respectively, as follows. Let $c_{t,1}$ and $c_{t,2}$ be two bits sent by Carol to the real server at round $t$. Alice will check whether the guessed communication sequence $a'$ is correct by checking if $c_{t,1}$ and $c_{t,1}$ are the same as $a'_{2t-1}$ and $a'_{2t}$ which are the $(2t-1)^{th}$ and $(2t)^{th}$ bits of $a'$. If they are not the same then she will `abort' which means that
\begin{itemize}
\item Alice will output $0$ and $1$ uniformly random if she is playing an XOR game, and
\item Alice will output $0$ if she is playing an AND game.
\end{itemize}

Similarly, Bob will check whether the guessed communication sequence $b'$ is correct by checking $b'_{2t-1}$ and $b'_{2t}$ with two classical bits sent by David to the server. Moreover, the fake server will pretend that it receives $a'_{2t-1}$, $a'_{2t}$, $b'_{2t-1}$ and $b'_{2t}$ to execute $\cA$ and send huge quantum messages to Alice and Bob. Alice and Bob then execute $\cA$ using these messages. After $2c$ rounds (if no player aborts), the players output the following.
\begin{itemize}
\item In XOR games, Alice will send Carol's output to the referee, and Bob will send $0$ to the referee.
\item In AND games, Alice will send Carol's output to the referee, and Bob will send $1$ to the referee.
\end{itemize}

Thus, if one or both players aborts then the output of an XOR game will be uniformly random in $\{0, 1\}$. For an AND game in case of a abort the players reject. Otherwise, the result of the XOR and AND games will be $\cA(x, y, a, b)$. The probability that Alice and Bob do not abort, given that the communication sequence of $\cA$ on input $(x, y)$ is $a$ and $b$ is $Pr[a'=a \wedge b'=b]=\frac{1}{4^{2c}}$.

%It follows that the probability that this XOR and game strategy outputs zero and one is, respectively,
%$$p'_{x, y}=\sum_{(a, b):\ \cA(x, y, a, b)=0} \frac{p_{x, y, a , b}}{4^{2c}} \geq \frac{p_{x, y}}{4^{2c}} ~~~\mbox{and}~~~ q'_{x, y}=\sum_{(a, b):\ \cA(x, y, a, b)=1}  \frac{p_{x, y, a , b}}{4^{2c}}\geq \frac{q_{x, y}}{4^{2c}}$$ as claimed.

This almost proves Statement \eqref{eq:xor game two} (thus the lemma) except that there is a fake server sending information to Alice and Bob in the XOR and AND game strategy. To remove the fake server, observe that we do not need an input in order to generate the messages the fake server sent to Alice and Bob. Thus, we change the strategy to the following. As previously done, before Alice, Bob and the fake server receive an input they generate shared random strings $(a', b')$ and start with the initial states of Carol, David and the real server. In addition to this, the fake server use the string $a'$ and $b'$ to generate the messages sent by the real server to Carol and David. It then sends this information to Alice and Bob. We now remove the fake server completely and mark this point as a starting point of the XOR and AND games. After Alice and Bob receive input $(x, y)$, they simulate protocol $\cA$ as before. In each round, when they are supposed to receive messages from the fake server, they read messages that the fake server sent before the game starts. Since the fake server sends the same messages, regardless of when it sends, the result is the same as before. Thus, we achieve Statement~\eqref{eq:xor game two} even when there is no fake server. This completes the proof of Lemma~\ref{lemma:xor simulate server appendix}.
\danupon{Do we need to be more formal?} %Formally, ... start with a mixed state which depends on the guess ...
\end{proof}

\subsection{Lower Bound for $\ipmodthree_n$}

%%%%%%%%%%%%%%%%%%%%%%%%%%%%%%%%%%%%%%%%%
% Linial Shraibman
%%%%%%%%%%%%%%%%%%%%%%%%%%%%%%%%%%%%%%%%%

Using the above lemma, we prove the following lemma which extends the theorem of Linial and Shraibman \cite{LinialS09} from the two-party model to the server model. Our proof makes use of XOR games as in \cite{LeeS09} (attributed to Buhrman). For any boolean function  $f: \cX\times \cY \rightarrow \{0, 1\}$, let $A_f$ be a $|\cX|$-by-$|\cY|$ matrix such that $A_f[x, y]=(-1)^{f(x, y)}$. Recall that for any matrix $A$, $\|A\|_1=\sum_{i,j}|A_{i,j}|$.

\begin{lemma}%[Lemma~\ref{lem:xor to server one} restated]
\label{lem:xor to server}\danupon{This is a bit weaker (we use $4\epsilon$ instead of $2\epsilon$.}
For boolean function $f$ and $0\leq \epsilon< 1/4$
$$4^{2Q^{*, \server}_\epsilon(f)}\geq \max_M \frac{\langle A_f, M\rangle-2\epsilon\|M\|_1}{\gamma_2^*(M)}= \gamma_2^{2\epsilon}(A_f)\,.$$
\end{lemma}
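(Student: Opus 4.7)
The plan is to mimic the classical Linial--Shraibman proof for two-party quantum communication complexity, but feed it the XOR-game strategy produced by Lemma~\ref{lemma:xor simulate server appendix} instead of one derived from a two-party protocol. The equality on the right of the claimed inequality is just the standard linear-programming duality for the approximate $\gamma_2$-norm, so the only thing that really needs proving is $\gamma_2^{2\epsilon}(A_f)\le 4^{2c}$, where $c=Q^{*,\server}_{\epsilon}(f)$; the $\max_M$ form then follows for free.

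First I would take any $(\epsilon,\epsilon)$-error server-model protocol $\cP$ computing $f$ in $c$ rounds, and apply Lemma~\ref{lemma:xor simulate server appendix} with $\epsilon_0=\epsilon_1=\epsilon$ to obtain a two-player entangled XOR strategy that, on input $(x,y)$, outputs the correct $f(x,y)$ with probability at least $1-\epsilon$ with probability $4^{-2c}$, and otherwise outputs a uniformly random bit. Let $S_{x,y}:=\mathbb E[(-1)^{a+b}]$ denote the resulting correlation matrix. A direct computation shows that the acceptance probability on input $(x,y)$ lies in $[\tfrac12+(\tfrac12-\epsilon)4^{-2c},\,\tfrac12+\tfrac12\cdot 4^{-2c}]$, and since this equals $(1+A_f[x,y]\,S_{x,y})/2$, we get the two-sided bound
\[
(1-2\epsilon)\,4^{-2c}\ \le\ A_f[x,y]\,S_{x,y}\ \le\ 4^{-2c}\qquad\text{for all } x,y.
\]

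Next I would rescale and invoke Tsirelson. Define $B:=4^{2c}\cdot S$. Then the above inequality yields $1-2\epsilon\le A_f[x,y]\,B_{x,y}\le 1$ for every $(x,y)$; multiplying through by $A_f[x,y]\in\{\pm1\}$ gives $|A_f[x,y]-B_{x,y}|=|1-A_f[x,y]B_{x,y}|\le 2\epsilon$, so $\|A_f-B\|_\infty\le 2\epsilon$. On the other hand, by Tsirelson's characterization of quantum XOR correlation matrices, $S$ is realizable as an inner-product matrix of unit vectors and therefore $\gamma_2(S)\le 1$; consequently $\gamma_2(B)\le 4^{2c}$. Putting these two facts together, $B$ witnesses $\gamma_2^{2\epsilon}(A_f)\le 4^{2c}$.

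The equality $\gamma_2^{2\epsilon}(A_f)=\max_M(\langle A_f,M\rangle-2\epsilon\|M\|_1)/\gamma_2^*(M)$ is the known Linial--Shraibman duality for the approximate $\gamma_2$-norm and can be quoted verbatim; combining it with the bound above completes the proof. The key conceptual step---and the only place the server model enters---is step one, where we use Lemma~\ref{lemma:xor simulate server appendix} to manufacture a bona fide entangled XOR strategy out of the three-party server-model protocol; once that is done, nothing in the remainder of the argument distinguishes the server model from the ordinary two-party setting, which is exactly why essentially every $\gamma_2$-based two-party lower bound carries over. The main technical obstacle is thus simply making sure that the upper bound $A_f[x,y]S_{x,y}\le 4^{-2c}$ is genuinely part of the conclusion of Lemma~\ref{lemma:xor simulate server appendix}, since the originally stated form only provides the one-sided bound; it follows from the same calculation as the lower bound (acceptance never exceeds the case where the guessing step succeeds and the protocol then outputs correctly with probability~$1$), but this needs to be noted explicitly.
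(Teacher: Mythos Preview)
Your proof is correct and reaches the same conclusion as the paper, but via the primal rather than the dual characterization of the approximate $\gamma_2$-norm.

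The paper argues on the dual side: it fixes an arbitrary boolean $g$ and distribution $\pi$, uses the XOR strategy from Lemma~\ref{lemma:xor simulate server appendix} to play the XOR game for $g$, and shows $\bias_\pi(g)\ge 4^{-2c}\bigl(\langle A_f,A_g\circ\pi\rangle-2\epsilon\bigr)$. Invoking Tsirelson in the form $\bias_\pi(g)=\gamma_2^*(A_g\circ\pi)$ and then maximizing over $(g,\pi)$ (equivalently over all $M$, by homogeneity) gives the stated $\max_M$ expression, which is then identified with $\gamma_2^{2\epsilon}(A_f)$ via the LP duality from \cite{LeeS09,LeeZ10}.

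You instead work on the primal side: you read off the full correlation matrix $S$ of the single XOR strategy, observe that Tsirelson (in the form ``quantum XOR correlation matrices have $\gamma_2\le1$'') gives $\gamma_2(4^{2c}S)\le 4^{2c}$, and check that $4^{2c}S$ is a valid $2\epsilon$-approximator to $A_f$, which directly witnesses $\gamma_2^{2\epsilon}(A_f)\le 4^{2c}$. Your approach is slightly more direct and avoids the detour through the dual; the paper's approach has the mild advantage that it never needs the upper bound $A_f[x,y]S_{x,y}\le 4^{-2c}$ (only the lower bound is used in bounding the bias), whereas you correctly note that you need this extra inequality and that it follows from the \emph{exact} success probability $4^{-2c}$ in Lemma~\ref{lemma:xor simulate server appendix}. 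Since the lemma's proof does establish that the good event occurs with probability exactly $4^{-2c}$ (the teleportation bits are uniform), your argument goes through.
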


\begin{proof}
We first prove the following claim.% which shows a connection between XOR games and the server model.

\begin{claim}%[Claim~\ref{claim:xor to server one} restated]
\label{claim:eliminate xor}
%For sets $\cX$ and $\cY$, functions $f, g: \cX\times \cY \rightarrow \{-1, 1\}$,
For any boolean functions $f, g$ on the same domain, probability distribution $\pi$ and $0\leq \epsilon\leq 1,$
\begin{align*}
\bias_\pi(g)\geq \frac{\langle A_f, A_g\circ \pi\rangle - 2\epsilon}{4^{2Q^{*, \server}_\epsilon(f)}}\,.
\end{align*}
%where the maximization is over all probability distribution $\pi$.
\end{claim}
\begin{proof}\danupon{This is still informal right now.}
%Fix any probability distribution $\pi$.
First, suppose that when receive input $(x, y)$, Alice and Bob can somehow compute $f(x, y)$ and use this as an answer to the XOR game (e.g., Alice and Bob returns $f(x,y)$ and $1$ to the referee respectively). What is the bias this strategy can achieve? Since the probability of winning is $\sum_{\substack{x, y:f(x, y)=g(x,y)}} \pi(x, y)$, the bias is straightforwardly
$$\sum_{\substack{x, y\\f(x, y)=g(x,y)}} \pi(x, y) - \sum_{\substack{(x, y)\\ f(x, y)\neq g(x,y)}} \pi(x, y) = \sum_{x,y} \pi(x, y) A_f[x,y]A_g[x,y] = \langle A_f, A_g\circ \pi\rangle$$

Let $\cA$ be an $\epsilon$-error protocol for computing $f$ in the server model and $\cA(x, y)$ be the output of $\cA$ (which could be randomized) on input $(x, y)$. Now suppose that Alice and Bob use $\cA(x, y)$ to play the XOR game. Then the winning probability will decrease by at most $\epsilon$. Thus the bias is at least
%be at least $(1-\epsilon)\sum_{\substack{x, y:f(x, y)=g(x,y)}} \pi(x, y)$. Thus the bias is at least
%$$\langle A_f, A_g\circ \pi\rangle-2\epsilon\,.$$
%
\begin{align}
\langle A_f, A_g\circ \pi\rangle-2\epsilon\,.\label{eq:xor game one}
\end{align}

Now suppose that Alice and Bob use protocol $\cA'$ from Lemma~\ref{lemma:xor simulate server appendix} with $\epsilon_0=\epsilon_1=\epsilon$ to play the XOR game. With probability $1-4^{-2Q^{*, \server}_\epsilon(f)}$,  $\cA'$ will output randomly; this means that the bias is 0. Otherwise, $\cA'$ will behave as an $\epsilon$-error algorithm. Thus, we conclude from Eq.\eqref{eq:xor game one} that the bias is at least
\begin{align*}
4^{-2Q^{*, \server}_\epsilon(f)}\left(\langle A_f, A_g\circ \pi\rangle-2\epsilon\right)\,.%\label{eq:xor game two}
\end{align*}
This completes the claim.
\end{proof}

Thus, for any $\pi$
$$4^{2Q^{*, \server}_\epsilon(f)}\geq \frac{ \langle A_f, A_g\circ \pi\rangle - 2\epsilon}{\bias_\pi(g)}\,.$$
Note that $\bias_\pi(g)=\gamma^*_2(A_g\circ \pi)$ \cite{Tsirelson87} (also see \cite[Theorem 5.2]{LeeS09}). So,
$$4^{2Q^{*, \server}_\epsilon(f)}\geq \frac{\langle A_f, A_g\circ \pi\rangle - 2\epsilon}{\gamma^*_2(A_g\circ \pi)}\,.$$
Since this is true for any $\pi$ and $g$,\danupon{To to: Check this carefully again}
$$4^{2Q^{*, \server}_\epsilon(f)}\geq \max_{\substack{\pi, g}}\frac{\langle A_f, A_g\circ \pi\rangle - 2\epsilon}{\gamma^*_2(A_g\circ \pi)}=\max_M \frac{\langle A_f, M\rangle - 2\epsilon\|M\|_1}{\gamma^*_2(M)}\,.$$
This proves the first inequality in Lemma~\ref{lem:xor to server}.

For the second inequality, we use Proposition 1 in \cite{LeeZ10} (proved in \cite{LeeS09})\danupon{Find where this is in \cite{LeeS09}} which states that for any norm $\Phi$, matrix $A$ and $0\leq \alpha<1$, the $\alpha$-approximate norm is
$$\Phi^\alpha(A)=\max_W \frac{|\langle A, W\rangle| -\alpha \|W\|_1}{\Phi^*(W)}\,.$$
This means that  $\gamma_2^{2\epsilon}(A_f)=\max_M \frac{|\langle A_f, M\rangle| -2\epsilon \|W\|_1}{\gamma^*_2(W)}$ as claimed.
%
%
%%%%%%%%%%%%%%%%%%%%%%%%%%
% DON'T DELETE BELOW: Below is another proof which is equivalent to Theorem 5.3 in \cite{LeeS09}
%%%%%%%%%%%%%%%%%%%%%%%%%%
%
\iffalse
\danupon{Below is another proof which is equivalent to Theorem 5.3 in \cite{LeeS09}.}
For the second inequality, we use Eq.(1.1) in \cite{LeeS09} (also see Theorem 6.3) which states that for any norm $\Phi$, matrix $A$ and $\alpha$, the $\alpha$-approximate norm is
$$\Phi^\alpha(A)=\max_W \frac{\frac{1+\alpha}{2} \langle A, W\rangle + \frac{1-\alpha}{2}\|W\|_1}{\Phi^*(W)}\,.$$
This means that $\gamma_2^\alpha(A)=\max_M \frac{(1+\alpha)\langle A, M\rangle + (1-\alpha)\|M\|_1}{2\gamma^*_2(M)}.$
%
So, if we use $\Phi=\gamma_2$ and $\alpha=\frac{1}{1-4\epsilon}$ then
\begin{align*}
(1-4\epsilon)\gamma_2^{\frac{1}{1-4\epsilon}}(A_f)=\max_M \frac{(1-2\epsilon) \langle A_f, M\rangle -2\epsilon\|M\|_1}{\gamma_2^*(M)}\leq \max_M \frac{\langle A_f, M\rangle -2\epsilon\|M\|_1}{\gamma_2^*(M)}
\end{align*}
%
This completes the proof of Lemma~\ref{lem:xor to server}
\fi
\end{proof}

%%%%%%%%%%%%%%%%%%%%%%%%%%%%%%%%%%%%%%%%%%%%%%%%%%%%%%%%%%%%%%%%%%%%%%%%%%%%%%%%%
%%%%%%%%%%%%%%%%%%%%%%%%%%%%%%%%%%%%%%%%%%%%%%%%%%%%%%%%%%%%%%%%%%%%%%%%%%%%%%%%%
% Lee Zhang, Sherstov
%%%%%%%%%%%%%%%%%%%%%%%%%%%%%%%%%%%%%%%%%%%%%%%%%%%%%%%%%%%%%%%%%%%%%%%%%%%%%%%%%
%%%%%%%%%%%%%%%%%%%%%%%%%%%%%%%%%%%%%%%%%%%%%%%%%%%%%%%%%%%%%%%%%%%%%%%%%%%%%%%%%

For finite sets $X$,$Y$, and $E$, a function $f:E^n\rightarrow \{0, 1\}$, and a function $g: X\times Y\rightarrow E$, the {\em block composition} of $f$ and $g$ is the function $f\circ g^n: X^n\times Y^n \rightarrow \{0, 1\}$ defined by $(f\circ g^n)(x, y)=f(g(x^1, y^1), \ldots, g(x^n, y^n))$ where $(x^i, y^i)\in X\times Y$ for all $i=1, \ldots, n$. For any boolean function $f:\{0, 1\}^n\rightarrow \{0, 1\}$, let $f'$ be such that, for all $x\in \{0, 1\}^n$, $f'(x)=-1$ if $f(x)=0$ and $f'(x)=1$ otherwise. The {\em $\epsilon$-approximate degree} of $f$, denoted by $deg_\epsilon(f)$ is the least degree of a real polynomial $p$ such that $|f'(x)-p(x)|\leq \epsilon$ for all $x\in \{0, 1\}^n$.
%
%For any finite sets $X, Y$, and a boolean function ,
%
We say that $g$ is {\em strongly balanced} if all rows and columns in the matrix $A_g$ sum to zero. For any $m$-by-$n$ matrix $A$, let $size(A)=m\times n$.
\danupon{Should we define these somewhere else?}
We now prove a ``server-model version'' of Lee and Zhang's theorem \cite[Theorem 8]{LeeZ10}. Our proof is essentially the same as their proof (also see \cite[Theorem 7.6]{LeeS09}).\danupon{To do: There is more detail in the proof of \cite[Theorem 7.6]{LeeS09}. We should add some details in there to make the proof below easier to follow.}
%
%For any matrix $A$, let $\|A\|$ denote the spectral norm of $A$ (the largest singular value of $A$). Let $\|A\|_\infty=\max_{i,j}|A_ij|$.
%
%Recall that for any $m$-by-$n$ matrix $A$, the spectral norm of $A$ is $\|A\|=\max_{x\in \reals^n, \|x\|=1} \|Ax\|$.
%
%\danupon{TO DO: Define block compose function. strongly balanced function.}

\begin{lemma}\label{lem:LeeZ}\danupon{This is slightly weaker than \cite{LeeZ10} as we have $\deg_{4\epsilon} ...$ instead of $\deg_{2\epsilon}$}
For any finite sets $X, Y$, let $g: X\times Y \rightarrow \{0, 1\}$ be any strongly balanced function. Let $f:\{0, 1\}^n\rightarrow \{0, 1\}$ be an arbitrary function. Then
$$Q^{*, \server}_\epsilon(f\circ g^n)\geq \deg_{4\epsilon}(f)\log_2 \left(\frac{\sqrt{\size{X}\size{Y}}}{\|A_g\|}\right)-O(1)$$
%$$\gamma_2^{2\epsilon}(A_{f\circ g^n})\geq \deg_{2\epsilon}(f)\log_2 \left(\frac{\sqrt{|X\|Y|}}{\|A_g\|}\right)-O(1)$$
for any $0<\epsilon<1/4$.% and $\alpha_\epsilon\geq 1/(1-4\epsilon)\,.$
\end{lemma}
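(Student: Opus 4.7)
My plan is to follow the proof of Lee--Zhang \cite{LeeZ10} essentially verbatim, replacing the single input step that invokes Linial--Shraibman for the two-party model by our server-model analogue, namely Lemma~\ref{lem:xor to server}. Concretely, starting from Lemma~\ref{lem:xor to server} applied to the block-composed function $F = f\circ g^n$, we have
\[
4^{2Q^{*,\server}_\epsilon(F)} \;\geq\; \gamma_2^{2\epsilon}(A_F),
\]
so taking logs it suffices to produce a lower bound of the form
\[
\log_2 \gamma_2^{2\epsilon}(A_{f\circ g^n}) \;\geq\; c\,\deg_{4\epsilon}(f)\,\log_2\!\left(\frac{\sqrt{|X|\,|Y|}}{\|A_g\|}\right) - O(1)
\]
for an appropriate constant $c$, and then the claimed bound follows after absorbing the $4^{2(\cdot)}$ into the constant and the additive $O(1)$.

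The key nontrivial ingredient, which I would import directly from the pattern-matrix method of Sherstov and its generalization by Lee and Zhang, is a lifting statement: from a dual polynomial witnessing $\deg_{4\epsilon}(f) \geq d$ one constructs a matrix $M$ such that
\[
\frac{\langle A_{f\circ g^n},\, M\rangle - 2\epsilon\,\|M\|_1}{\gamma_2^*(M)} \;\geq\; \left(\frac{\sqrt{|X|\,|Y|}}{\|A_g\|}\right)^{d}.
\]
Here the strong balancedness of $g$ is used exactly as in \cite{LeeZ10} to guarantee that tensoring $A_g$ with a degree-$d$ monomial produces a matrix whose dual $\gamma_2$ norm scales as $\|A_g\|^d$, while the $\ell_1$ norm and inner product against $A_F$ are controlled by the dual polynomial. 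Combining this with the duality characterization $\gamma_2^{2\epsilon}(A_F) = \max_M (\langle A_F,M\rangle - 2\epsilon\|M\|_1)/\gamma_2^*(M)$ (Proposition~1 of \cite{LeeZ10}, also used in the proof of Lemma~\ref{lem:xor to server}) yields the desired lower bound on $\gamma_2^{2\epsilon}(A_{f\circ g^n})$.

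The main obstacle, in my view, is purely bookkeeping rather than conceptual: I need to track the precise relationship between (i)~the approximation parameter $2\epsilon$ appearing inside $\gamma_2^{2\epsilon}$ coming out of Lemma~\ref{lem:xor to server}, and (ii)~the parameter $4\epsilon$ appearing inside $\deg_{4\epsilon}(f)$ in the conclusion. This factor-of-two loss is exactly the same slack that Lee and Zhang incur when passing from the approximate $\gamma_2$ norm of the composition matrix to the approximate degree of $f$ via the dual polynomial, and it is absorbed into the $-O(1)$ term together with the factor $\tfrac{1}{4}$ coming from $4^{2Q^{*,\server}_\epsilon(F)}$. Once these constants are lined up and one verifies that every step in the Lee--Zhang argument refers only to properties of the matrix $A_F$ and the gadget $g$ (and not to anything specific about two-party communication), the proof goes through unchanged. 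Since no part of the pattern-matrix lifting depends on the underlying communication model, the substitution of Lemma~\ref{lem:xor to server} for its two-party predecessor is the only change required.
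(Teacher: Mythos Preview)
The proposal is correct and takes essentially the same approach as the paper: both start from Lemma~\ref{lem:xor to server} to get $4^{2Q^{*,\server}_\epsilon(f\circ g^n)}\geq \gamma_2^{2\epsilon}(A_{f\circ g^n})$, then invoke the Lee--Zhang lifting argument (dual polynomial for $f$ yields a witness matrix whose spectral/$\gamma_2^*$ norm is controlled via the strong balancedness of $g$) to lower bound the approximate $\gamma_2$ norm by $(\sqrt{|X||Y|}/\|A_g\|)^{\deg_{4\epsilon}(f)}$. The only cosmetic difference is that the paper routes the bound through the approximate trace norm $\|A_{f\circ g^n}\|_{tr}^{2\epsilon}/\sqrt{\mathrm{size}(A_{f\circ g^n})}$ as an intermediate step before applying the dual-polynomial witness, whereas you phrase it directly in terms of $\gamma_2^*$; these are equivalent since the witness matrix and the spectral-norm calculation are identical.
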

\begin{proof}
\danupon{I put some simple facts here since I cannot find a place that collect them and I can point to easily.} We simply follow the proof of Lee and Zhang \cite{LeeZ10} and use Lemma~\ref{lem:xor to server} instead of Linial-Shraibman's theorem. First, we note the following inequality which follows from the definition of $\gamma_2$: For any $\delta\geq 0$ and $m$-by-$n$ matrix $A$,
$$\gamma_2^{\delta}(A)=\min_{B: \|B-A\|_\infty\leq \delta} \gamma_2(B)\geq \min_{B: \|B-A\|_\infty\leq \delta} \frac{\|B\|_{tr}}{\sqrt{size(B)}}=\frac{\|A\|_{tr}^\delta}{\sqrt{size(A)}}$$
where the first and last equalities are by definition of the approximate norm (see, e.g., \cite[Definition 4]{LeeZ10})\danupon{To do: Put this in prelim} and the inequality is by the definition of $\gamma_2$ norm (see, e.g., \cite[Definition 1]{LeeZ10}). Using $A=A_{f\circ g}$ which is an $|X|$-by-$|Y|$ matrix, we have
\begin{align}
\gamma_2^{\delta}(A_{f\circ g})\geq \frac{\|A_{f\circ g}\|_{tr}^\delta}{\sqrt{size(A_{f\circ g})}}\,.\label{eq:LeeZone}
\end{align}
%
%Let $d=\deg_{\alpha_\epsilon}(f)$. We will show that
The following claim is shown in the proof of Theorem 8 in \cite{LeeZ10}.
\begin{claim}[\cite{LeeZ10}]\danupon{Note that this is a bit weaker than \cite{LeeZ10}.}
\begin{align}
\frac{\|A_{f\circ g}\|_{tr}^{\delta}}{\sqrt{size(A_{f\circ g})}}\geq
%\frac{1}{12}
\delta \left(\frac{\sqrt{|X\|Y|}}{\|A_g\|}\right)^{\deg_{2\delta}(f)}\,.\label{eq:LeeZtwo}
\end{align}
\end{claim}
\begin{proof}
We note the following lemma (noted as Lemma 1 in \cite{LeeZ10}) which shows that there exists a {\em dual polynomial of $f$} which is a polynomial $v$ which certifies that the approximate polynomial degree of $f$ is at least a certain value.
\begin{lemma}[\cite{Sherstov11,ShiZ09}] For any $f:\{0, 1\}^n\rightarrow \{0, 1\}$, let $f'$ be such that $f'(z)=(-1)^{f(z)}$ and $d=\deg_\delta(f)$\danupon{Same as $\deg_\delta(f')$}. Then, there exists a function $v:\{0, 1\}^n\rightarrow \reals$ such that
\begin{enumerate}
\item $\langle v, \chi_T\rangle =0$ for every character $\chi_T$ with $|T|<d$.\danupon{Should define character?}
\item $\|v\|_1=1.$
\item $\langle v, f'\rangle \geq \delta.$
\end{enumerate}
\end{lemma}
Let $v$ be a dual polynomial of $f$ as in the above lemma. We will use $B=(\frac{2^n}{size(A_g)})A_{v\circ g}$ as a ``witness matrix'', i.e.,\danupon{Did we define $size(A)$}
\begin{align}
B[x, y]=\frac{2^n}{size(A_g)^n} v(g(x_1, y_1), \ldots, g(x^n, y^n)).
\end{align}
It follows that
%The followings follow from the fact that $A_g$ is strongly balanced (in fact, we only need that $A_g$ is balanced, i.e. the number of ones is the same as the number of minus ones).
%
\begin{align}
\langle A_{f\circ g}, B\rangle &= \frac{2^n}{size(A_g)^n}\langle M_{f\circ g}, A_{v\circ g}\rangle\\
&= \frac{2^n}{size(A_g)^n}\sum_{x, y} f(g(x^1, y^1), \ldots, g(x^n, y^n))v(g(x^1, y^1), \ldots, g(x^n, y^n)) \\
%&= \frac{2^n}{size(A_g)^n}\sum_z f(z)v(z) (\sum_{\substack{x, y:\\ g(x^i, y^i)=z_i\\ \forall 1\leq i\leq n}} 1) \\
&= \frac{2^n}{size(A_g)^n}\sum_{z\in \{0, 1\}^n} \left(f(z)v(z) \left(\sum_{\substack{x, y:\\ g(x^i, y^i)=z_i\\ \forall 1\leq i\leq n}} 1\right)\right)\\
&= \frac{2^n}{size(A_g)^n}\sum_{z\in \{0, 1\}^n} \left(f(z)v(z) \prod_{i=1}^n \left(\sum_{\substack{x^i, y^i:\\ g(x^i, y^i)=z_i}} 1\right)\right)\\
&= \frac{2^n}{size(A_g)^n}\sum_{z\in \{0, 1\}^n} \left(f(z)v(z) \left(\sum_{\substack{x', y':\\ g(x', y')=z_i}} 1\right)^n\right)\\
&= \sum_{z\in \{0, 1\}^n} f(z)v(z) \label{eq:need balance property}\\
&= \langle f, v\rangle \\
&\geq \delta\label{eq:dot product at least delta}
\end{align}
where Eq.\eqref{eq:need balance property} is because $g$ is strongly balanced which implies that $g$ is balanced, i.e. $g(x^i, y^i)$ is $0$ (and $1$) for half of its possible inputs (i.e. $size(A_g)/2$ entries of $A_g$ are $1$ (and $-1$)); thus, $$\sum_{\substack{x', y':\\ g(x', y')=z_i}} 1=size(A_g)/2.$$

A similar argument and the fact that $\|v\|_1=1$ can be used to show that \danupon{TO DO: Give details here.}
\begin{align}
\|B\|_1=1.\label{eq:L1 of B is one}
\end{align}
Now we turn to evaluate the spectral norm $\|B\|$. As shown in \cite{LeeZ10}\danupon{TO DO: WHERE? Theorem 7?}, the strongly balanced property of $g$ implies that the matrices $\chi_T\circ g^n$ and $\chi_S\circ g^n$ are orthogonal for distinct sets $S, T\subseteq \{0, 1\}^n$. Note the following fact (Fact 1 in \cite{LeeZ10}): For any matrices $A'$ and $B'$ of the same dimension, if $A'(B')^\dagger=(A')^\dagger B'=0$ then
%
%$rk(A'+B')=rk(A')+rk(B')$, $\|A'+B'\|_{tr}=\|A\|_{tr}+\|B\|_{tr}$, % These are other facts that we don't need
%
$\|A+B\|=\max\{\|A\|, \|B\|\}$. Using this fact, we have \danupon{TO DO: Elaborate the inequalities below}
\begin{align}
\|B\| &= \frac{2^n}{size(A_g)^n} \|\sum_{T\subseteq [n]} \hat{v}_T A_{\chi_T\circ g^n}\|\\
&= \frac{2^n}{size(A_g)^n} \max_{T} |\hat{v}_T| \| \hat{v}_T A_{\chi_T\circ g^n}\| &\mbox{(by the fact above)}\\
&= \max_T 2^n |\hat{v}_T| \prod_i \frac{\|A_G^{T[i]}\|}{size(A_g)}\\
&\leq \max_{T: \hat{v}^T\neq 0} \prod_i \frac{\|A_G^{T[i]}\|}{size(A_g)}\label{eq:vT}\\
&= \left(\frac{\|A_g\|}{\sqrt{size(A_g)}}\right)^d\left(\frac{1}{size(A_g)}\right)^{n/2} \label{eq:last step of B}
\end{align}
where Eq.\eqref{eq:vT} is because $|\hat{v}_T|\leq 1/2^n$ as $\|v\|_1=1$\danupon{To do: need details} and Eq.\eqref{eq:last step of B} is because $\|J\|=\sqrt{size(A_g)}$\danupon{To do: Need detail}.

We note that for any $0\leq \epsilon<1$, norm $\Phi:\reals^n\rightarrow \reals$ and vector $v\in \reals^n$, the approximate norm is $\Phi^\epsilon(v)=\max_u \frac{|\langle v, u\rangle| - \epsilon \|u\|_1}{\Phi^*(u)}$ (see, e.g., \cite{LeeS09} and \cite[Proposition 1]{LeeZ10}). Note also that if $\Phi$ is the trace norm then its dual $\Phi^*$ is the spectral norm (this is noted in \cite{LeeZ10}). Thus,
\begin{align}
\|A_{f\circ g^n}\|^{\delta/2}_{tr} &=\max_{B'} \frac{|\langle A_{f\circ g^n}, B'\rangle| - (\delta/2) \|B'\|_1}{\|B'\|}\\
%&\geq \frac{|\langle A_{f\circ g^n}, B\rangle| - \delta}{\|B\|}  &\mbox{(By Eq.\eqref{eq:dot product at least delta})}\\
&\geq \frac{|\langle A_{f\circ g^n}, B\rangle| - \delta/2}{\|B\|} & \mbox{(by Eq.\eqref{eq:L1 of B is one})}\\
&\geq \frac{\delta - \delta/2}{\|B\|} &\mbox{(by Eq.\eqref{eq:dot product at least delta})}\\
&\geq (\delta/2)\left(\frac{\sqrt{size(A_g)}}{\|A_g\|}\right)^d\left(size(A_g)\right)^{n/2} &\mbox{(by Eq.\eqref{eq:last step of B})}\\
&\geq (\delta/2)\left(\frac{\sqrt{size(A_g)}}{\|A_g\|}\right)^d\left(\sqrt{size(A_{f\circ g})}\right)
\end{align}
%
%The claim follows by putting everything together.\danupon{To do: Detail}
%
where the last inequality is because $size(A_{f\circ g})=size(A_g)^n$\danupon{TO DO: Check this!}. This completes the proof of the claim.
\end{proof}
The lemma follows immediately from Eq.\eqref{eq:LeeZone} and Eq.\eqref{eq:LeeZtwo} by plugging in Lemma~\ref{lem:xor to server}:
$$4^{2Q^{*, \server}_\epsilon(f\circ g^n)}\geq \gamma_2^{2\epsilon}(A_{f\circ g^n})\geq \frac{\|A_{f\circ g}\|_{tr}^{2\epsilon}}{\sqrt{size(A_{f\circ g})}}\geq (2\epsilon)\left(\frac{\sqrt{|X\|Y|}}{\|A_g\|}\right)^{\deg_{4\epsilon}(f)}\,.$$
%
%To prove Eq.\eqref{eq:sherstov one}, we define a witness matrix $B$ as in \cite{LeeZ10}:
%$$B[x, y] = \frac{2^n}{size(A_g)} v(g(x^1, y_1), \ldots, g(x^n, y^n))$$
%
Lemma~\ref{lem:LeeZ} follows (the term $2\epsilon$ will contribute to the term ``$-O(1)$''\danupon{Check this!}).
\end{proof}

%Using the theorem of Shaltiel~\cite{Shaltiel03} relating discrepancy to the spectral norm (see \cite[Theorem 2]{LeeZ10}) we get the following corollary similar to Corollary 3 in \cite{LeeZ10}.
%
%\begin{corollary}
%Let the quantities be defined as in Lemma~\ref{lem:LeeZ}.
%$$Q^{*, \server}_{1/8}\geq \frac{1}{3} \deg_{1/3}(f)(\log(\frac{1}{disc_U(M_g)})-7)-O(1)\,.$$
%\end{corollary}

%%%%%%%%%%%%%%%%%%%%%%%%%%%
% Concluding steps
%%%%%%%%%%%%%%%%%%%%%%%%%%%

Now, we prove the lower bound for $\ipmodthree_n$. Our proof essentially follows Sherstov's proof \cite{Sherstov11} (also see \cite[Section 7.2.3]{LeeS09}). We can assume w.l.o.g. that $n$ is divisible by $4$. Consider the {\em promise version} of $\ipmodthree_n$ where any $n$-bit string input $x\in \cX$ and $y\in \cY$ has the property that for any $0\leq i\leq (n/4)-1$,
\begin{align*}
&x_{4i+1}x_{4i+2}x_{4i+3}x_{4i+4}\in \{0011, 0101, 1100, 1010\} ~~~\mbox{and}\\
&y_{4i+1}y_{4i+2}y_{4i+3}y_{4i+4}\in \{0001, 0010, 1000, 0100\}\,.
\end{align*}
Now we show that the claimed lower bound holds even in this case. This lower bound clearly implies the lower bound for the more general case of $\ipmodthree_n$ where no restriction is put on the input.

Observe that, for any $(x, y)\in \cX\times \cY$, the function $\ipmodthree$ can be written as
$$f\circ g^{n/4}(x, y)=f(g(x_1\ldots x_4, y_1\ldots y_4), g(x_5\ldots x_8, y_5\ldots y_8), \ldots, g(x_{n-3}\ldots x_n, y_{n-3}\ldots y_n))$$
where
$$g(x_{4i+1}\ldots x_{4i+4}, y_{4i+1}\ldots y_{4i+4})=(x_{4i+1}\wedge y_{4i+1}) \vee(x_{4i+2}\wedge y_{4i+2})\vee(x_{4i+3}\wedge y_{4i+3})\vee(x_{4i+4}\wedge y_{4i+4})$$
for all $0\leq i\leq (n/4)-1$, and $f(z_1, \ldots, z_{n/4})=1$ if $z_1+\ldots +z_{n/4}$ can be divided by 3 and 0 otherwise.
Note that $\ipmodthree(x, y)=f\circ g^{n/4}(x, y)$ since the promise implies that $g(x_{4i+1}\ldots x_{4i+4}, y_{4i+1}\ldots y_{4i+4})=1$ if and only if $x_{4i+1}y_{4i+1}+\ldots+x_{4i+4}y_{4i+4}=1$.
The matrix $A_g$ is
\[
A_g = \bordermatrix{~ & 0001 & 0010 & 1000 & 0100 \cr
                  0011 & -1 & -1 & 1 & 1 \cr
                  0101 & -1 & 1 & 1 & -1 \cr
                  1100 & 1 & 1 &-1 &-1  \cr
                  1010 & 1 & -1 & -1 & 1 \cr}
\]
%\[A_g=
%\begin{bmatrix}
%-1 & 1\\
%1 & -1
%\end{bmatrix}
%\]
which is clearly strongly balanced. It can be checked that this matrix has spectral norm $\|A_g\|=2\sqrt{2}$ (see, e.g., \cite[Section 7.2.3]{LeeS09}).
%
%This matrix is a submatrix of a $8$-by-$8$ Hadamard (Sylvester) matrix $H_3$, and has spectral norm $\|A_g\|=2\sqrt{2}$ (see \cite[Section 7.2.3]{LeeS09}).
%
Moreover, by Paturi \cite{Paturi92} (see also \cite{deWolf10} and \cite[Theorem 2.6]{Sherstov11}), $\deg_{1/3}(f)=\Theta(n)$. Thus, Lemma~\ref{lem:LeeZ} implies that
\begin{align*}
Q^{*, \server}_{1/12}(f\circ g^n) &\geq \deg_{1/3}(f)\log_2 \left(\frac{\sqrt{4\times 4}}{\|A_g\|}\right)-O(1)\\
&= \deg_{1/3}(f)\log_2\sqrt{2}-O(1)\\
&= \Omega(n)\,.
\end{align*}

We note that the same technique can be used to prove many bounds in the server model similar to bounds in \cite{Razborov03,Sherstov11,LeeZ10}.

\section{Detail of Section~\ref{sec:server graph}} \label{sec:server graph Appendix}

First, let us recall that Alice and Bob construct a gadget $G_i$ using $x_i$ and $y_i$ as shown in Fig.~\ref{fig:Hamiltonian Gadget}. Fig.~\ref{fig:Hamiltonian Detail} shows how $G_i$ looks like for each possible value of $x_i$ and $y_i$. It follows immediately that $G_i$ always consist of three paths which connect $v_{i-1}^j$ to $v_{i}^{(j+x\cdot y) \mod 3}$, as in the following observation.

%\begin{figure}
%\subfigure[$x_iy_i=00$]{
%\includegraphics[clip=true, trim=1cm 5cm 5cm 6cm, width=0.45\linewidth]{ham00.pdf}
%\label{fig:ham00 Appendix}
%}
%\subfigure[$x_iy_i=01$]{
%\includegraphics[clip=true, trim=1cm 5cm 5cm 6cm, width=0.45\linewidth]{ham01.pdf}
%\label{fig:ham01 Appendix}
%}\\
%\subfigure[$x_iy_i=10$]{
%\includegraphics[clip=true, trim=1cm 5cm 5cm 6cm, width=0.45\linewidth]{ham10.pdf}
%\label{fig:ham10 Appendix}
%}
%\subfigure[$x_iy_i=11$]{
%\includegraphics[clip=true, trim=1cm 5cm 5cm 6cm, width=0.45\linewidth]{ham11.pdf}
%\label{fig:ham11 Appendix}
%}
%  \caption{(Fig.~\ref{fig:Hamiltonian Detail} reproduced) Gadget $G_i$ for different values of $x_i$ and $y_i$. The main observation is that if $x_i\cdot y_i=0$ then $G_i$ consists of paths from $v_{i-1}^j$ to $v_i^j$ for all $0\leq j\leq 2$. Otherwise, it consists of paths from $v_{i-1}^j$ to $v_i^{j+1 \mod 3}$. Moreover, Alice's (respectively Bob's) edges, i.e. those in red (respectively blue), form a matching that covers all nodes except $v_i^j$ (respectively $v_{i-1}^j$) for all $0\leq j\leq 2$.}\label{fig:Hamiltonian Detail Appendix}
%\end{figure}

\begin{observation}[Observation~\ref{observation:ham} restated]\label{observation:ham appendix}
For any value of $(x_i, y_i)$, $G_i$ consists of three paths where $v_{i-1}^j$ is connected by a path to $v_i^{(j+x_i\cdot y_i)\mod 3}$, for any $0\leq j\leq 2$.
%
%If $x_i\cdot y_i=0$ then these paths are from $v_{i-1}^j$ to $v_i^j$ for $0\leq j\leq 2$. Otherwise, they are from $v_{i-1}^0$, $v_{i-1}^1$ and $v_{i-1}^2$ to $v_i^1$, $v_i^2$ and $v_i^0$ respectively.
%
Moreover, Alice's (respectively Bob's) edges, i.e. thin (red) lines (respectively thick (blue) lines)  in Fig.~\ref{fig:Hamiltonian Gadget}, form a matching that covers all nodes except $v_i^j$ (respectively $v_{i-1}^j$) for all $0\leq j\leq 2$.
\end{observation}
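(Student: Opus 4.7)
The plan is to prove Observation~\ref{observation:ham appendix} by a straightforward case analysis over the four possible values of the pair $(x_i,y_i)\in\{0,1\}^2$, relying on the explicit description of the gadget $G_i$ given in Fig.~\ref{fig:Hamiltonian Gadget}. Since $x_i\cdot y_i=0$ whenever at least one of $x_i,y_i$ is $0$, and $x_i\cdot y_i=1$ only when $x_i=y_i=1$, I expect three cases to exhibit the ``identity'' routing $v_{i-1}^j\mapsto v_i^j$, and one case (the $(1,1)$ case) to exhibit the ``rotation by one'' routing $v_{i-1}^j\mapsto v_i^{(j+1)\bmod 3}$.

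For each of the four cases, I would first write down the edge set contributed by Alice (thin/red edges, either the ``dashed'' or ``solid'' variant depending on $x_i$) and the edge set contributed by Bob (thick/blue edges, either the ``dashed'' or ``solid'' variant depending on $y_i$). Then the verification splits into three parts. \emph{(i)} I would check that the union of these two edge sets decomposes into exactly three vertex-disjoint paths whose endpoints are $\{v_{i-1}^0,v_{i-1}^1,v_{i-1}^2\}$ on one side and $\{v_i^0,v_i^1,v_i^2\}$ on the other; this is a finite check that can be done by tracing the paths starting from each $v_{i-1}^j$ and recording where it ends up, matching the pattern of Fig.~\ref{fig:Hamiltonian Detail}. \emph{(ii)} I would verify that the thin (red) edges alone form a matching on $V(G_i)\setminus\{v_i^0,v_i^1,v_i^2\}$ by checking that each interior vertex touched by a red edge has red-degree exactly $1$ and that the three $v_i^j$'s have red-degree $0$. \emph{(iii)} Symmetrically, I would verify that the thick (blue) edges form a matching on $V(G_i)\setminus\{v_{i-1}^0,v_{i-1}^1,v_{i-1}^2\}$.

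Since the gadget is designed so that Alice's edges are incident only to internal vertices and to the $v_{i-1}^j$'s (never to $v_i^j$), and symmetrically Bob's edges are incident only to internal vertices and to the $v_i^j$'s (never to $v_{i-1}^j$), the matching conditions in parts (ii) and (iii) follow immediately once the pictures are spelled out, and the path-routing in (i) follows by simply concatenating a Bob-edge to an Alice-edge alternately along each of the three paths. The endpoint relation $v_{i-1}^j \leftrightarrow v_i^{(j+x_i\cdot y_i)\bmod 3}$ then pops out of the diagrams: in the three cases with $x_i\cdot y_i=0$, the gadget is ``straight through,'' while in the $(1,1)$ case the crossings induced by the two ``solid'' edge-sets rotate the three endpoints by one position modulo $3$, as depicted in Fig.~\ref{fig:ham11}.

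The argument is entirely combinatorial/pictorial and requires no nontrivial insight; the only mild obstacle is bookkeeping. Since there are four cases and each contains a small constant number of edges, the whole proof reduces to inspection of Fig.~\ref{fig:Hamiltonian Detail}, and can be presented either by reproducing the four sub-figures or, more compactly, by remarking that the four cases have already been depicted and by verifying the three claims (routing, Alice-matching, Bob-matching) on the figure directly. Thus the proof is best written as a one-sentence reduction to Fig.~\ref{fig:Hamiltonian Detail}, followed by an explicit listing of the three endpoint permutations (identity in three cases, cyclic shift by $+1$ in the fourth).
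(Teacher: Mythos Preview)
Your proposal is correct and matches the paper's approach exactly: the paper does not give a formal proof of this observation but simply states that it ``can be checked by drawing $G_i$ for all cases of $x_i$ and $y_i$ (as in Fig.~\ref{fig:Hamiltonian Detail}),'' which is precisely the four-case inspection you outline. Your breakdown into (i) path routing, (ii) Alice-matching, and (iii) Bob-matching is a clean way to organize what the paper leaves to the reader.
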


%\begin{observation}[Observation~\ref{observation:ham} restated]\label{observation:ham appendix}
%For any value of $(x_i, y_i)$, $G_i$ consists of three paths. If $x_i\cdot y_i=0$ then these paths are from $v_{i-1}^j$ to $v_i^j$ for $0\leq j\leq 2$. Otherwise, they are from $v_{i-1}^0$, $v_{i-1}^1$ and $v_{i-1}^2$ to $v_i^1$, $v_i^2$ and $v_i^0$ respectively. Moreover, Alice's (respectively Bob's) edges, i.e. thin (red) lines (respectively thick (blue) lines)  in Fig.~\ref{fig:Hamiltonian Gadget}, form a matching that covers all nodes except $v_i^j$ (respectively $v_{i-1}^j$) for all $0\leq j\leq 2$.
%\end{observation}

Finally, we connect gadgets $G_i$ and $G_{i+1}$ together by identifying rightmost nodes of $G_i$ with leftmost nodes of $G_{i+1}$, as shown in Fig.~\ref{fig:Hamiltonian Graph} (gray lines represent the fact that we identify rightmost nodes of $G_n$ to leftmost nodes of $G_1$). %SAVE SPACE We now prove Lemma~\ref{lem:ham} and \ref{lem:ham if and only if}.

%%%%%%%%%%%%%%%%%%%%%%%%%%%%%%%
% Already moved to the main section
%%%%%%%%%%%%%%%%%%%%%%%%%%%%%
%\begin{figure}
%\center
%\includegraphics[clip=true, trim=1cm 6cm 2cm 3cm,  width=0.8\linewidth]{ham_final_graph.pdf}
%\caption{(Fig. \ref{fig:Hamiltonian Graph} reproduced) The graph $G$ consists of gadgets $G_1, \ldots G_n$. The solid thick edges (in gray) linking between $v_0^j$ and $v_n^j$, for $0\leq j\leq 2$ represents the fact that $v_0^j=v_n^j$. Lines appear in each gadget $G_i$ depicts what we observe in Observation~\ref{observation:ham}: solid thin lines (in red) represent paths that will appear in $G_i$ if $x_i\cdot y_i=0$, and dashed thick lines (in blue) represent paths that will appear in $G_i$ if $x_i\cdot y_i=1$. (One can check this fact by drawing $G_i$ for all possible values of $x_i$ and $y_i$. See Fig.~\ref{fig:Hamiltonian Detail Appendix}.)}
%\label{fig:Hamiltonian Graph Appendix}
%\end{figure}

\begin{lemma}[Lemma \ref{lem:ham} restated]\label{lem:ham appendix}
$G$ consists of three paths $P^0$, $P^1$ and $P^2$ where for any $0\leq j\leq 2$, $P^j$ has $v_0^j$ as one end vertex and $v_n^{(j+ \sum_{1\leq i\leq n} x_i\cdot y_i)\mod 3}$ as the other.
\end{lemma}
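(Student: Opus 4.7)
The plan is to prove this by induction on the number of gadgets composed together, using Observation~\ref{observation:ham appendix} as the base case. Specifically, I will show by induction on $k$ that for any $1 \le k \le n$, the union $G_1 \cup G_2 \cup \cdots \cup G_k$ consists of exactly three vertex-disjoint paths $P^0_k, P^1_k, P^2_k$ such that $P^j_k$ has endpoints $v_0^j$ and $v_k^{(j + s_k) \bmod 3}$, where $s_k := \sum_{i=1}^{k} x_i \cdot y_i$. Taking $k=n$ then yields the lemma directly, since by construction $G = G_1 \cup \cdots \cup G_n$.

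For the base case $k=1$, the claim is exactly Observation~\ref{observation:ham appendix} applied to $G_1$: gadget $G_1$ consists of three paths connecting $v_0^j$ to $v_1^{(j + x_1 y_1) \bmod 3}$ for $j \in \{0,1,2\}$, and $s_1 = x_1 y_1$. For the inductive step, suppose the claim holds for $k$, so $G_1 \cup \cdots \cup G_k$ is a disjoint union of three paths $P^j_k$ from $v_0^j$ to $v_k^{(j+s_k) \bmod 3}$. By Observation~\ref{observation:ham appendix} applied to $G_{k+1}$, the gadget $G_{k+1}$ consists of three disjoint paths $Q^0, Q^1, Q^2$ where $Q^\ell$ connects $v_k^\ell$ to $v_{k+1}^{(\ell + x_{k+1} y_{k+1}) \bmod 3}$. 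Since the gadgets $G_i$ are pairwise vertex-disjoint except that $G_k$ and $G_{k+1}$ share precisely the three nodes $v_k^0, v_k^1, v_k^2$, concatenating $P^j_k$ with $Q^{(j+s_k) \bmod 3}$ yields a path $P^j_{k+1}$ from $v_0^j$ to $v_{k+1}^{(j + s_k + x_{k+1} y_{k+1}) \bmod 3} = v_{k+1}^{(j + s_{k+1}) \bmod 3}$. These three new paths are vertex-disjoint because the $P^j_k$'s are vertex-disjoint by induction, the $Q^\ell$'s are vertex-disjoint by the observation, and they are joined through the three distinct ``interface'' vertices $v_k^0, v_k^1, v_k^2$ in a bijective manner (one interface vertex per pair).

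The main thing to verify carefully is disjointness after concatenation, i.e.\ that no two of the resulting $P^j_{k+1}$ paths share a vertex. This relies on two facts: that the map $\ell \mapsto (\ell + s_k) \bmod 3$ is a bijection on $\{0,1,2\}$ (so the three $P^j_k$'s attach to the three $Q^\ell$'s via a bijection at the interface), and that the only vertices shared between $G_1 \cup \cdots \cup G_k$ and $G_{k+1}$ are precisely $\{v_k^0, v_k^1, v_k^2\}$. The latter is by construction of $G$. I do not expect any genuine obstacle here; the proof is essentially bookkeeping on top of Observation~\ref{observation:ham appendix}, with the modular arithmetic handled by the obvious identity $(j + s_k) + x_{k+1} y_{k+1} \equiv j + s_{k+1} \pmod 3$.
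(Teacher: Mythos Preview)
Your proposal is correct and follows essentially the same approach as the paper's proof: both argue by induction on the number $k$ of gadgets composed, using Observation~\ref{observation:ham appendix} for both the base case and the inductive step, and both rely on the identity $(j+s_k)+x_{k+1}y_{k+1}\equiv j+s_{k+1}\pmod 3$. If anything, you are more careful than the paper, which does not explicitly spell out the vertex-disjointness of the concatenated paths; your bijection argument at the interface vertices $v_k^0,v_k^1,v_k^2$ handles this cleanly.
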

\begin{proof}
We will show that for any $2\leq k\leq n$ and $0\leq j\leq 2$, $P^j$ has $v_0^j$ as one end vertex and $v_k^{(j+ \sum_{1\leq i\leq k} x_i\cdot y_i)\mod 3}$ as the other. We prove this by induction on $k$. Our claim clearly holds for $k=2$ by Observation~\ref{observation:ham appendix}. Now assume that this claim is true for any $2\leq k\leq n-1$, i.e., $v_0^j$ is connected by a path to $v_k^{j'}$ where $j'=(j+ \sum_{1\leq i\leq k} x_i\cdot y_i)\mod 3$. By Observation~\ref{observation:ham appendix}, $v_k^{j'}$ is connected by a path to $v^{j''}$ where $j''=(j'+x_{k+1}\cdot y_{k+1}) \mod 3 = (j+ \sum_{1\leq i\leq k+1} x_i\cdot y_i)\mod 3$ as claimed.
\end{proof}

\begin{figure}
\center
\includegraphics[clip=true, trim=0cm 7cm 0cm 4.3cm, width=0.9\linewidth]{}
\caption{The resulted graph $G$ in three situations depending on the value of $\sum_{1\leq i\leq n} x_i\cdot y_i \mod 3$. Dashed lines (in red) represent paths connecting $v_0^0, \ldots, v_0^2$ and $v_n^0, \ldots, v_n^2$. Thick lines (in gray) show the fact that we identify nodes on two sides, i.e. $v_0^j=v_n^j$ for all $0\leq j\leq 2$. Our main observation is that $G$ is a Hamiltonian cycle if and only if  $\sum_{1\leq i\leq n} x_i\cdot y_i \mod 3\neq 0$ (cf. Lemma~\ref{lem:ham if and only if appendix}).}\label{fig:ham three graph results Appendix}
\end{figure}

\begin{lemma}%[Lemma \ref{lem:ham if and only if} restated]
\label{lem:ham if and only if appendix} Each player's edges form a perfect matching in $G$. Moreover, $G$ is a Hamiltonian cycle if and only if $\sum_{1\leq i\leq n} x_i\cdot y_i \mod 3\neq 0$.
\end{lemma}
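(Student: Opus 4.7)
The plan is to handle the two assertions separately, using the observations and lemma already established.

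For the perfect matching claim, I would argue as follows. By the second part of Observation~\ref{observation:ham appendix}, in each gadget $G_i$, Alice's edges form a matching that covers every vertex of $G_i$ except the three ``right-boundary'' vertices $v_i^0, v_i^1, v_i^2$. So summing over $i=1,\ldots,n$, Alice's edges match every internal vertex of every gadget, and for each interface index $1 \le i \le n-1$, the triple $\{v_i^0,v_i^1,v_i^2\}$ (left-uncovered by $G_i$ but right-covered by $G_{i+1}$---note the asymmetry of Observation~\ref{observation:ham appendix}) is covered by Alice's edges of $G_{i+1}$. The only potentially problematic triple is $\{v_n^0,v_n^1,v_n^2\}$, left-uncovered by $G_n$; but the identification $v_n^j=v_0^j$ together with the fact that $v_0^j$ is a left-boundary vertex of $G_1$ (and hence matched by $G_1$'s Alice edges) handles it. The same argument, with the roles of left/right flipped, works for Bob.

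For the Hamiltonian-cycle claim, I would apply Lemma~\ref{lem:ham appendix}: $G$ decomposes into three vertex-disjoint paths $P^0,P^1,P^2$, where $P^j$ has endpoints $v_0^j$ and $v_n^{(j+s)\bmod 3}$, with $s = \sum_{i=1}^n x_i y_i$. These paths together visit every vertex of $G$ exactly once (each interior vertex is internal to a unique path, and each boundary vertex $v_i^j$ is shared by exactly two consecutive gadgets whose path-pieces glue together). After the identification $v_0^j=v_n^j$, I do case analysis on $s \bmod 3$:
\begin{itemize}
\item If $s \equiv 0$, each $P^j$ becomes a closed loop on its own, so $G$ is the disjoint union of three cycles and therefore not a Hamiltonian cycle.
\item If $s \equiv 1$, starting at $v_0^0$ and traversing $P^0, P^1, P^2$ in order visits $v_0^0 \to v_0^1 \to v_0^2 \to v_0^0$, going through every vertex of $G$ exactly once, so $G$ is a single Hamiltonian cycle.
\item If $s \equiv 2$, the analogous traversal in the reverse cyclic order yields a single Hamiltonian cycle.
\end{itemize}

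Combining the two cases where $s\not\equiv 0\pmod 3$ with the $s\equiv 0$ case gives the claimed equivalence. The routine obstacle is just bookkeeping boundary vertices correctly (the fact that both $E_C$ and $E_D$ portions glue in the right way relies crucially on the two halves of Observation~\ref{observation:ham appendix} being complementary with respect to left vs.\ right boundary); no new combinatorial idea is needed beyond the observations already proved.
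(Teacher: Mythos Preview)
Your proposal is correct and follows essentially the same approach as the paper: both arguments invoke Lemma~\ref{lem:ham appendix} to get the three paths $P^0,P^1,P^2$ and then do the three-way case analysis on $s\bmod 3$, concluding that $s\equiv 0$ yields three disjoint cycles while $s\equiv 1,2$ yield a single Hamiltonian cycle. Your write-up is actually more complete than the paper's own proof, which omits the perfect-matching argument entirely; your gluing argument for that part (using the complementary left/right boundary behavior in Observation~\ref{observation:ham appendix} together with the identification $v_0^j=v_n^j$) is exactly the right way to fill that gap.
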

\begin{proof} We consider three different values of $z=\sum_{1\leq i\leq n} x_i\cdot y_i \mod 3$ as shown in Fig.~\ref{fig:ham three graph results Appendix}. If $z=0$ then Lemma~\ref{lem:ham appendix} implies that $v_0^j$ will be connected to $v_n^j$ by a path, for all $j$. After we identify $v_0^j$ with $v_n^j$ we will have three distinct cycles, each containing a distinct $v_0^j=v_n^j$.
If $z=1$ then Lemma~\ref{lem:ham appendix} implies that $v_0^j$ will be connected to $v_n^{(j+1)\mod 3}$ by a path. After we identify $v_0^j$ with $v_n^j$ we will have one cycle that connects $v_0^0=v_n^0$ to $v_n^1=v_0^1$ then to $v_n^2=v_0^2$. Similarly, if $z=1$ then Lemma~\ref{lem:ham appendix} implies that $v_0^j$ will be connected to $v_n^{(j+2)\mod 3}$ by a path. After we identify $v_0^j$ with $v_n^j$ we will have one cycle that connects $v_0^0=v_n^0$ to $v_n^2=v_0^2$ then to $v_n^1=v_0^1$.
\end{proof}

%\section{Full proof of Theorem \ref{theorem:from server to distributed}}

\section{Detail of Section \ref{sec:from server to distributed}}\label{sec:full proof from server to distributed}

\begin{theorem}[Theorem \ref{theorem:from server to distributed} restated]
For any $B$, $L$, $\Gamma\geq \log L$, $\beta\geq 0$ and $\epsilon_0, \epsilon_1>0$, there exists a $B$-model quantum network $N$ of diameter $\Theta(\log L)$ and $\Theta(\Gamma L)$ nodes such that
\begin{align*}
\mbox{if } Q_{\epsilon_0, \epsilon_1}^{*, N}({\sf P}(N))\leq \frac{L}{2}-2 \mbox{ then } Q_{\epsilon_0, \epsilon_1}^{*, \server}({\sf P}_{\Gamma})= O((B\log L)Q_{\epsilon_0, \epsilon_1}^{*,N}({\sf P}(N)))
\end{align*}
where {\sf P} can be replaced by \ham and $(\beta\Gamma)\mbox{-}\conn$.
\end{theorem}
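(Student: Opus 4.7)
The plan is to prove the theorem by constructing an explicit network $N$ and showing how Carol, David, and the server can jointly simulate a fast distributed quantum algorithm on $N$ to solve the Hamiltonian cycle problem on $\Gamma$-node graphs in the server model. Following the proof idea sketched in the excerpt, I will build $N$ by starting with $\Gamma$ vertex-disjoint paths $P^1,\ldots,P^\Gamma$ of length $L$ (with endpoints $v^i_1$ and $v^i_L$), add a clique on the left endpoints $\{v^i_1\}$ and on the right endpoints $\{v^i_L\}$ (these will be the places where Carol's and David's matchings are realized), and then add $k = \Theta(\log L)$ ``highway'' paths $H^1,\ldots,H^k$ where $H^s$ has $L/2^s$ nodes, each node $h^s_j$ being connected to a block of $2^s$ consecutive nodes on each base path $P^i$. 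This gives diameter $\Theta(\log L)$ and $\Theta(\Gamma L)$ vertices. The input graph $G=(U,E_C\cup E_D)$ with $|U|=\Gamma+k$ is encoded by letting Carol mark $v^i_1 v^j_1\in M$ iff $u_i u_j\in E_C$, David mark $v^i_L v^j_L\in M$ iff $u_i u_j\in E_D$ (with the $k$ ``highway endpoints'' playing the role of the additional vertices of $G$), and the server include every path and highway edge in $M$. A short case analysis (using that $E_C,E_D$ are perfect matchings on $U$) then shows that $M$ is a Hamiltonian cycle of $N$ iff $G$ is a Hamiltonian cycle on $U$; the same gadget, with edges replaced by the $\beta$-far variant, handles $(\beta\Gamma)\mbox{-}\ham$.

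Next, I define a time-indexed ownership partition $V(N)=S^t_C\sqcup S^t_D\sqcup S^t_S$ where at time $t$ Carol owns the leftmost $t+1$ nodes of each base path and the leftmost $\lceil t/2^s\rceil$ nodes of each highway $H^s$, David owns the symmetric set on the right, and the server owns everything else. The key invariant is that after the $t$-th round of simulation each party holds the joint quantum state of the nodes it owns together with the quantum messages those nodes just received. I will simulate the distributed algorithm $\cA$ one round at a time: every party applies $\cA$'s local unitary to the nodes it currently owns; this produces both the post-round states and all out-going $B$-qubit messages. To maintain the invariant for round $t+1$, the server can (for free) hand to Carol the states of the nodes moving into $S^{t+1}_C\setminus S^t_C$ and the in-messages on the frontier coming from server-owned nodes; symmetrically for David. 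The only messages that are \emph{not} free are those that cross between Carol-owned and David-owned nodes along a highway, since in this case the message has been produced by a node owned by Carol (resp.\ David) and must be routed to its neighbor owned by the other party; in the server model this traffic must be paid for by Carol/David, and they route it through the server (who can then forward it to the other player for free).

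The round-by-round accounting gives the communication bound. In each round the frontier between $S^t_C$ and $S^t_D$ lies only on the highways (because the Hamiltonian-cycle edges on the base paths remain entirely on one side until the two fronts meet), so in round $t$ Carol sends the server at most $O(k)=O(\log L)$ messages of $B$ qubits, and likewise for David. Since by hypothesis $T:=Q^{*,N}_{\epsilon_0,\epsilon_1}(\ham(N))\le L/2-2$, the two frontiers never collide, so the ownership partition is well defined throughout the simulation, and the total communication between Carol/David and the server is $O((B\log L)\cdot T)$. After $T$ rounds, the three players read off the output of any node (with $(\epsilon_0,\epsilon_1)$-error by correctness of $\cA$), which by the construction of $M$ equals the answer to $(\beta\Gamma)\mbox{-}\ham_\Gamma$ on $G$. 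This proves $Q^{*,\server}_{\epsilon_0,\epsilon_1}((\beta\Gamma)\mbox{-}\ham_\Gamma)=O((B\log L)T)$ as required.

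The main obstacle, and the one that forced us to work in the server model in the first place, is the invariant that each party \emph{owns} the joint quantum state of its side of the frontier. Because quantum states cannot be cloned, Alice and Bob cannot, as in the classical Simulation Theorem of Das Sarma et al., each maintain a private copy of the shared/server-owned qubits; they genuinely need the server to hold the interior state and to hand off (and un-hand-off) qubits at the moving frontier. The care required is to check that (i) the entanglement present in the initial state of $\cA$ can be prepared by the server at the start and distributed to the three parties without any input-dependent communication, and (ii) handing a qubit from the server to Carol/David at the frontier, and later retrieving it on the opposite side, does not leak any additional information to the two players beyond the $O(B\log L)$ qubits per round that we are already charging. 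Both (i) and (ii) are handled transparently because the server is unbounded and can communicate for free; this is precisely where the server model, as opposed to a direct two-party model, is essential, and it explains why our theorem cannot be stated for $Q^{*,cc}$ in place of $Q^{*,\server}$.
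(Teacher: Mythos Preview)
Your overall strategy matches the paper's: build $N$ from $\Gamma$ length-$L$ paths plus $\Theta(\log L)$ shortcut highways, encode $G$ via Carol's and David's matchings at the two ends, maintain a column-based ownership partition $S^t_C,S^t_D,S^t_S$, and simulate the distributed algorithm round by round. The discussion of why the server (rather than two players) is needed is also on point.

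There is, however, a genuine gap in two linked details. First, the highway wiring you describe---each $h^s_j$ joined to a \emph{block} of $2^s$ consecutive base-path positions on every $P^i$---is not the construction that makes the accounting work (and as stated it does not even yield diameter $\Theta(\log L)$: the top highway has $O(1)$ nodes adjacent to essentially all base-path vertices, collapsing the diameter). The paper's highways are hierarchical: $H^i$ has nodes only at columns $1+j\cdot 2^i$, each $h^i_\ell$ is joined just to $h^{i-1}_\ell$ in the \emph{same} column, and $h^1_\ell$ to $v^r_\ell$ for all $r$. The crucial feature is that every highway-to-path and highway-to-highway edge is ``vertical'' (same column), so with column-based ownership such edges are always intra-party; only the $O(\log L)$ horizontal highway edges can cross a frontier. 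Second, your accounting sentence ``the only messages that are not free are those that cross between Carol-owned and David-owned nodes'' is incorrect: in the server model every qubit \emph{sent by} Carol or David is charged, and in particular the server, to keep simulating the nodes it still owns, must receive in each round the message from the rightmost Carol-owned (resp.\ leftmost David-owned) highway node on each level to its server-owned neighbor. In the paper's construction this is one $B$-qubit message per highway per side, giving $O(B\log L)$ per round; with your block wiring a frontier highway node is adjacent to $\Theta(2^s\Gamma)$ base-path nodes on the other side, so Carol$\to$server traffic can be $\Omega(\Gamma B)$ per round, which destroys the bound. Once you switch to the hierarchical highways and include the Carol/David$\to$server messages in the count, the rest of your argument goes through. (Minor point: the gap problem in the statement is $(\beta\Gamma)\mbox{-}\conn$, not $(\beta\Gamma)\mbox{-}\ham$; the reduction preserves the number of connected components, which covers both cases.)
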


\subsection{Description of the network $N$} \label{sec:network description}

\begin{figure}
\centering
\includegraphics[width=0.6\linewidth]{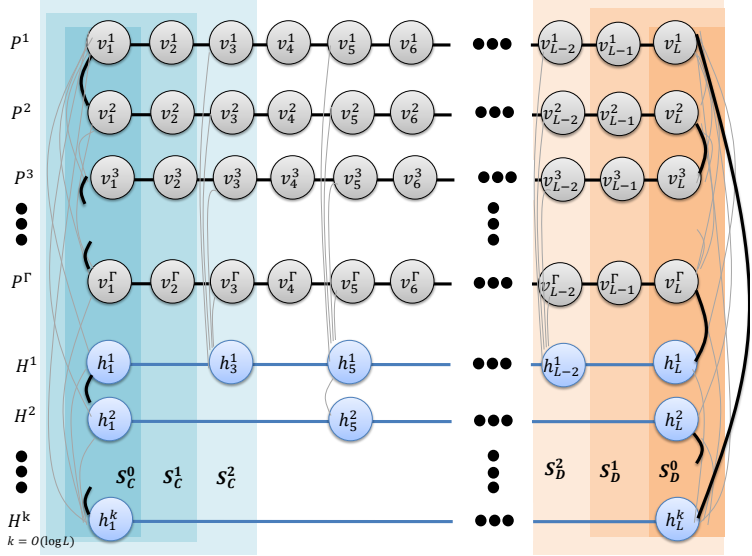}
\caption{(Fig.~\ref{fig:Network small diameter} reproduced) The network $N$ which consists of network $N'$ and some ``highways'' which are paths with nodes $h^i_j$ (i.e., nodes in blue). Bold edges show an example of subnetwork $M$ when the input perfect matchings are $E_C=\{(u_1, u_2), (u_3, u_4), \ldots, (u_{\Gamma+k-1}, u_{\Gamma+k}\}$ and $E_D=\{(u_2, u_3), (u_4, u_5), \ldots, (u_{\Gamma+k}, u_1)\}$. Pale edges are those in $N$ but not in $M$.}\label{fig:Network small diameter Appendix}
\end{figure}

In this section we describe the network $N$ as shown in Fig.~\ref{fig:Network small diameter Appendix}. We assume that $L=2^i+1$ for some $i$. This can be assumed without changing the theorem statement by simply increasing $L$ to the nearest number of this form.

The two basic units in the construction are {\em paths} and {\em highways}. There are $\Gamma$ paths, denoted by $P^1, P^2, \ldots, P^{\Gamma}$, each having $L$ nodes, i.e., for $j=1, 2, \ldots \Gamma$,
%$V(P^\ell)= \{v_0^\ell,\dots,v_{d^p-1}^\ell\}$ and $E(P^\ell) = \{(v_i^\ell, v_{i+1}^\ell) \mid 0 \le i < d^p-1\}.$
%
$$V(P^i)= \{v_1^i,\dots,v_{L}^i\}
~~~~~\mbox{and}~~~~~
E(P^i) = \{(v_j^i, v_{j+1}^i) \mid 1 \le j \le L-1\}\,.$$
We construct $k=\log_2 (L-1)$ highways, denoted by $H^1, \ldots, H^k$ where $H^i$ has the following nodes and edges.
\begin{align*}
V(H^i)&= \{h_{1+j2^i}^i \mid 0\leq j\leq \frac{L-1}{2^i}\}
~~~~~\mbox{and}~~~~~\\
E(H^i) &= \{(h^i_{1+j2^i}, h^i_{1+(j+1)2^i}) \mid 0 \le j \le \frac{L-1}{2^i}\}\,.\
\end{align*}
For any node $h^1_j$ we add an edge $(h^1_j, v^i_j)$ for any $j$. Moreover for any node $h^i_j$ we add an edge $(h^{i-1}_j, h^i_j)$.
Figure~\ref{fig:Network small diameter Appendix} depicts this network. We note the following simple observation.
%
%%%%%%%%%%%%%%%%%%%%%%%%%%%%%%%
%
\begin{observation}\label{lem:size}
The number of nodes in $N$ is $n = \Theta(L\Gamma)$ and its diameter is $\Theta(\log L)$.
\end{observation}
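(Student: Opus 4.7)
The observation has two independent parts: counting the nodes of $N$, and bounding its diameter from above and below. I would handle them separately, starting with the easy count.

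\textbf{Node count.} The $\Gamma$ paths contribute exactly $\Gamma L$ nodes. Each highway $H^r$ has $(L-1)/2^r + 1$ nodes, so the highways together contribute
\[
\sum_{r=1}^{k}\Bigl(\tfrac{L-1}{2^r}+1\Bigr) = (L-1)\bigl(1-2^{-k}\bigr) + k \;\le\; L + \log_2 L.
\]
(The path and highway vertex sets are disjoint by the naming convention $v^i_j$ vs.\ $h^r_j$.) Thus $|V(N)| = \Gamma L + O(L+\log L) = \Theta(\Gamma L)$, using only $\Gamma\ge 1$.

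\textbf{Diameter upper bound.} The plan is to route any two nodes through the top highway. I would first observe that from an arbitrary path node $v^i_j$ one reaches a node of $H^1$ in at most one step: if $j$ is odd then $v^i_j$ is adjacent to $h^1_j\in V(H^1)$, and otherwise one uses a path edge to $v^i_{j\pm 1}$ first, costing at most one extra hop. Then I would induct on $r$ to show that from any node of $H^r$ one can reach some node of $H^{r+1}$ in $O(1)$ steps (move along $H^r$ to the nearest index of the form $1 + j\cdot 2^{r+1}$, then take a vertical edge up). Chaining these $k = \log_2(L-1)$ climbs, any node of $N$ reaches a node of $H^k$ in $O(\log L)$ steps. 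Since by construction $H^k$ consists of the two nodes $h^k_1,h^k_L$ joined by an edge (because $1+1\cdot 2^k = L$), any two nodes of $N$ are within $O(\log L)$ of each other.

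\textbf{Diameter lower bound and main obstacle.} The lower bound $\Omega(\log L)$ is the only mildly subtle point. My plan is to consider two far-apart nodes such as $v^1_1$ and $v^1_L$ and argue via a potential function: assign to a horizontal edge of $H^r$ a ``jump'' of $2^r$, to a path edge a jump of $1$, and to a vertical edge a jump of $0$. Any $v^1_1$--$v^1_L$ walk has total jump $\ge L-1 = 2^k$. An edge with jump $2^r$ can only be traversed while at ``level'' $\ge r$, so the walk must at some point climb to level $k$ (contributing $k$ vertical hops) or else make many smaller jumps totaling $2^k$, each of which itself requires the walker to be at the corresponding level. A clean accounting of vertical hops needed to support the horizontal jumps then shows the walk uses $\Omega(k)=\Omega(\log L)$ edges. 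This level-accounting is the only place where one has to be careful; everything else is direct bookkeeping.
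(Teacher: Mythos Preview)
The paper does not actually prove this observation; it merely states it as a ``simple observation'' without argument. Your proposal supplies the natural proof and is correct.

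One small sharpening for your lower-bound sketch: the level-accounting can be made completely explicit as follows. In any walk from $v^1_1$ to $v^1_L$, let $m$ be the maximum highway level reached. Every edge of the walk changes the column index by at most $2^m$, so at least $(L-1)/2^m = 2^{k-m}$ edges are needed just for the horizontal displacement; and at least $m$ further (vertical) edges are needed to ever reach level $m$ from level~$0$. Hence the walk has length at least $2^{k-m}+m$, which is at least $k+1$ for every $0\le m\le k$. This gives the $\Omega(\log L)$ bound directly and replaces the informal ``clean accounting'' step.
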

%\begin{proof}TO DO\danupon{To do}
%\end{proof}

\subsection{Simulation}
For any $0\leq t\leq (L/2)-2$, we partition $V(N)$ into three sets, denoted by $S_C^t$, $S_D^t$ and $S_S^t$, as follows (also see Fig.~\ref{fig:Network small diameter Appendix}).
\begin{align}
S_C^t & =\{v^i_j, h^i_j \mid 1\leq i\leq \Gamma,~1\leq j\leq t+1\},\\
S_D^t & =\{v^i_j, h^i_j \mid 1\leq i\leq \Gamma,~L-t\leq j\leq L\},\\
S_S^t & =V(N)\setminus (S_C^t\cup S_D^t)\,.\label{eq:own set}
\end{align}
Let $\mathcal{A}$ be any quantum distributed algorithm on network $N$ for computing a problem {\sf P} (which is either \ham or $(\beta\Gamma)-\conn$). Let $T_{\mathcal{A}}$ be the worst case running time of algorithm $\mathcal{A}$ (over all inputs). We note that $T_\mathcal{A}\leq (L/2)-2$, as assumed in the theorem statement.
We show that Carol, David and the server can solve problem ${\sf P}$ on $(\Gamma+k)$-node input graph using small communication, essentially by ``simulating'' $\cA$ on some input subnetwork $M$ corresponding to $G=(U, E_C\cup E_D)$ in the following sense. When receiving $E_C$ and $E_D$, the three parties will construct a subnetwork $M$ of $N$ (without communication) in such a way that $M$ is a $1$-input of problem ${\sf P}$ (e.g., $M$ is a Hamiltonian cycle) if and only if $G=(U, E_C\cup E_D)$ is. Next, they will simulate algorithm $\cA$ in such a way that, at any time $t$ and for each node $v^i_j$ in $N$, there will be exactly one party among Carol, David and the server that knows {\em all information that $v^i_j$ should know in order to run algorithm $\cA$}, i.e., the (quantum) state of $v^i_j$ as well as the messages (each consisting of $B$ quantum bits) sent to $v^i_j$ from its neighbors at time $t$. The party that knows this information will pretend to be $v^i_j$ and apply algorithm $\cA$ to get the state of $v^i_j$ at time $t+1$ as well as the messages that $v^i_j$ will send to its neighbors at time $t+1$. We say that this party {\em owns $v^i_j$} at time $t$. Details are as follows.

%Initially at time $t=0$ before an input is given, Carol, David and the server owns nodes in $S_C^0$, $S_D^0$ and $S_S^0$ respectively. That is, Carol's memory, denoted by $H_{C, C}$, contains all qubits in $H_{vv'}$ for any $v\in S_C^0$ and $v'\in V(N)$. Similarly, David's (respectively the server's) memory, denoted by $H_{D, D}$ (respectively $H_{S, S}$), contain all qubits in $H_{vv'}$ for any $v\in S_D^0$ (respectively $v\in S_S^0$) and $v'\in V(N)$.

We will define a server-model protocol $\cA'$ that guarantees that, at any time $t$, Carol, David and the server will own nodes in sets $S_C^t$, $S_D^t$ and $S_S^t$, respectively, at time $t$. That is, Carol's workspace, denoted by $H_{C, C}$, contains all qubits in $H_{vv'}$, for any $v\in S_C^t$ and $v'\in V(N)$, resulting from $t$ rounds of an execution of $\cA$. Similarly, David's (respectively the server's) workspace, denoted by $H_{D, D}$ (respectively $H_{S, S}$), contains all qubits in $H_{vv'}$ for any $v\in S_D^t$ (respectively $v\in S_S^t$) and $v'\in V(N)$ resulting from $t$ rounds of $\cA$. In other words, if after $t$ rounds of $\cA$ network $N$ has state
\begin{align*}
\ket{\psi^t_M}&=\sum_w \left(\alpha_w \bigotimes_{v, v'\in V(N)}\ket{\psi^t_{w, M}(v, v')}\right)\,,
\end{align*}
then we will make sure that the server model has state
$$\ket{\Psi^t_G}=\sum_w \left(\alpha_w \bigotimes_{i,j\in \{C, D, S\}}\ket{\Psi^t_{w, G}(i,j)}\right)$$
where $\ket{\Psi^t_{w, G}(i,j)}=\ket{0}$, for any $i, j\in \{C, D, S\}$ such that $i\neq j$, and for any $i\in \{C, D, S\}$
\begin{align}
\ket{\Psi^t_{w, G}(i,i)}=\bigotimes_{v\in S_i^t, v'\in V(N)}\ket{\psi^t_{w, M}(v',v)}\,.\label{eq:reduction state}
\end{align}
%If this can be maintained until $\cA$ terminates then we are done since each server-model player can pretend to be one of the nodes they own and measure the workspace of such node to get a solution.

Let $\Gamma'=\Gamma+k$. Fix any $\Gamma'$-node input graph $G=(U, E_C\cup E_D)$ of problem {\sf P} where $E_C$ and $E_D$ are edges given to Carol and David respectively. Let $U=\{u_1, \ldots, u_{\Gamma'}\}$. For convenience, for any $1\leq j\leq k$, let $v^{\Gamma+j}_1=h^{j}_1$ and $v^{\Gamma+j}_L=h^{j}_L$ We construct a subnetwork $M$ of $N$ as follows.
For any $i\neq j$, we mark $v^i_1v^j_1$ as participating in $M$ if and only if $u_iu_j\in E_C$. Note that this knowledge must be kept in qubits in $H_{v^i_1v^i_1}$ and $H_{v^j_1v^j_1}$ in network $N$ as we require each node to know whether edges incident to it are in $M$ or not. This means that this knowledge must be stored in $H_{C, C}$ since $v^i_1,v^j_1\in S_C^0$. This can be guaranteed without any communication since Carol knows $E_C$.
Similarly, we mark $v^i_Lv^j_L$ as participating in $M$ if and only if $u_iu_j\in E_D$, and this information can be stored in $H_{D, D}$ without communication. Finally, we let all edges in all paths and highways be in $M$. This information is stored in $H_{S, S}$. An example of network $M$ is shown in Fig.~\ref{fig:Network small diameter Appendix}.
To conclude, if the initial state of $N$ with this subnetwork $M$ is
\begin{align*}
\ket{\psi^0_M}&=\sum_w \left(\alpha_w \bigotimes_{v, v'\in V(N)}\ket{\psi^0_{w, M}(v, v')}\right)\,.
\end{align*}
then the server model will start with state $\ket{\Psi^0_G}=\sum_w \left(\alpha_w \bigotimes_{i,j\in \{C, D, S\}}\ket{\Psi^0_{w, G}(i,j)}\right)$ where $\ket{\Psi^0_{w}(i,j)}=\ket{0}$, for any $i, j\in \{C, D, S\}$ such that $i\neq j$, and for any $i\in \{C, D, S\}$
\begin{align*}
\ket{\Psi^0_{w, G}(i,i)}=\bigotimes_{v\in S_i^0, v'\in V(N)}\ket{\psi^0_{w, M}(v',v)}\,.%\label{eq:reduction initial state}
\end{align*}
%
%For example, Carol state for any $w$ ($\ket{\Psi^0_{w, G}(C, C)}$) consists of states of nodes $v\in S_C^0$ (i.e., $\ket{\psi^0_{w}(v,v)}$) and their output channels $\ket{\psi^0_{w}(v,v')}$ for all $v'\in V(N)$.
%
Thus Eq.\eqref{eq:reduction state} holds for $t=0$.
We note the following simple observation.
\begin{observation}\label{observation:G is ham if and only if M is ham Appendix}
$G=(U, E_C\cup E_D)$ is a Hamiltonian cycle if and only if $M$ is a Hamiltonian cycle. $G$ is connected if and only if $M$ is connected, and for any $\delta$, $G$ is $\delta$-far from being connected if and only if $M$ is $\delta$-far from being connected.
\end{observation}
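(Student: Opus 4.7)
}

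The plan is to exhibit a natural structural correspondence between $G = (U, E_C \cup E_D)$ and the subnetwork $M \subseteq N$ under which cycles and connected components are preserved, and then to read off the three claimed equivalences. The key idea is that each path $P^i$ (for $1\le i\le \Gamma$) and each highway $H^j$ (for $1\le j\le k$) contributes to $M$ \emph{all} of its internal edges, so inside $M$ it behaves as a single ``super-edge'' connecting its leftmost endpoint (the vertex identified with $v^i_1$, resp.\ $h^j_1 = v^{\Gamma+j}_1$) to its rightmost endpoint ($v^i_L$, resp.\ $h^j_L = v^{\Gamma+j}_L$), and every internal vertex of that path/highway has degree exactly $2$ in $M$.

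First I would set up the bijection. Define a map $\Phi$ that sends the vertex $u_i \in U$ to the path $P^i$ if $1\le i\le \Gamma$, and to the highway $H^{i-\Gamma}$ if $\Gamma < i \le \Gamma + k$. By construction, an edge $u_iu_j$ lies in $E_C$ iff the edge between the leftmost endpoints of $\Phi(u_i)$ and $\Phi(u_j)$ lies in $M$; similarly for $E_D$ and rightmost endpoints. Thus any walk $u_{i_1}, u_{i_2}, \ldots, u_{i_r}$ in $G$ lifts to a walk in $M$ obtained by alternately traversing the full path/highway $\Phi(u_{i_t})$ between its two endpoints and traversing the corresponding Carol- or David-edge between consecutive leftmost or rightmost endpoints; conversely, any walk in $M$ projects back to one in $G$ by contracting each path/highway to a single vertex. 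The crucial point is that at every interior vertex of a path/highway the walk is forced to continue along the path, and at every leftmost/rightmost endpoint it must leave via the unique Carol- or David-edge incident to it (when such an edge is in $M$), since $E_C$ and $E_D$ are matchings.

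Next I would prove the Hamiltonian cycle equivalence. Because $E_C$ and $E_D$ are perfect matchings on $U$, every leftmost endpoint in $M$ has degree $2$ (one path/highway edge plus one Carol-edge), every rightmost endpoint has degree $2$ (one path/highway edge plus one David-edge), and every interior path/highway vertex has degree $2$; hence $M$ is a vertex-disjoint union of cycles that spans $V(N)$. Analogously $G = (U, E_C \cup E_D)$, being the union of two perfect matchings on $U$, is a vertex-disjoint union of cycles spanning $U$. Under the lifting $\Phi$, each cycle of $G$ corresponds to one cycle of $M$ (covering exactly those paths/highways whose indices appear in that $G$-cycle, together with their connecting Carol/David edges), and vice versa. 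In particular, $G$ is a single Hamiltonian cycle on $U$ iff $M$ is a single cycle touching every leftmost/rightmost endpoint, iff $M$ is a Hamiltonian cycle of $N$.

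For the connectivity and $\delta$-far claims, I would use the same contraction/lifting: two paths/highways lie in the same connected component of $M$ iff their indices lie in the same connected component of $G$, so the number of connected components satisfies $c(M) = c(G)$. Hence $G$ is connected iff $M$ is. For $\delta$-far, since $N$ contains the complete graph on the leftmost endpoints and the complete graph on the rightmost endpoints, any desired ``merging'' edge between two components of $M$ can be realized by an edge of $N$ between leftmost endpoints of the two components; thus $c(M)-1$ edges of $N \setminus M$ suffice, and are also necessary, to make $M$ connected. Combined with $c(M) = c(G)$ and the analogous count $c(G)-1$ for $G$, this gives the $\delta$-far equivalence.

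The only place that requires care -- and what I would flag as the main (mild) obstacle -- is verifying that the ``external'' edges of $N$ actually allow every matching edge of $E_C$ and $E_D$ to be placed and every extra connecting edge in the $\delta$-far direction to be chosen from $E(N)$: one needs to confirm from the construction of $N$ in Section~\ref{sec:network description} that the leftmost endpoints of all $\Gamma + k$ paths and highways are pairwise adjacent in $N$, and likewise on the rightmost side. Once that is checked, the three equivalences are immediate from the bijection above.
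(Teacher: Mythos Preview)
Your proof is correct and is precisely the natural argument; the paper itself provides no proof for this observation (it is stated as a ``simple observation'' and even carries an author note asking whether a proof should be written). Your contraction/lifting correspondence between $G$ and $M$---collapsing each path $P^i$ and highway $H^j$ to a single vertex---is exactly the right way to see the equivalences, and your degree count (interior vertices have degree~$2$, endpoints have degree~$2$ because $E_C,E_D$ are perfect matchings) correctly shows that both $G$ and $M$ decompose into the same number of spanning cycles.

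Two minor remarks. First, the concern you flag is real: the formal description of $N$ in Section~\ref{sec:network description} omits the clique edges among leftmost (and among rightmost) endpoints, but these edges are clearly intended---they appear in the informal description of $N'$ in Section~\ref{sec:from server to distributed}, and the construction of $M$ explicitly marks arbitrary edges $v^i_1 v^j_1$ and $v^i_L v^j_L$, which presupposes their existence in $N$. Second, for the $\delta$-far direction you do not strictly need the full clique: since $N$ is connected, contracting the components of $M$ yields a connected multigraph on $c(M)$ super-vertices, so a spanning tree of that contraction gives $c(M)-1$ edges of $N\setminus M$ that connect $M$, matching the obvious lower bound. The clique just makes the witness explicit.
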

\danupon{Should write a proof?}
Thus, Carol, David and the server can check whether $G$ is a Hamiltonian cycle if they can check whether $M$ is a Hamiltonian cycle. Similarly, they can check if $G$ is connected or $(\beta \Gamma)$-far from being connected by checking $M$. So, if Eq.\eqref{eq:reduction state} can be maintained until $\cA$ terminates then we are done since each server-model player can pretend to be one of the nodes they own and measure the workspace of such node to get the property of $M$.

Now suppose that Carol, David and the server have maintained this guarantee until they have executed $\cA$ for $t-1$ steps, i.e., player $i$ owns the nodes in $S_i^{t-1}$ at time $t-1$. They maintain the guarantee at step $t$ as follows. First, each player simulate the internal computation of $\cA$ on nodes they own. That is, for each node $v\in V(N)$, the player $i$ such that $v\in S_i^{t-1}$ applies the transformation $C_{t, v}$  (cf. Section~\ref{sec:formal definition of network}) on qubits in workspace $\bigotimes_{v'\in V(N)} H_{v'v}$ which is maintained in $H_{i, i}$ at time $t-1$. This means that if after the internal computation $N$ has state
%\begin{align*}
$\ket{\upsilon^t_M}=\sum_w \left(\alpha_w \bigotimes_{v, v'\in V(N)}\ket{\upsilon^t_{w, M}(v, v')}\right)$
%\end{align*}
then the server model will have state $\ket{\Upsilon^t_G}=\sum_w \left(\alpha_w \bigotimes_{i,j\in \{C, D, S\}}\ket{\Upsilon^t_{w, G}(i,j)}\right)$ where $\ket{\Upsilon^t_{w}(i,j)}=\ket{0}$, for any $i\neq j$, and
%\begin{align*}
$\ket{\Upsilon^t_{w, G}(i,i)}=\bigotimes_{v\in S_i^t, v'\in V(N)}\ket{\upsilon^t_{w, M}(v',v)}$
%\label{eq:reduction state}
%\end{align*}
for any $i$. Note that the server model players can simulate the internal computation of $\cA$ without any communication since a player that owns node $v$ has all information needed to simulate an internal computation of $v$ (i.e., the state of $v$ as well as all messages $v$ received at time $t-1$).

At this point, for any $i\in\{C, D, S\}$, player $i$'s space contains the current state and out-going messages of every node $v\in S_i^{t-1}$. They will need to receive some information in order to guarantee that they own nodes in $S_i^t$. First, consider Carol. Let $S'_C$ be the set of rightmost nodes in the set $S_C^{t-1}$, i.e. $S'_C$ consists of $v^i_{t+1}$ and $h^i_j$ for all $i$ and $j=\arg\max_j \{h^i_j\in S_C^{t-1}\}$.

Note that Carol already has the workspace and all incoming messages of nodes in $S_C^{t-1}\setminus S'_C$ at time $t$.  This is because for any $v\in S_C^{t-1}\setminus S'_C$, Carol already has qubits in $H_{v'v}$ for all $v'\in V(N)$. For each $v\in S'_C$, Carol is missing the messages sent from $v$'s right neighbor; i.e., Carol does not have qubits in $H_{v^i_{t+2}v^i_{t+1}}$ and $H_{h^i_{j'}h^i_j}$ for all $i$, $j=\arg\max_j \{h^i_j\in S_C^{t-1}\}$ and $j'=\arg\min_{j'} \{h^i_{j'}\notin S_C^{t-1}\}$. Since $S'_C\subseteq S_C^{t}$, we need to make sure that Carol has all information of nodes in $S'_C$ at time $t$.

For a non-highway node $v^i_{t+1}$, for all $i$, this can be done by letting the server who owns $v^i_{t+2}$ send to Carol a message sent from $v^i_{t+2}$ to $v^i_{t+1}$ at time $t$, i.e., qubits in $H_{v^i_{t+2}v^i_{t+1}}$. For highway node $h^i_{j}$ for all $i$ and $j=\arg\max_j \{h^i_j\in S_C^{t-1}\}$, its right neighbor $h^i_{j'}$, where $j'=\arg\min_{j'} \{h^i_{j'}\notin S_C^{t-1}\}$, might be owned by David or the server. In any case, we let the owner of $h^i_{j'}$ send to Carol the message sent from $h^i_{j'}$ to $h^i_j$ at time $t$, i.e., qubits in $H_{h^i_{j'}h^i_j}$. The cost of doing this is zero if $h^i_{j'}$ belongs to the server and at most $B$ if $h^i_{j'}$ belongs to David since the message size is at most $B$. In any case, the total cost will be at most $Bk$ since there are $k$ highways. We can thus make sure that Carol gets the information of nodes in $S_C^{t-1}$ at time $t$ at the total cost of at most $Bk$.
%
%all right neighbors of nodes in $S'_C$ (thus having qubits in $H_{v^i_{t+2}v^i_{t+1}}$ and $H_{h^i_{j'}h^i_j}$ for all $i$) send all qubits in $H_{v^i_{t+2}v^i_{t+1}}$ and $H_{h^i_{j'}h^i_j}$ to Carol. This does not cost anything since the server model allow the server to send information for free.

In addition to this, Carol needs to get information of nodes in $S_C^{t}\setminus S_C^{t-1}$ at time $t$. This means that, for any $v\in S_C^{t}\setminus S_C^{t-1}$ she has to receive the qubits stored in $H_{v'v}$ for all $v'\in V(N)$. For any non-highway node $v^i_{t+2}\in S_C^{t}\setminus S_C^{t-1}$, it can be checked from the definition that $v^i_{t+2}$ and all its neighbors are in $S_C^{t-1}\cup S_S^{t-1}$. So, we can make sure that Carol owns $v^i_{t+2}$ by letting the server send to Carol the workspace of $v^i_{t+2}$ and messages sent to $v^i_{t+2}$ by its neighbors in $S_S^{t-1}$ (i.e. qubits in $H_{v'v^i_{t+2}}$ for all $v'\in S_S^{t-1}$). This communication is again free.
For a highway node $h^i_j$ in $S_C^{t}\setminus S_C^{t-1}$, it can be checked from the definition that $h^i_j$ as well as all its {\em non-highway} neighbors are in $S_C^{t-1}\cup S_S^{t-1}$. The only neighbor of $h^i_j$ that might be in $S_D^{t-1}$ is its right neighbor, say $h^i_{j'}$, in the highway. If $h^i_{j'}$ is in $S_D^{t-1}$ then David has to send to Carol the message sent from $h^i_{j'}$ to $h^i_j$. This has cost at most $B$. So, Carol can obtain the workspace of $h^i_j$ as well as all messages sent to $h^i_j$ at the cost of $B$. Since there are $k$ highway nodes in $S_C^{t}\setminus S_C^{t-1}$, the total cost for Carol to obtain information needed to maintain nodes in $S_C^{t}\setminus S_C^{t-1}$ is $Bk$.
We conclude that Carol can obtain all information needed to own nodes in $S_C^t$ at time $t$ at the cost of $2Bk$.

We can do the same thing to guarantee that David owns all nodes in $S_D^t$ at time $t$ at the cost of $2Bk$.
Now we make sure that the server own nodes in $S_S^t$. First, observe that the server already has the workspace of all nodes in $S_S^t$ since $S_S^t\subseteq S_S^{t-1}$. Moreover, the server already has all messages sent to all non-highway nodes in $S_S^t$ (i.e. $v^i_j$ for all $t+2\leq j\leq L-t-1$ and $1\leq i\leq \Gamma$) since all of their neighbors are in $S_S^{t-1}$. Additionally, each leftmost highway node $h^i_j\in S_S^t$, for any $i$ and $j=\arg\min_j \{h^i_j\in S_S^t\}$, has at most one neighbor in $S_C^{t-1}$ (i.e., its right neighbor in the highway). Similarly, each rightmost highway node $h^i_j\in S_S^t$, for any $i$ and $j'=\arg\max_{j'} \{h^i_{j'}\in S_S^t\}$, has at most one neighbor in $S_D^{t-1}$ (i.e., its right neighbor in the highway). Thus, the server needs to obtain from Carol and David at most $2B$ qubits to maintain $h^i_j$ and $h^i_{j'}$. Since there are $k$ highways, the server needs at most $2kB$ qubits total from Carol and David.
We thus conclude that the players can maintain Eq.\eqref{eq:reduction state} at the cost of $6kB=O(B\log L)$ qubits per round as desired.

As noted earlier, the server-model players will simulate $\cA$ until $\cA$ terminates. Then they can measure the workspace of nodes they own to check whether $M$ is a $0$- or $1$-input of problem {\sf P}. Observation~\ref{observation:G is ham if and only if M is ham Appendix} implies that they can use this answer to answer whether $G$ is a $0$- or $1$-input with the same error probability. Since each round of simulation requires a communication complexity of $O(B\log L)$ and the simulation is done for $T_\cA\leq Q^{*, N}_{\epsilon_0, \epsilon_1}({\sf P}(N))$ rounds, the total communication complexity is $O((B\log L)Q^{*, N}_{\epsilon_0, \epsilon_1}({\sf P}(N)))$ as claimed.

  \let\oldthebibliography=\thebibliography
  \let\endoldthebibliography=\endthebibliography
  \renewenvironment{thebibliography}[1]{%
    \begin{oldthebibliography}{#1}%
      \setlength{\parskip}{0ex}%
      \setlength{\itemsep}{0ex}%
  }%
  {%
    \end{oldthebibliography}%
  }
{% \small
\bibliographystyle{alpha}
\bibliography{quantum}
}

\end{document}